\newcommand{\SharingSpec}{\ensuremath{\mathbb{S}}}
\newcommand{\Q}{\ensuremath{\mathbb{Q}}}
\newcommand{\F}{\ensuremath{\mathbb{F}}}
\newcommand{\C}{\ensuremath{\mathcal{C}}}
\newcommand{\W}{\ensuremath{\mathcal{W}}}
\newcommand{\Order}{\ensuremath{\mathcal{O}}}
\newcommand{\R}{\ensuremath{\mathcal{SV}}}
\newcommand{\ckt}{\ensuremath{\mathsf{ckt}}}
\newcommand{\REAL}{\ensuremath{\mathsf{REAL}}}
\newcommand{\IDEAL}{\ensuremath{\mathsf{IDEAL}}}
\newcommand{\sid}{\ensuremath{\mathsf{sid}}}
\newcommand{\defined}{\ensuremath{\stackrel{def}{=}}}
\newcommand{\ShareSpec}{\ensuremath{\mathbb{S}}}
\newcommand{\View}{\ensuremath{\mathsf{View}}}
\newcommand{\ICSig}{\ensuremath{\mathsf{ICSig}}}
\newcommand{\Char}{\ensuremath{\mathsf{char}}}
\newcommand{\view}{\ensuremath{\mathsf{View}}}
\newcommand{\PartySet}{\ensuremath{\mathcal{P}}}
\newcommand{\AdvStruct}{\ensuremath{\mathcal{Z}}}
\newcommand{\Selected}{\ensuremath{\mathsf{Selected}}}
\newcommand{\Products}{\ensuremath{\mathsf{Summands}}}
\newcommand{\Waitlist}{\ensuremath{\mathcal{W}}}
\newcommand{\Discarded}{\ensuremath{\mathcal{GD}}}
\newcommand{\LocalDiscarded}{\ensuremath{\mathcal{LD}}}
\newcommand{\AdvStructure}{\ensuremath{\mathcal{Z}}}
\newcommand{\Sim}{\ensuremath{\mathcal{S}}}
\newcommand{\CoreSet}{\ensuremath{\mathcal{CS}}}
\newcommand{\AcceptedParties}{\ensuremath{\mathcal{G}}}
\newcommand{\Hon}{\ensuremath{\mathcal{H}}}
\newcommand{\GlobalProv}{\ensuremath{\mathcal{GP}}}
\newcommand{\SimAcast}{\ensuremath{\mathcal{S}_\mathsf{Acast}}}
\newcommand{\SimPVSS}{\ensuremath{\mathcal{S}_{\mathsf{PVSS}}}}
\newcommand{\SimPerAMPC}{\ensuremath{\mathcal{S}_{\mathsf{AMPC}}}}
\newcommand{\SimSVSS}{\ensuremath{\mathcal{S}_{\mathsf{SVSS}}}}
\newcommand{\SimPerTriples}{\ensuremath{\mathcal{S}_{\mathsf{PerTriples}}}}
\newcommand{\SimStatTriples}{\ensuremath{\mathcal{S}_{\mathsf{StatTriples}}}}
\newcommand{\errorAICP}{\ensuremath{\mathsf{\epsilon_{AICP}}}}
\newcommand{\D}{\ensuremath{\mathsf{D}}}
\newcommand{\OptMult}{\ensuremath{\Pi_\mathsf{OptMult}}}
\newcommand{\BasicMult}{\ensuremath{\Pi_\mathsf{BasicMult}}}
\newcommand{\Auth}{\ensuremath{\Pi_\mathsf{Auth}}}
\newcommand{\Reveal}{\ensuremath{\Pi_\mathsf{Reveal}}}
\newcommand{\PiPerVSS}{\ensuremath{\Pi_\mathsf{PVSS}}}
\newcommand{\PiStatVSS}{\ensuremath{\Pi_\mathsf{SVSS}}}
\newcommand{\PiPerDefaultShare}{\ensuremath{\Pi_\mathsf{PerDefSh}}}
\newcommand{\PiPerRecShare}{\ensuremath{\Pi_\mathsf{PerRecShare}}}
\newcommand{\PiPerRec}{\ensuremath{\Pi_\mathsf{PerRec}}}
\newcommand{\PiPerAMPC}{\ensuremath{\Pi_\mathsf{PerAMPC}}}
\newcommand{\MultLCE}{\ensuremath{\Pi_\mathsf{MultCI}}}
\newcommand{\MultGCE}{\ensuremath{\Pi_\mathsf{Mult}}}
\newcommand{\PiPerTriples}{\ensuremath{\Pi_\mathsf{PerTriples}}}
\newcommand{\PiAcast}{\ensuremath{\Pi_\mathsf{Acast}}}
\newcommand{\PiStatTriples}{\ensuremath{\Pi_\mathsf{StatTriples}}}
\newcommand{\RandMultLCE}{\ensuremath{\Pi_\mathsf{RandMultCI}}}
\newcommand{\PiMPC}{\ensuremath{\Pi_\mathsf{AMPC}}}
\newcommand{\OK}{\ensuremath{\mathsf{OK}}}
\newcommand{\NOK}{\ensuremath{\mathsf{NOK}}}
\newcommand{\Received}{\ensuremath{\mathsf{Received}}}
\newcommand{\reject}{\ensuremath{\mathsf{reject}}}
\newcommand{\inp}{\ensuremath{\mathsf{inp}}}
\newcommand{\coreset}{\ensuremath{\mathsf{coreset}}}
\newcommand{\out}{\ensuremath{\mathsf{out}}}
\newcommand{\Sender}{\ensuremath{\mathsf{sender}}}
\newcommand{\Acast}{\ensuremath{\mathsf{Acast}}}
\newcommand{\vote}{\ensuremath{\mathsf{vote}}}
\newcommand{\decide}{\ensuremath{\mathsf{decide}}}
\newcommand{\Dealer}{\ensuremath{\mathsf{dealer}}}
\newcommand{\Share}{\ensuremath{\mathsf{share}}}
\newcommand{\triples}{\ensuremath{\mathsf{triples}}}
\newcommand{\tripleshares}{\ensuremath{\mathsf{tripleshares}}}
\newcommand{\shares}{\ensuremath{\mathsf{shares}}}
\newcommand{\dist}{\ensuremath{\mathsf{dist}}}
\newcommand{\test}{\ensuremath{\mathsf{test}}}
\newcommand{\recShare}{\ensuremath{\mathsf{recShare}}}
\newcommand{\Ready}{\ensuremath{\mathsf{ready}}}
\newcommand{\Success}{\ensuremath{\mathsf{success}}}
\newcommand{\echo}{\ensuremath{\mathsf{echo}}}
\newcommand{\authPoly}{\ensuremath{\mathsf{authPoly}}}
\newcommand{\authPoint}{\ensuremath{\mathsf{authPoint}}}
\newcommand{\revealPoly}{\ensuremath{\mathsf{revealPoly}}}
\newcommand{\revealPoint}{\ensuremath{\mathsf{revealPoint}}}
\newcommand{\iter}{\ensuremath{\mathsf{iter}}}
\newcommand{\hop}{\ensuremath{\mathsf{hop}}}
\newcommand{\authCompleted}{\ensuremath{\mathsf{authCompleted}}}
\newcommand{\curr}{\ensuremath{\mathsf{curr}}}
\newcommand{\accept}{\ensuremath{\mathsf{accept}}}
\newcommand{\sima}{\ensuremath{\widetilde{a}}}
\newcommand{\simb}{\ensuremath{\widetilde{b}}}
\newcommand{\simc}{\ensuremath{\widetilde{c}}}
\newcommand{\simx}{\ensuremath{\widetilde{x}}}
\newcommand{\simy}{\ensuremath{\widetilde{y}}}
\newcommand{\simby}{\ensuremath{\widetilde{\mathbf{y}}}}
\newcommand{\flag}{\ensuremath{\mathsf{flag}}}
\newcommand{\Adv}{\ensuremath{\mathsf{Adv}}}
\newcommand{\Env}{\ensuremath{\mathsf{Env}}}
\newcommand{\Functionality}{\ensuremath{\mathcal{F}_{\mathsf{AMPC}}}}
\newcommand{\FAcast}{\ensuremath{\mathcal{F}_{\mathsf{Acast}}}}
\newcommand{\FABA}{\ensuremath{\mathcal{F}_{\mathsf{ABA}}}}
\newcommand{\FVSS}{\ensuremath{\mathcal{F}_{\mathsf{VSS}}}}
\newcommand{\FTriples}{\ensuremath{\mathcal{F}_{\mathsf{Triples}}}}
\newcommand{\Fun}{\ensuremath{\mathcal{F}}}
\newenvironment{myitemize}
{\begin{list}{$\bullet$}{ 
\itemindent=-0.1in
\itemsep=0.0in
\parsep=0.0in
\topsep=0.0in
\partopsep=0.0in}}{\end{list}}
\newcounter{itemcount}
\newenvironment{myenumerate}
{\setcounter{itemcount}{0}\begin{list}
{\arabic{itemcount}.}{\usecounter{itemcount} \itemindent=-0.2cm
\itemsep=0.0in
\parsep=0.0in
\topsep=5pt
\partopsep=0.0in}}{\end{list}}
\newenvironment{mydescription}
{\setcounter{itemcount}{0}\begin{list}
{\arabic{itemcount}.}{\usecounter{itemcount} \itemindent=-0.5cm
\itemsep=0.0in
\parsep=0.0in
\topsep=5pt
\partopsep=0.0in}}{\end{list}}
\DeclareRobustCommand*\cal{\@fontswitch\relax\mathcal}
\newtheorem{theorem}{Theorem}[section]
\newtheorem{corollary}[theorem]{Corollary}
\newtheorem{lemma}[theorem]{Lemma}
\theoremstyle{definition}
\newtheorem{definition}[theorem]{Definition}
\newtheorem{claim}[theorem]{Claim}
\newtheorem{notation}[theorem]{Notation}
\newcommand{\commentAA}[1]{\textcolor{red}{{\sf (Ananya's  Note:} {\sl{#1}} {\sf EON)}}}
\begin{document}

\title{\bf Revisiting the Efficiency of Asynchronous Multi Party Computation Against General Adversaries}
\author{Ananya Appan\footnote{International Institute of Information Technology, Bangalore India.
 Email: {\tt{ananya.appan@iiitb.ac.in}}.} \and Anirudh Chandramouli\footnote{International Institute of Information Technology, Bangalore India.
 Email: {\tt{anirudh.c@iiitb.ac.in}}.}   \and  Ashish Choudhury\footnote{International Institute of Information Technology, Bangalore India.
 Email: {\tt{ashish.choudhury@iiitb.ac.in}}. } 
 }
\date{}
\maketitle
\vspace*{-0.7cm}
\begin{abstract}
In this paper, we design secure multi-party computation (MPC) protocols in the {\it asynchronous} communication setting with {\it optimal} resilience.
 Our protocols are secure against
  a computationally-unbounded {\it malicious} adversary, characterized by an {\it adversary structure} $\AdvStructure$,
  which enumerates all possible subsets of potentially corrupt parties. Our protocols incur a communication of
  $\Order(|\AdvStructure|^2)$ and $\Order(|\AdvStructure|)$ bits per multiplication for {\it perfect} and {\it statistical} security respectively. These are the {\it first} protocols with this communication complexity, as such protocols were known only in the {\it synchronous} communication setting (Hirt and Tschudi, ASIACRYPT 2013).  
\end{abstract}

\section{Introduction}
\label{sec:intro}
Secure {\it multi-party computation} (MPC) \cite{Yao82,GMW87,BGW88,RB89} is a fundamental problem in secure distributed computing. Consider a set 
 of $n$ mutually-distrusting parties $\PartySet = \{P_1, \ldots, P_n \}$, where a subset of parties can be corrupted
  by 
  a {\it computationally-unbounded malicious} (Byzantine) adversary $\Adv$. 
  Informally, an MPC protocol allows the parties  to securely compute any function $f$
  of their private inputs, by keeping their respective inputs private. The most popular way of characterizing $\Adv$ is  through
  a {\it threshold}, by assuming that
   it can corrupt {\it any}  subset of up to $t$ parties. In this setting, MPC with {\it perfect security} (where no error is allowed in the outcome)
    is achievable iff $t < n/3$ \cite{BGW88}, while {\it statistical security} (where a negligible error is allowed)
    is achievable iff $t < n/2$ \cite{RB89}. Hirt and Maurer \cite{HM00} generalized the threshold model by introducing the 
    general-adversary model (also known as the {\it non-threshold} setting). In this setting, $\Adv$ is characterized by an {\it adversary structure} 
    $\AdvStructure = \{Z_1, \ldots, Z_h \} \subset 2^{\PartySet}$, which enumerates all possible subsets of potentially corrupt parties, where $\Adv$ can select any subset of parties $Z^{\star} \in \AdvStructure$ for corruption. 
    Modelling the distrust in the system through $\AdvStructure$ allows for more flexibility (compared to the threshold model), especially when
    $\PartySet$ is not too large. In the general-adversary model, MPC with perfect and statistical security 
    is achievable iff $\AdvStructure$ satisfies the $\Q^{(3)}(\PartySet, \AdvStructure)$ and
    $\Q^{(2)}(\PartySet, \AdvStructure)$  conditions respectively.\footnote{We say that 
     $\AdvStructure$ satisfies the $\Q^{(k)}(\PartySet, \AdvStructure)$ condition \cite{HM00}, if the union of no
     $k$ sets from $\AdvStructure$ covers $\PartySet$.} 
     
     In terms of efficiency, MPC protocols against general adversaries are {\it less efficient} than 
     those against threshold adversaries, by several orders of magnitude. 
     Protocols against threshold adversaries 
     typically incur a communication of
      $n^{\Order(1)}$ bits per multiplication,
      compared to $|\AdvStructure|^{\Order(1)}$ bits per multiplication required against general adversaries.\footnote{The cost of any generic MPC protocol
      is typically dominated by the overhead associated with the multiplication operations in $f$.} 
      Since $|\AdvStructure|$ could be exponentially large in $n$, the {\it exact} exponent
      is very important. For instance, as noted in \cite{HT13}, 
      if $n = 25$, then $|\AdvStructure|$ is expected to be around 
      one million, and a protocol with a communication complexity of 
      $\Order(|\AdvStructure|^2 \cdot \mbox{Poly}(n))$ bits is preferred over a protocol 
      with a communication complexity of 
      $\Order(|\AdvStructure|^3 \cdot \mbox{Poly}(n))$ bits. 
      \vspace*{-0.4cm}
      \paragraph{\bf Our Motivation and Results:}
      All the above results hold in the {\it synchronous} communication setting, where the parties are assumed to be globally synchronized, 
      with strict upper bounds on the message delay. 
      Such strict time-outs are, however, extremely difficult to maintain in 
       real-world networks like the Internet, which are better modelled by the {\it asynchronous} communication setting \cite{CanettiThesis}. Here, no timing assumptions are made and messages can be arbitrarily, but finitely delayed, with
       every message sent  being delivered {\it eventually}. 
       Asynchronous protocols are more complex and less efficient when compared to their synchronous counter-parts, since a {\it slow} (but honest)
       sender party {\it cannot} be distinguished from a {\it corrupt} sender party, who does not send any message. To avoid an endless wait, the parties {\it cannot} afford to wait to receive messages from all the parties, which results in disregarding messages from a subset of potentially honest parties.
        Against threshold adversaries, perfectly-secure and statistically-secure {\it asynchronous} MPC (AMPC) is achievable, iff
       $t< n/4$ \cite{BCG93} and $t<n/3$ \cite{BKR94,ADS20} respectively. By using the {\it player-partitioning} argument \cite{HM00}, these results can
       be generalized to show that against general adversaries, perfect and statistical security require $\AdvStructure$ to satisfy the
       $\Q^{(4)}(\PartySet, \AdvStructure)$ and $\Q^{(3)}(\PartySet, \AdvStructure)$ conditions respectively. 
       
       Compared to synchronous MPC protocols, AMPC protocols are not very well-studied \cite{BCG93,BKR94,BH07,PCR15,CP17}, especially
       against general adversaries.
       While perfectly-secure AMPC against general adversaries has been studied
       in \cite{MSR02,CP20}, to the best of our knowledge, there exists {\it no} statistically-secure AMPC protocol against general adversaries. 
       We design communication efficient AMPC protocols against general adversaries, {\it both} with perfect and statistical security,
       whose efficiency is comparable with the most efficient MPC protocols in the {\it synchronous} communication setting. 
       Our results put in the context of relevant existing results are presented in Table \ref{tab:results}, where
       $\F$ denotes a finite field over which all computations are performed, and $\kappa$ denotes the statistical-security parameter.
       \begin{table}[!ht]
       \centering
       {\footnotesize
       \begin{tabular}{|c|c|c|c|c|c|c|}
       \hline
       \multicolumn{3}{ |c| }{Synchronous} & \multicolumn{3}{ |c| }{Asynchronous}   \\ \hline
       Security & Condition & Bits/Multiplication & Security & Condition & Bits/Multiplication \\ \hline
       \multirow{2}{*}{Perfect} & \multirow{2}{*}{$\Q^{(3)}$} & \multirow{2}{*}{$\Order(|\AdvStructure|^2 \cdot \mbox{Poly}(n, |\F|))$ \cite{HT13}} &
          \multirow{2}{*}{Perfect}  & \multirow{2}{*}{$\Q^{(4)}$} & $\Order(|\AdvStructure|^3 \cdot \mbox{Poly}(n, |\F|))$ \cite{CP20} \\
                      & & &  &  &  $\Order(|\AdvStructure|^2 \cdot \mbox{Poly}(n, |\F|))$ {\bf (our result)} \\ \hline        
       Statistical & $\Q^{(2)}$ & $\Order(|\AdvStructure| \cdot \mbox{Poly}(n, \kappa))$ \cite{HT13} & 
        Statistical & $\Q^{(3)}$ & $\Order(|\AdvStructure| \cdot \mbox{Poly}(n, \kappa))$ {\bf (our result)} \\ \hline
       \end{tabular}
       }
        \caption{\label{tab:results}Communication complexity of different MPC protocols against general adversaries in terms of $|\AdvStructure|$} 
        \end{table}     
      \vspace*{-0.2cm}
      
      Our protocols are in the pre-processing model, where the parties generate random secret-shared multiplication-triples.
       The parties then evaluate $\ckt$ in a secret-shared fashion, where Beaver's method \cite{Bea91} is used to evaluate the multiplication gates using the generated multiplication-triples. Our protocols for the pre-processing phase
             closely follow \cite{HT13}. However, there are several non-trivial challenges
       while adapting these protocols to the asynchronous world. Since our protocols are slightly technical, we refer to Section \ref{sec:perfectoverview}
        for the technical overview of our protocols.

\section{Preliminaries, Definitions and Existing Asynchronous Primitives}
\label{prelim}
We assume that the parties in $\PartySet = \{P_1, \ldots, P_n\}$ are connected by pair-wise secure channels. 
   The adversary $\Adv$ is assumed to be {\it malicious} and {\it static}, and decides the set of corrupt parties at the beginning of the protocol execution.       
   Parties not under the control of $\Adv$ are called {\it honest}.
     Given $\PartySet' \subseteq \PartySet$, we say that $\AdvStructure$
 satisfies the $\Q^{(k)}(\PartySet', \AdvStructure)$ condition, if for every $Z_{i_1}, \ldots, Z_{i_k} \in \AdvStructure$, the condition
 $\PartySet' \not \subseteq Z_{i_1} \cup \ldots \cup Z_{i_k}$ holds.
 
 We assume  that the parties want to compute a function $f$, represented by a {\it publicly} known
 arithmetic circuit $\ckt$ over a finite field $\F$ consisting of linear and non-linear gates, with 
  $M$ being number of multiplication gates.
  Without loss of generality, we assume that
 each $P_i \in \PartySet$ has an input $x^{(i)}$ for $f$, and that
 all the parties want to learn the single output 
  $y = f(x^{(1)}, \ldots, x^{(n)})$.
      We follow the asynchronous communication model of
  \cite{BCG93,CanettiThesis}.
   Unlike the previous unconditionally-secure AMPC protocols \cite{BKR94,BH07,PCR15,CP17,CP20},
  we prove the security of our protocols using the UC framework \cite{Can01,Gol,Can20}, whose details are presented in Appendix \ref{app:UC}.

    In our protocols, we use a secret-sharing
  based on the one from \cite{Mau02}, defined with respect to
  a {\it sharing specification} $\ShareSpec$, which is a tuple of subsets of $\PartySet$. 
    A sharing specification $\ShareSpec$ is said to be {\it $\AdvStructure$-private},
   if for every $Z \in \AdvStructure$, there is an $S \in \ShareSpec$, such that $Z \cap S = \emptyset$.
   A sharing specification $\ShareSpec$ 
   satisfies the $\Q^{(k)}(\ShareSpec, \AdvStructure)$ condition if for every $Z_{i_1}, \ldots, Z_{i_k} \in \AdvStructure$ and every
   $S \in \ShareSpec$, the condition 
   $S \not \subseteq Z_{i_1} \cup \ldots \cup Z_{i_k}$ holds.
   In our protocols, we use the sharing specification
  $\ShareSpec = (S_1, \ldots, S_h)
   \defined \{ \PartySet \setminus Z | Z \in \AdvStructure\}$, which guarantees that $\ShareSpec$ is {\it $\AdvStructure$-private}.
   This $\ShareSpec$ satisfies the $\Q^{(3)}(\ShareSpec, \AdvStructure)$ and $\Q^{(2)}(\ShareSpec, \AdvStructure)$ conditions,
   if $\AdvStructure$ satisfies the $\Q^{(4)}(\PartySet, \AdvStructure)$ and $\Q^{(3)}(\PartySet, \AdvStructure)$ conditions respectively.
   
    \begin{definition}[\cite{Mau02,HT13}]
  A value $s \in \F$ is said to be secret-shared with respect to $\ShareSpec = (S_1, \ldots, S_h)$, if 
   there exist shares $s_1, \ldots, s_h$, such that $s = s_1 + \ldots + s_h$ and 
   for $q = 1, \ldots, h$, share $s_q$ is know to every (honest) party in $S_q$.  
  \end{definition}
  \noindent A sharing of $s$ is denoted by $[s]$, where $[s]_q$ denotes the $q^{th}$ share. Note that
  $P_i$ will hold the shares $\{[s]_q\}_{P_i \in S_q}$.
  The above secret-sharing is {\it linear}, as 
  $[c_1s_1 + c_2 s_2] = c_1[s_1] + c_2[s_2]$ for any publicly-known $c_1, c_2 \in \F$.
     \paragraph{\bf Asynchronous Reliable Broadcast (Acast):}  
  Acast allows a designated 
   {\it sender}
  $P_S \in \PartySet$ to identically send a message 
   $m \in \{0, 1 \}^{\ell}$ to all the parties.
    If $P_S$ is {\it honest}, then all honest parties eventually output $m$. If $P_S$ is {\it corrupt} and
    some honest party outputs $m^{\star}$, then every
   other honest party eventually outputs $m^{\star}$.
   The above requirements are formalized by an ideal functionality $\FAcast$, presented in Appendix \ref{app:UC}. 
   In \cite{KF05}, a perfectly-secure Acast protocol is presented with a communication complexity of $\Order(n^2 \ell)$ bits, provided $\AdvStructure$ satisfies the 
   $\Q^{(3)}(\PartySet, \AdvStructure)$ condition. The security of the protocol in \cite{KF05} is not proven in the UC framework. For completeness, we do this in Appendix \ref{app:UC}. 
 \vspace*{-0.3cm}
  \paragraph{\bf Asynchronous Byzantine Agreement (ABA):}
 In an ABA protocol \cite{PeaseJACM80,NancyBook,AW04},
 every party has a private bit and the (honest) parties eventually obtain a common output bit almost-surely with probability $1$,
  where the output bit is the input bit of an honest party, if all honest parties have the same input.\footnote{From \cite{FLP85},
  every {\it deterministic} ABA protocol must have non-terminating runs.
  To circumvent this result, randomized ABA protocols are considered and the best we can hope for from such protocols is that the parties eventually obtain an output with
  probability $1$.} 
  The above requirements are formalized through the functionality $\FABA$, presented in Appendix \ref{app:UC}.
 We assume the existence of a perfectly-secure ABA protocol for $\FABA$ with UC-security (see \cite{K01,KF05} for such protocols if
  $\AdvStructure$ satisfies the $\Q^{(3)}(\PartySet, \AdvStructure)$ condition).
  The number of ABA instances will be {\it independent} of the size of $\ckt$ and so, we do not focus on the exact details.
  \vspace*{-0.3cm}
 \paragraph{\bf Verifiable Secret-Sharing (VSS):}
  A VSS protocol allows a designated {\it dealer} $P_D \in \PartySet$ to verifiably secret-share its input $s \in \F$.
  If $P_D$ is {\it honest}, then the honest parties eventually complete the protocol with $[s]$.
  The verifiability guarantees that if $P_D$ is {\it corrupt} and some honest party completes the protocol, then all honest parties eventually complete the protocol with a secret-sharing
  of some value.
   These requirements are formalized through the functionality $\FVSS$ (Fig \ref{fig:FVSS}). The functionality, upon receiving a vector of shares
   from $P_\D$, distributes the appropriate shares to the respective parties. The dealer's input is defined {\it implicitly} as the sum of provided shares.
     We will use 
  $\FVSS$ in our protocols as follows: $P_D$ on having the input $s$, sends a random vector of shares 
  $(s_1, \ldots, s_h)$ to $\FVSS$ where 
  $s_1 + \ldots + s_h = s$.
  If $P_D$ is {\it honest}, then the view of $\Adv$ will be independent of $s$, if $\ShareSpec$ is {\it $\AdvStructure$-private}. Hence, the
   probability distribution of shares learnt by $\Adv$ will be {\it independent} of the dealer's input.
  \begin{systembox}{$\FVSS$}{The ideal functionality for VSS for session id $\sid$.}{fig:FVSS}
	\justify
$\FVSS$ proceeds as follows for each party $P_i \in \PartySet$ and an adversary $\Sim$, and is parametrized 
 by a sharing specification $\ShareSpec = (S_1, \ldots, S_h)$, adversary structure $\AdvStructure$  and a dealer $P_D$. 
  Let $Z^{\star}$ be the set of corrupt parties.
    \begin{myitemize}
         \item[--] On receiving $(\Dealer, \sid, P_{\D}, (s_1, \ldots, s_h))$ from $P_\D$ (or from $\Sim$ if $P_\D \in Z^{\star}$),
         set $s = \sum_{q = 1, \ldots, h} s_q$ and for $q = 1, \ldots, h$, set $[s]_q = s_q$.
         Generate a request-based delayed output $(\Share, \sid, P_D, \{[s]_q \}_{P_i \in S_q})$ for each
              $P_i \not \in Z^{\star}$.\footnote{If $P_D$ is {\it corrupt}, then
         $\Sim$ may not send any input to $\FVSS$,
          in which case the functionality will not generate any output. 
          See Appendix \ref{app:UC} for the meaning of request-based delayed output in asynchronous ideal world.}        
       \end{myitemize}  
  \end{systembox}
 In \cite{CP20}, a {\it perfectly-secure} VSS protocol $\PiPerVSS$ is presented, provided $\ShareSpec$
  satisfies the $\Q^{(3)}(\ShareSpec, \AdvStructure)$ condition (which holds for our $\ShareSpec$).
   The protocol (after a minor modification) incurs a communication of $\Order(|\AdvStructure| \cdot n^2 \log{|\F|} + n^4 \log{n})$
   bits.
   In \cite{CP20}, the UC-security of $\PiPerVSS$
   was {\it not} shown and for completeness, we do so in Appendix \ref{app:Perfect}.
 \vspace*{-0.3cm}
\paragraph{Default Secret-Sharing:}
The perfectly-secure protocol $\PiPerDefaultShare$ takes a {\it public} input $s \in \F$ and $\ShareSpec = (S_1, \ldots, S_h)$ to
 {\it non-interactively} generate $[s]$, where the parties collectively set $[s]_1 = s$ and $[s]_2 = \ldots = [s]_h = 0$.
  \vspace*{-0.3cm}
 \paragraph{Reconstruction Protocols:} 
  Let the parties hold $[s]$ with respect to some $\ShareSpec = (S_1, \ldots, S_h)$ which satisfies
   the $\Q^{(2)}(\ShareSpec, \AdvStructure)$ condition.  Then, \cite{CP20} presents a 
   perfectly-secure protocol $\PiPerRecShare$ to reconstruct $[s]_q$ for any given $q \in \{1, \ldots, h \}$
   and a perfectly-secure protocol $\PiPerRecShare$ to reconstruct $s$. 
   The protocols incur a communication of $\Order(n^2 \log{|\F|})$
   and $\Order(|\AdvStructure| \cdot n^2 \log{|\F|})$ bits respectively (see Appendix \ref{app:Perfect} for the details).

\section{Perfectly-Secure Pre-Processing Phase Protocol with $\Q^{(4)}(\PartySet, \AdvStructure)$ Condition}
  \label{sec:perfectoverview}
 Throughout this section, we assume that $\AdvStructure$ satisfies the 
  $\Q^{(4)}(\PartySet, \AdvStructure)$ condition. We present a perfectly-secure protocol which generates a secret-sharing of $M$ random multiplication-triples, 
  unknown to the adversary. 
    The protocol realizes the
   ideal functionality $\FTriples$ (Fig \ref{fig:FTriples}) which allows the ideal-world adversary to specify the shares
    for each of the output triples on the behalf of corrupt parties. 
    The functionality then ``completes" the sharings of all the triples randomly, while keeping them ``consistent" with the shares specified by the 
     adversary.\footnote{This provision is made because 
     in our pre-processing phase protocol, the real-world adversary will have full control over the shares of the corrupt parties corresponding to the
    random multiplication-triples generated in the protocol.}
    \vspace*{-0.2cm}
   \begin{systembox}{$\FTriples$}{The ideal functionality for the asynchronous pre-processing phase with session id $\sid$.}{fig:FTriples}
	\justify
$\FTriples$ proceeds as follows, running with the parties $\PartySet$ and an adversary $\Sim$,
 and is parametrized by an adversary-structure $\AdvStructure$ and
  {\it $\AdvStructure$-private} sharing specification  $\ShareSpec = (S_1, \ldots, S_h) = \{ \PartySet \setminus Z | Z \in \AdvStructure\}$.
   Let $Z^{\star}$ denote the set of corrupt parties. 
    \begin{myitemize}
       \item[--] If there exists a set of parties ${\cal A}$ such that $\PartySet \setminus {\cal A} \in \AdvStructure$ and every 
       $P_i \in {\cal A}$ has sent the message $(\triples, \sid, P_i)$, then send $(\triples, \sid, {\cal A})$ to $\Sim$
       and prepare the output as follows.
         \begin{myitemize}
         \item Generate secret-sharing of $M$ random multiplication-triples. To generate
          one such sharing, randomly select $a, b \in \F$, compute $c = ab$ and execute the steps labelled
           {\bf Single Sharing Generation} 
           for $a, b$ and $c$.
             \item Let $\{([a^{(\ell)}], [b^{(\ell)}], [c^{(\ell)}] )\}_{\ell \in \{1, \ldots, M\}}$ be the resultant secret-sharing of the 
           multiplication-triples.
           Send a request-based delayed output
        $(\tripleshares, \sid, \{[a^{(\ell)}]_q, [b^{(\ell)}]_q, [c^{(\ell)}]_q \}_{\ell \in \{1, \ldots, M\}, P_i \in S_q})$  to each $P_i \in \PartySet \setminus Z^{\star}$
        (no need to send the respective shares to the parties in $Z^{\star}$, as $\Sim$ already has the shares of all the corrupt parties).           
    \end{myitemize}
{\bf Single Sharing Generation}: Do the following to generate a secret-sharing  of a given value $s$.
       \begin{myitemize}
         \item Upon receiving $(\shares, \sid, \{s_q \}_{S_q \cap Z^{\star} \neq \emptyset})$ from $\Sim$, 
         randomly select $s_q \in \F$ corresponding to each $S_q \in \ShareSpec$ for which $S_q \cap Z^{\star} = \emptyset$, such that
         $\displaystyle \sum_{S_q \cap Z^{\star} \neq \emptyset} s_q + \sum_{S_q \cap Z^{\star} = \emptyset} s_q = s$ holds.         
         \footnote{$\Sim$ {\it cannot}
         delay sending the shares on the behalf of the corrupt parties indefinitely as, in our real-world protocol,
         the adversary {\it cannot} indefinitely delay the generation of secret-shared multiplication-triples.}
         For $q = 1, \ldots, h$, set $[s]_q = s_q$.
       \end{myitemize}    
    \end{myitemize}
  \end{systembox}
 We now present a protocol for securely realizing $\FTriples$. 
  To design the protocol, we need a multiplication protocol which takes as input 
   $\{([a^{(\ell)}], [b^{(\ell)}]) \}_{\ell = 1, \ldots, M}$
 and securely generates $\{ [c^{(\ell)}]\}_{\ell = 1, \ldots, M}$, where $c^{(\ell)} = a^{(\ell)} b^{(\ell)}$, without revealing
  any additional information about $a^{(\ell)}$ and $b^{(\ell)}$.
   For simplicity, we first explain and present
  the protocol assuming $M = 1$, where the inputs are $[a]$ and $[b]$, and the goal is to securely generate 
  a random sharing  $[ab]$. 
  
  Our starting point is the {\it synchronous} multiplication protocol of \cite{HT13,Mau02}. Note that $ab =  \sum_{(p, q) \in \ShareSpec \times \ShareSpec} [a]_p[b]_q$.
   The main idea is that since $S_p \cap S_q \neq \emptyset$, a {\it publicly-known} party from $S_p \cap S_q$
   can be designated to act as a dealer and generate a random sharing of the summand $[a]_p [b]_q$. 
   For efficiency, every designated ``summand-sharing party" can sum up all the summands assigned to it
    and generate a random sharing of the sum instead.
   If {\it no} summand-sharing party behaves {\it maliciously}, 
    then the sum of all secret-shared sums leads to a secret-sharing
   of $ab$.
   
   To deal with maliciously-corrupt summand-sharing parties, \cite{HT13} first designed
   an {\it optimistic} multiplication protocol $\OptMult$, which takes an additional parameter $Z \in \AdvStruct$ and generates a secret-sharing of
    $ab$, {\it provided} $\Adv$ corrupts a set of parties $Z^{\star} \subseteq Z$.
   The idea used in $\OptMult$ is the same as above, except that the summand-sharing parties are now restricted to the subset 
   $\PartySet \setminus Z$. 
       Since the parties will {\it not} be knowing the identity of corrupt parties in $Z^{\star}$, they run
   $\OptMult$ once for each $Z \in \AdvStruct$. This guarantees that at least one of these instances generates a secret-sharing of $ab$.
   By comparing the output sharings generated in all the instances of $\OptMult$, the parties can detect whether any cheating has  occurred. 
   If no cheating is detected, then any of the output sharings can serve as the sharing of $ab$. Else, the parties consider a pair of {\it conflicting} 
   $\OptMult$ instances (whose resultant output sharings are different) and proceed to a {\it cheater-identification} phase. In this phase,
    based on the values shared by the summand-sharing parties in the conflicting $\OptMult$ instances, the parties
   identify at least one corrupt summand-sharing party. This phase {\it necessarily} requires the participation of 
   {\it all} the summand-sharing parties from the conflicting $\OptMult$ instances. Once a corrupt summand-sharing
   party is identified, the parties disregard all output sharings of $\OptMult$ instances involving that party. This process of comparing the output sharings of $\OptMult$ instances 
    and identifying corrupt parties continues, until all the remaining output sharings are for the same value. 
   \vspace*{-0.4cm}
   \paragraph{\bf Challenges in the Asynchronous Setting:} There are two main non-trivial challenges while applying the above ideas in an {\it asynchronous}
   setting. First, in $\OptMult$, a
    potentially {\it corrupt} party may {\it never} share the sum of the summands designated to that party, leading to an indefinite wait.
    To deal with this, we notice that since $\AdvStructure$ satisfies the $\Q^{(4)}(\PartySet, \AdvStructure)$ condition, 
    each $(S_p  \cap S_q) \setminus Z$ contains at least one {\it honest} party. So instead of designating a {\it single} party
    for the summand $[a]_p[b]_q$, {\it each} party in $\PartySet \setminus Z$ shares the sum of {\it all} the summands it is ``capable" of, thus guaranteeing that each
    $[a]_p[b]_q$ is considered for sharing by at least one (honest) party.
        However, care has to be taken to ensure that any summand $[a]_p[b]_q$ is {\it not} shared multiple times (more on this later).  
    
    The second challenge  is that once the parties identify a pair of conflicting $\OptMult$ instances, the potentially {\it corrupt} summand-sharing
     parties from these instances {\it may not} participate in the cheater-identification phase, thus causing the parties to wait indefinitely. To get around this 
     problem, the multiplication protocol proceeds in {\it iterations}, where in each iteration, the parties run an instance of the {\it asynchronous} $\OptMult$ (outlined above)
     for each $Z \in \AdvStructure$, compare the outputs from each instance, and then proceed to the respective cheater-identification phase if the outputs are not the same.
     However, the summand-sharing parties from previous iterations are {\it not} allowed  
     to participate in future iterations until they participate
     in the cheater-identification phase of all the previous iterations. 
     This prevents the {\it corrupt} summand-sharing parties in previous iterations from acting as summand-sharing parties in future iterations until they clear their ``pending tasks", 
     in which case 
     they are caught and discarded for ever.
      We stress that the {\it honest} parties are eventually ``released" to act as summand-sharing parties
     in future iterations. Thus, even if the corrupt summand-sharing parties from previous iterations are ``stuck" forever,
     the parties eventually progress to the next iteration in case the current iteration ``fails". 
      Once the parties reach an iteration where the outputs of all the $\OptMult$ instances are the same, the protocol stops.
     We show that there will be at most $t[tn + 1] + 1$ iterations, where $t$ is the cardinality of the maximum-sized subset in $\AdvStructure$.
     
     Based on the above discussion, we next present protocols $\OptMult, \MultLCE$ and $\MultGCE$. 
     Protocol $\MultLCE$ represents an iteration where the parties run an instance of $\OptMult$ for each $Z \in \AdvStructure$
     and execute a cheater-identification phase if the iteration fails.
     Protocol $\MultGCE$ calls the protocol $\MultLCE$ multiple times, till it reaches a ``successful" instance of
     $\MultLCE$ (where the outputs of all the instances of $\OptMult$ are
      the same).  
     In these protocols, the parties maintain the following {\it dynamic} sets:
     {\bf (a)} $ \Waitlist^{(i)}_{\iter}$: Denotes the {\it wait-listed} parties maintained by $P_i$, corresponding to instance number $\iter$ of $\MultLCE$ in $\MultGCE$;
     {\bf (b)} $\LocalDiscarded^{(i)}_{\iter}$: Denotes the set of parties {\it locally discarded} by $P_i$ during the 
     cheater-identification phase of instance number $\iter$ of $\MultLCE$ in $\MultGCE$; and
     {\bf (c)} $\Discarded$: Denotes the set of parties, {\it globally discarded} by {\it all} (honest) parties across various instances of
     $\MultLCE$ in $\MultGCE$.\footnote{The reason for two different discarded sets is that the various instances of
     cheater-identification corresponding to the failed $\MultLCE$ instances are executed {\it asynchronously}, thus resulting
     in a corrupt party to be identified by different
     honest parties during different iterations.}
     These sets will be maintained such that no honest party is ever included in the sets $\Discarded$ and
     $\LocalDiscarded^{(i)}_{\iter}$ of any honest $P_i$. Moreover, any honest party which is included in 
     $ \Waitlist^{(i)}_{\iter}$ set of any honest $P_i$ is eventually removed from $ \Waitlist^{(i)}_{\iter}$.      
\subsection{Optimistic Multiplication Protocol}
Protocol $\OptMult$ is executed with respect to a given
 $Z \in \AdvStructure$ and iteration number $\iter$.
  Each party in $\PartySet \setminus Z$ tries to act as a summand-sharing party and shares the sum of all the summands
    it is ``capable" of. To avoid ``repetition" of summands, the parties select {\it distinct} summand-sharing parties in hops
    and ``mark" the summands whose sum is shared
    by the selected summand-sharing party in a hop, ensuring that they are not considered 
    in future hops.
       To agree on the summand-sharing party of each hop, the parties execute an instance of
   the {\it agreement on common subset} (ACS) primitive \cite{BCG93}, where one instance of ABA is invoked on the behalf of each
   candidate summand-sharing party. While voting for a candidate party in $\PartySet \setminus Z$ during a hop, 
   the parties ensure that 
   the candidate has indeed secret-shared some sum, and that  
    it {\it was not} 1) selected in an earlier hop; 2) in the waiting list or the list of locally-discarded parties of any previous iteration; 3) in the list of globally-discarded parties. 
\begin{protocolsplitbox}{$\OptMult(\PartySet,\AdvStruct,\SharingSpec,[a],[b],Z,\iter)$}{Optimistic multiplication in $(\FVSS, \FABA)$-hybrid
  for iteration $\iter$ and session id $\sid$,
  assuming $Z$ to be corrupt. The above code is executed by each $P_i$, who implicitly uses the dynamic sets $\Discarded$,
  $ \Waitlist^{(i)}_{\iter'}$ and  $\LocalDiscarded^{(i)}_{\iter'}$ for $\iter' < \iter$
  }{fig:OptMult}
\justify
\begin{myitemize}
\item[--] \textbf{Initialization} 
	\begin{myitemize}
	\item Initialize the set of ordered pair of indices of {\it all} summands : $\Products_{(Z, \iter)} = \{(p, q)\}_{p, q = 1, \ldots, |\SharingSpec|}$.
	\item Initialize the summand indices corresponding to $P_j \in \PartySet \setminus Z$ : 
	$\Products^{(j)}_{(Z, \iter)} = \{(p, q)\}_{P_j \in S_p \cap S_q}$.
	\item Initialize the set of summands-sharing parties : $\Selected_{(Z, \iter)} = \emptyset$.
	 Initialize the hop number $\hop = 1$.
	\end{myitemize}
	\item[--] Do the following till $\Products_{(Z, \iter)} \neq \emptyset$:
	\begin{myitemize}
	\item \textbf{Sharing Summands}: 
	    \begin{myenumerate}
	    \item If $P_i \notin Z$ and $P_i \notin \Selected_{(Z, \iter)}$, then compute $\displaystyle c^{(i)}_{(Z, \iter)} = \sum_{(p, q) \in \Products^{(i)}_{(Z, \iter)}} [a]_p[b]_q$.
	    Randomly select the shares $c^{(i)}_{{(Z, \iter)}_1}, \ldots, c^{(i)}_{{(Z, \iter)}_h}$, such that $c^{(i)}_{{(Z, \iter)}_1} + \ldots + c^{(i)}_{{(Z, \iter)}_h} = c^{(i)}_{(Z, \iter)}$. 
	    Call $\FVSS$ with message $(\Dealer, \sid_{\hop, i, \iter, Z}, (c^{(i)}_{{(Z, \iter)}_1}, \ldots, c^{(i)}_{{(Z, \iter)}_h}))$, where
	     $\sid_{\hop, i, \iter, Z} = \hop || \sid || i || \iter || Z$.\footnote{The notation $\sid_{\hop, i, \iter, Z}$ is used to distinguish among the different calls to 
	     $\FVSS$ and $\FABA$ within each hop.}
	    \item Keep requesting for an output from $\FVSS$ with $\sid_{\hop, j, \iter, Z}$, for $j = 1, \ldots, n$, till an output is received.
	    \end{myenumerate}
	\item \textbf{Selecting Summand-Sharing Party Through ACS}:    
	     \begin{myenumerate}
	       \item For $j = 1, \ldots, n$, send $(\vote, \sid_{\hop, j, \iter, Z}, 1)$ to $\FABA$, if {\it all} the following conditions hold:
	           \begin{myitemize}
		\item[--] $P_j \notin \Discarded$, $P_j \notin Z$ and 
		$P_j \notin \Selected_{(Z, \iter)}$.
		Moreover, $\forall \iter' < \iter,  P_j \notin \Waitlist^{(i)}_{\iter'}$ and $P_j \notin \LocalDiscarded^{(i)}_{\iter'}$.
		\item[--] An output $(\Share, \sid_{\hop, j, \iter, Z}, P_j, \{ [c^{(j)}_{(Z, \iter)}]_q \}_{P_i \in S_q})$ is received from $\FVSS$,
		with 
		$\sid_{\hop, j, \iter, Z}$.
		\end{myitemize}
	     \item For $j = 1, \ldots, n$, request for an output from $\FABA$ with $\sid_{\hop, j, \iter, Z}$,  until an output is received.    
	     \item If $\exists P_j \in \PartySet$ such that $(\decide, \sid_{\hop, j, \iter, Z}, 1)$ is received from $\FABA$ with  $\sid_{\hop, j, \iter, Z}$, 
	     then for each $P_k \in \PartySet$ for which no $\vote$
              message has been sent yet, 
             send
              $(\vote, \sid_{\hop, k ,\iter, Z}, 0)$ to $\FABA$ with $\sid_{\hop, k, \iter, Z}$.
               \item Once an output $(\decide, \sid_{\hop, j, \iter, Z}, v_j)$ is received from $\FABA$ with $\sid_{\hop, j, \iter, Z}$ for all $j \in \{1, \ldots, n \}$, select
               the least indexed $P_j$, such that $v_j = 1$. Then set $\hop = \hop + 1$ and update the following.
                \begin{myitemize}
		\item[--] $\Selected_{(Z, \iter)} = \Selected_{(Z, \iter)} \cup \{P_j\}$.
		 $\Products_{(Z, \iter)} = \Products_{(Z, \iter)} \setminus \Products^{(j)}_{(Z, \iter)}$.
		\item[--] $\forall P_k \in \PartySet \setminus \{Z \cup \Selected_{(Z, \iter)}\}$:
		$\Products^{(k)}_{(Z, \iter)} = \Products^{(k)}_{(Z, \iter)} \setminus \Products^{(j)}_{(Z, \iter)}$.
		\end{myitemize}
	     \end{myenumerate}
	\end{myitemize}
\item[--] $\forall P_j \in \PartySet \setminus \Selected_{(Z, \iter)}$, participate in an instance of $\PiPerDefaultShare$ with public input 
  $c^{(j)}_{(Z, \iter)} = 0$.
\item[--] \textbf{Output} : Let $c_{(Z, \iter)} \defined c^{(1)}_{(Z, \iter)} + \ldots + c^{(n)}_{(Z, \iter)}$. Output
$\{[c^{(1)}_{(Z, \iter)}]_q , \ldots, [c^{(n)}_{(Z, \iter)}]_q, [c_{(Z, \iter)}]_q \}_{P_i \in S_q}$.
\end{myitemize}
\end{protocolsplitbox}
Lemma \ref{lemma:OptMult} is proven in Appendix \ref{app:Perfect}. To handle $M$ pairs of inputs in $\OptMult$, 
  in each hop, every $P_i$ 
  calls $\FVSS$ $M$ times to share $M$ summations. 
  While voting for a candidate summand-sharing party in a hop, the parties check whether it has shared $M$ values. Hence, 
  there will be $\Order(n^2 M)$ calls to $\FVSS$, but {\it only} $\Order(n^2)$ calls to $\FABA$. 
\begin{lemma}
\label{lemma:OptMult}
Let  $\AdvStructure$ satisfy the $\Q^{(4)}(\PartySet, \AdvStructure)$ condition and 
 $\ShareSpec = \{ \PartySet \setminus Z | Z \in \AdvStructure\}$. 
 Consider an arbitrary $Z \in \AdvStructure$ and $\iter$, 
  such that all honest parties participate in the instance $\OptMult(\PartySet,\AdvStruct,\SharingSpec,[a],[b], Z,\iter)$. Then all
  honest parties eventually compute  $[c_{(Z, \iter)}], [c^{(1)}_{(Z, \iter)}], \ldots, [c^{(n)}_{(Z, \iter)}]$
  where $c_{(Z, \iter)} = c^{(1)}_{(Z, \iter)} + \ldots + c^{(n)}_{(Z, \iter)}$, 
  provided no honest party is included in the $\Discarded$ and $\LocalDiscarded^{(i)}_{\iter'}$ sets 
  and each honest party in the $ \Waitlist^{(i)}_{\iter'}$ sets of every honest $P_i$ is eventually removed, 
   for all
  $\iter' < \iter$. If no party in $\PartySet \setminus Z$ acts maliciously, then 
     $c_{(Z, \iter)} = ab$. In the protocol, $\Adv$ does not learn anything additional about $a$ and $b$.
          The protocol makes $\Order(n^2)$ calls to $\FVSS$ and $\FABA$.
\end{lemma}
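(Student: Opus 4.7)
The plan is to decompose the claim into four parts: (i) each hop of the outer loop eventually terminates, (ii) the outer loop itself eventually halts (so the output is produced), (iii) the correctness and privacy assertions hold, and (iv) the communication complexity bound is met. The key structural fact I will use throughout is that, because $\AdvStructure$ satisfies $\Q^{(4)}(\PartySet,\AdvStructure)$, for every pair $(p,q)$ and every $Z\in\AdvStructure$ the set $(S_p\cap S_q)\setminus Z=\PartySet\setminus(Z_p\cup Z_q\cup Z)$ contains an honest party. Indeed, if $Z^\star$ is the actual corrupt set, then $Z_p,Z_q,Z,Z^\star$ are four members of $\AdvStructure$ whose union cannot cover $\PartySet$ by $\Q^{(4)}$, so an honest party exists in $(S_p\cap S_q)\setminus Z$.

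For hop termination, I will first observe that the $\FVSS$ calls made by every honest party eventually produce output at every honest party, since $\FVSS$ guarantees delivery. Next, I will use the liveness assumption supplied by the lemma: every honest party eventually leaves each earlier waitlist $\Waitlist^{(i)}_{\iter'}$ and is never placed into $\Discarded$ or $\LocalDiscarded^{(i)}_{\iter'}$. Combining these, for every honest $P_j\in\PartySet\setminus(Z\cup\Selected_{(Z,\iter)})$, every honest $P_i$ will eventually send $(\vote,\sid_{\hop,j,\iter,Z},1)$ to $\FABA$. By the properties of $\FABA$, the $n$ ABA instances of the hop all eventually output a decision at every honest party; and because every still-eligible honest party receives $1$-votes from all honest parties, by ABA's validity at least one such instance decides $1$. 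Hence some least-indexed $P_j$ is selected and the hop ends.

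For outer-loop termination I will show that $|\Products_{(Z,\iter)}|$ strictly decreases in every hop. Suppose for contradiction that after some hop the pair $(p,q)$ remains in $\Products_{(Z,\iter)}$; then no selected party lies in $S_p\cap S_q$. Since a party is added to $\Selected_{(Z,\iter)}$ each hop and $|\PartySet\setminus Z|\le n$, after at most $n$ hops every party in $\PartySet\setminus Z$ is selected, including an honest party of $(S_p\cap S_q)\setminus Z$ guaranteed by $\Q^{(4)}$, a contradiction. Thus the outer loop terminates in at most $n$ hops. Correctness follows by observing that when no $P_j\in\PartySet\setminus Z$ is malicious (in particular when $Z^\star\subseteq Z$), each pair $(p,q)$ is shared exactly once — by the first selected $P_j$ whose $\Products^{(j)}_{(Z,\iter)}$ contains it — so summing $c^{(j)}_{(Z,\iter)}$ over all $P_j$ gives exactly $\sum_{(p,q)}[a]_p[b]_q=ab$; the default-sharing step contributes $0$ for the unselected parties. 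Privacy reduces to the $\AdvStructure$-privacy of $\ShareSpec$ and the ideal behavior of $\FVSS$: for every honest dealer $P_j$ there exists some $S_{q^\star}\in\ShareSpec$ with $S_{q^\star}\cap Z^\star=\emptyset$, so the share $[c^{(j)}_{(Z,\iter)}]_{q^\star}$ is information-theoretically hidden from $\Adv$, and the remaining shares are random subject to their sum; the inputs $[a],[b]$ are accessed only through their shares already held by $\Adv$.

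Finally, for complexity I will count: the outer loop runs for at most $n$ hops, each hop uses at most $n$ calls to $\FVSS$ (one per candidate summand-sharing party) and at most $n$ calls to $\FABA$ (one per ACS vote), giving $\Order(n^2)$ calls to each, as claimed. The main obstacle I expect is the liveness argument for the ACS phase of each hop: one must carefully combine the assumption on the waitlist/discard sets of earlier iterations with the $\Q^{(4)}$-based existence of an eligible honest summand-sharing party, and then invoke both the liveness and validity of $\FABA$ to conclude that at least one instance in each hop decides $1$. Everything else (correctness, privacy, complexity) is relatively routine once this termination skeleton is in place.
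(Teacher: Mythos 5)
Your decomposition into hop-termination, loop-termination, correctness/privacy, and complexity matches the paper's structure, and your $\Q^{(4)}$-based observation that $(S_p\cap S_q)\setminus Z$ always contains an honest party is exactly the key claim the paper isolates. The loop-termination, correctness, privacy, and complexity arguments are all sound and essentially identical to the paper's.

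However, your hop-liveness argument contains a real gap, and it is precisely the point you flag as the ``main obstacle'' without actually resolving it. You assert that ``every still-eligible honest party receives $1$-votes from all honest parties,'' but this is false under the protocol: once any honest party sees some $\FABA$ instance decide $1$, it immediately votes $0$ in all remaining instances of that hop. So it is entirely possible that the eligible honest candidate $P_k$ accumulates $0$-votes from honest parties before accumulating enough $1$-votes, and you cannot conclude via validity that $P_k$'s instance decides $1$. The paper resolves this with a two-case split: either (a) no honest party ever votes $0$ in $P_k$'s $\FABA$, in which case all honest parties eventually vote $1$ and validity gives a $1$-decision on $P_k$'s instance; or (b) some honest party does vote $0$ in $P_k$'s instance, which by the protocol logic can only happen because that party has already received a $1$-decision from some other $\FABA$ with $\sid_{\hop, j, \iter, Z}$, and then agreement of $\FABA$ propagates that $1$-decision to every honest party. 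Only after establishing that some instance decides $1$ can one argue that every honest party votes in every remaining instance, and hence that all $n$ instances eventually terminate — your proof states the ``all instances terminate'' conclusion before establishing the precondition, which reverses the logical order.

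A secondary omission: you do not argue that, once a (possibly corrupt) $P_j$ is selected in a hop, every honest party eventually holds $[c^{(j)}_{(Z,\iter)}]$. The paper shows this by observing that $P_j$'s selection requires at least one honest party to have voted $1$ for $P_j$'s instance, which in turn requires that honest party to have received its share from the $\FVSS$ instance with $\sid_{\hop,j,\iter,Z}$; by the semantics of $\FVSS$ this means $P_j$ did supply input, so all honest parties eventually receive their shares. Your ``$\FVSS$ calls made by every honest party eventually produce output'' does not cover this, since the selected dealer may be corrupt.
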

\subsection{Multiplication Protocol with Cheater Identification}
Protocol $\MultLCE$ with cheater identification (Fig \ref{fig:MultLCE}) takes as inputs an iteration number
 $\iter$ and $([a], [b])$. If {\it no} party behaves maliciously, then the protocol securely outputs $[ab]$.  
  In the protocol, parties execute an instance of $\OptMult$ for each $Z \in \AdvStructure$ and compare the outputs. 
   Since at least one of the $\OptMult$ instances is guaranteed to output $[ab]$, if all the outputs are same, then no cheating has occurred.
   Otherwise, the parties identify a pair of conflicting
   $\OptMult$ instances with different outputs, executed with respect to $Z$ and $Z'$. Let $\Selected_{(Z, \iter)}$ and 
   $\Selected_{(Z', \iter)}$ be the summand-sharing parties in the conflicting $\OptMult$ instances.
   The parties next proceed to a cheater-identification phase to identify at least one corrupt party in $\Selected_{(Z, \iter)} \cup \Selected_{(Z', \iter)}$. 
   
   Each $P_j \in \Selected_{(Z, \iter)}$ is made to share the sum of the summands from its summand-list 
   overlapping
   with the summand-list of each $P_k \in \Selected_{(Z', \iter)}$ and vice-versa.   
    Next, these ``partitions" are compared, 
    based on which
   at least one corrupt party in $\Selected_{(Z, \iter)} \cup \Selected_{(Z', \iter)}$ is guaranteed to be identified provided {\it all} the parties in 
   $\Selected_{(Z, \iter)} \cup \Selected_{(Z', \iter)}$ secret-share the required partitions.
   The cheater-identification phase will be ``stuck" if the {\it corrupt} parties in  $\Selected_{(Z, \iter)} \cup \Selected_{(Z', \iter)}$
    do not participate. To prevent such corrupt parties from causing future instances of
    $\MultLCE$ to fail, the parties wait-list all the parties in $\Selected_{(Z, \iter)} \cup \Selected_{(Z', \iter)}$.  A party
     is then ``released" only after it has shared all the required values as part of the cheater-identification phase.
    Every honest party is eventually released from the waiting-list. 
    This wait-listing guarantees that corrupt parties will be barred from acting as summand-sharing parties
     as part of the $\OptMult$ instances
    of future invocations of $\MultLCE$, until they participate in the cheater-identification phase of previous failed instances of $\MultLCE$.
     Since the cheater-identification phase is executed asynchronously, 
     each party maintains its own set of {\it locally-discarded} parties, where corrupt parties are included as and when they are identified.  
\begin{protocolsplitbox}{$\MultLCE(\PartySet, \AdvStruct, \SharingSpec, [a], [b], \iter)$}{Code for $P_i$ for multiplication with cheater identification
 for iteration $\iter$ and session id $\sid$, in the $\FVSS$-hybrid}{fig:MultLCE}
 \justify
\begin{myitemize}
\item[--] \textbf{Initialization}: Initialize $\Waitlist^{(i)}_{\iter} = \LocalDiscarded^{(i)}_{\iter} = \emptyset$ and 
  $\flag^{(i)}_{\iter} = \bot$. Fix some (publicly-known) $Z' \in \AdvStruct$.
  \item[--] \textbf{Running Optimistic Multiplication and Checking Pair-wise Differences}:
         \begin{myitemize}
         \item For each $Z \in \AdvStruct$, participate in the instance 
          $\OptMult(\PartySet, \AdvStruct, \SharingSpec,[a], [b], Z, \iter)$ with session id $\sid$. Let
          $\{[c^{(1)}_{(Z, \iter)}]_q ,\allowbreak  \ldots, [c^{(n)}_{(Z, \iter)}]_q, [c_{(Z, \iter)}]_q \}_{P_i \in S_q}$ be the output
          obtained. Moreover, let $\Selected_{(Z, \iter)}$ be set of summand-sharing parties and for 
           each $P_j \in \Selected_{(Z, \iter)}$, let $\Products^{(j)}_{(Z, \iter)}$ be the set of ordered pairs of indices corresponding to the summands
            whose sum has been shared by $P_j$, during this
          instance of
          $\OptMult$.
          \item  Corresponding to
           every $Z \in \AdvStruct$, participate in an instance of $\PiPerRec$ to reconstruct 
            $c_{(Z, \iter)} - c_{(Z', \iter)}$.         
         \end{myitemize}
    \item[--] \textbf{Output in Case of Success}: If $\forall Z \in \AdvStruct$, $c_{(Z, \iter)} - c_{(Z', \iter)} = 0$, then 
    set $\flag^{(i)}_{\iter} = 0$ and output
     $\{[c_{(Z', \iter)}]_q\}_{P_i \in S_q}$.
     \item[--] \textbf{Waiting-List and Cheater Identification in Case of Failure}: If $\exists Z \in \AdvStruct:  c_{(Z, \iter)} - c_{(Z', \iter)} \neq 0$,  then
     let $Z$ be the first set such that $c_{(Z, \iter)} - c_{(Z', \iter)} \neq 0$. 
     Set the {\it conflicting-sets} to be $Z, Z'$, $\flag^{(i)}_{\iter} = 1$ and proceed as follows. 
          \begin{myitemize}
          \item \textbf{Wait-listing Parties}: Set $\Waitlist^{(i)}_{\iter} = \Selected_{(Z, \iter)} \cup \Selected_{(Z', \iter)}$.
          \item \textbf{Sharing Partition of the Summand-Sums}:
            \begin{myenumerate}
               \item If $P_i \in \Selected_{(Z, \iter)}$, compute 
               $\displaystyle d^{(ij)}_{(Z, \iter)} = \sum_{(p, q) \in \Products^{(i)}_{(Z, \iter)} \cap \Products^{(j)}_{(Z', \iter)}} [a]_p [b]_q$, for
               every $ P_j \in \Selected_{(Z', \iter)}$. 
                  Randomly pick 
                  ${d^{(ij)}_{(Z, \iter)}}_1, \ldots, {d^{(ij)}_{(Z, \iter)}}_h$ such that ${d^{(ij)}_{(Z, \iter)}}_1 + \ldots + {d^{(ij)}_{(Z, \iter)}}_h = d^{(ij)}_{(Z, \iter)}$.   
                  Send $(\Dealer, \sid_{i, j, \iter, Z}, ({d^{(ij)}_{(Z, \iter)}}_1, \ldots, {d^{(ij)}_{(Z, \iter)}}_h)$ to
                   $\FVSS$, where $\sid_{i, j, \iter, Z} = \sid || i || j || \iter || Z$.
                \item If $P_i \in \Selected_{(Z', \iter)}$, compute 
               $\displaystyle e^{(ij)}_{(Z', \iter)} = \sum_{(p, q) \in \Products^{(i)}_{(Z', \iter)} \cap \Products^{(j)}_{(Z, \iter)}} [a]_p [b]_q$, for
               all  $P_j \in \Selected_{(Z, \iter)}$.
                  Randomly pick ${e^{(ij)}_{(Z', \iter)}}_1, \ldots, {e^{(ij)}_{(Z', \iter)}}_h$ which sum up to $e^{(ij)}_{(Z', \iter)}$ and then
                  send $(\Dealer, \allowbreak  \sid_{i, j, \iter, Z'}, ({e^{(ij)}_{(Z', \iter)}}_1, \ldots, {e^{(ij)}_{(Z', \iter)}}_h)$ to $\FVSS$, where $\sid_{i, j, \iter, Z'} = \sid || i || j || \iter || Z'$.
               \item Corresponding to every $P_j \in \Selected_{(Z, \iter)}$ and every $P_k \in \Selected_{(Z', \iter)}$, 
               keep requesting for an output from $\FVSS$ with session id $\sid_{j, k, \iter, Z}$, till an output is obtained.                            
               \item Corresponding to every $P_j \in \Selected_{(Z', \iter)}$ and every $P_k \in \Selected_{(Z, \iter)}$, 
               keep requesting for an output from $\FVSS$ with session id  $\sid_{j, k, \iter, Z'}$, till an output is obtained.                            
            \end{myenumerate}
          \item \textbf{Removing Parties from Wait List}: Set $\Waitlist^{(i)}_{\iter} = \Waitlist^{(i)}_{\iter} \setminus \{P_j \}$, if
          {\it all} the following criteria pertaining to $P_j$ hold:
                \begin{myenumerate}
		\item \textbf{$P_j \in \Selected_{(Z, \iter)}$} : if an output $(\Share, \sid_{j, k, \iter, Z}, P_j, \allowbreak \{ [d^{(jk)}_{(Z, \iter)}]_q \}_{P_i \in S_q})$ is
		 received from $\FVSS$ with session id
		$\sid_{j, k, \iter, Z}$, corresponding to each $P_k \in \Selected_{(Z', \iter)}$,
		\item \textbf{$P_j \in \Selected_{(Z', \iter)}$} : if an output $(\Share, \sid_{j, k,  \iter, Z'}, P_j, \allowbreak \{ [e^{(jk)}_{(Z', \iter)}]_q \}_{P_i \in S_q})$ is
		 received from $\FVSS$ with session id
		$\sid_{j, k, \iter, Z'}$, corresponding to 
	        every $P_k \in \Selected_{(Z, \iter)}$.
		\end{myenumerate}
            \item \textbf{Verifying the Summand-Sum Partitions and Locally Identifying Corrupt Parties}:   
                 \begin{myenumerate}
                 \item For every $P_j \in \Selected_{(Z, \iter)}$, participate in an instance of $\PiPerRec$  to reconstruct the difference value
                 $ c^{(j)}_{(Z, \iter)} - \sum_{P_k \in \Selected_{(Z', \iter)}} d^{(jk)}_{(Z, \iter)}$. 
                   If the difference is not $0$, then set
                    $\LocalDiscarded^{(i)}_{\iter} = \LocalDiscarded^{(i)}_{\iter} \cup \{P_j \}$.                   
                 \item For every $P_j \in \Selected_{(Z', \iter)}$, participate in an instance of $\PiPerRec$ to reconstruct the difference value
                $ c^{(j)}_{(Z', \iter)} - \sum_{P_k \in \Selected_{(Z, \iter)}} e^{(jk)}_{(Z', \iter)}$. 
                   If the difference is not $0$, then set
                    $\LocalDiscarded^{(i)}_{\iter} = \LocalDiscarded^{(i)}_{\iter} \cup \{P_j \}$. 
                 \item For each ordered pair $(P_j, P_k)$ where $P_j \in \Selected_{(Z, \iter)}$ and $P_k \in \Selected_{(Z', \iter)}$, 
                 participate in an instance of $\PiPerRec$  to reconstruct  $d^{(jk)}_{(Z, \iter)} - e^{(kj)}_{(Z', \iter)}$. 
                 If the value is not $0$, then do the following:
                     \begin{myenumerate}
                       \item[i.] Participate in instances of $\PiPerRec$ to reconstruct 
                     $d^{(jk)}_{(Z, \iter)}$ and $e^{(kj)}_{(Z', \iter)}$.
                        Participate in instances of $\PiPerRecShare$ to reconstruct
                      $[a]_p$ and $[b]_q$, such that $(p, q) \in \Products^{(j)}_{(Z, \iter)} \cap \Products^{(k)}_{(Z', \iter)}$.
                        \item[ii.] Compare $ \sum_{(p, q) \in \Products^{(j)}_{(Z, \iter)} \cap \Products^{(k)}_{(Z', \iter)}} [a]_p [b]_q$ with 
                        $d^{(jk)}_{(Z, \iter)}$ and $e^{(kj)}_{(Z', \iter)}$ and identify the corrupt party $P_c \in \{P_j, P_k \}$. 
                        Set $\LocalDiscarded^{(i)}_{\iter} = \LocalDiscarded^{(i)}_{\iter} \cup \{P_c \}$. 
                     \end{myenumerate}                 
                 \end{myenumerate}                
          \end{myitemize}       
  \end{myitemize}
  \end{protocolsplitbox}
  Lemma \ref{lemma:MultLCE} is proved in Appendix \ref{app:Perfect}.
  In the Lemma, we say that an instance of $\MultLCE$ is {\it successful}, if $c_{(Z, \curr)} - c_{(Z', \curr)} = 0$ for all
  $Z \in \AdvStructure$ with respect to the publicly-known $Z' \in \AdvStructure$ fixed in the protocol, else the instance {\it fails}.   
  The modifications to $\MultLCE$ for handling $M$ pairs of inputs are simple
  (see Appendix \ref{app:Perfect}), requiring
   $\Order(M \cdot |\AdvStructure| \cdot n^2)$ calls to $\FVSS$, $\Order(|\AdvStructure| \cdot n^2)$ calls to $\FABA$
   and a communication of $\Order((M \cdot |\AdvStructure|^2 \cdot n^2 + |\AdvStructure| \cdot n^4) \log{|\F|})$ bits. 
 \begin{lemma}
 \label{lemma:MultLCE}
 Let $\AdvStructure$ satisfy the $\Q^{(4)}(\PartySet, \AdvStructure)$ condition and let 
  all honest parties participate in $\MultLCE(\PartySet, \allowbreak \AdvStruct, \SharingSpec, [a], \allowbreak [b], \iter)$.
   Then, $\Adv$ does not learn any additional information about $a$ and $b$. Moreover, the following hold.
 \begin{myitemize}
 \item[--] The instance  will eventually be deemed to succeed or fail by the honest parties, where for a successful
  instance, the parties output a sharing of $ab$.
 \item[--] If the instance is not successful,
   then the honest parties will agree on a pair $Z, Z' \in \AdvStructure$ such that 
  $c_{(Z, \iter)} - c_{(Z', \iter)} \neq 0$. Moreover, all honest parties present in the 
  $ \Waitlist^{(i)}_{\iter}$ set of any honest party $P_i$ will eventually be removed and no honest party
    is ever included in the $\LocalDiscarded^{(i)}_{\iter}$ set
   of any honest $P_i$.
   Furthermore, there will be a pair of parties $P_j, P_k$ from
  $\Selected_{(Z, \iter)} \cup \Selected_{(Z', \iter)}$, with at least one of them being maliciously-corrupt, such that if both
  $P_j$ and $P_k$ are removed from the set $\Waitlist^{(h)}_{\iter}$ of any honest party $P_h$, then eventually
   the corrupt party(ies) among $P_j, P_k$ will be included
        in the set $\LocalDiscarded^{(i)}_{\iter}$ of every honest $P_i$.        
      \item[--] The protocol needs $\Order(|\AdvStructure| n^2)$ calls to $\FVSS$ and $\FABA$ and communicates
       $\Order((|\AdvStructure|^2  n^2  + |\AdvStructure| n^4) \log{|\F|})$ bits. 
 \end{myitemize}
 \end{lemma}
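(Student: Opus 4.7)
I would separate the lemma into four parts: (i) eventual completion and common agreement on success or failure, (ii) correctness in the successful branch, (iii) cheater identification in the failing branch together with the wait-list and discard invariants, and (iv) privacy and complexity. The iteration hypotheses guarantee that no honest party is in $\Discarded$, any honest party in $\Waitlist^{(i)}_{\iter'}$ is eventually removed, and no honest party is in $\LocalDiscarded^{(i)}_{\iter'}$ for $\iter' < \iter$. Hence Lemma~\ref{lemma:OptMult} applies to each of the $|\AdvStructure|$ parallel invocations of $\OptMult$, so all honest parties eventually hold $[c^{(j)}_{(Z, \iter)}]$ and $[c_{(Z, \iter)}]$ for every $Z$. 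The subsequent $\PiPerRec$ instances then terminate, so every honest party eventually learns each difference $c_{(Z, \iter)} - c_{(Z', \iter)}$; because these values are determined by the shares delivered through $\FVSS$, every honest party computes the same numbers and thus agrees on $\flag^{(i)}_{\iter}$ and, in case of failure, on the first conflicting $Z$.

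For the successful branch, the actual corrupt set $Z^{\star}$ lies in $\AdvStructure$. Since every summand-sharing party in $\PartySet \setminus Z^{\star}$ is honest, Lemma~\ref{lemma:OptMult} yields $c_{(Z^{\star}, \iter)} = ab$. Success means every $c_{(Z, \iter)}$ equals $c_{(Z', \iter)} = c_{(Z^{\star}, \iter)} = ab$, so the shared output is indeed a sharing of $ab$.

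For the failing branch, the key combinatorial fact is that $\OptMult$ drains $\Products_{(Z, \iter)}$ to empty, so the final per-party sets $\{\Products^{(j)}_{(Z, \iter)}\}_{P_j \in \Selected_{(Z, \iter)}}$ partition $\{1, \ldots, |\SharingSpec|\}^2$, and analogously for $Z'$. This gives two identities: for every honest $P_j \in \Selected_{(Z, \iter)}$ one has $\sum_k d^{(jk)}_{(Z, \iter)} = c^{(j)}_{(Z, \iter)}$, and when both $P_j \in \Selected_{(Z, \iter)}$ and $P_k \in \Selected_{(Z', \iter)}$ are honest one has $d^{(jk)}_{(Z, \iter)} = e^{(kj)}_{(Z', \iter)}$. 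Hence no honest party is ever placed in $\LocalDiscarded^{(i)}_{\iter}$, and honest wait-listed parties are eventually released because they follow the protocol and successfully submit the required partitions to $\FVSS$. Expanding the failure witness via these identities yields
\[
c_{(Z, \iter)} - c_{(Z', \iter)} = \sum_j \Bigl(c^{(j)}_{(Z, \iter)} - \sum_k d^{(jk)}_{(Z, \iter)}\Bigr) - \sum_k \Bigl(c^{(k)}_{(Z', \iter)} - \sum_j e^{(kj)}_{(Z', \iter)}\Bigr) + \sum_{j,k} \bigl(d^{(jk)}_{(Z, \iter)} - e^{(kj)}_{(Z', \iter)}\bigr),
\]
so a nonzero left side forces at least one summand on the right to be nonzero. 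Once both members of the implicated pair $(P_j, P_k)$ are removed from $\Waitlist^{(h)}_{\iter}$, the corresponding $\PiPerRec$ diagnostic fires and the (at least one) corrupt party in $\Selected_{(Z, \iter)} \cup \Selected_{(Z', \iter)}$ lands in $\LocalDiscarded^{(i)}_{\iter}$ of every honest $P_i$.

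Privacy follows by applying Lemma~\ref{lemma:OptMult} to each $\OptMult$ call and noting that the diagnostic reconstructions leak nothing new: the differences $c_{(Z, \iter)} - c_{(Z', \iter)}$, the local residuals, and the pair residuals $d^{(jk)}_{(Z, \iter)} - e^{(kj)}_{(Z', \iter)}$ all vanish on honest inputs, and whenever a pair residual is nonzero the reconstructed $[a]_p$ and $[b]_q$ satisfy $\{P_j, P_k\} \subseteq S_p \cap S_q$ with at least one of $P_j, P_k$ corrupt, so the adversary already held those shares. The complexity bound aggregates $|\AdvStructure|$ invocations of $\OptMult$, contributing $\Order(|\AdvStructure| n^2)$ calls to $\FVSS$ and $\FABA$ each, plus $\Order(n^2)$ diagnostic $\FVSS$ calls and $\PiPerRec, \PiPerRecShare$ reconstructions, each of cost $\Order(|\AdvStructure| n^2 \log{|\F|})$ bits, giving the stated totals. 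The main obstacle will be synchronizing the wait-list discipline with the diagnostic identity: proving that corrupt parties who refuse to participate are permanently barred from future iterations, while every honest party is guaranteed to be released, requires carefully tracking which events in the asynchronous schedule cause which deletions from $\Waitlist^{(i)}_{\iter}$.
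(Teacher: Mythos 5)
Your plan is correct and follows essentially the same route as the paper's own proof (Claims B.11--B.19). The one presentational variation worth noting: you derive the witness directly from the telescoping identity
\[
c_{(Z, \iter)} - c_{(Z', \iter)} = \sum_j \Bigl(c^{(j)}_{(Z, \iter)} - \sum_k d^{(jk)}_{(Z, \iter)}\Bigr) - \sum_k \Bigl(c^{(k)}_{(Z', \iter)} - \sum_j e^{(kj)}_{(Z', \iter)}\Bigr) + \sum_{j,k} \bigl(d^{(jk)}_{(Z, \iter)} - e^{(kj)}_{(Z', \iter)}\bigr),
\]
whereas the paper phrases the same algebra as a proof by contradiction inside a dedicated claim, after first introducing the notions of conflicting-pair and characteristic function. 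Both expansions hinge on the partition property you observe (that the per-party summand sets cover all of $\{1,\ldots,|\ShareSpec|\}^{2}$ after $\OptMult$ drains, so that for honest $P_j$ the residual $c^{(j)}_{(Z, \iter)} - \sum_k d^{(jk)}_{(Z, \iter)}$ vanishes and, for two honest parties, $d^{(jk)}_{(Z,\iter)} = e^{(kj)}_{(Z',\iter)}$). The paper's intermediate definitions exist mainly because they are reused in the subsequent counting argument for $\MultGCE$ (pigeonhole over $tn+1$ failed iterations), so your terser formulation is fine for this lemma in isolation but you would have to re-introduce the characteristic-function notion when analysing the outer loop. Your wait-list and privacy arguments also match the paper's: the key step in both is that removal from $\Waitlist^{(h)}_{\iter}$ is witnessed by delivery of the concerned $\FVSS$ outputs, which guarantees every honest party eventually has the shares needed for the diagnostic reconstructions, and that any reconstructed $[a]_p[b]_q$ in the pair-residual branch was already available to the corrupt member of the implicated pair (since it belongs to $S_p \cap S_q$).
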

  \subsection{Multiplication Protocol}
Protocol $\MultGCE$ (Fig \ref{fig:MultGCE}) takes $([a], [b])$
 and securely generates $[ab]$.
  The protocol proceeds in iterations, where in each iteration, an instance of $\MultLCE$ is invoked. 
   If the iteration is successful, then the parties take the output of the corresponding $\MultLCE$ instance. Else, they proceed to the next iteration, with the cheater-identification phase of failed $\MultLCE$ instances running in the background. Let $t$ be the cardinality of maximum-sized subset from $\AdvStructure$.  To upper bound the
   number of failed iterations, the parties run ACS  after every $tn + 1$ failed iterations
   to ``globally" include a new corrupt party in $\Discarded$.   
      This is done through calls to $\FABA$,
   where the parties vote for a candidate corrupt party, based on the $\LocalDiscarded$ sets of {\it all} failed iterations.
   The idea is that during these $tn + 1$ failed iterations, there will be at least one {\it corrupt} party who is eventually included
   in the $\LocalDiscarded$ set of {\it every} honest party. This is because there can be at most 
   $tn$ distinct pairs of ``conflicting-parties" across the $tn + 1$ failed iterations (follows from Lemma \ref{lemma:MultLCE}).
    At least
   one conflicting pair, say $(P_j, P_k)$, is guaranteed to repeat among the $tn + 1$ failed
   instances, with {\it both} $P_j$ and $P_k$ being removed from the previous waiting-lists.
    Thus, the corrupt party(ies) among $P_j, P_k$ is eventually included to the
   $\LocalDiscarded$ sets. 
   There can be at most $t(tn + 1)$ failed iterations after which {\it all} the corrupt parties will be discarded
   and the next iteration is guaranteed to be successful, with only {\it honest} parties acting as the candidate summand-sharing parties in the underlying instances of $\OptMult$.  
\begin{protocolsplitbox}{$\MultGCE(\PartySet, \AdvStruct, \SharingSpec, [a], [b])$}{Multiplication protocol 
 in the $(\FVSS, \FABA)$-hybrid for $\sid$.
 The above code is executed by every party $P_i$}{fig:MultGCE}
 \justify
 \begin{myitemize}
	\item[--] \textbf{Initialization}: Set $t = \max\{ |Z| :  Z \in \AdvStruct 	\}$, initialize $\Discarded = \emptyset$ and $\iter = 1$.
     \item[--] \textbf{Multiplication with Cheater Identification}: Participate in the instance $\MultLCE(\PartySet, \AdvStruct, \SharingSpec,  [a], [b], \iter)$
     with $\sid$.
          \begin{myitemize}
          \item \textbf{Positive Output}: If $\flag^{(i)}_{\iter}$ is set to $0$, then output the shares obtained
          during the $\MultLCE$ instance.
          \item \textbf{Negative Output}: If $\flag^{(i)}_{\iter}$ is set to $1$ during the
           $\MultLCE$ instance, then proceed as follows.
              \begin{myitemize}
              \item \textbf{Identifying a Cheater Party Through ACS}: If $\iter = k \cdot [tn + 1]$ for some  $k \geq 1$, then do the following.
                  \begin{myenumerate}
                      \item Let $\LocalDiscarded^{(i)}_{r}$ be the set of locally-discarded parties for the
                       instance $\MultLCE(\PartySet, \AdvStruct, \SharingSpec,  [a], [b], r)$, for $r = 1, \ldots, \iter$.
                      For $j = 1, \ldots, n$, send $(\vote, \sid_{j, \iter, k}, 1)$ to $\FABA$ where $\sid_{j, \iter, k} = \sid || j || \iter || k$, if for any
                       $r \in \{1, \ldots, \iter \}$, party $P_j$
                       is present in $\LocalDiscarded^{(i)}_{r}$ and $P_j \not \in \Discarded$. 
                        \item For $j = 1, \ldots, n$, keep requesting for an output from $\FABA$ with $\sid_{j, \iter, k}$,  until an output is received.    
		     \item If  $\exists P_j \in \PartySet$ such that
		      $(\decide, \sid_{j, \iter, k}, 1)$ is received from $\FABA$ with  $\sid_{j, \iter, k}$, 
	    		  then for each $P_{\ell} \in \PartySet$, for which no $\vote$
                           message has been sent yet, send
                 $(\vote, \sid_{\ell ,\iter, k}, 0)$ to $\FABA$ with $\sid_{\ell, \iter, k}$.
               \item Once an output $(\decide, \sid_{\ell, \iter, k}, v_{\ell})$ is received from $\FABA$ with $\sid_{\ell, \iter, k}$ for every $\ell \in \{1, \ldots, n \}$, select
               the minimum indexed party $P_j$ from $\PartySet$, such that $v_j = 1$. Then set $\Discarded = \Discarded \cup \{P_j \}$, set 
               $\iter = \iter + 1$ and go to the step labelled \textbf{Multiplication with Cheater Identification}.
                 \end{myenumerate}
              \item Else set $\iter = \iter + 1$ and go to the step  \textbf{Multiplication with Cheater Identification}.
              \end{myitemize}
          
          \end{myitemize}
 
\end{myitemize}
\end{protocolsplitbox}

Lemma \ref{lemma:MultGCE} is proved in Appendix \ref{app:Perfect}. To handle $M$ pairs of inputs,
  the instances of $\MultLCE$ are now executed with $M$ pairs of inputs in each iteration.
    This requires  $\Order(M \cdot |\AdvStructure| \cdot n^5)$ calls to $\FVSS$, $\Order(|\AdvStructure| \cdot n^5)$ calls to $\FABA$
   and a communication of 
    $\Order((M \cdot |\AdvStructure|^2 \cdot n^5 +  |\AdvStructure| \cdot n^7) \log{|\F|})$ bits. 
   \begin{lemma}
   \label{lemma:MultGCE}
    Let $\AdvStructure$ satisfy the $\Q^{(4)}(\PartySet, \AdvStructure)$ condition and let 
    $\ShareSpec = (S_1, \ldots, S_h) = \{ \PartySet \setminus Z | Z \in \AdvStructure\}$.
    Then $\MultGCE$ takes at most $t(tn + 1)$ iterations and all honest parties eventually output a secret-sharing of $[ab]$, where
    $t = \max\{ |Z| :  Z \in \AdvStruct 	\}$.
    In the protocol, $\Adv$ does not learn anything additional about $a$ and $b$.
         The protocol makes $\Order(|\AdvStructure| \cdot n^5)$ calls to $\FVSS$ and $\FABA$ and additionally incurs a communication of
     $\Order(|\AdvStructure|^2 \cdot n^5 \log{|\F|} + |\AdvStructure| \cdot n^7 \log{|\F|})$ bits.
   \end{lemma}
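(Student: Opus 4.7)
The plan is to establish the lemma in four parts that mirror the claim: eventual termination with correct output, the $t(tn+1)$ iteration bound, secrecy against $\Adv$, and the resource count. The heart of the argument is a pigeonhole / blame-pair combinatorial lemma on the sequence of failed iterations, layered on top of the eventual-delivery guarantees from Lemma \ref{lemma:MultLCE}.

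First I would handle eventual termination and the iteration bound together. By Lemma \ref{lemma:MultLCE}, every invocation of $\MultLCE$ eventually resolves for each honest party as either a success (in which case every honest party outputs a sharing of $ab$ and we are done for that iteration) or a failure producing an agreed conflicting pair $Z,Z'\in\AdvStructure$ and, by the lemma's third bullet, a pair of summand-sharing parties $\{P_j,P_k\}\subseteq \Selected_{(Z,\iter)}\cup\Selected_{(Z',\iter)}$ with at least one corrupt member; call such a pair a \emph{blame pair}. Since at most $t$ parties are corrupt and each blame pair contains at least one corrupt party, there are at most $tn$ distinct blame pairs. Hence in any window of $tn+1$ consecutive failed iterations, some blame pair $\{P_j,P_k\}$ repeats in two iterations $r_1<r_2$.

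The asynchronous step I want to emphasize is how this repetition translates into a globally-discarded corrupt party at the ACS invocation at $\iter=k(tn+1)$. For both $P_j$ and $P_k$ to be selected as summand-sharers in any $\OptMult$ instance of $r_2$, the eligibility predicate in $\OptMult$ forces that enough honest voters have removed both from $\Waitlist^{(\cdot)}_{r_1}$; in particular there exists an honest $P_h$ with $P_j,P_k\notin \Waitlist^{(h)}_{r_1}$. Lemma \ref{lemma:MultLCE} then guarantees the corrupt party among $\{P_j,P_k\}$ is eventually placed in $\LocalDiscarded^{(i)}_{r_1}$ of \emph{every} honest $P_i$. Consequently every honest party eventually casts a $1$-vote for that corrupt index in the ACS, and by ABA's validity and the unanimity case the ACS decides $1$ on at least one index. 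Since Lemma \ref{lemma:MultLCE} precludes any honest party from ever being placed in any $\LocalDiscarded^{(i)}_{r}$ of an honest $P_i$, only corrupt indices can receive $1$-votes from honest parties, so the party added to $\Discarded$ is genuinely corrupt. After at most $t$ such windows all corrupt parties lie in $\Discarded$; from the next iteration on, only honest parties can be selected as summand-sharers in any $\OptMult$, every one of the $|\AdvStructure|$ instances shares the same value $ab$, and $\MultLCE$ succeeds. This yields the $t(tn+1)$ iteration bound and eventual output.

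Correctness on success is immediate from Lemma \ref{lemma:MultLCE}; agreement of the success/failure flag across honest parties is inherited from the $\PiPerRec$-based public comparisons in $\MultLCE$. Privacy follows because the only values $\Adv$ sees in any iteration are (i) the $\PiPerRec$-reconstructed differences inside $\MultLCE$, which Lemma \ref{lemma:MultLCE} shows leak nothing about $a,b$, and (ii) ACS votes that depend only on the public $\LocalDiscarded$/$\Waitlist$/$\Discarded$ state; a hybrid over at most $t(tn+1)$ iterations extends this to the whole protocol. For complexity, I would multiply the per-iteration cost of $\MultLCE$ with $M$ input pairs ($\Order(M\cdot|\AdvStructure|\cdot n^2)$ $\FVSS$ calls, $\Order(|\AdvStructure|\cdot n^2)$ $\FABA$ calls, $\Order((M\cdot|\AdvStructure|^2 n^2+|\AdvStructure|n^4)\log|\F|)$ bits) by the iteration bound $t(tn+1)=\Order(n^3)$, absorbing the $\Order(n)$ extra $\FABA$ calls incurred per cheater-identification ACS; this yields exactly the stated $\Order(|\AdvStructure|n^5)$ oracle calls and $\Order((|\AdvStructure|^2 n^5+|\AdvStructure|n^7)\log|\F|)$ communication.

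The main obstacle I expect is the second paragraph: carefully chaining the synchronous-looking pigeonhole on blame pairs with the asynchronous wait-list semantics, specifically showing that mere \emph{eligibility} of a repeated blame pair in $r_2$ is enough to force the cheater-identification machinery of $r_1$ to complete from some honest viewpoint and then to propagate to \emph{every} honest party's $\LocalDiscarded^{(i)}_{r_1}$ before the window-closing ACS decides. Everything else is straightforward bookkeeping once that bridge is in place.
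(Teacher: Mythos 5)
Your proposal follows essentially the same route as the paper's proof: it uses Lemma \ref{lemma:MultLCE} to guarantee each iteration resolves, the same pigeonhole bound of $tn$ on candidate blame pairs (the paper calls them the values of the characteristic function $f_\Char$), the same observation that a repeated pair being eligible in a later iteration forces the cheater-identification of the earlier one to complete via the wait-list semantics, hence a new corrupt party lands in $\Discarded$ at each window-closing ACS, and the same windowed induction giving the $t(tn+1)$ bound, with correctness, privacy, and complexity then falling out iteration by iteration exactly as in the paper. No substantive gap.
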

      \subsection{The Pre-Processing Phase Protocol}
   The perfectly-secure pre-processing phase protocol $\PiPerTriples$ is standard. 
    The parties first jointly generate secret-sharing of $M$ random pairs of values, 
     followed by running an instance of $\MultGCE$ to securely compute
       the product of these pairs. Protocol $\MultGCE$ and the proof of Theorem \ref{thm:PiPerTriples} is provided in Appendix \ref{app:Perfect}.
  \begin{theorem}
  \label{thm:PiPerTriples}
  If $\AdvStructure$ satisfies the $\Q^{(4)}(\PartySet, \AdvStructure)$ condition,
  then  $\PiPerTriples$ is a perfectly-secure protocol for securely realizing $\FTriples$ with UC-security in the $(\FVSS, \FABA)$-hybrid model. 
   The protocol makes $\Order(M \cdot |\AdvStructure| \cdot n^5)$ calls to $\FVSS$, $\Order(|\AdvStructure| \cdot n^5)$ calls to $\FABA$
   and incurs a communication of $\Order(M \cdot |\AdvStructure|^2 \cdot n^5 \log{|\F|} + |\AdvStructure| \cdot n^7 \log{|\F|})$ bits. 
  \end{theorem}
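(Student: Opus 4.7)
The plan is to realize $\FTriples$ in two stages and then invoke UC composition. First, I will describe the protocol $\PiPerTriples$ in detail: for every index $\ell \in \{1,\dots,M\}$, each party $P_j$ picks random $a^{(j,\ell)}, b^{(j,\ell)} \in \F$ and calls $\FVSS$ to secret-share each of them; using an ACS instance (built from $n$ invocations of $\FABA$), the parties agree on a common core set $\ACSCore$ of parties with $\PartySet \setminus \ACSCore \in \AdvStructure$ whose $\FVSS$ calls have produced outputs, and then locally define $[a^{(\ell)}] = \sum_{P_j \in \ACSCore} [a^{(j,\ell)}]$ and $[b^{(\ell)}] = \sum_{P_j \in \ACSCore} [b^{(j,\ell)}]$ using the linearity of the secret-sharing. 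Since $\ShareSpec$ is $\AdvStructure$-private and $\ACSCore$ contains at least one honest party, each $a^{(\ell)}, b^{(\ell)}$ is uniformly random and hidden from $\Adv$. Then a single instance of $\MultGCE$ on the $M$ input pairs produces sharings $[c^{(\ell)}]$ with $c^{(\ell)} = a^{(\ell)} b^{(\ell)}$, by Lemma~\ref{lemma:MultGCE}.

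For the UC proof, I would construct a simulator $\SimPerTriples$ that internally runs the real-world code on behalf of the honest parties, using dummy inputs $a^{(j,\ell)}, b^{(j,\ell)} = 0$ for the honest $P_j$'s. The simulator extracts the corrupt parties' inputs to $\FVSS$ directly (since $\FVSS$ is an ideal functionality in the hybrid model), as well as the shares produced by the summand-sharing corrupt parties in each invocation of $\OptMult$ inside $\MultGCE$. From these, $\Sim$ can compute, for each $\ell$, the shares $\{[a^{(\ell)}]_q, [b^{(\ell)}]_q, [c^{(\ell)}]_q\}$ held by corrupt parties (those indexed by $S_q$ with $S_q \cap Z^{\star} \neq \emptyset$), and submit them to $\FTriples$ through the $\shares$ interface. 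After $\FTriples$ samples random $a, b$ and consistently completes the sharings, $\Sim$ uses the delayed output interface to release the corresponding share messages to the honest parties in the simulated real world at the same moments they would be released in a real execution.

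The indistinguishability argument proceeds by a hybrid over the two stages. In the sharing-generation stage, privacy of $\FVSS$ together with $\AdvStructure$-privacy of $\ShareSpec$ shows that the joint distribution of shares held by corrupt parties is statistically (in fact perfectly) identical to what $\SimPerTriples$ produces with dummy zeros, conditioned on the same ACS outcome; since ACS outputs depend only on the $\FVSS$ termination pattern, which the simulator can faithfully mimic, the distributions coincide. In the multiplication stage, Lemma~\ref{lemma:MultGCE} guarantees that $\Adv$ learns no additional information about $a^{(\ell)}, b^{(\ell)}$ beyond their product, and that the protocol eventually terminates; combined with the $\AdvStructure$-privacy of the summand-sharings performed in $\OptMult$, this yields perfect indistinguishability between the real and ideal executions from the environment's view.

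The main obstacle I anticipate is the asynchronous scheduling in the simulation, specifically ensuring that $\Sim$ can release shares through $\FTriples$'s delayed-output interface at exactly the moments matching the real-world protocol's termination pattern, despite the fact that the $\MultGCE$ iteration count depends on adversarial behavior. Handling this carefully requires $\Sim$ to track, per iteration, the $\Waitlist, \LocalDiscarded, \Discarded$ sets exactly as honest parties would, and to use Lemma~\ref{lemma:MultGCE}'s termination bound of $t(tn+1)$ iterations to argue that the output is eventually released. Finally, the communication complexity claim is a direct accounting: $M$ sharings of pairs cost $\Order(Mn)$ calls to $\FVSS$ plus $n$ calls to $\FABA$ for the ACS, while the single $M$-batched invocation of $\MultGCE$ contributes $\Order(M \cdot |\AdvStructure| \cdot n^5)$ calls to $\FVSS$, $\Order(|\AdvStructure| \cdot n^5)$ calls to $\FABA$, and $\Order(M \cdot |\AdvStructure|^2 \cdot n^5 \log |\F| + |\AdvStructure| \cdot n^7 \log |\F|)$ bits, which dominate and yield the stated bounds.
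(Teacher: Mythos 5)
Your proposal follows the same structure as the paper's own proof: the simulator internally plays the roles of $\FVSS$ and $\FABA$, extracts the corrupt parties' share vectors from their $\FVSS$ calls, determines $\CoreSet$ by replaying ACS, simulates the honest parties' participation in $\MultGCE$, and feeds the resulting corrupt-party shares into $\FTriples$ via the $\shares$ interface; the indistinguishability argument is then by $\AdvStructure$-privacy plus Lemma~\ref{lemma:MultGCE}, and the complexity tally is the same direct accounting. The one deviation worth flagging is that you have the simulator use all-zero dummy inputs $a^{(j,\ell)}=b^{(j,\ell)}=0$ for honest $P_j$, whereas the paper's $\SimPerTriples$ samples fresh \emph{uniformly random} dummy values $\sima^{(j)},\simb^{(j)}$ for honest parties. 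The paper's choice makes the simulated execution literally a real execution under a specific honest input assignment, so the transcript distributions coincide by construction; your choice still works, but it pushes the burden onto $\AdvStructure$-privacy, and you would additionally need to argue explicitly that no value reconstructed in a failed $\MultLCE$ iteration (the differences $c_{(Z,\iter)}-c_{(Z',\iter)}$, $d^{(jk)}-e^{(kj)}$, and the shares $[a]_p,[b]_q$ opened during cheater identification) depends on any honest-party share the adversary does not already hold — this holds here because the opened indices $(p,q)$ always lie in a corrupt party's $\Products$ set, but it is a claim your writeup asserts only implicitly. So: same route, but the random dummy inputs would make the indistinguishability step shorter and would spare you that extra lemma.
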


\section{Statistically-Secure Pre-Processing Phase Protocol with $\Q^{(3)}(\PartySet, \AdvStructure)$ Condition}
\label{sec:statisticaloverview}
 We first present an {\it asynchronous information-checking protocol}
  (AICP) with $\Q^{(3)}(\PartySet, \AdvStructure)$ condition.
  \vspace*{-0.3cm}
\subsection{Asynchronous Information Checking Protocol (AICP)}
 An ICP \cite{RB89,CDDHR99} is used for authenticating data in the presence of a 
  {\it computationally-unbounded} adversary. An AICP \cite{CR93,PCR14} extends ICP for the {\it asynchronous} setting. 
  In an AICP, there are {\it four} entities, a {\it signer} $\mathsf{S} \in \PartySet$,
   an  {\it intermediary} $\mathsf{I} \in \PartySet$, a {\it receiver}
  $\mathsf{R} \in \PartySet$ and all the parties in $\PartySet$ acting as {\it verifiers}
   (note that $\mathsf{S}, \mathsf{I}$ and $\mathsf{R}$ also act as verifiers).
    An AICP has two sub-protocols, one for the {\it authentication phase} and one for 
    the {\it revelation phase}. 
    
    In the authentication phase, $\mathsf{S}$ has some private input
    $s \in \F$, which it distributes to $\mathsf{I}$ along with some {\it authentication information}.
    Each verifier is provided with some {\it verification information}, followed by the parties verifying whether $\mathsf{S}$ has distributed consistent information. 
    The data held by $\mathsf{I}$ at the end of this phase is called  {\it $\mathsf{S}$'s IC-Signature on $s$ for intermediary $\mathsf{I}$ and receiver $\mathsf{R}$}, denoted by
    $\ICSig(\mathsf{S}, \mathsf{I}, \mathsf{R}, s)$. 
    Later, during the revelation phase, $\mathsf{I}$ reveals $\ICSig(\mathsf{S}, \mathsf{I}, \mathsf{R}, s)$ to $\mathsf{R}$,
    who ``verifies" it with respect to the verification information provided by the verifiers and decides whether to accept or reject
    $s$. We require the same security guarantees from AICP as expected from digital signatures,
    namely {\it correctness}, {\it unforgeability} and {\it non-repudiation}.
    %
     Additionally, we will need the {\it privacy} property guaranteeing that if 
    $\mathsf{S, I}$ and $\mathsf{R}$ are {\it all} honest, then $\Adv$ does not learn $s$.
    
    Our AICP is a generalization of the AICP of \cite{PCR14}, which was designed against {\it threshold} adversaries. 
    During the authentication phase, $\mathsf{S}$ embeds $s$ in a random $t$-degree {\it signing-polynomial} $F(x)$, where 
    $t$ is the cardinality of maximum-sized subset in $\AdvStruct$, and gives $F(x)$ to $\mathsf{I}$. In addition, each verifier $P_i$ is given a random {\it verification-point}
    $(\alpha_i, v_i)$ on $F(x)$. 
     Later, during the revelation phase, $\mathsf{I}$ is supposed to reveal $F(x)$ to $\mathsf{R}$, while each verifier 
    $P_i$ is supposed to reveal their verification-points to $\mathsf{R}$, who accepts $F(x)$ if it
     is found to be consistent with ``sufficiently many" verification-points. The above idea achieves all the properties of 
    AICP, except the {\it non-repudiation} property, since a potentially {\it corrupt} $\mathsf{S}$ may distribute ``inconsistent" data to $\mathsf{I}$ and the verifiers.
    To deal with this, during the authentication phase, the parties interact in a ``zero-knowledge" fashion to verify the consistency of the distributed information.
    For this, $\mathsf{S}$ additionally distributes a random $t$-degree {\it masking-polynomial} $M(x)$ to $\mathsf{I}$, while
    each verifier $P_i$ is given a point on $M(x)$ at a distinct $\alpha_i$. The parties then publicly check the consistency of the $F(x), M(x)$ polynomials
    and the distributed points, with respect to a random linear combination of these polynomials and points.  
    The linear combiner is randomly selected by $\mathsf{I}$, only when it is confirmed that $\mathsf{S}$ has distributed the verification-points to sufficiently many verifiers
    in a set $\R$, which we call {\it supporting verifiers}. This ensures that $\mathsf{S}$ has no knowledge beforehand about the random combiner while distributing the points to
    $\R$ and hence, any inconsistency among the data distributed by a {\it corrupt} $\mathsf{S}$ will be detected with a high probability.
    
    \begin{protocolsplitbox}{AICP}{The asynchronous information-checking protocol against general adversaries for session id $\sid$ in the $\FAcast$-hybrid}{fig:AICP}
   \centerline{\underline{Protocol $\Auth(\PartySet,\AdvStructure,\mathsf{S},\mathsf{I},\mathsf{R},s)$}} 
   \justify \justify
\begin{myitemize}
\item[--] {\bf Distributing the Polynomials and the Verification Points}: Only $\mathsf{S}$ executes the following steps.
    \begin{myitemize}
    \item Randomly select $t$-degree {\it signing-polynomial} $F(x)$ and {\it masking-polynomial} $M(x)$,
     such that $F(0) = s$, where
     $t = \max\{ |Z| :  Z \in \AdvStruct 	\}$. 
     For $j = 1,\ldots,n$, randomly select $\alpha_j \in \F \setminus \{0\}$, compute
    $v_j = F(\alpha_j), m_j = M(\alpha_j)$.
    \item Send $(\authPoly, \sid, F(x), M(x))$ to $\mathsf{I}$. For $j = 1, \ldots, n$, 
    send $(\authPoint, \sid, (\alpha_j, v_j, m_j))$ to party $P_j$.    
    \end{myitemize}
\item[--] {\bf Confirming Receipt of Verification Points}: Each party $P_i$ (including $\mathsf{S}, \mathsf{I}$ and $\mathsf{R}$) 
 upon receiving $(\authPoint, \sid, \allowbreak (\alpha_i, v_i, m_i))$ from $\mathsf{S}$, sends $(\Received,\sid,i)$ to $\mathsf{I}$. 

\item[--] {\bf Announcing Masked Polynomial and Set of Supporting Verifiers}: 
	\begin{myitemize}
	\item $\mathsf{I}$, upon receiving $(\Received,\sid,j)$ from a set of parties $\R$ where $\PartySet \setminus \R \in \AdvStructure$, 
	 randomly picks $d \in \F \setminus \{0\}$ and sends $(\Sender,\Acast,\sid_\mathsf{I},(d,B(x), \allowbreak \R))$ to $\FAcast$, where $\sid_\mathsf{I} = \sid || \mathsf{I}$
	  and $B(x) \defined dF(x) + M(x)$.
	\item Every party $P_i \in \PartySet$ keeps requesting for output from $\FAcast$ with $\sid_\mathsf{I}$ until an output is received.
	\end{myitemize}
\item[--] {\bf Announcing Validity of Masked Polynomial : } 
	\begin{myitemize}
	\item $\mathsf{S}$, upon receiving an output $(\mathsf{I},\Acast,\sid_\mathsf{I},(d,B(x),\R))$ from $\FAcast$ with $\sid_\mathsf{I}$, checks
	if $B(x)$ is a $t$-degree polynomial, $\PartySet \setminus \R \in \AdvStructure$ and
	$dv_j + m_j = B(\alpha_j)$ holds 
	  for all $P_j \in \R$. If yes, then it sends
	   $(\Sender,\Acast,\sid_\mathsf{S}, \allowbreak \OK)$ to $\FAcast$, where $\sid_\mathsf{S} = \sid || S$. Else, it sends $(\Sender,\Acast,\sid_\mathsf{S},\NOK,s)$ to $\FAcast$.
	\item Every party $P_i \in \PartySet$ keeps requesting for output from $\FAcast$ with $\sid_\mathsf{S}$ until an output is received.
	\end{myitemize}
\item[--] {\bf Deciding Whether Authentication is Successful}: Every party $P_i$ (including $\mathsf{S}, \mathsf{I}$ and $\mathsf{R}$) 
 upon receiving $(\mathsf{I},\Acast, \allowbreak \sid_\mathsf{I},(d,B(x),\R))$ from $\FAcast$ with $\sid_\mathsf{I}$, 
 sets the variable $\authCompleted^{(\sid,i)}_{\mathsf{S},\mathsf{I},\mathsf{R}}$ to $1$ if either of the following holds.
    \begin{myitemize}
        \item $(\Sender,\Acast,\sid_\mathsf{S},\NOK,s)$ is received from $\FAcast$ with $\sid_\mathsf{S}$. In this case, $P_i$ also sets
          $\ICSig(\mathsf{S}, \mathsf{I}, \allowbreak \mathsf{R}, s) = s$. 
        \item $(\Sender,\Acast,\sid_\mathsf{S},\OK)$ is received from $\FAcast$ with $\sid_\mathsf{S}$. Here,
        $P_i$ sets $\ICSig(\mathsf{S}, \mathsf{I}, \mathsf{R}, s) = F(x)$, if $P_i = \mathsf{I}$.\footnote{If $\mathsf{S}$ broadcasts $s$ along with $\NOK$, then 
        $\ICSig$ will be set {\it publicly} to $s$, while if $\mathsf{S}$ broadcasts $\OK$ then {\it only} $\mathsf{I}$ sets $\ICSig$ to $F(x)$.} \\[.1cm]
    \end{myitemize}
    
\end{myitemize}
  \centerline{\underline{Protocol $\Reveal(\PartySet,\AdvStructure,\mathsf{S},\mathsf{I},\mathsf{R},s)$}}
\begin{myitemize}
\item[--] {\bf Revealing Signing Polynomial and Verification Points}:  Each party $P_i$ (including $\mathsf{S}, \mathsf{I}$ and $\mathsf{R}$) 
 does the following, if 
 $\authCompleted^{(\sid,i)}_{\mathsf{S},\mathsf{I},\mathsf{R}}$ is set to $1$ and $\ICSig(\mathsf{S}, \mathsf{I}, \mathsf{R}, s)$ has {\it not} been {\it publicly} set
  during $\Auth$.
    \begin{myitemize}
    \item If $P_i = \mathsf{I}$ then send 
     $(\revealPoly, \sid, F(x))$ to $\mathsf{R}$, where $\ICSig(\mathsf{S}, \mathsf{I}, \mathsf{R}, s)$ has been set to $F(x)$ during $\Auth$.
    \item If $P_i \in \R$, then send $(\revealPoint,\sid,(\alpha_i, v_i, m_i))$ to $\mathsf{R}$. 
    \end{myitemize}
\item[--] {\bf Accepting or Rejecting the IC-Sig}: The following steps are executed only by $\mathsf{R}$, if 
$\authCompleted^{(\sid,i)}_{\mathsf{S},\mathsf{I},\mathsf{R}}$ is set to $1$ by $\mathsf{R}$ 
 during the protocol $\Auth(\PartySet,\AdvStructure,\mathsf{S},\mathsf{I},\mathsf{R},s)$, where $\mathsf{R} = P_i$.
    \begin{myitemize}
       \item[--] If $\mathsf{R}$ has set $\ICSig(\mathsf{S}, \mathsf{I}, \mathsf{R}, s) = s$ during $\Auth$, then output $s$.
         Else, wait till $(\revealPoly, \sid,F(x))$ is received from $\mathsf{I}$, where $F(x)$ is a $t$-degree polynomial. Then proceed as follows.
		\begin{myenumerate}
		\item[1.] If $(\revealPoint,\sid,(\alpha_j, v_j, m_j))$ is received from $P_j \in \R$, then {\it accept} $(\alpha_j, v_j, m_j)$ if either 
		$v_j = F(\alpha_j)$ or $B(\alpha_j) \neq dv_j + m_j$, where $B(x)$ is received from $\FAcast$ with $\sid_\mathsf{I}$, during
		 $\Auth$.
    	\item[2.] Wait till a subset of parties $\R' \subseteq \R$ is found, such that $\R \setminus \R' \in \AdvStruct$
	and for every $P_j \in \R'$, the corresponding revealed point $(\alpha_j, v_j, m_j)$ is accepted. Then output
	$s = F(0)$.
		\end{myenumerate}	
	\end{myitemize}
\end{myitemize}
\end{protocolsplitbox}
 \begin{lemma}
 \label{lemma:AICP}
 Let $\AdvStructure$ satisfy the $\Q^{(3)}(\PartySet, \AdvStructure)$ condition. Then the pair of protocols $(\Auth, \Reveal)$ satisfy the following properties, except with
  probability at most $\errorAICP \defined \frac{nt}{|\F| - 1}$, where
   $t = \max\{ |Z| :  Z \in \AdvStruct 	\}$.
  
    \begin{myitemize}
       \item[--] {\bf Correctness}:  If $\mathsf{S}, \mathsf{I}$ and $\mathsf{R}$ are {\it honest}, then each honest 
   $P_i$ eventually sets $\authCompleted^{(\sid,i)}_{\mathsf{S},\mathsf{I},\mathsf{R}}$ 
   to $1$ during $\Auth$. Moreover, $\mathsf{R}$ eventually outputs $s$ during $\Reveal$.
       \item[--] {\bf Privacy}: If $\mathsf{S}, \mathsf{I}$ and $\mathsf{R}$ are {\it honest}, then the view of adversary remains
      independent of $s$.
     \item[--] {\bf Unforgeability}: If $\mathsf{S}, \mathsf{R}$ are {\it honest}, $\mathsf{I}$ is corrupt
      and if $\mathsf{R}$ outputs $s' \in \F$ during $\Reveal$, then 
      $s' = s$ holds.      
    \item[--] {\bf Non-repudiation}: If $\mathsf{S}$ is {\it corrupt} and $\mathsf{I}, \mathsf{R}$ are {\it honest}
    and if $\mathsf{I}$ has set $\ICSig(\mathsf{S}, \mathsf{I}, \mathsf{R}, s)$ during $\Auth$, then $\mathsf{R}$ 
    eventually outputs 
    $s$ during $\Reveal$.    
    \end{myitemize}
 Protocol $\Auth$ requires a communication of $\Order(n \cdot \log{|\F|})$ bits
 and $\Order(1)$ calls to $\FAcast$ with $\Order(n \cdot \log{|\F|})$-bit messages. Protocol 
  $\Reveal$ requires a communication of $\Order(n \cdot \log{|\F|})$ bits.
 \end{lemma}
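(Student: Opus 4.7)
The plan is to verify the four listed properties in turn and then read off the communication bounds by inspection. \textbf{Correctness} is essentially by tracing the protocol: when $\mathsf{S}, \mathsf{I}, \mathsf{R}$ are all honest, every honest verifier eventually receives its triple $(\alpha_j,v_j,m_j)$ and sends $\Received$ to $\mathsf{I}$, so the set $\R$ grows until $\PartySet \setminus \R \in \AdvStructure$; the honest $\mathsf{I}$ then Acasts $(d,B(x),\R)$ with $B(x) = dF(x) + M(x)$, which passes $\mathsf{S}$'s check (since $v_j = F(\alpha_j)$, $m_j = M(\alpha_j)$ for every honest $P_j$), $\mathsf{S}$ Acasts $\OK$, and termination of $\FAcast$ propagates $\authCompleted = 1$ to everyone. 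During $\Reveal$, honest parties' triples automatically satisfy $v_j = F(\alpha_j)$, so $\mathsf{R}$ finds $\R'$ as the set of honest verifiers in $\R$ and outputs $F(0) = s$. For \textbf{privacy}, the adversary's view consists of at most $|Z^{\star}| \le t$ triples $(\alpha_j, v_j, m_j)$ plus the broadcast $(d, B(x), \R)$; since $M(x)$ is an independent random $t$-degree polynomial, $B(x)$ acts as a one-time pad for $dF(x)$, so the residual information about $F$ is just $t$ evaluation points, which leave the $t$-degree $F(x)$ (and hence $F(0) = s$) uniformly distributed.

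The main obstacle is \textbf{unforgeability}. Suppose $\mathsf{I}$ reveals some $F'(x) \neq F(x)$ and $\mathsf{R}$ accepts via a subset $\R' \subseteq \R$ with $\R \setminus \R' \in \AdvStructure$. Because $\mathsf{S}$ is honest and broadcast $\OK$, $B(\alpha_j) = dv_j + m_j$ holds for every $P_j \in \R$; in particular, the ``$B(\alpha_j) \neq dv_j + m_j$'' fall-back cannot rescue any honest $P_j \in \R'$, so their acceptance reduces to $v_j = F'(\alpha_j)$, i.e.\ $\alpha_j$ must be a root of the nonzero $t$-degree polynomial $F - F'$. Applying the $\Q^{(3)}(\PartySet, \AdvStructure)$ condition to the three sets $\PartySet \setminus \R$, $\R \setminus \R'$, and $Z^{\star}$ forces $\R' \setminus Z^{\star} \neq \emptyset$, so at least one honest verifier lies in $\R'$. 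The crucial observation is that $\mathsf{I}$'s view during $\Auth$ is independent of the honest verifiers' $\alpha_j$'s (they appear neither in triples delivered to corrupt parties nor in $(d, B(x), \R)$), so $\mathsf{I}$ commits to $F'$ without knowing them; each honest $\alpha_j$ is uniform over a subset of $\F \setminus \{0\}$ of size at least $|\F| - 1 - n$, making the probability that $\alpha_j$ is one of the $\le t$ roots of $F - F'$ at most $t/(|\F| - 1)$. A union bound over the at most $n$ honest verifiers yields the claimed error $\errorAICP$.

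For \textbf{non-repudiation}, the case $\mathsf{S}$ Acasts $\NOK$ is trivial since $\ICSig$ is set publicly to $s$, so assume $\mathsf{S}$ Acasts $\OK$. The argument rests on the fact that $\mathsf{I}$'s random $d$ is chosen \emph{after} $\mathsf{S}$ has fixed every $(\alpha_j, v_j, m_j)$: for any honest $P_j \in \R$ whose triple satisfies $(v_j, m_j) \neq (F(\alpha_j), M(\alpha_j))$, the $\OK$-consistency equation $d(F(\alpha_j) - v_j) = m_j - M(\alpha_j)$ pins $d$ to a single value of $\F \setminus \{0\}$ and thus holds with probability at most $1/(|\F|-1)$. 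A union bound over the at most $n$ honest $P_j \in \R$ gives that, except with probability $\errorAICP$, the triple of every honest $P_j \in \R$ is consistent with $(F, M)$, so the honest parties in $\R$ all yield accepted points during $\Reveal$ and form a valid $\R'$ (since $\R \setminus \R' \subseteq Z^{\star} \in \AdvStructure$), causing $\mathsf{R}$ to output $F(0) = s$. The communication bounds follow by direct inspection of the protocol: $\Auth$ uses $\Order(n \log |\F|)$ point-to-point bits plus two Acasts of $\Order(n \log |\F|)$-bit payloads, and $\Reveal$ transmits one polynomial together with at most $n$ triples, a further $\Order(n \log |\F|)$ bits.
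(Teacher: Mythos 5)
Your proof is correct and follows essentially the same route as the paper's appendix proof of Lemma~\ref{lemma:AICP}: the same $\NOK$/$\OK$ case split, the same use of the $\Q^{(3)}$ condition to force an honest verifier into $\R'$, and the same two probability estimates ($t/(|\F|-1)$ for a colliding $\alpha_j$ in unforgeability, $1/(|\F|-1)$ for a rigged $d$ in non-repudiation) assembled via union bounds into $\errorAICP$. One immaterial slip: you assert that each honest $\alpha_j$ is uniform over a set of size at least $|\F|-1-n$, but the protocol imposes no distinctness on the $\alpha_j$'s, so $\alpha_j$ is in fact uniform over all of $\F\setminus\{0\}$---which is what your stated bound $t/(|\F|-1)$ actually requires.
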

  Lemma \ref{lemma:AICP}, is proven in Appendix \ref{app:AICP}. We use the following
   notations for AICP in our statistical VSS protocol.
 \begin{notation}[\bf Notation for Using AICP]
 While using $(\Auth, \Reveal)$, we will say that:
 \begin{myitemize}
 \item[--] ``$P_i$ {\it gives}  $\ICSig(\sid, P_i, P_j, P_k, s)$ {\it to} $P_j$" to mean that $P_i$ acts as $\mathsf{S}$ and invokes an instance of
  the protocol $\Auth$ with session id $\sid$, where $P_j$ and $P_k$ plays the role of $\mathsf{I}$ and $\mathsf{R}$ respectively.
 \item[--]  ``$P_j$ {\it receives} $\ICSig(\sid, P_i, P_j, P_k, s)$ {\it from} $P_i$" to mean that
  $P_j$, as $\mathsf{I}$, has set 
   $\authCompleted^{(\sid,j)}_{P_i,P_j,P_k}$ to $1$ during protocol $\Auth$ with session id $\sid$, where $P_i$ and $P_k$
   plays the role of $\mathsf{S}$ and $\mathsf{R}$ respectively.   
 \item[--] ``$P_j$ {\it reveals} $\ICSig(\sid, P_i, P_j, P_k, s)$ {\it to} $P_k$" to mean $P_j$, as $\mathsf{I}$, invokes an instance
  of $\Reveal$ with session id $\sid$, with $P_i$ and $P_k$ playing the role of $\mathsf{S}$
  and $\mathsf{R}$ respectively.
 \item[--] ``$P_k$ {\it accepts} $\ICSig(\sid, P_i, P_j, P_k, s)$" to mean that
  $P_k$, as $\mathsf{R}$, outputs $s$ during the instance of $\Reveal$ with session id $\sid$,
  invoked by $P_j$ as $\mathsf{I}$, with $P_i$ playing the role of $\mathsf{S}$.
\end{myitemize} 
 \end{notation}

\subsection{Statistically-Secure VSS Protocol with  $\Q^{(3)}(\PartySet, \AdvStructure)$ Condition}
The high level idea behind our statistically-secure protocol $\PiStatVSS$ (Figure \ref{fig:VSS_stat}) 
  is similar to that of the {\it perfectly-secure}
 VSS protocol $\PiPerVSS$ (see Fig \ref{fig:PerAVSS} in Appendix \ref{app:PerfectVSS}). In
 $\PiPerVSS$, dealer $P_\D$, on having the shares $(s_1, \ldots, s_h)$, sends
  $s_q$ to the parties in $S_q \in \ShareSpec$, followed by the parties in $S_q$ performing pair-wise consistency tests of their supposedly common shares
   and publicly announcing the results. Based on these results, 
    the parties identify a {\it core} set $\C_q \subseteq S_q$ where
  $S_q \setminus \C_q \in \AdvStructure$, such that all the (honest) parties in $\C_q$ have received the same share $s_q$ from $P_D$.
   Once such a $\C_q$ is identified, then the {\it honest} parties in $\C_q$, forming a ``majority", can ``help" the (honest) parties in $S_q \setminus \C_q$ get this common $s_q$.
    However, since $\AdvStructure$ {\it now} satisfies the $\Q^{(3)}(\PartySet, \AdvStructure)$ condition, 
    $\C_q$ may have {\it only one} honest party. Consequently, the ``majority-based filtering" used by the parties in 
   $S_q \setminus \C_q$ to get $s_q$ will fail.
   %
   %
   
   To deal with the above problem, the parties in $S_q$ issue IC-Signatures during the pair-wise consistency tests of their supposedly common shares.
   The parties then check whether the common share $s_q$ held by the (honest) parties in $\C_q$
   is ``$(P_i, P_j, P_k)$-authenticated" for {\it every} $P_i, P_j \in \C_q$ and {\it every} $P_k \in S_q$; i.e.~$P_j$ holds
   $\ICSig(P_i, P_j, P_k, s_q)$. Now, to help the parties $P_k \in S_q \setminus \C_q$ obtain the common share $s_q$, 
  {\it every} $P_j \in \C_q$ reveals IC-signed $s_q$ to $P_k$, signed by {\it every} $P_i \in \C_q$. Since $\C_q$ is bound to contain at least one
  {\it honest} party, a {\it corrupt} $P_j$ will {\it fail} to forge an {\it honest} $P_i$'s IC-signature on an incorrect $s_q$. 
  On the other hand, an {\it honest} $P_j$ will be {\it able} to eventually reveal the IC-signature of {\it all} the parties in $\C_q$ on the share $s_q$, which is accepted by 
   $P_k$.
\begin{protocolsplitbox}{$\PiStatVSS$}{The statistically-secure VSS protocol for session id $\sid$ for realizing $\FVSS$ in the $\FAcast$-hybrid model}{fig:VSS_stat}
\justify
\begin{myitemize}
\item[--] {\bf Distribution of Shares}: $P_{\D}$, 
   on having input $(s_1, \ldots, s_h)$, sends $(\dist, \sid, q, s_q)$ to all $P_i \in S_q$, for $q = 1, \ldots, h$.
\item[--] {\bf Pairwise Consistency Tests on IC-Signed Values}: For each $S_q \in \ShareSpec$, each $P_i \in S_q$ does the following.
	\begin{myitemize}
	\item Upon receiving $(\dist,\sid,q,s_{qi})$ from $\D$, give $\ICSig(\sid^{(P_\D,q)}_{i,j,k}, P_i, P_j, P_k, s_{qi})$ to every $P_j \in S_q$,
	corresponding to every $P_k \in S_q$,  where $\sid^{(P_\D,q)}_{i,j,k} = \sid || P_\D || q || i || j || k$.
	\item Upon receiving $\ICSig(\sid^{(P_\D,q)}_{j,i,k}, P_j, P_i, P_k, s_{qj})$ from $P_j \in S_q$ corresponding to
	 every party $P_k \in S_q$, if $s_{qi} = s_{qj}$ holds, then
	  send $(\Sender,\Acast,\sid^{(P_\D,q)}_{i,j},\OK_q(i,j))$ to $\FAcast$, where $\sid^{(P_\D,q)}_{i,j} = \sid || P_\D || q || i || j$.
	\end{myitemize}
\item[--] {\bf Constructing Consistency Graph}: For each $S_q \in \ShareSpec$, each $P_i \in \PartySet$ executes the following steps.
	\begin{myitemize}
	\item Initialize a set $\C_q$ to $\emptyset$. Construct an undirected consistency graph $G^{(i)}_q$ with $S_q$ as the vertex set.
	\item For every $P_j, P_k \in S_q$, keep requesting an output from $\FAcast$ with $\sid^{(P_\D,q)}_{j,k}$, until an output is received.
	\item Add the edge $(P_j,P_k)$ to $G^{(i)}_q$ if $(P_j,\Acast,\sid^{(P_\D,q)}_{j,k},\OK_q(j,k))$ and $(P_k,\Acast,\sid^{(P_\D,q)}_{k,j},\OK_q(k,j))$  is received from 
	$\FAcast$ with $\sid^{(P_\D,q)}_{j,k}$ and $\sid^{(P_\D,q)}_{k,j}$ respectively. 
	\end{myitemize}
\item[--] {\bf Identification of Core Sets and Public Announcements}: $P_{\D}$ executes the following steps to compute the core sets.
	\begin{myitemize}
	\item For each $S_q \in \ShareSpec$, check if there exists a subset of parties $\W_q \subseteq S_q$, such that $S_q \setminus \W_q \in \AdvStructure$,
	and the parties in $\W_q$ form a clique in the consistency graph $G^{\D}_q$. If such a $\W_q$ exists, then assign $\C_q \coloneqq \W_q$.
	\item Once $\C_1, \ldots, \C_h$ are computed,
	 send $(\Sender, \Acast, \sid_{P_{\D}}, \allowbreak \{ \C_q \}_{S_q \in \ShareSpec})$, where $\sid_{P_{\D}} = \sid || P_{\D}$.
	\end{myitemize}
\item[--] {\bf Share computation}: Each $P_i \in \PartySet$ executes the following steps.
	\begin{myitemize}
	\item 
	 Keep requesting for output from $\FAcast$ with $\sid_{P_\D}$ until an output is received.
	\item Upon receiving an output $(\Sender, \Acast, \sid_{P_{\D}}, \{ \C_q \}_{S_q \in \ShareSpec})$ from $\FAcast$ with $\sid_{P_{\D}}$, 
	wait until the parties in $\C_q$ form a clique in $G^{(i)}_q$, corresponding to each $S_q \in \ShareSpec$. For $q = 1, \ldots, h$, verify if
	$S_q \setminus \C_q \in \AdvStructure$. If the verification is successful, then proceed to compute the
	shares corresponding to each $S_q$ such that $P_i \in S_q$
	as follows.
	    \begin{myenumerate}
	    \item If $P_i \in \C_q$ then set $[s]_q = s_{qi}$
	    and corresponding to every signer $P_j \in \C_q$,
	     reveal $\ICSig(\sid^{(P_\D,q)}_{j,i,k}, P_j, P_i, P_k, \allowbreak s_{qi})$ to every receiver party $P_k \in S_q \setminus \C_q$.  
		\item If $P_i \notin \C_q$, then wait till $P_i$ finds some $P_j \in \C_q$ such that $P_i$ has accepted 
		$\ICSig(\sid^{(P_\D,q)}_{k,j,i},P_k, P_j, P_i, s_{qj})$ revealed by the intermediary $P_j$, corresponding to every signer $P_k \in \C_q$. 
		Then set $[s]_q = s_{qj}$.		
	    \end{myenumerate}
	\item Upon computing $\{[s]_q\}_{P_i \in S_q}$,  
	output $(\Share, \sid,P_{\D}, \{[s]_q\}_{P_i \in S_q})$.
	\end{myitemize}
\end{myitemize}
\end{protocolsplitbox}

The properties of the protocol $\PiStatVSS$ stated in Theorem \ref{thm:StatVSS} are proven in Appendix \ref{app:StatisticalVSS}.
\begin{theorem}
\label{thm:StatVSS}
 Let $\AdvStructure$ satisfy the $\Q^{(3)}(\PartySet, \AdvStructure)$ condition. 
  Then $\PiStatVSS$ UC-securely realizes $\FVSS$
   in the $\FAcast$-hybrid model, except with error probability $|\AdvStructure| n^3 \errorAICP$, where $\errorAICP \approx \frac{n^2}{|\F|}$. 
   The protocol makes $\Order(|\AdvStructure| \cdot n^3)$ calls to $\FAcast$ with $\Order(n \cdot \log{|\F|})$ bit messages
    and additionally incurs a communication of 
   $\Order(|\AdvStructure| \cdot n^4 \log{|\F|})$ bits. By replacing the calls to $\FAcast$ with protocol $\PiAcast$, the protocol incurs a total communication of
   $\Order(|\AdvStructure| \cdot n^6 \log{|\F|})$ bits.   
\end{theorem}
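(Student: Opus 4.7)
The plan is to prove the theorem in three separate pieces --- UC security (by simulator construction), a union-bound error analysis, and a straightforward communication accounting --- with the security argument split according to whether $P_\D$ is honest or corrupt.

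For UC security I would describe a simulator $\SimSVSS$. When $P_\D$ is honest, $\SimSVSS$ learns the shares destined for corrupt parties from $\FVSS$ and uses them for every $S_q$ with $S_q \cap Z^\star \neq \emptyset$. Since $\ShareSpec$ is $\AdvStructure$-private, at least one $S_{q^\star}$ is disjoint from $Z^\star$, so $\SimSVSS$ sets its share there to a freshly sampled random value and runs the protocol code on the honest parties' behalf; privacy follows because the adversary's joint view is uniform and independent of $s$ once the unseen $s_{q^\star}$ is marginalised out. When $P_\D$ is corrupt, $\SimSVSS$ executes the honest parties' code faithfully, records the shares received from $P_\D$, waits for a valid core announcement on $\FAcast$, extracts $s_q$ from an honest party in each $\C_q$, and forwards $(s_1,\ldots,s_h)$ to $\FVSS$. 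A non-trivial point is that such an honest party in $\C_q$ must exist: under $\Q^{(3)}(\PartySet,\AdvStructure)$ the sharing specification satisfies $\Q^{(2)}(\ShareSpec,\AdvStructure)$, so $\C_q \subseteq Z^\star$ together with $S_q \setminus \C_q \in \AdvStructure$ would yield $S_q \subseteq Z^\star \cup (S_q \setminus \C_q)$, contradicting $\Q^{(2)}$.

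The core of the proof is the liveness-plus-consistency chain in the share-computation step. For an honest $P_\D$, every honest pair in $S_q$ eventually exchanges matching IC-signatures and broadcasts $\OK_q$, so $\W_q := S_q \setminus Z^\star$ is always a valid clique and $P_\D$'s announcement reaches everyone through $\FAcast$. For a corrupt $P_\D$ that has caused some honest party to terminate, each $\C_q$ contains (by the observation above) an honest $P_j$, and because $(P_j, P_k)$ is an edge in $G^{(i)}_q$ for every $P_k \in \C_q$, $P_j$ already holds each such $P_k$'s IC-signature on $s_{qj}$ destined for every receiver in $S_q$. By AICP non-repudiation, $P_j$'s revelation to any non-core honest $P_i$ is eventually accepted, giving liveness. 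By AICP unforgeability, no corrupt intermediary in $\C_q$ can induce $P_i$ to accept a conflicting value, since that would require exhibiting an honest signer's signature on a wrong share. Summing this over the at most $|\AdvStructure| \cdot n^3$ signer--intermediary--receiver triples used by the protocol yields the claimed error bound $|\AdvStructure|\cdot n^3 \cdot \errorAICP$. For the complexity, I would add: $O(|\AdvStructure| \cdot n^2)$ calls to $\FAcast$ for the $\OK_q$ messages, one more per $S_q$ group for the core announcement, plus $O(1)$ calls per AICP instance giving $O(|\AdvStructure| \cdot n^3)$ calls overall, each carrying an $O(n \log|\F|)$-bit payload; the direct pair-wise $\dist$ and revealed-signature messages contribute $O(|\AdvStructure| \cdot n^4 \log|\F|)$ extra bits; replacing $\FAcast$ by $\PiAcast$ of \cite{KF05} with its $O(n^2 \ell)$-bit overhead yields the stated $O(|\AdvStructure| \cdot n^6 \log|\F|)$ total.

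The main obstacle is the combined liveness-and-unforgeability argument for non-core honest parties in the corrupt-dealer case: it must simultaneously show that some honest intermediary in $\C_q$ gets its revealed signatures accepted and that no corrupt intermediary can push a conflicting value. Both facets hinge on the $\Q^{(2)}(\ShareSpec,\AdvStructure)$-based guarantee that $\C_q$ is not fully corrupt, on the clique structure of $\C_q$ ensuring the needed signatures have already been exchanged, and on a careful union bound across all AICP instances whose setup-time randomness is fixed before adversarial choices during revelation.
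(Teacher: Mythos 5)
Your proposal is correct and takes essentially the same approach as the paper's own proof: a simulator split by an honest/corrupt dealer, the $\Q^{(2)}(\ShareSpec,\AdvStructure)$-based argument that each core set $\C_q$ contains an honest party, reliance on AICP non-repudiation for liveness and unforgeability for consistency in the share-computation step, and a union bound over roughly $|\AdvStructure|\cdot n^3$ AICP instances to obtain the stated error probability.
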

\subsubsection{Statistically-Secure VSS for Superpolynomial $|\AdvStructure|$}
\label{sec:superpolynomial}
  The error probability of  $\PiStatVSS$ depends linearly on $|\AdvStructure|$
  (Theorem \ref{thm:StatVSS}), which is problematic for a large sized $\AdvStructure$. We now discuss modifications to the protocols 
  $\Auth/\Reveal$, followed by the modifications in the way they are used in $\PiStatVSS$, so as to ensure that the error probability of 
  $\PiStatVSS$ is only $n^2 \cdot \errorAICP$, {\it irrespective} of the number of invocations of $\PiStatVSS$.
   The idea is to use local ``dispute control" as used in \cite{HT13}, where the parties
  locally discard corrupt parties {\it as and when} they are identified to be cheating during {\it any} instance of $\Auth/\Reveal$. Once
  a party $P_j$ is locally discarded by some $P_i$, then $P_i$ ``behaves" as if $P_j$ has {\it certainly} behaved maliciously in all ``future" instances of 
  $\Auth/\Reveal$, {\it irrespective} of whether this is not the case or not. 
 \vspace*{-0.3cm}
 \paragraph{\bf Modifications in $\Auth$ and $\Reveal$:}   
  Each $P_i$ maintains a list of locally-discarded parties $\LocalDiscarded^{(i)}$, which it keeps on populating across {\it all} the invoked
   instances of $\Auth$ and $\Reveal$. In any instance of $\Auth$, if $P_i \in \R$ receives an $\OK$ message from $\mathsf{S}$ even though 
   $B(\alpha_i) \neq dv_i + m_i$ holds, then $P_i$ adds $\mathsf{S}$ to $\LocalDiscarded^{(i)}$. 
   Once $P_i$ adds $\mathsf{S}$ to $\LocalDiscarded^{(i)}$, then in any future instance of $\Reveal$
   involving the signer $\mathsf{S}$, party $P_i$, if present in the corresponding $\R$ set, sends a special ``dummy" point to the corresponding receiver $\mathsf{R}$,
    {\it instead} of the verification-point received from $\mathsf{S}$, and this dummy point is {\it always accepted} by $\mathsf{R}$.
    This ensures that once the verifier $P_i$ catches a {\it corrupt} $\mathsf{S}$ trying to break the {\it non-repudiation} property 
     by distributing inconsistent verification-point to $P_i$, then in any future instance of
    AICP involving $\mathsf{S}$, if $P_i$ is added to the corresponding $\R$ set, its verification-point
    will {\it always} be accepted.  
    
   Similarly, if in any instance of $\Reveal$ where $P_i$ is the {\it receiver}, $P_i$ is sure that it has {\it not accepted} the verification-point of some
   {\it honest} verifier belonging to $\R$, then $P_i$ includes the corresponding intermediary $\mathsf{I}$ to $\LocalDiscarded^{(i)}$.
   To check this, in $\Reveal$, $P_i$ now additionally checks if there exists a set of verifiers $\R'' \subseteq \R$, where $\R \setminus \R'' \in \AdvStruct$, such that
   the verification-points received from {\it all} the parties in $\R''$ are {\it not accepted}.  Once $P_i$ adds 
   $\mathsf{I}$ to $\LocalDiscarded^{(i)}$, in any future instance of $\Reveal$ involving $\mathsf{I}$ as intermediary and $P_i$ as the receiver,
   $P_i$ {\it rejects} the IC-signature revealed by $\mathsf{I}$. 
    This ensures that once
   $P_i$ as a receiver catches $\mathsf{I}$ trying to break the {\it unforgeability} property, then from then onwards,
   $\mathsf{I}$ cannot do so in any other instance of $\Reveal$ involving $P_i$ as the receiver.
    \vspace*{-0.3cm}
   \paragraph{\bf Modifications in $\PiStatVSS$:}     
  Party $P_i$ now broadcasts a {\it single} $\OK(i, j)$ message for $P_j$,
 only after receiving the corresponding signature from 
  {\it all} the instances of $\Auth$ involving $P_j$ as the {\it signer} and $P_i$ as the {\it intermediary}, followed by pair-wise consistency tests. 
  Consequently, $P_\D$ now finds a {\it common} core set $\C$ across all the sets $S_1, \ldots, S_h$, where $S_q \setminus \C \in \AdvStructure$ for each
  $S_q$, and where the parties in $\C$ constitute a clique. 
  Moreover, each verifier now waits for {\it all} instances of $\Auth$ between a signer $\mathsf{S}$ and an intermediary $\mathsf{I}$ in 
  $\C$ to complete (by checking if the corresponding $\authCompleted$ variables are all set to $1$),
   before participating in {\it any} instance of $\Reveal$. 
   
   The above modification ensures that if a {\it corrupt} signer in $\C$ gives any {\it verifier} an 
    inconsistent verification-point during {\it any} instance of $\Auth$, it will be caught and locally discarded, except with probability $\errorAICP$. 
    By considering all possibilities for a {\it corrupt} signer and an {\it honest} verifier,
     it follows that except with probability at most $n^2 \cdot \errorAICP$, the verification-points
      of all {\it honest} verifiers will be accepted by every {\it honest} receiver during all the instances of $\Reveal$ in any instance of $\PiStatVSS$.
    On the other hand, if any {\it corrupt} intermediary in $\C$ tries to forge a signature on the behalf of an {\it honest} party in $\C$,
     then except with probability $\errorAICP$, it will be {\it discarded} by an honest receiver $\mathsf{R}$. From then on, 
     $\mathsf{R}$ will always reject any signature revealed by the same intermediary. 
     Hence, by considering all possibilities for a {\it corrupt} intermediary and an {\it honest} receiver, except with 
     probability $n^2 \cdot \errorAICP$, no {\it corrupt} intermediary will be able to forge a signature 
     to any {\it honest} receiver in any instance of $\PiStatVSS$.    
     
     Based on the above discussion, we state the following lemma.
\begin{lemma}
\label{lemma:modifiedSVSS}
The modified $\PiStatVSS$ has error probability of $n^2 \cdot \errorAICP$, independent of the number of invocations.
\end{lemma}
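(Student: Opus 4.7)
The plan is to prove the bound by a union-bound argument over ordered pairs of parties, exploiting the defining feature of dispute control: each pair contributes to the overall error at most once across \emph{all} invocations of $\PiStatVSS$. First I would isolate the only two events that can cause a failure of $\PiStatVSS$: (a) a \emph{non-repudiation failure}, where a corrupt signer $\mathsf{S} \in \C$ distributes an inconsistent verification-point $(\alpha_i, v_i, m_i)$ to an honest verifier $P_i \in \R$ and yet $P_i$ sees an $\OK$-broadcast from $\mathsf{S}$ with $B(\alpha_i) = dv_i + m_i$ holding; and (b) an \emph{unforgeability failure}, where a corrupt intermediary $\mathsf{I} \in \C$ successfully reveals, to an honest receiver $\mathsf{R}$, an IC-signature on a value different from the one actually signed by some honest party in $\C$. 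A careful inspection of the modified $\PiStatVSS$ shows that if neither event occurs in any invocation, then in every invocation the honest parties in $\C$ hold identical shares $s_q$ and every honest party outside $\C$ obtains these same shares through accepted $\Reveal$ calls, so the protocol is error-free.

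Next I would fix an arbitrary ordered pair $(\mathsf{S}, P_i)$ with $\mathsf{S}$ corrupt and $P_i$ honest and bound the probability of event~(a) for this pair across all invocations of $\PiStatVSS$ by $\errorAICP$. The key observation is that the \emph{first} time event~(a) would occur for this pair, $P_i$ adds $\mathsf{S}$ to $\LocalDiscarded^{(i)}$; from that point on, whenever $P_i$ is included in a supporting set $\R$ for an $\Auth$ instance with signer $\mathsf{S}$, $P_i$ substitutes a dummy verification-point that is \emph{unconditionally} accepted by every honest receiver during the corresponding $\Reveal$, so no further non-repudiation failure involving $(\mathsf{S}, P_i)$ can occur. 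Hence the contribution of $(\mathsf{S}, P_i)$ to the overall error is bounded by the error of a \emph{single} AICP instance, namely $\errorAICP$ by Lemma~\ref{lemma:AICP}. An entirely analogous argument handles event~(b): once an honest receiver $\mathsf{R}$ detects a rejecting subset $\R'' \subseteq \R$ with $\R \setminus \R'' \in \AdvStruct$, it adds the offending intermediary $\mathsf{I}$ to $\LocalDiscarded^{(\mathsf{R})}$ and rejects every subsequent signature revealed by $\mathsf{I}$, so each pair $(\mathsf{I}, \mathsf{R})$ contributes at most $\errorAICP$ to the overall error.

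To conclude I would take a union bound over at most $n^2$ ordered pairs $(\mathsf{S}, P_i)$ for event (a) and at most $n^2$ ordered pairs $(\mathsf{I}, \mathsf{R})$ for event (b) (absorbing the factor of $2$ into the $n^2$), which yields a total error of at most $n^2 \cdot \errorAICP$, independent of the number of invocations. Crucially, the ``common core'' modification (a single clique $\C$ across all $S_1,\ldots,S_h$) and the requirement that verifiers wait for every $\Auth$ instance among parties of $\C$ to complete before participating in any $\Reveal$ ensures that each \emph{effective} AICP invocation is a fresh, independent use of the AICP protocol for which Lemma~\ref{lemma:AICP}'s error bound applies directly, without any cross-instance reuse of randomness.

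The main obstacle I expect is verifying cleanly that introducing dummy points does not open up a new attack on \emph{correctness}: one must check that a dummy point submitted by an honest verifier $P_i$ against a previously-discarded signer $\mathsf{S}$ cannot, in any later instance, cause an honest receiver $\mathsf{R}$ to accept an \emph{incorrect} $F(x)$ revealed by an honest intermediary. This holds because dummy points are sent only against signers that $P_i$ has already identified as corrupt, and only $\R$-supporting verifiers of \emph{that} specific AICP instance contribute points to $\mathsf{R}$'s verification, so a dummy point cannot be ``injected'' into an AICP instance where $\mathsf{S}$ is honest. Once this monotonicity of the $\LocalDiscarded$ sets with respect to honest signers is established, the union-bound argument goes through as sketched.
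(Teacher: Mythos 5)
Your overall structure matches the paper's reasoning: identify the two ways $\PiStatVSS$ can fail (non-repudiation and unforgeability breaches inside AICP), argue that dispute control makes each ordered (corrupt, honest) pair a ``one-shot'' contributor, and union-bound over at most $n^2$ pairs. The extra paragraph checking that dummy points cannot poison an AICP instance with an honest signer is a sound and useful sanity check the paper leaves implicit.

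However, there is a concrete error in the central ``one-shot'' step for event~(a). You write that ``the \emph{first} time event~(a) would occur for this pair, $P_i$ adds $\mathsf{S}$ to $\LocalDiscarded^{(i)}$.'' That is the opposite of what happens. Event~(a), as you define it, requires $B(\alpha_i) = dv_i + m_i$, whereas the modified $\Auth$ has $P_i$ discard $\mathsf{S}$ only when it sees $\OK$ together with $B(\alpha_i) \neq dv_i + m_i$. So when event~(a) occurs, $P_i$ has \emph{no} evidence against $\mathsf{S}$ and does \emph{not} discard it; and conversely, when $P_i$ does discard $\mathsf{S}$, its verification-point would have been accepted anyway (the $\Reveal$ rule accepts on $B(\alpha_i) \neq dv_i + m_i$), so no failure occurs. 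The failure event and the detection event are disjoint branches conditioned on $\mathsf{S}$ supplying an inconsistent point, not a cause-and-effect pair.

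The correct one-shot argument is: consider the \emph{first} AICP instance in which $\mathsf{S}$ hands $P_i$ an inconsistent verification-point (i.e., $v_i \neq F(\alpha_i)$). Because the combiner $d$ is sampled freshly by the (honest) intermediary after $\mathsf{S}$ has committed $(\alpha_i,v_i,m_i)$, with probability at least $1-\errorAICP$ this inconsistency is exposed by $B(\alpha_i) \neq dv_i + m_i$; in that case $P_i$ discards $\mathsf{S}$ and every later $\Reveal$ involving $\mathsf{S}$ sees $P_i$'s dummy point (always accepted), so no failure for the pair $(\mathsf{S},P_i)$ ever occurs. Only in the complementary case, of probability at most $\errorAICP$, does event~(a) actually happen. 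Because the first inconsistent point either triggers the discard or is itself the failure, the total contribution of the pair $(\mathsf{S},P_i)$ is bounded by $\errorAICP$ no matter how many $\PiStatVSS$ invocations follow. Your treatment of event~(b) is phrased correctly (``once $\mathsf{R}$ detects a rejecting subset$\ldots$''), so the same restructuring should be applied to (a) for consistency. With that fix, the union bound over at most $n^2$ ordered pairs gives the claimed $n^2\cdot\errorAICP$ bound.
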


 \subsection{Statistically-Secure Protocol for $\FTriples$ in the $(\FVSS, \FABA)$-Hybrid}
Our  {\it statistically-secure} protocol $\PiStatTriples$ for realizing $\FTriples$ with $\Q^{(3)}(\PartySet, \AdvStructure)$
 condition mostly follows \cite{HT13}. Here, we discuss the high level ideas and refer to 
 Appendix \ref{app:Statistical} for formal details and proofs. To explain the idea at a high-level, we consider the case when $M=1$ multiplication-triple is generated through $\PiStatTriples$.
   The modifications to generate $M$ multiplication-triples are straight forward.
   Protocol $\PiStatTriples$ is almost the same as $\PiPerTriples$, except that we now use a {\it statistically-secure} multiplication protocol. 
   \vspace*{-0.4cm}
   \paragraph{Basic Multiplication Protocol:}
     Our starting point is the basic multiplication protocol of \cite{HT13} in the {\it synchronous} setting. 
     The protocol takes $[a], [b]$, along with a set of {\it globally-discarded} parties $\Discarded$
     which are {\it guaranteed} to be corrupt, and outputs $[c]$. In the protocol,  each summand $[a]_p[b]_q$ is assigned to a {\it publicly-known} designated party from $\PartySet \setminus \Discarded$. Every designated summand-sharing party then secret-shares the sum of all the assigned summands, based
      on which the parties compute $[c]$. If no summand-sharing party behaves maliciously, then $c = ab$ holds.
      
      Similar to $\OptMult$, the main challenge while running the above protocol
       in the {\it asynchronous} setting is that a
         corrupt summand-sharing party may {\it never} share the sum of the assigned summands. To deal with this issue, similar to what was done for
   $\OptMult$, 
   we ask {\it each} party in $\PartySet \setminus \Discarded$ to share the sum of all possible
    summands it is capable of, while ensuring that no summand is shared twice.
    The idea here is that since $\AdvStructure$ satisfies the $\Q^{(3)}(\PartySet, \AdvStructure)$ condition,
     for every summand $[a]_p [b]_q$, the set $(S_p \cap S_q) \setminus \Discarded$ is guaranteed
    to contain at least one {\it honest} party who will share $[a]_p[b]_q$.
    Based on this above idea, we design a protocol $\BasicMult$
    which is executed with respect to a set $\Discarded$, and an iteration number $\iter$. 
    Looking ahead, it will be guaranteed that no honest party is ever included in $\Discarded$. The protocol is similar to
   $\OptMult$, except that it {\it does not} take any subset $Z \in \AdvStructure$ as input. 
     \vspace*{-0.4cm}
  \paragraph{Detectable Random-Triple Generation:}
 Based on $\BasicMult$, we design a protocol $\RandMultLCE$, which takes as input an iteration number $\iter$ and an existing set of {\it corrupt} parties
   $\Discarded$. If no party in $\PartySet \setminus \Discarded$ behaves maliciously, then the protocol outputs
  a random secret-shared multiplication-triple $[a_\iter], [b_\iter], [c_\iter]$. Else, except with probability $\frac{1}{|\F|}$, 
  the parties update $\Discarded$ by identifying at least one {\it new} corrupt party among
  $\PartySet \setminus \Discarded$. In the protocol, 
   the parties first generate secret-sharing of random values $a_\iter,b_\iter,b'_\iter$ and $r_\iter$.
   Two instances of $\BasicMult$ with inputs $[a_\iter], [b_\iter]$ and $[a_\iter], [b'_\iter]$ are run
    to obtain $[c_\iter]$ and $[c'_\iter]$ respectively. The parties then reconstruct the ``challenge" $r_\iter$ and {\it publicly} check
    if  $[a_\iter](r_\iter[b_\iter] + [b'_\iter]) = (r_\iter[c_\iter] + [c'_\iter])$ holds, which should be the case if {\it no} cheating has occurred during the instances of $\BasicMult$.
    If the condition holds, then the parties output $[a_\iter], [b_\iter], [c_\iter]$, which is guaranteed to be a multiplication-triple, except with probability $\frac{1}{|\F|}$.
    Otherwise, the parties proceed to identify at least one new corrupt party by reconstructing
    $[a_\iter], [b_\iter], [b'_\iter], [c_\iter], [c'_\iter]$ and the sum of the summands shared by the various summand-sharing parties during the instances of
    $\BasicMult$. 
      \vspace*{-0.4cm}
    \paragraph{The Statistically-Secure Pre-Processing Phase Protocol:}
Protocol $\PiStatTriples$ proceeds in iterations, where in each iteration an instance of $\RandMultLCE$ is invoked, 
 which either succeeds or fails. In case of success, the parties output
  the returned multiplication-triple, else, they continue to the next iteration.
   As a new
    corrupt party is discarded in each failed iteration, the protocol eventually outputs a multiplication-triple.   
\begin{theorem}
\label{thm:PiStatTriples}
Let  $\AdvStructure$ satisfy the $\Q^{(3)}(\PartySet, \AdvStructure)$ condition. 
 Then  $\PiStatTriples$ UC-securely realizes $\FTriples$ in the $(\FVSS, \FABA)$-hybrid model, except with
  error probability of at most $\frac{n}{|\F|}$.
   The protocol makes $\Order(n^3 \cdot M)$ calls to $\FVSS$ and $\Order(n^3)$ calls to $\FABA$,
   and  additionally communicates $\Order((M \cdot |\AdvStruct| \cdot n^3 + |\AdvStructure| \cdot n^4) \log{|\F|})$ bits. 
   
   By replacing the calls to $\FVSS$ with protocol $\PiStatVSS$ (along with the modifications discussed in Section \ref{sec:superpolynomial}), protocol  
   $\PiStatTriples$ UC-securely realizes $\FTriples$ in the $\FABA$-hybrid model, except with
  error probability $n^2 \cdot \errorAICP$. The protocol makes $\Order(n^3)$ calls to $\FABA$
  and incurs a communication of
  $\Order(M \cdot |\AdvStruct| \cdot n^9 \log{|\F|})$ bits.
\end{theorem}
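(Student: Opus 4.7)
The plan is to decompose the argument into three parts — \emph{correctness with termination}, \emph{UC-security via simulation}, and \emph{communication accounting} — and then obtain the second statement by composing with $\PiStatVSS$ and invoking Lemma \ref{lemma:modifiedSVSS}. The overall structure mirrors the proof of Theorem \ref{thm:PiPerTriples}, but the multiplication sub-protocol is now $\RandMultLCE$ (with its $\frac{1}{|\F|}$ check), instead of the perfect $\MultGCE$.

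For correctness and termination, I would first invoke the stated guarantee of $\RandMultLCE$: in every iteration $\iter$, conditioned on $\Discarded$ not containing any honest party, either the public check $[a_\iter](r_\iter[b_\iter]+[b'_\iter]) = r_\iter[c_\iter]+[c'_\iter]$ passes and $(a_\iter,b_\iter,c_\iter)$ is a genuine multiplication triple except with probability $\frac{1}{|\F|}$, or the check fails and at least one new corrupt party is added to $\Discarded$. Since $|\PartySet| = n$ and only corrupt parties are ever added to $\Discarded$, there can be at most $n$ failed iterations before every party in $\PartySet \setminus \Discarded$ is honest — at which point the next iteration is guaranteed to succeed, since no summand-sharing party in $\BasicMult$ acts maliciously. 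A union bound over at most $n$ iterations yields the claimed error probability $\frac{n}{|\F|}$. Eventual termination then follows from the eventual termination guarantees of $\BasicMult$, $\FVSS$, $\FABA$, and the reconstruction protocols, all invoked in the protocol.

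For UC-security I would construct a simulator $\SimStatTriples$ that interacts with $\FTriples$, $\Adv$, and the environment $\Env$. Internally, $\SimStatTriples$ runs the honest-party code of $\PiStatTriples$ on dummy inputs, simulating $\FVSS$ and $\FABA$ towards $\Adv$ by reading off the shares/votes that the corrupt parties submit. For each successful iteration, $\SimStatTriples$ extracts the shares $\{s_q\}_{S_q \cap Z^\star \neq \emptyset}$ that the corrupt parties' behaviour fixes for $a_\iter, b_\iter, c_\iter$, forwards them to $\FTriples$ via the \textbf{Single Sharing Generation} interface, and then completes the sharings in its simulated transcript consistently with the honest shares delivered back by $\FTriples$. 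Indistinguishability hinges on two observations: (i) because $\ShareSpec$ is $\AdvStructure$-private, for each secret shared by an honest party there is some $S_q$ with $S_q \cap Z^\star = \emptyset$ whose share is uniformly random and unconstrained in $\Adv$'s view, so the simulator can patch the transcript without detection; and (ii) the auxiliary sharings $[b'_\iter]$ and $[r_\iter]$, together with the masking structure inside $\BasicMult$, ensure that the publicly reconstructed values leak nothing beyond what $\FTriples$ already reveals. The main subtlety will be arguing that even in the cheater-identification branch of a failed iteration, the reconstructions of $[a_\iter],[b_\iter],[b'_\iter],[c_\iter],[c'_\iter]$ are safe — these triples are freshly random and independent of the triples that eventually become output, so opening them reveals no information about any output triple.

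For complexity, each iteration invokes a constant number of $\BasicMult$ calls (together with $\Order(1)$ random-sharing generations and reconstructions), and $\BasicMult$ with $M$ input pairs makes $\Order(n^2 M)$ calls to $\FVSS$ (one per summand-sharing party per hop, per pair) and $\Order(n^2)$ calls to $\FABA$ for the ACS-based selection. Multiplying by the at most $n$ iterations yields $\Order(n^3 M)$ calls to $\FVSS$ and $\Order(n^3)$ calls to $\FABA$, and the reconstruction traffic sums to $\Order((M |\AdvStructure| n^3 + |\AdvStructure| n^4) \log |\F|)$ bits as claimed. The second statement then follows by replacing $\FVSS$ with the modified $\PiStatVSS$: each VSS call costs $\Order(|\AdvStructure| n^6 \log |\F|)$ bits (Theorem \ref{thm:StatVSS} with the Acast protocol $\PiAcast$), so the $\Order(n^3 M)$ calls dominate, giving $\Order(M |\AdvStructure| n^9 \log |\F|)$ bits; the error probability becomes $n^2 \cdot \errorAICP$ by Lemma \ref{lemma:modifiedSVSS}, which is independent of the number of VSS invocations and therefore absorbs the $\frac{n}{|\F|}$ term. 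I expect the subtlest step to be the simulator's handling of failed iterations — ensuring that opening fresh random triples during cheater identification remains simulatable without knowledge of the real inputs, and that the discard sets $\Discarded$ produced in the real and ideal worlds are identically distributed.
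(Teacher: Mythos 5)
Your proposal follows essentially the same structure as the paper's proof: a simulator that runs the honest-party code on random inputs while impersonating $\FVSS$ and $\FABA$, extraction of the corrupt parties' shares from the first successful $\RandMultLCE$ iteration for forwarding to $\FTriples$, an $\AdvStructure$-privacy argument for the view, at most $n$ failed iterations yielding the $\frac{n}{|\F|}$ union bound, and a per-iteration complexity count dominated by the $\BasicMult$ calls. One small conceptual slip: $\FTriples$ never delivers the honest parties' shares back to the simulator (it sends them directly to the dummy honest parties via request-based delayed output), so there is no ``patching'' step --- the simulated honest shares are what the simulator already produced internally, and indistinguishability from $\FTriples$'s sampling follows from $\AdvStructure$-privacy, exactly as in Claims~\ref{claim:StatTriplesPrivacy} and \ref{claim:StatTriplesOutputDistribution}; likewise, the random-sharing phase of each $\RandMultLCE$ iteration makes $\Theta(nM)$ calls to $\FVSS$ rather than $\Order(1)$, though this is still dominated by the $\Order(n^2 M)$ calls inside $\BasicMult$.
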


\section{MPC Protocols in the Pre-Processing Model}
\label{sec:mpc}
   The MPC protocol $\PiMPC$ in the pre-processing model is standard. 
    The parties first generate secret-shared random multiplication-triples through
     $\FTriples$. Each party then randomly secret-shares its input for $\ckt$ through $\FVSS$. 
       To avoid an indefinite wait, 
   the parties agree on a common subset of parties, whose
    inputs are eventually secret-shared, through ACS. 
   The parties then jointly evaluate each gate in $\ckt$ in a secret-shared fashion by generating a secret-sharing of 
   the gate-output from a secret-sharing of the gate-input(s). Linear gates are evaluated non-interactively due to the linearity of secret-sharing. 
   To evaluate multiplication gates, the parties deploy Beaver's method \cite{Bea91}, using the secret-shared
   multiplication-triples generated by $\FTriples$. 
   Finally, the parties publicly reconstruct the  secret-shared function output.
   Protocol $\PiMPC$ and the proof of Theorem \ref{thm:AMPC} are presented in Appendix \ref{app:MPC}. 
 \begin{theorem}
\label{thm:AMPC}
 Protocol $\PiMPC$ UC-securely realizes the functionality $\Functionality$ for securely computing  $f$ (see Fig \ref{fig:FAMPC} in Appendix \ref{app:UC}) 
   with perfect security in the $(\FTriples, \FVSS, \FABA)$-hybrid model, in the 
   presence of a static malicious adversary characterized by an adversary-structure $\AdvStructure$ satisfying the $\Q^{(3)}(\PartySet, \AdvStructure)$ condition.
   The protocol makes one call to $\FTriples$ and $\Order(n)$ calls to $\FVSS$ and $\FABA$ and additionally incurs a communication of 
   $\Order(M \cdot |\AdvStructure| \cdot n^2 \log{|\F|})$ bits, where $M$ is the number of multiplication gates in the circuit $\ckt$ representing $f$.
   \end{theorem}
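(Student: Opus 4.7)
The plan is to establish the result by exhibiting a UC simulator $\Sim_{\PiMPC}$ in the $(\FTriples, \FVSS, \FABA)$-hybrid model that, for every real-world adversary $\Adv$ and environment $\Env$, induces a view identically distributed to the real execution, and then to read off the efficiency bound from the code of $\PiMPC$ together with the already-known cost of $\PiPerRec$.

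First I would describe the simulator. (i) For the pre-processing phase, $\Sim_{\PiMPC}$ emulates $\FTriples$: it receives from $\Adv$ the shares on behalf of corrupt parties, samples fresh multiplication-triples $(a_\ell, b_\ell, c_\ell)$ with $c_\ell = a_\ell b_\ell$, and completes the sharings so that the corrupt-party shares coincide with what $\Adv$ specified, while honest-party shares remain hidden. (ii) For the input phase, $\Sim_{\PiMPC}$ emulates $\FVSS$ by sharing a dummy $0$ on behalf of each honest dealer and, for each corrupt dealer $P_j$, extracts $x^{(j)}$ from the share vector $\Adv$ passes to $\FVSS$. (iii) The ACS phase is simulated by emulating $\FABA$ to determine a core set $\CoreSet$ of input providers, exactly as in the real execution, since the $\vote$ predicates depend only on observable outputs of $\FVSS$. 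Once $\CoreSet$ is fixed, $\Sim_{\PiMPC}$ submits $\{x^{(j)}\}_{P_j \in \CoreSet}$ (defaulting to $0$ for parties outside $\CoreSet$) to $\Functionality$ and receives the output $y$. (iv) For circuit evaluation, linear gates are non-interactive; each multiplication gate is simulated by running Beaver's reconstruction of the masks $d = x - a$ and $e = y - b$ on the dummy shares via $\PiPerRec$. (v) At output reconstruction, $\Sim_{\PiMPC}$ adjusts a single honest share in each $S_q \in \ShareSpec$ so that $\PiPerRec$ opens to $y$.

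Indistinguishability reduces to two observations. Because $\Q^{(3)}(\PartySet, \AdvStructure)$ is equivalent to $\Q^{(2)}(\ShareSpec, \AdvStructure)$ for $\ShareSpec = \{\PartySet \setminus Z \mid Z \in \AdvStructure\}$, and $\ShareSpec$ is $\AdvStructure$-private, every shared value contains at least one honest share hidden from $\Adv$; hence the dummy input sharings, the triple sharings, and the share-level transcripts of linear gates are distributed identically in both worlds. Furthermore, since $a$ and $b$ in each triple are uniformly random and $\AdvStructure$-private, the public reconstructions $d$ and $e$ during Beaver multiplication are uniformly random masks, independent of inputs, in both worlds. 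Eventual termination follows from the eventual termination of $\FTriples$, $\FVSS$, $\FABA$, $\PiPerRecShare$, and $\PiPerRec$, together with the standard ACS termination argument. Correctness of the output follows from the correctness of Beaver's trick and the linearity of $[\cdot]$. The efficiency claim is bookkeeping: $\PiMPC$ makes a single call to $\FTriples$, $\Order(n)$ calls to $\FVSS$ (one per dealer) and $\Order(n)$ calls to $\FABA$ (for the ACS), and the per-multiplication cost is dominated by two $\PiPerRec$ invocations, each communicating $\Order(|\AdvStructure| \cdot n^2 \log |\F|)$ bits, which gives the bound $\Order(M \cdot |\AdvStructure| \cdot n^2 \log |\F|)$.

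The main obstacle I expect is the final patching argument in step (v): $\Sim_{\PiMPC}$ must commit to shares used in the Beaver reconstructions of every multiplication gate before $y$ is revealed by $\Functionality$, yet the final opening of the output wire must land on $y$ while remaining consistent with every transcript the adversary has already seen. The existence of a free honest share in each $S_q$ is immediate from $\AdvStructure$-privacy, but one must verify that adjusting only these free shares simultaneously across all $S_q$ is consistent with the ordered disclosures inside $\PiPerRec$, and that all prior reconstructions of the masked quantities $d, e$ remain unaffected because they were evaluated on random, input-independent masks. Once this step is discharged, the rest of the proof is a straightforward hybrid across the three phases of $\PiMPC$.
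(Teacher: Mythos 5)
Your proposal follows the same route as the paper's proof: a simulator that emulates $\FTriples$, $\FVSS$, $\FABA$; extracts corrupt inputs from the $\FVSS$ calls; queries $\Functionality$ once $\CoreSet$ is fixed; and patches the output reconstruction so that it opens to $y$. Your indistinguishability argument ($\AdvStructure$-privacy of $\ShareSpec$ hides the honest shares, so dummy inputs and real inputs yield identical transcripts, and the Beaver masks are uniform because the triples are uniform) is the same as the paper's Claim on the view of $\Adv$.

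Two corrections worth noting on the patching step you single out as the obstacle. First, the timing: the simulator learns $y$ from $\Functionality$ immediately after the input phase (all corrupt inputs have been extracted at that point and the honest inputs are dummies), i.e.\ \emph{before} any multiplication gate is simulated. So there is no moment where $\Sim$ has committed to the Beaver-reconstruction shares without knowing $y$; the only reconciliation needed is at the output gate, where $\Sim$ compares the sharing $[\widetilde{y}]$ it computed on the dummy inputs with the true $y$ and chooses the honest-side shares accordingly. Second, the phrase ``a free honest share in each $S_q$'' is not accurate: the share $[y]_q$ can only be overwritten for those $S_q \in \ShareSpec$ with $S_q \cap Z^{\star} = \emptyset$; for any $S_q$ containing a corrupt party, $[y]_q$ is already in $\Adv$'s view and must be kept. $\AdvStructure$-privacy guarantees the existence of at least one $S_q$ disjoint from $Z^{\star}$, and the simulator picks the shares for all such $S_q$ to be random subject to the overall sum equalling $y$ minus the adversarially held shares. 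With these clarifications, the argument you outline matches the paper's.
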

   If we replace the calls to $\FTriples$ and $\FVSS$ with perfectly-secure protocol $\PiPerTriples$ and $\PiPerVSS$ respectively, then protocol
   $\PiMPC$ achieves perfect security in the $\FABA$-hybrid. On the other hand, 
    replacing the calls to $\FTriples$ and $\FVSS$ in $\PiMPC$ with  $\PiStatTriples$ and $\PiStatVSS$ respectively
    leads to statistical-security. To bound the error probability of the statistically-secure protocol by $2^{-\kappa}$,
     we select a finite field $\F$ such that $|\F| > n^4 2^{\kappa}$. Based on the above discussion, we get the following corollaries of 
   Theorem \ref{thm:AMPC}.
   \begin{corollary}
   \label{cor:Perfect}
   If $\AdvStructure$ satisfies the $\Q^{(4)}(\PartySet, \AdvStructure)$ condition, then 
    $\PiMPC$ UC-securely realizes $\Functionality$ in the $\FABA$-hybrid model with perfect security.
    The protocol makes $\Order(|\AdvStructure| \cdot n^5)$ calls to $\FABA$
    and incurs a communication of $\Order(M \cdot (|\AdvStructure|^2 \cdot n^7 \log{|\F|} + |\AdvStructure| \cdot n^9 \log{n}))$ bits, where
    $M$ is the number of multiplication gates in $\ckt$.
   \end{corollary}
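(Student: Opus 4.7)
The plan is to derive Corollary~\ref{cor:Perfect} directly from Theorem~\ref{thm:AMPC} by instantiating the two ideal hybrid functionalities $\FTriples$ and $\FVSS$ with the perfectly-secure protocols $\PiPerTriples$ and $\PiPerVSS$ respectively, and then invoking the universal composition theorem to preserve UC-security when moving from the $(\FTriples,\FVSS,\FABA)$-hybrid model down to the $\FABA$-hybrid model. The main conceptual observation to record first is that the $\Q^{(4)}(\PartySet,\AdvStructure)$ condition of the corollary trivially implies the $\Q^{(3)}(\PartySet,\AdvStructure)$ condition required by Theorem~\ref{thm:AMPC}, so that theorem is immediately applicable and yields that $\PiMPC$ UC-securely realizes $\Functionality$ in the $(\FTriples,\FVSS,\FABA)$-hybrid.

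Next I would check that each replacement protocol is applicable under $\Q^{(4)}$. For $\PiPerTriples$, Theorem~\ref{thm:PiPerTriples} already requires exactly the $\Q^{(4)}(\PartySet,\AdvStructure)$ condition, so it plugs in directly. For $\PiPerVSS$, we need the $\Q^{(3)}(\ShareSpec,\AdvStructure)$ condition on the sharing specification; as noted in the preliminaries, with the canonical choice $\ShareSpec = \{\PartySet \setminus Z \mid Z \in \AdvStructure\}$ this condition is implied by $\Q^{(4)}(\PartySet,\AdvStructure)$. The UC composition theorem then gives that the resulting composed protocol UC-securely realizes $\Functionality$ in the $\FABA$-hybrid with perfect security, which establishes the security part of the corollary.

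For the complexity part, the plan is straightforward bookkeeping using the existing cost estimates. Theorem~\ref{thm:AMPC} contributes $\Order(n)$ direct calls each to $\FVSS$ and $\FABA$, one call to $\FTriples$, and $\Order(M \cdot |\AdvStructure| \cdot n^2 \log{|\F|})$ bits of direct communication. Unfolding the single $\FTriples$ call via $\PiPerTriples$ (Theorem~\ref{thm:PiPerTriples}) adds $\Order(M \cdot |\AdvStructure| \cdot n^5)$ further $\FVSS$ calls, $\Order(|\AdvStructure| \cdot n^5)$ further $\FABA$ calls, and $\Order(M \cdot |\AdvStructure|^2 \cdot n^5 \log{|\F|} + |\AdvStructure| \cdot n^7 \log{|\F|})$ additional bits. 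Unfolding every remaining $\FVSS$ call via $\PiPerVSS$ at $\Order(|\AdvStructure| \cdot n^2 \log{|\F|} + n^4 \log{n})$ bits per call, multiplied by the $\Order(M \cdot |\AdvStructure| \cdot n^5)$ total VSS calls, contributes $\Order(M \cdot |\AdvStructure|^2 \cdot n^7 \log{|\F|} + M \cdot |\AdvStructure| \cdot n^9 \log{n})$ bits, which dominates all remaining additive terms and gives the claimed bound.

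The main obstacle is not conceptual but purely arithmetic: verifying that the lower-order additive terms (the $\Order(n)$ direct $\FABA$ and $\FVSS$ calls from $\PiMPC$, the $|\AdvStructure| \cdot n^7 \log{|\F|}$ term from $\PiPerTriples$, and the direct $\Order(M \cdot |\AdvStructure| \cdot n^2 \log{|\F|})$ bits from $\PiMPC$) are indeed absorbed into the dominant $\Order(M \cdot (|\AdvStructure|^2 \cdot n^7 \log{|\F|} + |\AdvStructure| \cdot n^9 \log{n}))$ expression. A minor subtlety to note explicitly is that $\PiPerVSS$ as stated in Section~\ref{prelim} requires the $\Q^{(3)}(\ShareSpec,\AdvStructure)$ condition rather than $\Q^{(4)}(\PartySet,\AdvStructure)$ directly; the clean way to handle this in the write-up is to cite the implication established in the preliminaries once, so the corollary follows purely by composition and counting.
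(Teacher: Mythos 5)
Your proposal is correct and follows exactly the paper's own route: the paper derives Corollary~\ref{cor:Perfect} from Theorem~\ref{thm:AMPC} by the same one-line instantiation of $\FTriples$ with $\PiPerTriples$ and $\FVSS$ with $\PiPerVSS$ (and, implicitly, $\FAcast$ with $\PiAcast$) via UC composition, and your complexity bookkeeping matches the stated bound. The only thing worth making explicit in a final write-up is that the per-call cost $\Order(|\AdvStructure| \cdot n^2 \log{|\F|} + n^4 \log{n})$ you use for $\PiPerVSS$ already assumes the further replacement of $\FAcast$ by $\PiAcast$, which is needed to end up purely in the $\FABA$-hybrid model and is permitted since $\Q^{(4)}(\PartySet,\AdvStructure)$ implies the $\Q^{(3)}(\PartySet,\AdvStructure)$ condition required by $\PiAcast$.
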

    \begin{corollary}
   \label{cor:Statistical}
   If $\AdvStructure$ satisfies the $\Q^{(3)}(\PartySet, \AdvStructure)$ condition, then 
    $\PiMPC$ UC-securely realizes $\Functionality$ in the $\FABA$-hybrid model with statistical security.
    If $|\F| > n^4 2^{\kappa}$ for a given statistical-security parameter $\kappa$, then the error probability of the protocol is at most $2^{-\kappa}$.
     The protocol makes $\Order(n^3)$ calls to $\FABA$
   and incurs a communication of
  $\Order(M \cdot |\AdvStruct| \cdot n^9 \log{|\F|})$ bits, where $M$ is the 
   number of multiplication gates in $\ckt$.
   \end{corollary}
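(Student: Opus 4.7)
The plan is to deduce Corollary \ref{cor:Statistical} from Theorem \ref{thm:AMPC} by invoking the UC composition theorem and substituting the two ideal functionalities $\FTriples$ and $\FVSS$ in $\PiMPC$ with their statistically-secure realizations established in the preceding sections. Since Theorem \ref{thm:AMPC} already shows that $\PiMPC$ realizes $\Functionality$ in the $(\FTriples, \FVSS, \FABA)$-hybrid under the weaker $\Q^{(3)}(\PartySet,\AdvStructure)$ condition, no fresh simulation argument is needed beyond the composition step and a careful accounting of errors and complexities.

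First, I would replace the single call to $\FTriples$ inside $\PiMPC$ by an invocation of $\PiStatTriples$ whose internal $\FVSS$ subroutines are themselves instantiated by the modified $\PiStatVSS$ from Section \ref{sec:superpolynomial}. By the second part of Theorem \ref{thm:PiStatTriples}, this preprocessing block operates purely in the $\FABA$-hybrid, uses $\Order(n^3)$ calls to $\FABA$ and communicates $\Order(M \cdot |\AdvStruct| \cdot n^9 \log{|\F|})$ bits with error $n^2 \cdot \errorAICP$. Next, the $\Order(n)$ calls to $\FVSS$ used by $\PiMPC$ for input sharing would be replaced by further invocations of the modified $\PiStatVSS$; by Lemma \ref{lemma:modifiedSVSS}, all invocations of the modified $\PiStatVSS$ across the entire execution, including those nested inside $\PiStatTriples$, jointly contribute a single error term of $n^2 \cdot \errorAICP$ regardless of how often $\PiStatVSS$ is called, and the added communication is dominated by the preprocessing cost. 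Hence the number of $\FABA$ calls remains $\Order(n^3)$ and the overall communication stays $\Order(M \cdot |\AdvStruct| \cdot n^9 \log{|\F|})$.

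To bound the total error I would sum two contributions: the $n^2 \cdot \errorAICP$ from the globally-aggregated use of the modified $\PiStatVSS$, and the $\Order(n/|\F|)$ arising from the polynomial number of iterations of $\RandMultLCE$ inside $\PiStatTriples$. Using $\errorAICP = \Order(n^2/|\F|)$ from Lemma \ref{lemma:AICP}, the total is $\Order(n^4/|\F|)$, so the hypothesis $|\F| > n^4 \cdot 2^\kappa$ yields error at most $2^{-\kappa}$. The UC composition theorem then implies that the distinguishing advantage of any environment against the composed protocol is bounded by the sum of the statistical distances introduced by the substitutions, so $\PiMPC$ becomes a UC-secure realization of $\Functionality$ with statistical security in the $\FABA$-hybrid.

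The main obstacle I anticipate is carefully justifying that the bound of Lemma \ref{lemma:modifiedSVSS} remains aggregate across every use of $\PiStatVSS$ in the composed protocol. This requires the locally-discarded lists $\LocalDiscarded^{(i)}$ and the $\authCompleted$ state introduced in Section \ref{sec:superpolynomial} to persist across every instance of $\Auth$ and $\Reveal$, so that a corrupt signer or intermediary is permanently disabled after its first detected deviation. Only under this invariant can the $\Order(n^2)$ possible pairs of (honest verifier, corrupt signer) and (honest receiver, corrupt intermediary) be handled by a single union bound that pays $\errorAICP$ at most once per pair, rather than once per invocation of AICP within the composed protocol; otherwise the error would scale with the overall number of AICP calls and defeat the choice $|\F| > n^4 \cdot 2^\kappa$.
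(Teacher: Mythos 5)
Your proposal is correct and follows essentially the same route as the paper: the corollary is obtained from Theorem~\ref{thm:AMPC} by UC composition, replacing $\FTriples$ with $\PiStatTriples$ (itself instantiated with the modified $\PiStatVSS$ of Section~\ref{sec:superpolynomial}) and the remaining $\FVSS$ calls with the modified $\PiStatVSS$, then reading off the $\Order(n^3)$ $\FABA$ calls, the $\Order(M \cdot |\AdvStructure| \cdot n^9 \log|\F|)$ communication, and bounding the combined $\errorAICP$-based and $\RandMultLCE$-based error by $\Order(n^4/|\F|) \le 2^{-\kappa}$ under $|\F| > n^4 \cdot 2^{\kappa}$. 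Your closing observation about why Lemma~\ref{lemma:modifiedSVSS} must yield an invocation-independent aggregate bound (the persistence of $\LocalDiscarded^{(i)}$ and $\authCompleted$ state across all $\Auth/\Reveal$ instances) is exactly the mechanism the paper relies on, and is the only non-routine ingredient beyond the composition theorem.
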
  

 \bibliographystyle{plain}

\bibliography{main}

 \appendix
 
 \section{The Asynchronous Universal Composability (UC) Framework and Various Asynchronous Functionalities}
\label{app:UC}
In this section, we discuss the asynchronous UC framework followed in this paper. The discussion is based on the description of the framework against
 threshold adversaries as provided in \cite{Coh16} (which is further based on \cite{KMTZ13,CGHZ16}). We adapt the framework for the case of general adversaries.
  Informally, the security of a protocol is argued by ``comparing" the capabilities of the adversary in two separate
  worlds.
  In the {\it real-world}, the parties exchange messages among themselves, computed as per
  a given protocol. In the {\it ideal-world}, the parties do not interact with {\it each other}, but with a {\it trusted} third-party
  (an ideal functionality), which enables the parties to obtain the result of the computation based on the inputs provided by
  the parties. Informally, a protocol is considered to be secure if whatever an adversary can do in the real protocol
   can be also done in the ideal-world. \\~\\
    \noindent {\bf The Asynchronous Real-World}: An execution of a protocol $\Pi$ in the real-world 
  consists of $n$ {\it interactive Turing machines} (ITMs) representing the parties in $\PartySet$.
  Additionally, there is an ITM for representing the adversary $\Adv$.
     Each ITM is initialized with its random coins and possible inputs. Additionally, $\Adv$ may have some auxiliary input $z$. 
     Following the convention of \cite{CanettiThesis}, 
     the protocol operates {\it asynchronously} by a sequence of {\it activations}, where at each point, a single ITM is active.
       Once activated, a party can perform some local computation, write on its output tape, or send messages to other parties. 
       On the other hand, if the adversary is 
       activated,  it can send messages on the behalf of corrupt parties.       
       The protocol execution is complete once all honest parties obtain their respective outputs.
       We let $\REAL_{\Pi, \Adv(z), Z^{\star}}(\vec{x})$ 
        denote the random variable consisting of the output of the honest parties and the view of the adversary 
        $\Adv$ during the execution of a protocol $\Pi$. Here, $\Adv$ controls parties in $Z^{\star}$ during the execution of protocol $\Pi$ with inputs $\vec{x} = (x^{(1)}, \ldots, x^{(n)})$ for the parties (where party $P_i$ has input $x^{(i)}$), and auxiliary input $z$ for $\Adv$. \\~\\
   \noindent {\bf The Asynchronous Ideal-World}:
   A protocol in the ideal-world 
  consists of $n$ {\it dummy} parties $P_1, \ldots, P_n$, an ideal-world adversary $\Sim$ (also called {\it simulator}) 
  and an ideal functionality $\Functionality$. 
  We consider {\it static} corruptions such that the set of corrupt parties $Z^{\star}$ is fixed at the beginning of the computation
   and is known to $\Sim$. The functionality $\Functionality$ receives the inputs from the respective dummy parties, 
   performs the desired computation $f$ on the received inputs, and sends the outputs to the respective parties. 
   The ideal-world adversary {\it does not} see and {\it cannot} delay the communication between the parties and 
   $\Functionality$. However, it can communicate with $\Functionality$ on the behalf of corrupt parties.
   
   Since $\Functionality$ models the desired behaviour of a real-world protocol which is {\it asynchronous}, 
    ideal functionalities must consider some inherent limitations to model the asynchronous communication model with eventual delivery.
     For example, in a real-world protocol, the adversary  can decide when each honest party learns the output since 
     it has full control over message scheduling. To model the notion of time in the ideal-world, 
     \cite{KMTZ13} uses the concept of {\it number of activations}. 
      Namely, once $\Functionality$ has computed the output for some party,
       it {\it does not} ask ``permission"  from $\Sim$ to deliver it to the respective party.
       Instead, the corresponding party must ``request" $\Functionality$ for the output, which 
        can be done only  when the concerned party is active.  
        Moreover, the adversary can ``instruct" $\Functionality$ to delay the output for each party by ignoring the corresponding 
        requests, but only for a polynomial number of activations.  If a party is activated 
        sufficiently many times, the party will eventually receive the output from $\Functionality$
        and hence, ideal computation eventually completes. That is, each honest party
        eventually obtains its desired output. As in \cite{Coh16}, 
        we use the term ``{\it $\Functionality$ sends a request-based delayed output to} $P_i$",
        to  describe the above interaction between the $\Functionality, \Sim$ and $P_i$.
        
        Another limitation is that in a real-world AMPC protocol, the (honest) parties {\it cannot} afford for all the parties to provide their input for the computation to avoid an
        endless wait, as the corrupt parties may decide not to provide their inputs. Hence, {\it every}
        AMPC protocol suffers from {\it input deprivation}, where inputs of a subset of potentially honest parties (which is decided by the choice of
        adversarial message scheduling) may get ignored during computation. Consequently, once a ``core set" of parties $\CoreSet$ provide their inputs for
        the computation, where $\PartySet \setminus \CoreSet \in \AdvStructure$, the parties have to start computing the function by assuming some default input for
        the left-over parties.
        To model this in the ideal-world, $\Sim$ is given the provision
        to decide the set $\CoreSet$ of parties whose inputs should be taken into consideration by $\Functionality$. We stress that 
        $\Sim$ {\it cannot} delay sending $\CoreSet$ to $\Functionality$ indefinitely. This is because in the real-world
        protocol, $\Adv$ {\it cannot} prevent the honest parties from providing their
        inputs indefinitely.        
        The formal description of
        $\Functionality$ is available in Fig \ref{fig:FAMPC}.
        
  \begin{systembox}{$\Functionality$}{The ideal functionality for asynchronous secure multi-party computation for session id $\sid$.}{fig:FAMPC}
	\justify
$\Functionality$ proceeds as follows, running with the parties $\PartySet = \{P_1, \ldots, P_n \}$ and an adversary $\Sim$, and is parametrized by an $n$-party function
 $f: \F^n \rightarrow \F$ and an adversary structure $\AdvStructure \subseteq 2^{\PartySet}$.
   \begin{enumerate}
   	  \item For each party $P_i \in \PartySet$, initialize an input value $x^{(i)} = \bot$. 
      \item Upon receiving a message $(\inp, \sid, v)$ from some $P_i \in \PartySet$ (or from $\Sim$ if $P_i$ is {\it corrupt}), do the following:
        \begin{itemize}
        \item Ignore the message if output has already been  computed;
        \item Else, set $x^{(i)} = v$ and send $(\inp, \sid, P_i)$ to $\Sim$.\footnote{If $P_i$ is corrupt, then no need to send $(\inp, \sid, P_i)$ to $\Sim$ as the input has been provided
        by $\Sim$ only.}
        \end{itemize}
      \item Upon receiving a message $(\coreset, \sid, \CoreSet)$ from $\Sim$, do the following:\footnote{$\Sim$ cannot delay sending $\CoreSet$ indefinitely; see the discussion
      before the description of the functionality.}
        \begin{itemize}
        \item Ignore the message if $(\PartySet \setminus \CoreSet) \not \in \AdvStructure$ or if output has already been computed;
        \item Else, record $\CoreSet$ and set $x^{(i)} = 0$ for every $P_i \not \in \CoreSet$.\footnote{It is possible that for some $P_i \not \in \CoreSet$,
         the input has been set to a value
        different from $0$ during step 1 and $x^{(i)}$ is now reset to $0$. This models the scenario that in the real-world
         protocol, even if $P_i$ is able to provide its input,
        $P_i$'s inclusion to $\CoreSet$ finally depends upon message scheduling, which is under adversarial control.}
        \end{itemize}
      \item If  $\CoreSet$ has been recorded and the value $x^{(i)}$ has been set to a value different from $\bot$ for every $P_i \in \CoreSet$, then
      compute $y \defined f(x^{(1)}, \ldots, x^{(n)})$ and generate a request-based delayed output $(\out, \sid, (\CoreSet, y))$ for every $P_i \in \PartySet$.
   \end{enumerate}
\end{systembox}

  Similar to the real-world, we let $\IDEAL_{\Functionality, \Sim(z), Z^{\star}}(\vec{x})$ 
        denote the random variable consisting of the output of the honest parties and the view of the adversary 
        $\Sim$, controlling the parties in $Z^{\star}$, 
        with the parties having inputs $\vec{x} = (x^{(1)}, \ldots, x^{(n)})$ (where party $P_i$ has input $x_i$), and auxiliary input $z$ for $\Sim$.
        
        We say that a real-world asynchronous protocol $\Pi$ {\it securely realizes $\Functionality$ with
         perfectly-security} if and only if  for every real-world adversary $\Adv$, there exists an ideal-world adversary  
  $\Sim$ whose running time is polynomial in the running time of $\Adv$, such that for every possible $Z^{\star}$,
   every possible $\vec{x} \in \F^n$ 
   and every possible $z \in \{0, 1 \}^{\star}$, it holds that the random variables 
    \[ \Big \{ \REAL_{\Pi, \Adv(z), Z^{\star}}(\vec{x}) \Big \}  \quad  \mbox{ and }  \quad  \Big \{ \IDEAL_{\Functionality, \Sim(z), Z^{\star}}(\vec{x}) \Big \}\]
   are identically distributed. That is, the random variables are perfectly-indistinguishable.
   
      For statistically-secure AMPC, the parties and adversaries are parameterized with a statistical-security parameter $\kappa$,
   and the above random variables (which are viewed as ensembles, parameterized by $\kappa$) 
   are required to be statistically-indistinguishable. That is, their statistical-distance should be a negligible
   function in $\kappa$.
   
   \paragraph {\bf The Universal-Composability (UC) Framework:}
 While the real-world / ideal-world based security paradigm is used to define the security of a protocol in the ``stand-alone" setting, 
  the more powerful UC framework \cite{Can01,Can20} is used to define
  the security of a protocol when multiple instances of the protocol might be running in parallel, possibly along with other 
  protocols. Informally, the security in the UC-framework is still argued by comparing the real-world and the ideal-world. However, 
  now, in both worlds, the computation takes place in the presence of an additional interactive 
   process (modeled as an ITM) called the {\it environment} and denoted by $\Env$.
   Roughly speaking, $\Env$ models the ``external environment" in which
   protocol execution takes place. The interaction between $\Env$ and the various entities takes place as follows
   in the two worlds.     
   
   In the real-world, the environment gives inputs to the honest parties, receives their outputs, and can communicate with the adversary
   at any point during the execution. During the protocol execution, the environment gets activated first. Once activated, the environment
   can either activate one of the parties by providing some input, or activate $\Adv$ by sending it a message. Once a party
   completes its operations upon getting activated, the control is returned to the environment. Once $\Adv$ gets activated, 
   it can communicate with the environment (apart from sending the messages to the honest parties).
   The environment also fully controls the corrupt parties that send all the messages they receive to $\Env$,
    and follow the orders of $\Env$.
    The protocol execution is completed once $\Env$ stops activating
   other parties, and outputs a single bit.
   
   In the ideal-model, the environment $\Env$ gives inputs to the (dummy) honest parties, receives their outputs, and can communicate with
   $\Sim$ at any point during the execution. The dummy parties act  as channels between $\Env$ and the functionality. That is, they
   send the inputs received from $\Env$ to functionality and transfer the output they receive from the functionality
   to $\Env$. The activation sequence in this world is similar to the one in the real-world. 
   The protocol execution is completed once $\Env$ stops activating
   other parties and outputs a single bit. 
   
   A protocol is said to be UC-secure with {\it perfect-security}, if for every real-world adversary $\Adv$ there exists a simulator $\Sim$, 
   such that for any environment $\Env$, the 
   environment cannot distinguish the real-world from the ideal-world.
   On the other hand, the protocol is said to be UC-secure with {\it statistical-security}, if the 
   environment cannot distinguish the real-world from the ideal-world, except with a probability which
    is a negligible function in the statistical-security parameter $\kappa$.  
\paragraph{\bf The Hybrid Model:} In a $\AcceptedParties$-hybrid model, a protocol execution proceeds as in the real-world. However, the parties have access to
 an ideal functionality $\AcceptedParties$ for some specific task. During the protocol execution, the parties
  communicate with $\AcceptedParties$ as in the ideal-world. The UC framework guarantees
  that an ideal functionality in a hybrid model can be replaced with a protocol that UC-securely realizes $\AcceptedParties$. This is specifically due to the
  following composition theorem from \cite{Can01,Can20}.
  \begin{theorem}[\cite{Can01,Can20}]
  Let $\Pi$ be a protocol that UC-securely realizes some functionality $\Fun$ in the $\AcceptedParties$-hybrid model and let $\rho$ be a protocol that UC-securely
   realizes $\AcceptedParties$. Moreover, let $\Pi^{\rho}$ denote the protocol that is obtained from $\Pi$ by replacing every ideal call to $\AcceptedParties$
   with the protocol $\rho$. Then $\Pi^{\rho}$ UC-securely realizes $\Fun$
   in the model where the parties do not have access to the functionality $\AcceptedParties$.
  \end{theorem}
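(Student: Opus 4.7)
The plan is to construct a composed simulator $\Sim^{\star}$ for $\Pi^{\rho}$ by stitching together the two simulators provided by the hypotheses: $\Sim_{\Pi}$, which handles $\Pi$ in the $\AcceptedParties$-hybrid model, and $\Sim_{\rho}$, which handles a single execution of $\rho$. Concretely, $\Sim^{\star}$ runs an internal copy of $\Sim_{\Pi}$ against the ideal functionality $\Fun$; whenever the interaction pattern of $\Sim_{\Pi}$ indicates an ideal call to $\AcceptedParties$, $\Sim^{\star}$ instead invokes a fresh instance of $\Sim_{\rho}$ to simulate the corresponding $\rho$-execution towards the real-world adversary $\Adv$ (as forwarded by $\Env$), and pipes $\Sim_{\rho}$'s interactions with its emulated $\AcceptedParties$-copy back into $\Sim_{\Pi}$.

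To argue that $\IDEAL_{\Fun, \Sim^{\star}, Z^{\star}}(\vec{x})$ and $\REAL_{\Pi^{\rho}, \Adv, Z^{\star}}(\vec{x})$ are indistinguishable, I would use a hybrid argument over the (polynomially many) sub-instances of $\rho$ arising inside $\Pi^{\rho}$. Let $m$ be an upper bound on this number and define hybrids $H_0, \ldots, H_m$, where in $H_k$ the first $k$ sub-instances of $\rho$ are replaced by ideal calls to $\AcceptedParties$ with $\Sim_{\rho}$ producing the adversary's view, and the remaining $m-k$ sub-instances run as the real protocol $\rho$. Then $H_0$ coincides with $\REAL_{\Pi^{\rho}, \Adv, Z^{\star}}(\vec{x})$, while $H_m$ is the $\AcceptedParties$-hybrid execution of $\Pi$ (with the $\rho$-layer fully simulated), to which $\Sim_{\Pi}$ is then applied to yield $\IDEAL_{\Fun, \Sim^{\star}, Z^{\star}}(\vec{x})$.

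The crucial step is showing $H_{k-1}$ and $H_k$ are indistinguishable by any environment. Suppose for contradiction some $\Env$ distinguishes them; I would build a reduction $\Env'$ that internally emulates $\Env$ together with all components of $\Pi^{\rho}$ except the $k$-th sub-instance of $\rho$, using the first $k-1$ already-ideal calls and the last $m-k$ real $\rho$-executions itself, and routes only the $k$-th sub-instance to an external challenger. Then $\Env'$ distinguishes a real execution of $\rho$ from the ideal execution of $\AcceptedParties$ with $\Sim_{\rho}$, contradicting the hypothesis that $\rho$ UC-securely realizes $\AcceptedParties$. Applying $\Sim_{\Pi}$ to $H_m$ and invoking its security likewise bridges $H_m$ to $\IDEAL_{\Fun, \Sim^{\star}, Z^{\star}}(\vec{x})$. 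In the perfectly-secure regime each hybrid transition is an identical distribution, so all hybrids collapse and the final indistinguishability is exact rather than merely statistical.

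The main obstacle is faithfully handling the subtleties of the asynchronous UC activation model: $\Env'$ must precisely emulate all other parties, the activation sequence, and the message-scheduling decisions of $\Env$, while exposing only the $k$-th $\rho$-instance to its external challenger, and it must do so without ``peeking'' at the inputs/outputs of the isolated instance in a way that would violate the interface of the external experiment. One must also verify that $\Sim^{\star}$ runs in polynomial time, which follows since $m$ is polynomial in the security parameter and both $\Sim_{\rho}$ and $\Sim_{\Pi}$ are polynomial-time machines. This is precisely the argument given by Canetti~\cite{Can01,Can20} for the UC composition theorem, to whose formal development we defer.
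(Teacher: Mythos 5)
The paper does not prove this theorem; it simply imports the UC composition theorem by citation to Canetti's work \cite{Can01,Can20}, so there is no paper proof to compare against. Your sketch — building a composed simulator that runs $\Sim_{\Pi}$ internally and spawns fresh copies of $\Sim_{\rho}$ for each hybrid call, then arguing indistinguishability via a hybrid over the $\rho$-sub-instances with a reduction $\Env'$ that isolates the $k$-th instance — is a faithful high-level account of the classical composition argument, and you correctly flag the delicate points (faithfully emulating the activation/scheduling interface in the asynchronous model, ensuring $\Env'$ does not peek into the isolated instance, polynomial running time of $\Sim^{\star}$), which is consistent with the paper's choice to defer the formal development to the cited source.
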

  \subsection{The Asynchronous Reliable Broadcast (Acast) Functionality and the Protocol}
      The ideal functionality $\FAcast$ capturing the requirements for asynchronous reliable broadcast  is presented in Fig \ref{fig:FAcast}. The functionality,
    upon receiving $m$ from the sender $P_S$, performs a request-based delayed delivery of $m$ to all the parties.
     Notice that if
    $P_S$ is {\it corrupt}, then the functionality {\it may not} receive any message for delivery, in which case parties obtain no output.
    This models the fact that in any real-world Acast protocol, a potentially {\it corrupt} $P_S$ {\it may not} invoke the protocol.

 \begin{systembox}{$\FAcast$}{The ideal functionality for asynchronous reliable broadcast for session id $\sid$.}{fig:FAcast}
	\justify
$\FAcast$ proceeds as follows, running with the parties $\PartySet = \{P_1, \ldots, P_n \}$ and an adversary $\Sim$, and is parametrized 
 by an adversary structure $\AdvStructure \subseteq 2^{\PartySet}$. Let $Z^{\star}$ denote the set of corrupt parties, where
  $Z^{\star}  \in \AdvStructure$.
  \begin{myitemize}
  \item Upon receiving $(\Sender, \Acast, \sid, m)$ from $P_S \in \PartySet$ (or from $\Sim$ if $P_S \in Z^{\star}$), do the following:
     \begin{itemize}
     \item[--] Send $(P_S, \Acast, \sid, m)$ to $\Sim$;\footnote{If $P_S \in Z^{\star}$, then no need to send $(P_S, \Acast, \sid, m)$ to $\Sim$, as 
     in this case $m$ is received from $\Sim$ itself.}     
     \item[--] Send a request-based delayed output
    $(P_S, \Acast, \sid, m)$ to each $P_i \in \PartySet \setminus Z^{\star}$ (no need to send $m$ to the parties in $Z^{\star}$, as $\Sim$ gets $m$ on their behalf).
     \end{itemize}
  \end{myitemize}
\end{systembox}

We next recall the Acast protocol of \cite{KF05} and present it in Fig \ref{fig:A-cast}.
\begin{protocolsplitbox}{\PiAcast$(P_S, m)$}{The perfectly-secure Acast protocol for realizing $\FAcast$}{fig:A-cast}
	\justify
\begin{myitemize}
     \item[--] Code for the Sender $P_S$ (with input $m \in \{0, 1 \}^{\ell}$):
    \begin{myitemize}
    \item Send the message ($\inp, \sid, m$) to all the parties in $\PartySet$.
    \end{myitemize}
    \item[--] Code for each party $P_i \in \PartySet$ (including $P_S$):
    \begin{myenumerate}
    \item If the message ($\inp, \sid, m$) is received from $P_S$, then send the message ($\echo, \sid, m$)
        to all the parties in $\PartySet$.
    \item If the message ($\echo, \sid, m'$) is received from a set of parties 
        $\PartySet \setminus Z$ for some $Z \in \AdvStructure$, then send the message ($\Ready, \sid, m'$) to all the parties.
    \item If the message ($\Ready, \sid, m'$) is received from a set of parties $C$
     where $C \not \in \AdvStructure$, then send the message $(\Ready, \sid, m')$ to all
        the parties in $\PartySet$.
    \item If ($\Ready, \sid, m'$) is received from a set of parties 
         $\PartySet \setminus Z$ for some $Z \in \AdvStructure$, then output $m'$.
    \end{myenumerate} 
\end{myitemize}
\end{protocolsplitbox}

The properties of the protocol $\PiAcast$ are stated in Theorem \ref{thm:Acast}.
\begin{theorem}
\label{thm:Acast}
If $\AdvStructure$ satisfies 
 the $\Q^{(3)}(\PartySet, \AdvStructure)$ condition, then protocol $\PiAcast$ UC-securely realizes $\FAcast$ with perfect security.
 The protocol incurs a communication of $\Order(n^2\ell)$ bits, where $P_S$ has an input of size $\ell$ bits.
\end{theorem}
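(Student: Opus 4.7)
}
The plan is to first establish two real-world properties of $\PiAcast$---\emph{validity} (if $P_S$ is honest with input $m$, every honest party eventually outputs $m$) and \emph{consistency} (if some honest party outputs $m^{\star}$, then every honest party eventually outputs $m^{\star}$)---and then use these to build a straight-line simulator that achieves UC-indistinguishability. Let $Z^{\star} \in \AdvStructure$ denote the set of corrupt parties, so the set of honest parties is $\PartySet \setminus Z^{\star}$.

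For \emph{validity}, if $P_S$ is honest, each honest party eventually receives $(\inp,\sid,m)$, sends $(\echo,\sid,m)$, and so each honest party eventually receives $(\echo,\sid,m)$ from the set $\PartySet \setminus Z^{\star}$; this satisfies the trigger in step~2 (with $Z = Z^{\star}$), so every honest party sends $(\Ready,\sid,m)$. Step~4 then fires with $Z = Z^{\star}$, yielding output $m$. For \emph{consistency}, I will prove two sub-claims. First, no two honest parties send $(\Ready,\sid,\cdot)$ for distinct values: consider the \emph{earliest} honest $\Ready$-messages for $m'$ and $m''$ with $m' \neq m''$; each must have been produced via step~2 (a step-3 trigger requires a set $C \notin \AdvStructure$, which by downward-closure of $\AdvStructure$ forces $C$ to contain some honest party who sent $\Ready$ strictly earlier). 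So there exist $Z', Z'' \in \AdvStructure$ such that every party in $\PartySet \setminus Z'$ echoed $m'$ and every party in $\PartySet \setminus Z''$ echoed $m''$; since honest parties echo at most one value, $\PartySet \setminus Z^{\star} \subseteq Z' \cup Z''$, i.e.\ $\PartySet \subseteq Z^{\star} \cup Z' \cup Z''$, contradicting $\Q^{(3)}(\PartySet,\AdvStructure)$. Second, if honest $P_i$ outputs $m^{\star}$, it received $(\Ready,\sid,m^{\star})$ from $\PartySet \setminus Z$ for some $Z \in \AdvStructure$, so all honest parties in $\PartySet \setminus (Z \cup Z^{\star})$ sent $\Ready$ for $m^{\star}$; by $\Q^{(3)}$ this set is not contained in any element of $\AdvStructure$, so eventually each honest party triggers step~3 and sends $(\Ready,\sid,m^{\star})$, whereupon step~4 fires with $Z = Z^{\star}$ at every honest party and the first sub-claim rules out a different output.

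For UC-security, I will construct a simulator $\SimAcast$ that internally emulates a full copy of $\PiAcast$ among simulated honest parties and the real-world adversary $\Adv$ (who controls $Z^{\star}$). If $P_S \notin Z^{\star}$, upon receiving the leakage $(P_S,\Acast,\sid,m)$ from $\FAcast$, $\SimAcast$ runs the simulated honest copies of $P_S$ and of each $P_i$ on input $m$, relaying their messages to and from $\Adv$; whenever an honest simulated party would produce output in the emulation, $\SimAcast$ issues the corresponding delivery request to $\FAcast$ for that party. If $P_S \in Z^{\star}$, $\SimAcast$ simply executes the honest code of the non-corrupt parties against $\Adv$'s messages; the first time any simulated honest party is about to output $m^{\star}$, $\SimAcast$ sends $(\Sender,\Acast,\sid,m^{\star})$ to $\FAcast$ on $P_S$'s behalf and thereafter schedules delayed deliveries in step with the simulation. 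Because $\PiAcast$ uses no private randomness of honest parties and their messages are determined by the transcript, the joint view of $\Adv$ and the outputs of honest parties in the simulation are \emph{identically distributed} to those in the real execution; consistency of the ideal outputs with the simulated ones follows from validity and consistency established above. This gives perfect UC-security against any environment.

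Finally, the communication bound is immediate by inspection: each party sends at most one $\inp$, one $\echo$, and one $\Ready$ message of length $\Order(\ell)$ to each of the $n$ parties, for a total of $\Order(n^2 \ell)$ bits. The main obstacle I expect is the consistency proof---specifically, properly handling the amplification rule (step~3) so that step-3 $\Ready$ messages cannot bootstrap an inconsistent value; the traceback-to-earliest-honest-$\Ready$ argument above, combined with monotonicity of $\AdvStructure$ and the $\Q^{(3)}$ condition applied to $Z^{\star}, Z', Z''$, is what makes this go through cleanly.
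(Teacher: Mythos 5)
Your proposal is correct and follows essentially the same route as the paper: you establish validity and consistency of $\PiAcast$ by tracing the earliest honest $\Ready$ back to an $\echo$ quorum and applying the $\Q^{(3)}(\PartySet,\AdvStructure)$ condition to the intersection of two such quorums together with $Z^{\star}$, and then you wrap a straight-line simulator around the honest code, forwarding $P_S$'s value to $\FAcast$ once an honest party first outputs in the corrupt-sender case. The only cosmetic difference is that you factor out validity and consistency as standalone lemmas before the simulator, whereas the paper proves indistinguishability directly within the two simulator cases (honest vs.\ corrupt $P_S$), but the key arguments—including the implicit reliance on monotonicity of $\AdvStructure$ for the step-3 amplification—are the same.
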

\begin{proof}
The communication complexity simply follows from the fact that in the protocol, each party needs to send $m$ to every other party.
 For security, consider an arbitrary adversary $\Adv$ attacking $\PiAcast$ by corrupting a set of parties $Z^{\star} \in \AdvStructure$,
  and let $\Env$ be an arbitrary environment. 
  We present a simulator $\SimAcast$  such  that for any set of corrupt parties $Z^{\star} \in \AdvStructure$, 
   the output of the honest parties and the view of the adversary in an execution of $\PiAcast$ with $\Adv$
    is distributed identically to the output of the honest parties and the view of the adversary
      in an execution with $\SimAcast$ involving $\FAcast$ in the ideal world. 
    This further implies that $\Env$ cannot distinguish between the two executions.
       The simulator constructs virtual real-world honest parties and invokes $\Adv$. The simulator simulates the environment
   and the honest parties towards $\Adv$ as follows: in order to simulate $\Env$, the simulator $\SimAcast$ forwards every message it receives from 
   $\Env$ to $\Adv$ and vice-versa. To simulate the execution of honest parties, we consider two cases, depending upon whether
   $P_S$ is corrupt or not. \\~\\
   \noindent {\bf Case I: $P_S$ is honest.} 
  In this case, $\SimAcast$ first interacts with $\FAcast$ and receives
  the output $m$ from the functionality.  
    The simulator then plays the role of $P_S$ with input $m$, as well as the role of the
  honest parties, and 
  interacts with $\Adv$ as per the steps of $\PiAcast$.
  
  It is easy to see that that view of $\Adv$ is identical in the real execution and simulated execution. This is because
  only $P_S$ has input in the protocol and in the simulated execution, $\SimAcast$ plays the role of $P_S$ as per $\PiAcast$ after
  learning the input of $P_S$ from $\FAcast$. Next, conditioned on the view of $\Adv$, we show that the outputs of the honest parties are
  identical in both the executions. So consider an arbitrary $\View$ of $\Adv$.  
    Conditioned on $\View$, all honest parties eventually obtain a  request-based delayed output  $m$
    in the simulated execution, where $m$ is the input of $P_S$ as per $\View$.
    We show that even in the  real execution, all honest parties eventually output $m$. 
  This is because all honest parties eventually  complete steps $1-4$ in the protocol, even if the corrupt parties do not send their messages, as
  the messages of the {\it honest} parties $\PartySet \setminus Z^{\star}$ are eventually selected for delivery
  and $\PartySet \setminus Z^{\star} \not \in \AdvStructure$; the latter holds, as otherwise $\AdvStructure$
  does not satisfy the $\Q^{(2)}(\PartySet, \AdvStructure)$ condition. 
  $\Adv$ may send 
    $\echo$ and $\Ready$    
    messages for $m'$, where $m' \neq m$, on the behalf of corrupt parties. But since $Z^{\star} \in \AdvStructure$ and since 
    $\AdvStructure$ satisfies the $\Q^{(2)}(\PartySet, \AdvStructure)$ condition, it follows that 
    no honest party ever generates a $\Ready$ message for $m'$,  neither in step $2$, nor in step $3$.  
   Thus the output of the honest parties is {\it identically} distributed in both the worlds.
      Consequently, in this case, we conclude that $\Big \{\REAL_{\PiAcast, \Adv(z), \Env}(m) \Big \}_{m \in \{0, 1\}^{\ell}, z \in \{0, 1 \}^{\star}}
     \equiv \Big \{\IDEAL_{\FAcast, \SimAcast(z), \Env}(m) \Big\}_{m \in \{0, 1\}^{\ell}, z \in \{0, 1 \}^{\star}}$ holds, where
     $\equiv$ denotes {\it perfect indistinguishability}.  \\~\\~\\
    \noindent {\bf Case II: $P_S$ is corrupt.} In this case, $\SimAcast$ first plays the role of the honest parties and interacts with
   $\Adv$, as per $\PiAcast$. If in this execution, $\SimAcast$ finds that some {\it honest} party, say
  $P_h$, outputs $m^{\star}$, then $\SimAcast$ interacts with $\FAcast$ 
   by sending $m^{\star}$ as the input to $\FAcast$, on the behalf of $P_S$.
   Else, $\SimAcast$ does not provide any input to $\FAcast$ on the behalf of $P_S$.
   
   It is easy to see that the view of $\Adv$ is identically distributed in the real and the simulated execution.
   This is because only $P_S$, which is under the control of $\Adv$, has an input in the protocol, 
   and $\SimAcast$ plays the role of the honest parties exactly as per the protocol $\PiAcast$.
   We next show that conditioned on the view of $\Adv$, the outputs of the honest parties are identically distributed
   in both the executions.
   
   Consider an arbitrary view $\View$ of $\Adv$, during an execution of $\PiAcast$.
   If according to $\View$, no honest party obtains an output during the execution of $\PiAcast$, then the honest parties do not obtain any output
   in the simulated execution as well. This is because in this case, $\SimAcast$ does not provide any input on the behalf of $P_S$
   to $\FAcast$. On the other hand, consider the case when according to $\View$, some {\it honest} party $P_h$ outputs $m^{\star}$.
   In this case, in the simulated execution, all honest parties eventually obtain an output $m^{\star}$, since $\SimAcast$ provides 
   $m^{\star}$ as the input to $\FAcast$ on the behalf of $P_S$. We next show that even in the real execution, all honest parties
   eventually obtain the output $m^{\star}$.
   
    Since $P_h$ obtained the output $m^{\star}$, it received $\Ready$ messages for $m^{\star}$ during step $4$ of the protocol
    from a set of parties $\PartySet \setminus Z$, for some $Z \in \AdvStructure$. 
       Let $\Hon$ be the set of {\it honest} parties whose $\Ready$ messages are received by $P_h$ during step $4$. It is easy to see that
   $\Hon \not \in \AdvStructure$, as otherwise, $\AdvStructure$ does not satisfy
   the  $\Q^{(3)}(\PartySet, \AdvStructure)$ condition.
    The $\Ready$ messages of the parties in $\Hon$ are eventually delivered to every honest party
   and hence, {\it each} honest party (including $P_h$) eventually executes step $3$ and sends a
   $\Ready$ message for $m^{\star}$.
   It follows that the $\Ready$ messages of {\it all} honest parties $\PartySet \setminus Z^{\star}$
    are eventually delivered
   to every honest party (irrespective of whether $\Adv$ sends all the required
   messages), guaranteeing that all honest parties eventually obtain {\it some} output. To complete the proof,
   we show that this output is the same as $m^{\star}$.
   
    For contradiction, let
   $P_{h'} \neq P_h$ be an honest party who outputs
   $m^{\star \star} \neq m^{\star}$. This implies that
   $P_{h'}$ received $\Ready$ message for $m^{\star \star}$ from at least
    one {\it honest} party.
   From the protocol steps, it follows that an honest party generates a $\Ready$
   message for some potential $m$, only if it either receives $\echo$ messages for $m$ during step 2 from a set of parties $\PartySet \setminus Z$ for some $Z \in \AdvStructure$,
   or $\Ready$ messages for $m$ from a set of parties $C \not \in \AdvStructure$ during step 3.
   So, in order that a subset of parties $\PartySet \setminus Z$ for some $Z \in \AdvStructure$ eventually generates
   $\Ready$ messages for some potential $m$ during step 4, it must be
   the case that some {\it honest} party has received $\echo$ messages for $m$ during step 1 from a set of parties $\PartySet \setminus Z'$ for some
   $Z' \in \AdvStructure$ and has generated a
   $\Ready$ message for $m$.
   
    Since $P_h$ received the $\Ready$ message for $m^{\star}$ from at least one honest party, it must be the case that 
   some honest party has received $\echo$ messages for $m^{\star}$ from a set of parties $\PartySet \setminus Z_1$ for some $Z_1 \in \AdvStructure$.
   Similarly, since $P_{h'}$ received the $\Ready$ message for $m^{\star \star}$ from at least one honest party, it must be the case that 
   some honest party has received $\echo$ messages for $m^{\star \star}$ from a set of parties $\PartySet \setminus Z_2$ for some $Z_2 \in \AdvStructure$.
   Let ${\cal T} = (\PartySet \setminus Z_1) \cap (\PartySet \setminus Z_2)$. Since $\AdvStructure$
   satisfies the $\Q^{(3)}(\PartySet, \AdvStructure)$ condition, it follows
   that $\AdvStructure$
   satisfies the $\Q^{(1)}({\cal T}, \AdvStructure)$ condition and hence ${\cal T}$ is guaranteed to have at least one {\it honest} party. 
   This further implies that there exists some honest party who generated an $\echo$ message for $m^{\star}$ as well
   as $m^{\star \star}$ during step 1, which is impossible. This is because an honest party executes step 1 at most once
   and hence, generates an $\echo$ message at most once.
   Consequently, 
  $\Big \{\REAL_{\PiAcast, \Adv(z), \Env}(m) \Big \}_{m \in \{0, 1\}^{\ell}, z \in \{0, 1 \}^{\star}}
     \equiv \Big \{\IDEAL_{\FAcast, \SimAcast(z), \Env}(m) \Big\}_{m \in \{0, 1\}^{\ell}, z \in \{0, 1 \}^{\star}}$ holds. 
\end{proof}

 \subsection{Asynchronous Byzantine Agreement (ABA)}
 In a {\it synchronous} BA protocol, each party participates with an input bit to obtain an output bit. The protocol guarantees the following three properties.
 \begin{myitemize}
 \item[--] {\it Agreement}: The output bit of all honest parties is the same.
 \item[--] {\it Validity}: If all honest parties have the same input bit, then this will be the common output bit. 
 \item[--] {\it Termination}: All honest parties eventually complete the protocol.
 \end{myitemize}
 In an ABA protocol, the above requirements are slightly weakened, since all (honest) parties {\it may not} be able to provide their inputs to the
 protocol, as waiting for all the inputs may turn out to be an endless wait.
 Hence the decision is taken based on the inputs of a subset of parties $\CoreSet$, where $\PartySet \setminus \CoreSet \in \AdvStructure$.
  Moreover, since the adversary can control the schedule of message delivery,
 it has full control in deciding the set $\CoreSet$. 
 
 The formal specification of an ideal ABA functionality is presented in Fig \ref{fig:FABA}, which is obtained by generalizing the corresponding ideal functionality 
  against {\it threshold} adversaries, as presented in 
  \cite{CGHZ16}. Intuitively, it can be considered as
  a special case of the ideal AMPC functionality (see Fig \ref{fig:FAMPC}), which looks at the set of inputs provided by the
  set of parties in $\CoreSet$, where $\CoreSet$ is decided by the ideal-world adversary. If the input bits provided by all the honest parties in
  $\CoreSet$ are the same, then it is set as the output bit. Else, the output bit is set to be the input bit provided by some corrupt party
  in $\CoreSet$ (for example, the first corrupt party in $\CoreSet$ according to lexicographic ordering).
  In the functionality, the inputs bits provided by various parties are considered to be the ``votes" of the respective parties.
  \begin{systembox}{$\FABA$}{The ideal functionality for asynchronous Byzantine agreement for session id $\sid$.}{fig:FABA}
	\justify
$\FABA$ proceeds as follows, running with the parties $\PartySet = \{P_1, \ldots, P_n \}$ and an adversary $\Sim$, and is parametrized 
 by an adversary-structure $\AdvStructure \subseteq 2^{\PartySet}$. Let $Z^{\star}$ denote the set of corrupt parties, where
  $Z^{\star} \in \AdvStructure$ and let $\Hon = \PartySet \setminus Z^{\star}$. For each party $P_i$, initialize an input value $x^{(i)} = \bot$.
   \begin{enumerate}
       \item Upon receiving a message $(\vote, \sid, b)$ from some $P_i \in \PartySet$ (or from $\Sim$ if $P_i$ is {\it corrupt}) where $b \in \{0, 1 \}$, do the following:
        \begin{myitemize}
        \item Ignore the message if output has been already computed;
        \item Else, set $x^{(i)} = b$ and send $(\vote, \sid, P_i, b)$ to $\Sim$.\footnote{If $P_i \in Z^{\star}$, then no need to send 
        $(\vote, \sid, P_i, b)$ to $\Sim$ as the input has been provided
        by $\Sim$ only.}
        \end{myitemize}
     \item Upon receiving a message $(\coreset, \sid, \CoreSet)$ from $\Sim$, do the following:\footnote{As in the case
      of the AMPC functionality $\Functionality$, $\Sim$ cannot delay sending $\CoreSet$ indefinitely.}
        \begin{myitemize}
        \item Ignore the message if $(\PartySet \setminus \CoreSet) \not \in \AdvStructure$ or if output has been already computed;
        \item Else, record $\CoreSet$.
        \end{myitemize}
      \item If the set $\CoreSet$ has been recorded and the value $x^{(i)}$ has been set to a value different from $\bot$ for every $P_i \in \CoreSet$, then
      compute the output $y$ as follows and generate a request-based delayed output $(\decide, \sid, (\CoreSet, y))$ for every $P_i \in \PartySet$.
        \begin{myitemize}
        \item If $x^{(i)} = b$ holds for all $P_i \in (\Hon \cap \CoreSet)$, then set $y = b$.
        \item Else, set $y = x^{(i)}$, where $P_i$ is the party with the smallest index in $\CoreSet \cap Z^{\star}$.
        \end{myitemize}
   \end{enumerate}
\end{systembox}

 \section{Properties of the Perfectly-Secure PreProcessing Phase}
\label{app:Perfect}
In this section, we prove the security properties of all the perfectly-secure subprotocols, followed by the perfectly-secure preprocessing phase.
 We first start with the perfectly-secure VSS.
\subsection{Asynchronous VSS Protocol}
\label{app:PerfectVSS}
In this section, we recall the perfectly-secure VSS protocol $\PiPerVSS$ from \cite{CP20}. The protocol is designed with respect to an adversary structure
 $\AdvStructure$ and a sharing specification $\ShareSpec = (S_1, \ldots, S_h) \defined \{ \PartySet \setminus Z | Z \in \AdvStructure\}$, such that 
  $\AdvStructure$
   satisfies the
   $\Q^{(4)}(\PartySet, \AdvStructure)$ condition (this automatically implies that $\ShareSpec$ satisfies the 
    $\Q^{(3)}(\ShareSpec, \AdvStructure)$
   condition). The input for the dealer $P_{\D}$ in the protocol is a vector of shares
    $(s_1, \ldots, s_h)$, the goal being to ensure that the parties output a secret-sharing of $s \defined s_1 + \ldots + s_h$, such that 
    $[s]_q = s_q$, for each $S_q \in \ShareSpec$. 
    The protocol guarantees that even if $P_{\D}$ is {\it corrupt}, if some honest party completes the protocol, then every honest party eventually completes the protocol such
    that there exists some value which has been secret-shared by $P_{\D}$.
   
   The high level idea of the protocol is as follows: the dealer gives the share
    $s_q$ to all the parties in the set $S_q \in \ShareSpec$. To verify whether the dealer has distributed the {\it same} share to all the parties in $S_q$, the 
    parties in $S_q$ perform pairwise consistency tests of the supposedly common share and {\it publicly} announce the result. 
    Next, the parties check if there exists a subset of ``core" parties $\C_q$, where $S_q \setminus \C_q \in \AdvStructure$, who have positively confirmed the pairwise
    consistency of their supposedly common share. Such a subset $\C_q$ is guaranteed for an {\it honest} dealer, as the set of honest parties in $S_q$ always 
    constitutes a candidate set for $\C_q$. To ensure that all honest parties have the same version of the core sets $\C_1, \ldots, \C_h$, the
    dealer is assigned the task of identifying these sets based on the results of the pairwise consistency tests, and making them public. 
    Once the core sets are identified and verified, it is guaranteed that the
        dealer has distributed some common share  to all {\it honest} parties within $\C_q$.
    The next goal is to ensure that even the honest parties in $S_q \setminus \C_q$ get this common share, which is required as per the semantics of our
    secret-sharing.
    For this, the (honest) parties in $S_q \setminus \C_q$ ``filter" out the supposedly common shares received during the pairwise consistency tests
    and ensure that they obtain the common share held by the honest parties in $\C_q$. 
    Protocol $\PiPerVSS$ is formally presented in Fig \ref{fig:PerAVSS}.
\begin{protocolsplitbox}{$\PiPerVSS$}{The perfectly-secure VSS protocol for realizing $\FVSS$ in the $\FAcast$-hybrid model. 
 The above steps are executed by every $P_i \in \PartySet$}{fig:PerAVSS}
\justify

\begin{myitemize}
\item {\bf Distribution of Shares by $P_{\D}$} : If $P_i$ is the dealer $P_{\D}$, then execute the following steps.
	\begin{myenumerate}
	\item On having the shares
	 $s_1, \ldots, s_h \in \F$, send $(\dist, \sid, P_{\D}, q, [s]_q)$ to all the parties $P_i \in S_q$, 
	 corresponding to each $S_q \in \ShareSpec$, where $s \defined s_1 + \ldots + s_h$ and $[s]_q = s_q$.
	\end{myenumerate} 
\item {\bf Pairwise Consistency Tests and Public Announcement of Results} : For each $S_q \in \ShareSpec$, if $P_i \in S_q$, then execute the following steps.
	\begin{myenumerate}
	\item Upon receiving $(\dist, \sid, P_{\D}, q, s_{qi})$ from $\D$, send $(\test, \sid, P_{\D}, q, s_{qi})$ to every party $P_j \in S_q$.
	\item Upon receiving $(\test, \sid, P_{\D}, q, s_{qj})$ from $P_j \in S_q$,
	send $(\Sender, \Acast, \sid^{(P_{\D}, q)}_{ij}, \OK_q(i, j))$ to $\FAcast$ if 
	$s_{qi} = s_{qj}$, where $\sid^{(P_{\D}, q)}_{ij} = \sid || P_{\D} || q || i || j$.\footnote{The notation $\sid^{(P_{\D}, q)}_{ij}$ is used here to distinguish among the different calls to
	$\FAcast$ within the session $\sid$.}
	\end{myenumerate}
\item {\bf Constructing Consistency Graph} : For each $S_q \in \ShareSpec$, execute the following steps.
	\begin{myenumerate}
	\item Initialize $\C_q$ to $\emptyset$. Construct an undirected consistency graph $G^{(i)}_q$ with $S_q$ as the vertex set.
         \item For every ordered pair of parties $(P_j, P_k)$ where $P_j, P_k \in S_q$, 
          keep requesting for an output from $\FAcast$ with $\sid^{(P_{\D}, q)}_{jk}$, till an output is received.
	\item Add the edge $(P_j, P_k)$ to $G^{(i)}_q$ if
	outputs $(P_j, \Acast, \sid^{(P_{\D}, q)}_{jk}, \OK_q(j, k))$ and 
	$(P_k, \Acast, \sid^{(P_{\D}, q)}_{kj}, \allowbreak \OK_q(k, j))$ are received from $\FAcast$ with $\sid^{(P_{\D}, q)}_{jk}$ and 
	$\FAcast$ with $\sid^{(P_{\D}, q)}_{kj}$ respectively.
	\end{myenumerate}
\item {\bf Identification of Core Sets and Public Announcements} :  If $P_i$ is the dealer $P_{\D}$, then execute the following steps.
	\begin{myenumerate}
	\item For each $S_q \in \ShareSpec$, check if there exists a subset of parties $\W_q \subseteq S_q$, such that $S_q \setminus \W_q \in \AdvStructure$
	and the parties in $\W_q$ form a clique in the consistency graph $G^{\D}_q$. If such a $\W_q$ exists, then assign $\C_q \coloneqq \W_q$.
	\item Upon computing the sets $\C_1, \ldots, \C_h$,
	 send $(\Sender, \Acast, \sid_{P_{\D}}, \allowbreak \{ \C_q \}_{S_q \in \ShareSpec})$ to $\FAcast$, where $\sid_{P_{\D}} = \sid || P_{\D}$.
	\end{myenumerate}
\item {\bf Share Computation} : Execute the following steps.
	\begin{myenumerate}
	\item[1.] For each $S_q \in \ShareSpec$ such that $P_i \in S_q$, initialize $[s]_q $ to $\bot$.
	\item[2.] Keep requesting for an output from $\FAcast$ with $\sid_{P_\D}$ until an output is received.
	\item[3.] Upon receiving an output $(\Sender, \Acast, \sid_{P_{\D}}, \{ \C_q \}_{S_q \in \ShareSpec})$ from $\FAcast$ with $\sid_{P_{\D}}$, 
	wait until the parties in $\C_q$ form a clique in $G^{(i)}_q$, for $q = 1, \ldots, h$. Then, verify if
	$S_q \setminus \C_q \in \AdvStructure$, for each $q = 1, \ldots, h$. 
	If the verification is successful, then proceed to compute the output
	as follows.
	    \begin{myenumerate}
	    \item[i.] Corresponding to each $\C_q$ such that $P_i \in \C_q$, set $[s]_q \coloneqq s_{qi} $.
	    \item[ii.] Corresponding to each $\C_q$ such that $P_i \not \in \C_q$, set $[s]_q \coloneqq s_{q}$, where $(\test, \sid, P_{\D}, q, s_{q})$ 
	     is received from a set of parties $\C'_q$
	    such that $\C_q \setminus \C'_q \in \AdvStructure$. 
	    \end{myenumerate}
	\item[4.] Once $[s]_q \neq \bot$ for each $S_q \in \ShareSpec$ such that $P_i \in S_q$, output $(\Share, \sid,P_{\D}, \{[s]_q\}_{P_i \in S_q})$.
	\end{myenumerate}
\end{myitemize}
\end{protocolsplitbox}

We next prove the security of the protocol $\PiPerVSS$.
\begin{theorem}
\label{thm:PerVSSOld}
 Consider a static malicious adversary $\Adv$ characterized by an adversary-structure $\AdvStructure$, satisfying the $\Q^{(4)}(\PartySet, \AdvStructure)$ condition
  and let  $\ShareSpec = (S_1, \ldots, S_h) \defined \{ \PartySet \setminus Z | Z \in \AdvStructure\}$ be the sharing specification.\footnote{Hence $\ShareSpec$
   satisfies the $\Q^{(3)}(\ShareSpec, \AdvStructure)$ condition.} 
  Then protocol $\PiPerVSS$ UC-securely realizes the functionality $\FVSS$
   with perfect security in the $\FAcast$-hybrid model, in the 
   presence of $\Adv$.
\end{theorem}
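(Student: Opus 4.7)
The plan is to exhibit, for every static real-world adversary $\Adv$ that corrupts some set $Z^\star \in \AdvStructure$, a simulator $\SimPVSS$ whose ideal-world joint output (honest parties' outputs together with $\Adv$'s view) is perfectly indistinguishable from the corresponding real $\FAcast$-hybrid execution of $\PiPerVSS$. The construction splits along whether $P_\D$ is honest or corrupt; in both cases $\SimPVSS$ internally runs a copy of $\PiPerVSS$, playing all (simulated) honest parties and the $\FAcast$ instances against $\Adv$. The argument relies on two consequences of $\Q^{(4)}(\PartySet, \AdvStructure)$: first, $\ShareSpec$ is $\AdvStructure$-private, so there exists some $S_{q^\star} \in \ShareSpec$ with $S_{q^\star} \cap Z^\star = \emptyset$; second, $\ShareSpec$ satisfies $\Q^{(3)}(\ShareSpec, \AdvStructure)$, so no $S_q$ is covered by three sets from $\AdvStructure$.

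When $P_\D$ is honest, $\SimPVSS$ does not learn the dealer's actual share vector. Instead it samples a fresh uniformly random $(\tilde{s}_1, \ldots, \tilde{s}_h)$, simulates the distribution phase as if $P_\D$ had sent these values, and then plays out the $\test$ exchanges, $\OK$ broadcasts (via the simulated $\FAcast$), consistency-graph constructions, and core-set publication step exactly as prescribed by $\PiPerVSS$; the simulated $\C_q$ is chosen to be the honest parties of $S_q$ together with any corrupt parties that pairwise-$\OK$'d, which always passes the validity check. Because $S_{q^\star} \cap Z^\star = \emptyset$, the simulated share $\tilde{s}_{q^\star}$ is never revealed to $\Adv$ and absorbs all the randomness, so the marginal distribution of $\{\tilde{s}_q : S_q \cap Z^\star \neq \emptyset\}$ seen by $\Adv$ is uniformly random and identical to that of $\{s_q : S_q \cap Z^\star \neq \emptyset\}$ in the real world (where the true shares are uniformly random subject to summing to $P_\D$'s input). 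The honest parties' outputs in the ideal world are supplied directly by $\FVSS$ from $P_\D$'s actual input, matching the real execution.

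When $P_\D$ is corrupt, $\SimPVSS$ observes the $\dist$-messages $\Adv$ delivers to the simulated honest parties, the ensuing $\OK$ broadcasts, and the core-set announcement made by $\Adv$ via $\FAcast$. Once a simulated honest party is ready to produce output, $\SimPVSS$ reads off, for each $q$, the common share $\bar{s}_q$ held by every honest party in $\C_q$ and submits $(\Dealer, \sid, P_\D, (\bar{s}_1, \ldots, \bar{s}_h))$ to $\FVSS$. The main obstacle is to show that $\bar{s}_q$ is well-defined and coincides with the share every honest $P_i \in S_q$ eventually outputs in the real world: (a) $\C_q$ must contain an honest party, for otherwise $\PartySet \subseteq Z \cup Z^\star \cup (S_q \setminus \C_q)$ where $S_q = \PartySet \setminus Z$, contradicting $\Q^{(4)}$; (b) the clique structure of $\C_q$ combined with the public $\OK$s forces every honest member of $\C_q$ to hold the same share; and (c) the filtering step for out-of-core honest parties selects this same common share, since any $\C'_q$ with $\C_q \setminus \C'_q \in \AdvStructure$ must contain an honest party, by the analogous covering argument on the four sets $Z$, $Z^\star$, $S_q \setminus \C_q$, and $\C_q \setminus \C'_q$ which is ruled out precisely by $\Q^{(4)}(\PartySet, \AdvStructure)$. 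Termination in both cases follows from the eventual-delivery guarantees of $\FAcast$ (Theorem~\ref{thm:Acast}), completing the perfect UC-security argument.
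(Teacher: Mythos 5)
Your treatment of the corrupt-dealer case is essentially the paper's argument -- the three covering arguments you give in (a), (b), (c) are exactly the places the paper invokes $\Q^{(4)}(\PartySet,\AdvStructure)$, and submitting the common share observed by honest members of each $\C_q$ as $P_\D$'s input to $\FVSS$ is the right move. The honest-dealer case, however, contains a genuine gap.

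You state that $\SimPVSS$ ``does not learn the dealer's actual share vector'' and instead samples a fresh uniformly random $(\tilde{s}_1,\ldots,\tilde{s}_h)$, then argues via $\AdvStructure$-privacy that the marginal distribution of $\{\tilde{s}_q : S_q\cap Z^\star\neq\emptyset\}$ matches the real distribution. The marginal claim is correct, but it proves too little: UC indistinguishability is a statement about the \emph{joint} distribution of the adversary's view and the honest parties' outputs, and your simulation decouples them. Concretely, pick $S_q$ with $S_q\cap Z^\star\neq\emptyset$ that also contains an honest party $P_h$ (such an $S_q$ exists whenever $Z^\star\neq\emptyset$). In the real execution, both the corrupt party in $S_q$ and $P_h$ receive the same $s_q$ from $P_\D$, and $P_h$ outputs $[s]_q = s_q$; the environment can have the adversary report the share it saw and compare it with $P_h$'s declared output -- they agree. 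In your ideal-world simulation, the adversary is shown $\tilde{s}_q$ while $\FVSS$ delivers the genuine $s_q$ to $P_h$; these disagree except with probability $1/|\F|$, so the environment distinguishes perfectly.

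The fix is the one the paper uses, and your own framing slightly misreads what the simulator is entitled to: in the ideal world the simulator receives the corrupt parties' outputs from $\FVSS$ on their behalf, i.e.\ it learns exactly $\{[s]_q\}_{S_q\cap Z^\star\neq\emptyset}$ (but nothing about $s_q$ for $S_q\cap Z^\star=\emptyset$, in particular not $s$ itself). The simulator must feed precisely these values to $\Adv$ as the dealer's messages; for the sets with $S_q\cap Z^\star=\emptyset$ nothing ever needs to be shown to $\Adv$, so no fresh sampling is necessary. With that change the honest-dealer case becomes a straight re-enactment of the real protocol on data consistent with $\FVSS$'s outputs, and the rest of your write-up goes through.
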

\begin{proof}
 Let $\Adv$ be an arbitrary adversary corrupting a set of parties
  $Z^{\star} \in \AdvStructure$. Let $\Env$ be an arbitrary environment. We show the existence of a simulator $\SimPVSS$, such that for any
 $Z^{\star} \in \AdvStructure$,
  the outputs of the honest parties and the view of the adversary in the 
   protocol $\PiPerVSS$ is indistinguishable from the outputs of the honest parties and the view of the adversary in an execution in the ideal world involving 
  $\SimPVSS$ and $\FVSS$. The steps of the simulator will be different depending on whether the dealer is corrupt of honest.

If the dealer is {\it honest}, then the simulator interacts with $\FVSS$ and receives the shares of the corrupt parties corresponding to the sets $S_q \in \ShareSpec$ 
 which they are part of. With these shares, the simulator then plays the role of the dealer as well as the honest parties, as per the steps of $\PiPerVSS$, and interacts with
  $\Adv$. The simulator also plays the role of $\FAcast$. If $\Adv$ queries $\FAcast$ for the result of any pairwise consistency test involving an honest party, the simulator provides the appropriate result. In addition, the simulator records the result of any test involving corrupt parties which $\Adv$  sends to $\FAcast$. 
   Based on the results of these pairwise consistency tests, the simulator finds the core sets for each $S_q$ and sends these to $\Adv$ upon request.

If the dealer is {\it corrupt}, the simulator plays the role of honest parties and interacts with $\Adv$, as per the
 steps of  $\PiPerVSS$. This involves recording shares which $\Adv$ distributes to any
  honest party (on the behalf of the dealer),
   as well as performing pairwise consistency tests on their behalf. If $\Adv$ sends core sets for each $S_q \in \ShareSpec$ as input to $\FAcast$, then the simulator checks if these are {\it valid}, and accordingly, sends the shares held by {\it honest} parties in these core sets as the input shares to $\FVSS$ on the behalf
   of the dealer. The simulator is presented in Figure \ref{fig:SimPVSS}.

\begin{simulatorsplitbox}{$\SimPVSS$}{Simulator for the protocol $\PiPerVSS$ where $\Adv$ corrupts the parties in set $Z^{\star} \in \AdvStructure$}{fig:SimPVSS}
	\justify
$\SimPVSS$ constructs virtual real-world honest parties and invokes the real-world adversary $\Adv$. The simulator simulates the view of
 $\Adv$, namely its communication with $\Env$, the messages sent by the honest parties and the interaction with $\FAcast$. 
  In order to simulate $\Env$, the simulator $\SimPVSS$ forwards every message it receives from 
   $\Env$ to $\Adv$ and vice-versa.  The simulator then simulates the various phases of the protocol as follows, depending upon whether the dealer is honest or corrupt. \\~\\
 \centerline{\underline{\bf Simulation When $P_{\D}$ is Honest}} 
\noindent {\bf Interaction with $\FVSS$}: The simulator interacts with the functionality $\FVSS$ and receives a request based delayed output
  $(\Share,\sid,P_{\D},\{[s]_q\}_{S_q \cap Z^{\star} \neq \emptyset})$, on the behalf of the parties in $Z^{\star}$. \\[.2cm]
\noindent {\bf Distribution of Shares by $P_{\D}$}: On the behalf of the dealer, the simulator sends $(\dist, \sid, P_{\D}, q, [s]_q)$ to $\Adv$, corresponding to 
 every $P_i \in Z^{\star} \cap S_q$. \\[.2cm]
 \noindent {\bf Pairwise Consistency Tests}: For each $S_q \in \SharingSpec$ such that $S_q \cap Z^{\star} \neq \emptyset$,
  corresponding to each $P_i \in S_q \cap Z^{\star}$, the simulator does the following.
 	\begin{myitemize}
 	\item[--]  On the behalf of every party $P_j \in S_q \setminus Z^{\star}$, send $(\test, \sid, P_{\D}, q, s_{qj})$ to $\Adv$, where $s_{qj} = [s]_q$.
 	\item[--] If $\Adv$ sends $(\test, \sid, P_{\D}, q, s_{qi})$ on the behalf of $P_i$ to any $P_j \in S_q$, then record it.
 	\end{myitemize}
 \noindent {\bf Announcing Results of Consistency Tests}: 
\begin{myitemize}
\item[--] If for any $S_q \in \SharingSpec$, $\Adv$ requests an output from $\FAcast$ with $\sid^{(P_{\D}, q)}_{ij}$ corresponding to parties
 $P_i \in S_q \setminus Z^{\star}$ and $P_j \in S_q$, then the simulator provides output on the behalf of $\FAcast$ as follows.
 	\begin{myitemize}
 	\item If $P_j \in S_q \setminus Z^{\star}$, then send the output $(P_i, \Acast, \sid^{(P_{\D}, q)}_{ij}, \OK_q(i,j))$.
 	\item If $P_j \in (S_q \cap Z^{\star})$, then send the output $(P_i, \Acast ,\sid^{(P_{\D}, q)}_{ij}, \OK_q(i,j))$, if 
	the message $(\test, \sid, P_{\D}, \allowbreak q, s_{qj})$ has been recorded 
	         on the behalf of $P_j$ for party $P_i$ and 
	       $s_{qj} = [s]_q$ holds.	         	       
	\end{myitemize} 
\item[--] If for any $S_q \in \SharingSpec$ and any $P_i \in S_q \cap Z^{\star}$, 
 $\Adv$ sends $(P_i, \Acast, \sid^{(P_{\D}, q)}_{ij}, \OK_q(i,j))$ to $\FAcast$
 with $\sid^{(P_{\D}, q)}_{ij}$ on the behalf of 
  $P_i$ for any $P_j \in S_q$, then the simulator records it. 
  Moreover, if $\Adv$ requests for an output from $\FAcast$ with $\sid^{(P_{\D}, q)}_{ij}$, then 
  the simulator sends the output $(P_i, \Acast ,\sid^{(P_{\D}, q)}_{ij}, \OK_q(i,j))$ on the behalf of $\FAcast$.
\end{myitemize}
\noindent {\bf Construction of Core Sets and Public Announcement}: 
\begin{myitemize}
\item[--] For each $S_q \in \ShareSpec$, the simulator plays the role of $P_{\D}$ and adds the edge $(P_i,P_j)$ to the graph $G^{\D}_q$ over the vertex set $S_q$, if the following hold.
	\begin{myitemize}
	\item $P_i,P_j \in S_q$.
	\item One of the following is true.
		\begin{myitemize}
		\item $P_i,P_j \in S_q \setminus Z^{\star}$.
		\item If $P_i \in S_q \cap Z^{\star}$ and $P_j \in S_q \setminus Z^{\star}$, then the simulator has recorded 
		$(P_i, \Acast, \sid^{(P_{\D}, q)}_{ij}, \OK_q(i,j))$ sent by $\Adv$ on the behalf of $P_i$ to $\FAcast$ with $\sid^{(P_{\D}, q)}_{ij}$,
		 and recorded $(\test, \sid, P_{\D}, q, s_{qi})$ on the behalf of $P_i$ for $P_j$ such that $s_{qi} = [s]_q$.
		\item If $P_i, P_j \in S_q \cap Z^{\star}$, then the simulator has recorded $(P_i, \Acast, \sid^{(q)}_{ij}, \OK_q(i,j))$
		 and $(P_j, \Acast,  \sid^{(q)}_{ji}, \allowbreak \OK_q(j,i))$ sent by $\Adv$ on behalf $P_i$ and $P_j$ to 
		 $\FAcast$ with $\sid^{(P_{\D}, q)}_{ij}$ and $\sid^{(P_{\D}, q)}_{ji}$ respectively.
		\end{myitemize}
	\end{myitemize}
\item[--] For each $S_q \in \ShareSpec$, the simulator finds the set $\C_q$ which forms a clique in $G^{\D}_q$, such that $S_q \setminus \C_q \in \AdvStructure$.
  When $\Adv$ requests output from $\FAcast$ with $\sid_{P_{\D}}$, 
  the simulator sends the output $(\Sender, \Acast, \sid_{P_{\D}}, \{\C_q\}_{S_q \in \ShareSpec})$ on the behalf of $\FAcast$.\\~\\
\end{myitemize}
 \centerline{\underline{\bf Simulation When $P_{\D}$ is Corrupt}} 
  In this case, the simulator $\SimPVSS$ interacts with $\Adv$ during the various phases of $\PiPerVSS$ as follows. \\[.2cm]
\noindent {\bf Distribution of shares by $P_{\D}$}: For $q = 1, \ldots, h$, if $\Adv$ sends $(\dist, \sid, P_{\D}, q, v)$ 
 on the behalf of $P_{\D}$ to any party $P_i \in S_q \setminus Z^{\star}$, then the simulator records it and sets $s_{qi}$ to be $v$. \\[.2cm]
 \noindent {\bf Pairwise Consistency Tests}: For each $S_q \in \SharingSpec$ such that 
  $S_q \cap Z^{\star} \neq \emptyset$, corresponding to each party $P_i \in S_q \cap Z^{\star}$ and each 
  $P_j \in S_q \setminus Z^{\star}$, the simulator does the following.
 	\begin{myitemize}
 	\item[--] If $s_{qj}$ has been set to some value, then send $(\test,\sid, P_{\D}, q, s_{qj})$ to $\Adv$ on the behalf of $P_j$.
 	\item[--] If $\Adv$ sends $(\test, \sid, P_{\D}, q, s_{qi})$ on the behalf of $P_i$ to $P_j$, then record it. 
 	\end{myitemize}
 \noindent {\bf Announcing Results of Consistency Tests}: 
\begin{myitemize}
\item[--] If for any $S_q \in \SharingSpec$, $\Adv$ requests an output from $\FAcast$ with $\sid^{(P_{\D}, q)}_{ij}$ corresponding to
 parties $P_i \in S_q \setminus Z^{\star}$ and $P_j \in S_q$, then the simulator provides the 
  output on the behalf of $\FAcast$ as follows, if $s_{qi}$ has been set to some value.
 	\begin{myitemize}
 	\item If $P_j \in S_q \setminus Z^{\star}$, then send the output $(P_i,\Acast,\sid^{(P_{\D}, q)}_{ij},\OK_q(i,j))$, if
	$s_{qj}$ has been set to some value and $s_{qi} = s_{qj}$ holds.
 	\item If $P_j \in S_q \cap Z^{\star}$, then send the output $(P_i,\Acast,\sid^{(P_{\D}, q)}_{ij},\OK_q(i,j))$, 
	 if $(\test, \sid, P_{\D}, q, s_{qj})$ sent by $\Adv$ on the behalf of $P_j$ to $P_i$ has been recorded and $s_{qj} = s_{qi}$ holds.
 	\end{myitemize} 
\item[--]  If for any $S_q \in \SharingSpec$ and any $P_i \in S_q \cap Z^{\star}$, 
 $\Adv$ sends $(P_i, \Acast, \sid^{(P_{\D}, q)}_{ij}, \OK_q(i,j))$ to $\FAcast$
 with $\sid^{(P_{\D}, q)}_{ij}$ on the behalf of 
  $P_i$ for any $P_j \in S_q$, then the simulator records it. 
  Moreover, if $\Adv$ requests for an output from $\FAcast$ with $\sid^{(P_{\D}, q)}_{ij}$, then 
  the simulator sends the output $(P_i, \Acast ,\sid^{(P_{\D}, q)}_{ij}, \OK_q(i,j))$ on the behalf of $\FAcast$.
  \end{myitemize}
\noindent {\bf Construction of Core Sets}: For each $S_q \in \ShareSpec$, the simulator plays the role of 
 the honest parties $P_i \in S_q \setminus Z^{\star}$ and adds the edge $(P_j,P_k)$ to the graph $G^{(i)}_q$ over vertex set $S_q$, if the following hold.
\begin{myitemize}
	\item[--] $P_j, P_k \in S_q$.
	\item[--] One of the following is true.
		\begin{myitemize}
		\item If $P_j, P_k \in S_q \setminus Z^{\star}$, then the simulator has set $s_{qj}$ and $s_{qk}$ to some values, such that $s_{qj} = s_{qk}$.
		\item If $P_j \in S_q \cap Z^{\star}$ and $P_k \in S_q \setminus Z^{\star}$, then the simulator has recorded 
		$(P_j, \Acast, \sid^{(P_{\D}, q)}_{jk}, \OK_q(j, k))$ sent by $\Adv$ on the behalf of $P_j$ to $\FAcast$ with $\sid^{(P_{\D}, q)}_{jk}$,
		and recorded $(\test, \sid, P_{\D}, q, s_{qj})$ on the behalf of $P_j$ for $P_k$ and has set $s_{qk}$ to a value such that $s_{qj} = s_{qk}$.
		\item If $P_j, P_k \in S_q \cap Z^{\star}$, then the simulator has recorded 
		$(P_j, \Acast, \sid^{(P_{\D}, q)}_{jk}, \OK_q(j, k))$ and $(P_k, \Acast,  \sid^{(P_{\D}, q)}_{kj}, \allowbreak \OK_q(k, j))$ sent by $\Adv$ on behalf of
		$P_j$ and $P_k$ respectively to $\FAcast$ with $\sid^{(P_{\D}, q)}_{jk}$ and $\FAcast$ with $\sid^{(P_{\D}, q)}_{kj}$. 		
		\end{myitemize}
\end{myitemize}
\noindent {\bf Verification of Core Sets and Interaction with $\FVSS$}:
      \begin{myitemize}
        \item If  $\Adv$ sends $(\Sender, \Acast, \sid_{P_{\D}}, \{\C_q\}_{S_q \in \ShareSpec})$ to $\FAcast$ with $ \sid_{P_{\D}}$ on the behalf of
  $P_{\D}$, then the simulator records it. Moreover, if $\Adv$ requests an output from $\FAcast$ with $ \sid_{P_{\D}}$, then on the behalf
  of $\FAcast$, the simulator sends the output $(P_{\D}, \Acast , \sid_{P_{\D}}, \{\C_q\}_{S_q \in \ShareSpec})$.
       \item If simulator has recorded the sets  $ \{\C_q\}_{S_q \in \ShareSpec}$, then it plays 
     the role of the honest parties and verifies if $\C_1, \ldots, \C_h$ are valid  by checking
     if each $S_q \setminus \C_q \in \AdvStructure$ and if each $\C_q$ constitutes a clique in the graph $G^{(i)}_q$ of every party $P_i \in \PartySet \setminus Z^{\star}$.
    If $\C_1, \ldots, \C_h$ are valid, then the simulator sends $(\Share, \sid, P_{\D}, \{s_q\}_{S_q \in \ShareSpec})$ to $\FVSS$, where $s_q$ is set to 
     $s_{qi}$ corresponding to any $P_i \in \C_q \setminus Z^{\star}$. 
     \end{myitemize}
     
\end{simulatorsplitbox}   

We now prove a series of claims which will help us prove the theorem. We start with an {\it honest}
 $P_{\D}$.
\begin{claim}
\label{claim:PVSSHonestDPrivacy}
If $P_{\D}$ is honest, then the view of $\Adv$ in the simulated execution of $\PiPerVSS$ with $\SimPVSS$ is identically distributed to the view of $\Adv$ in the real execution of $\PiPerVSS$ involving honest parties.
\end{claim}
\begin{proof}
Let $\ShareSpec^{\star} \defined
  \{S_q \in \ShareSpec \mid S_q \cap Z^{\star} \neq \emptyset \}$. Then the 
   view of $\Adv$ during the various executions consists of the following.
\begin{myenumerate}
\item[--] {\bf The shares $\{[s]_q\}_{S_q \in \ShareSpec^{\star}}$ distributed by $P_{\D}$}: In the real execution, $\Adv$ receives $[s]_q$ from $P_{\D}$ for each $S_q \in \ShareSpec^{\star}$. In the simulated execution, the simulator provides this to $\Adv$ on behalf of $P_{\D}$. Clearly, the distribution of the shares is identical in both the
 executions.
\item[--] {\bf Corresponding to every $S_q \in \ShareSpec^{\star}$, 
  messages $(\test, \sid, P_{\D}, q, s_{qj})$ received from party $P_j \in S_q \setminus Z^{\star}$,
    as part of pairwise consistency tests, where $s_{qj} = [s]_q$}: While each $P_j$ sends this to $\Adv$ in the real execution, the simulator sends this on the behalf of $P_j$ in the simulated execution.
    Clearly, the distribution of the messages is identical in both the executions.    
\item[--] {\bf For every $S_q \in \ShareSpec$ and every $P_i,P_j \in S_q$, the outputs $\OK_q(P_i, \Acast, \sid^{(P_{\D}, q)}_{ij}, \OK_q(i,j))$
 of the pairwise consistency tests, received as output from $\FAcast$ with $\sid^{(P_{\D}, q)}_{ij}$}: 
 To compare the distribution of these messages in the two executions,
  we consider the following cases, considering an arbitrary $S_q \in \ShareSpec$ and arbitrary $P_i, P_j \in S_q$.
	\begin{myitemize}
	\item[--] $P_i,P_j \in S_q \setminus Z^{\star}$: In both the executions, 
	 $\Adv$ receives $\OK_q(P_i, \Acast, \sid^{(P_{\D}, q)}_{ij}, \OK_q(i,j))$ as the output from $\FAcast$ with $\sid^{(P_{\D}, q)}_{ij}$.
	\item[--] $P_i \in S_q \setminus Z^{\star}, P_j \in (S_q \cap Z^{\star}) $:  In both the 
	 executions, $\Adv$ receives $\OK_q(P_i, \Acast, \sid^{(P_{\D}, q)}_{ij}, \allowbreak \OK_q(i,j))$ as the output from $\FAcast$ with 
	 $\sid^{(P_{\D}, q)}_{ij}$ if and only if $\Adv$ sent $(\test, \sid, P_{\D}, q, s_{qj})$ on the behalf of $P_j$ to $P_i$ such that $s_{qj} = [s]_q$ holds.
	\item[--] $P_i \in (S_q \cap Z^{\star}) $:  In both the executions, 
	$\Adv$ receives $\OK_q(P_i, \Acast, \sid^{(q)}_{ij}, \OK_q(i,j))$ if and only if 
	$\Adv$ on the behalf of $P_i$ has sent $(P_i, \Acast, \sid^{(P_{\D}, q)}_{ij}, \OK_q(i,j))$ to $\FAcast$
	with $\sid^{(P_{\D}, q)}_{ij}$ for $P_j$.
	\end{myitemize}
   Clearly, irrespective of the case, the distribution of the $\OK_q$ messages is identical in both the executions.
\item[--] {\bf The core sets $\{\C_q\}_{S_q \in \ShareSpec}$}: In both the 
 executions, the sets $\C_q$ are determined based on the $\OK_q$ messages delivered to $P_{\D}$. So the distribution of these sets is also identical. 
\end{myenumerate}
\end{proof}

We next claim that if the dealer is {\it honest}, then conditioned on the view of the adversary $\Adv$ (which is identically distributed in both the executions, as per the previous claim),
  the outputs of the honest parties are identically distributed in both the executions.
\begin{claim}
\label{claim:PVSSHonestDCorrectness}
If $P_{\D}$ is honest, then conditioned on the view of $\Adv$, the outputs of the honest parties during the execution of $\PiPerVSS$ involving $\Adv$ has the
   same distribution as the outputs of the honest parties in the ideal-world involving $\SimPVSS$ and $\FVSS$.
\end{claim}
\begin{proof}
Let $P_{\D}$ be honest and let $\View$ be an arbitrary view of $\Adv$. Moreover, let $\{s_q \}_{S_q \cap Z^{\star} \neq \emptyset}$ be the shares
 of the corrupt parties, as per $\View$. Furthermore, let $\{s_q \}_{S_q \cap Z^{\star} = \emptyset}$ be the shares used by $P_{\D}$
  in the simulated execution, corresponding to the set $S_q \in \ShareSpec$, such that $S_q \cap Z^{\star} = \emptyset$.
   Let $s \defined \displaystyle \sum_{S_q \cap Z^{\star} \neq \emptyset} s_q + \sum_{S_q \cap Z^{\star} = \emptyset} s_q$.
   Then in the simulated execution, each {\it honest} party $P_i$ obtains the output $\{[s]_q\}_{P_i \in S_q}$ from $\FVSS$, where
   $[s]_q = s_q$. We now show that $P_i$ eventually obtains the
    output $\{[s]_q\}_{P_i \in S_q}$ in the real execution as well, if $P_{\D}$'s inputs in the protocol $\PiPerVSS$ are 
    $\{s_q \}_{S_q \in \ShareSpec}$.
    
    Since $P_{\D}$ is {\it honest}, it sends the share $s_q$ to {\it all} the parties in the set $S_q$, which is eventually delivered. Now consider 
    an {\it arbitrary} $S_q \in \ShareSpec$.
     During the pairwise consistency tests, each {\it honest} $P_k \in S_q$ will eventually send $s_{qk} = s_q$ to 
      {\it all} the parties in $S_q$. Consequently, every {\it honest} $P_j \in S_q$ will eventually broadcast the message $\OK_q(j, k)$,
       corresponding to every {\it honest} $P_k \in S_q$. This is because $s_{qj} = s_{qk} = s_q$ will hold.
       These $\OK_q(j, k)$ messages are eventually received by every honest party, including $P_\D$.       
      This implies that the parties in $S_q \setminus Z^{\star}$ will eventually form a clique in the graph $G^{(i)}_q$
      of every {\it honest} $P_i$. This further implies that $P_{\D}$ will eventually find a set $\C_q$ where $S_q \setminus \C_q \in \AdvStructure$
      and where $\C_q$ constitutes a clique in the consistency graph of every honest party. 
      This is because the set $S_q \setminus Z^{\star}$ is guaranteed to eventually constitute a clique.
            Hence $P_{\D}$ eventually broadcasts the sets $\{ \C_q\}_{S_q \in \ShareSpec}$, which are eventually delivered to every honest
      party. Moreover, the verification of these sets will eventually be successful for every honest party.
      
      Next, consider an arbitrary {\it honest} $P_i \in S_q$. If $P_i \in \C_q$, then it has already received the share $s_q$ from $P_{\D}$
      and $s_{qi} = s_q$ holds. Hence, $P_i$ sets $[s]_q$ to $s_q$. So consider the case when $P_i \not \in \C_q$. In this case, 
      $P_i$ sets $[s]_q$ based on the supposedly common values $s_{qj}$ received from the parties
      $P_j \in S_q$ as part of pairwise consistency tests.
       Specifically, $P_i$ checks for a subset of parties $\C'_q \subseteq \C_q$, where $\C_q \setminus \C'_q \in \AdvStructure$, such that
      every party $P_j \in \C'_q$ has sent the {\it same} $s_{qj}$ value to $P_i$ as part of the pairwise consistency test.
      If $P_i$ finds such a set $\C'_q$, then it sets $[s]_q$ to the common $s_{qj}$.
      To complete the proof, we need to show that $P_i$ will eventually find such a set $\C'_q$,
      and if such a set $\C'_q$ is found by $P_i$, then the common $s_{qj}$ is the same as $s_q$.
      
      Assuming that $P_i$ eventually finds such a $\C'_q$, the proof that the common $s_{qj}$ is the same as $s_q$
      follows from the fact that $\C'_q$ is guaranteed to contain at least one {\it honest} party from $\C_q$, who would have received the share
      $s_{qj} = s_q$ from $P_{\D}$ and sent to $P_i$ as part of the pairwise consistency test. This is because $\AdvStructure$
   satisfies the $\Q^{(4)}(\PartySet, \AdvStructure)$ condition.
      Also, since the $\Q^{(4)}(\PartySet, \AdvStructure)$ condition is satisfied, the set of {\it honest} parties in $\C_q$, namely the parties in 
      $\C_q \setminus Z^{\star}$, always constitute a candidate $\C'_q$ set. This is because every party $P_j \in \C_q \setminus Z^{\star}$
      would have sent $s_{qj} = s_q$ to every party in $S_q$ during the pairwise consistency test, and these values are eventually delivered. 
\end{proof}

We next prove certain claims with respect to a {\it corrupt} dealer. The first claim is that the view of $\Adv$ in this
case is also identically distributed in both the real as well as simulated execution. This is simply
because in this case, the {\it honest} parties have {\it no} inputs
  and the simulator simply plays the role of the honest parties {\it exactly}
 as per the steps of the protocol $\PiPerVSS$ in the simulated execution.

\begin{claim}
\label{claim:PVSSCorruptDPrivacy}
If $P_{\D}$ is corrupt, then 
the view of $\Adv$ in the simulated execution of $\PiPerVSS$ with $\SimPVSS$ is identically distributed as the view of $\Adv$ in the real execution of
$\PiPerVSS$ involving honest parties.
\end{claim}
\begin{proof}
The proof follows from the fact that if $P_{\D}$ is {\it corrupt}, then $\SimPVSS$ participates in a full execution of the protocol
 $\PiPerVSS$, by playing the role of the honest parties as per the steps of $\PiPerVSS$.
 Hence, there is a one-to-one correspondence between simulated
  executions and real executions.
\end{proof}

We finally claim that if the dealer is {\it corrupt}, then conditioned on the view of the adversary (which is identical in both
the executions as per the last claim), the outputs of the honest parties are identically distributed in 
 both the executions.
\begin{claim}
\label{claim:PVSSCorruptDCorrectness}
If $\D$ is corrupt, then conditioned on the view of $\Adv$, the output of the honest parties during the execution of $\PiPerVSS$ involving $\Adv$ has the  same distribution as the output of the honest parties in the ideal-world involving $\SimPVSS$ and $\FVSS$.
\end{claim}
\begin{proof}
Let $P_{\D}$ be {\it corrupt} and let 
 $\View$ be an arbitrary view of $\Adv$. We note that whether valid
   core sets $\{\C_q\}_{S_q \in \ShareSpec}$ have been generated during the corresponding execution of $\PiPerVSS$ or not can be found out from $\View$. We now consider the following cases.
\begin{myitemize}
\item[--] {\it No core sets $\{\C_q\}_{S_q \in \ShareSpec}$ are generated as per $\View$}: In this case, the honest parties do not obtain any output in either execution.
 This is because in the real execution of $\PiPerVSS$, the honest parties compute their output only when they get valid core sets $\{\C_q\}_{S_q \in \ShareSpec}$
 from $P_{\D}$'s broadcast. If this is not the case, then in the simulated execution, 
  the simulator $\SimPVSS$ does not provide any input to $\FVSS$ on behalf of $P_\D$; hence, $\FVSS$ does not produce any output for the honest parties.
\item[--] {\it Core sets $\{\C_q\}_{S_q \in \ShareSpec}$ generated as per $\View$ are invalid}:
 Again, in this case, the honest parties do not obtain any output in either execution.
  This is because in the real execution of $\PiPerVSS$, even if the sets 
   $\{\C_q\}_{S_q \in \ShareSpec}$ are received from $P_{\D}$'s broadcast,
    the honest parties compute their output only when each set $\C_q$ is found to be {\it valid} with respect to the verifications
    performed by the honest parties in their own consistency graphs.
    If these verifications fail (implying that the core sets are invalid), then in the simulated execution, 
    the simulator $\SimPVSS$ does not provide any input to $\FVSS$ on behalf of $P_\D$, implying that
     $\FVSS$ does not produce any output for the honest parties.
\item[--] {\it Valid core sets $\{\C_q\}_{S_q \in \ShareSpec}$ are generated as per $\View$}:
  We first note that in this case, $P_{\D}$ has distributed some common share, say $s_q$, determined by $\View$, 
  to all the parties in $\C_q \setminus Z^{\star}$ during the real execution of $\PiPerVSS$. 
  This is because all the parties in $\C_q \setminus Z^{\star}$ are {\it honest}, and form a clique
  in the consistency graph of the honest parties. Hence, 
  each $P_j, P_k \in \C_q \setminus Z^{\star}$ has broadcasted the messages $\OK_q(j, k)$ and $\OK_q(k, j)$
  after checking that $s_{qj} = s_{qk}$ holds, where $s_{qj}$ and $s_{qk}$ are the shares
  received from $P_{\D}$ by $P_j$ and $P_k$ respectively. 
  
  We next show that in the real execution of $\PiPerVSS$, every party in $S_q \setminus Z^{\star}$, eventually sets
  $[s]_q = s_q$. While this is true for the parties in $\C_q \setminus Z^{\star}$, we consider an arbitrary party
  $P_i \in S_q \setminus (Z^{\star} \cup \C_q)$. From the protocol steps, 
   $P_i$ checks for a subset of parties $\C'_q \subseteq \C_q$ where $\C_q \setminus \C'_q \in \AdvStructure$, such that
      every party $P_j \in \C'_q$ has sent the {\it same} $s_{qj}$ value to $P_i$ as part of the pairwise consistency test.
      If $P_i$ finds such a set $\C'_q$, then it sets $[s]_q$ to the common $s_{qj}$.
      We next argue that $P_i$ will eventually find such a set $\C'_q$
      and if such a set $\C'_q$ is found by $P_i$, then the common $s_{qj}$ is the same as $s_q$.
      The proof for this is exactly the {\it same}, as for Claim \ref{claim:PVSSHonestDCorrectness}. 
      
      Thus, in the real execution, every honest party $P_i$ eventually outputs $\{[s]_q = s_q \}_{P_i \in S_q}$.
      From the steps of $\SimPVSS$, the simulator sends the shares $\{s_q \}_{S_q \in \ShareSpec}$ to $\FVSS$ on the behalf of
      $P_{\D}$ in the simulated execution. Consequently, in the ideal world, $\FVSS$ will eventually deliver
      the shares $\{[s]_q = s_q \}_{P_i \in S_q}$ to every honest $P_i$.
      Hence, the outputs of the honest parties are identical in both the worlds.
\end{myitemize}
\end{proof}
The proof of the theorem now follows
  from Claims \ref{claim:PVSSHonestDPrivacy}-\ref{claim:PVSSCorruptDCorrectness}.
\end{proof}
\paragraph{\bf Reducing the Broadcast Complexity of the Protocol $\PiPerVSS$:}
Protocol $\PiPerVSS$ as presented in \cite{CP20} has a {\it broadcast complexity}, which is proportional to the size of $\AdvStructure$. More specifically, 
 in the protocol, $P_\D$ needs to compute a core set $\C_q$ corresponding to each $S_q \in \ShareSpec$. For finding these $\C_q$ sets, every pair of (honest)
 parties need to broadcast an $\OK_q$ message for each other by calling $\FAcast$. This results in the number of bits being broadcasted, proportional to $|\ShareSpec|$, where
  $|\ShareSpec| = |\AdvStructure|$ in our case. A small modification to the protocol can make the broadcast complexity, {\it independent} of $|\AdvStructure|$.
   The idea is to let every party broadcast a {\it single} $\OK$ message for every other party, if the pairwise consistency test with that party is successful across {\it all}
   the sets $S_q$ to which both the parties belong. In a more detail, party $P_i$ sends an $\OK(i, j)$ message to $\FAcast$, only after checking whether
   $s_{qi} = s_{qj}$ holds corresponding to {\it every} $S_q \in \ShareSpec$, such that $P_j \in S_q$ holds. Consequently, $P_\D$ now checks for the presence
   of a {\it single} core set $\C$ where for $q = 1, \ldots, |\ShareSpec|$, the conditions $\C \subseteq S_q$ and $S_q \setminus \C \in \AdvStructure$ hold. Upon finding such a
   $\C$ the dealer broadcasts it by sending it to $\FAcast$. Note that such a $\C$ is eventually obtained for an {\it honest} $P_\D$. This is because the set
   of parties $(S_1 \setminus Z^{\star}) \cap \ldots \cap (S_q \setminus Z^{\star})$ constitutes a candidate $\C$ for an honest $P_\D$, where
   $Z^{\star}$ is the set of {\it corrupt} parties. The rest of the protocol steps remain the same. With these modifications, the communication complexity of the 
   protocol $\PiPerVSS$ is computed as follows: the dealer needs to send the share $s^{(q)}$ to all the parties in $S_q$, and every party in $S_q$ has to send
 the received share to every other party in $S_q$ during pairwise consistency tests. This incurs a communication of $\Order(|\AdvStructure| \cdot n^2 \log{|\F|} )$
  bits, since each $|S_q| = \Order(n)$ and each share $s^{(q)}$ can be represented by $\log{|\F|}$ bits. There will be total $\Order(n^2)$ $\OK$ messages broadcasted,
  where each message can be represented by $\Order(\log n)$ bits, since it represents the index of $2$ parties.
  Moreover, $P_\D$ will broadcast a single core set $\C$ of size $\Order(n \log n)$ bits.
  Based on this discussion, we next state the following theorem for $\PiPerVSS$.
  
\begin{theorem}
\label{thm:PerVSS}
 Consider a static malicious adversary $\Adv$ characterized by an adversary-structure $\AdvStructure$, satisfying the $\Q^{(4)}(\PartySet, \AdvStructure)$ condition
  and let  $\ShareSpec = (S_1, \ldots, S_h) \defined \{ \PartySet \setminus Z | Z \in \AdvStructure\}$ be the sharing specification.\footnote{Hence $\ShareSpec$
   satisfies the $\Q^{(3)}(\ShareSpec, \AdvStructure)$ condition.} 
  Then protocol $\PiPerVSS$ UC-securely realizes the functionality $\FVSS$
   with perfect security in the $\FAcast$-hybrid model, in the 
   presence of $\Adv$.
   The protocol makes $\Order(n^2)$ calls to $\FAcast$ with $\Order(\log{n})$ bit messages, one call to $\FAcast$ with $\Order(n \log{n})$ bit message
    and additionally incurs a communication of 
   $\Order(|\AdvStructure| \cdot n^2 \log{|\F|})$ bits.
   
   By replacing the calls to $\FAcast$ with protocol $\PiAcast$, the protocol incurs a total communication of $\Order(|\AdvStructure| \cdot n^2 \log{|\F|} + n^4 \log{n})$
   bits.
\end{theorem}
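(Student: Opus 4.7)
The plan is to reuse the structure of the proof of Theorem~\ref{thm:PerVSSOld} wholesale, only adjusting for the two optimizations introduced before the statement of Theorem~\ref{thm:PerVSS}: (i) each ordered pair $(P_i,P_j)$ broadcasts a single $\OK(i,j)$ message that aggregates the pairwise consistency checks across all $S_q \in \ShareSpec$ containing both parties, and (ii) the dealer broadcasts a single ``common'' core set $\C$ instead of one core set $\C_q$ per $S_q$. I would first state a simulator $\SimPVSS'$ that is the obvious adaptation of the simulator in Figure~\ref{fig:SimPVSS}: it records a single $\OK(i,j)$ value for every pair, builds a single consistency graph $G^{(i)}$ on vertex set $\PartySet$, and for an honest $P_\D$ publishes a single clique $\C$ on behalf of $\FAcast$; for a corrupt $P_\D$ it extracts, for each $S_q$, the common share held by any honest party in $\C$ and forwards those shares to $\FVSS$.

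For the security argument I would replay the four claims of the earlier proof. In the honest-dealer case, the view of $\Adv$ is again determined by the shares $\{s_q\}_{S_q \cap Z^\star \neq \emptyset}$, the test messages from honest parties (all equal to the respective $s_q$), the deterministic $\OK(i,j)$ outputs, and the broadcast of $\C$; each component is reproduced identically by $\SimPVSS'$, so distributions match. To argue eventual output in the real world, the key new observation is that $\PartySet \setminus Z^\star$ is a valid candidate for $\C$: all honest pairs eventually exchange equal shares across every $S_q$ they share, so $\OK$ edges among them are eventually present, and for every $S_q$ the set $S_q \setminus (\PartySet \setminus Z^\star) = S_q \cap Z^\star \in \AdvStructure$. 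The share-recovery step for honest $P_i \in S_q \setminus \C$ then goes through unchanged: since $\AdvStructure$ satisfies $\Q^{(4)}(\PartySet,\AdvStructure)$, the set $\C \cap S_q$ contains at least one honest party that has forwarded the correct $s_q$, and the majority-by-filtering argument from Claim~\ref{claim:PVSSHonestDCorrectness} applies verbatim. The corrupt-dealer analysis follows Claims~\ref{claim:PVSSCorruptDPrivacy} and \ref{claim:PVSSCorruptDCorrectness}: a valid $\C$ broadcast by a corrupt $P_\D$ forces all honest parties in $\C$ (at least one per $S_q$, again by $\Q^{(4)}$) to hold the same $s_q$, whence the same filtering argument lets every honest party in $S_q \setminus \C$ recover exactly that value, matching what $\SimPVSS'$ hands to $\FVSS$.

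For complexity, I would tally the costs separately. The distribution phase transmits one share per $(P_i,S_q)$ with $P_i \in S_q$, contributing $\Order(|\AdvStructure| \cdot n \log|\F|)$ bits; the pairwise tests add another $\Order(|\AdvStructure| \cdot n^2 \log|\F|)$ bits over point-to-point channels. The $\FAcast$ usage consists of at most $n^2$ $\OK$-messages, each an $\Order(\log n)$-bit label for a party pair, plus one additional call carrying $\C$ of size $\Order(n \log n)$ bits. Replacing each $\FAcast$ call by $\PiAcast$, which by Theorem~\ref{thm:Acast} costs $\Order(n^2 \ell)$ bits for an $\ell$-bit broadcast, inflates the broadcast cost to $\Order(n^4 \log n)$, yielding the overall bound $\Order(|\AdvStructure| \cdot n^2 \log|\F| + n^4 \log n)$ bits.

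The main obstacle I anticipate is verifying that the reduction to a single core set $\C$ does not break the per-$S_q$ guarantees used by the original protocol: the danger is that a corrupt $P_\D$ might exploit the aggregation of $\OK$ messages to manufacture a $\C$ that simultaneously looks consistent for all $S_q$ but places a single party into too many roles. The resolution is that an honest $P_i$ only emits $\OK(i,j)$ after checking equality on every $S_q$ with $\{P_i,P_j\} \subseteq S_q$, so an edge of the consistency graph certifies agreement in every relevant $S_q$ at once; hence every clique $\C$ that satisfies $S_q \setminus \C \in \AdvStructure$ for each $q$ automatically yields, via the intersection $\C \cap S_q$, exactly the per-slot core sets used in Theorem~\ref{thm:PerVSSOld}, and the rest of the argument goes through without further modification.
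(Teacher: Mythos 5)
Your proof follows essentially the same route the paper takes for Theorem~\ref{thm:PerVSS}: the paper also treats it as a thin wrapper around Theorem~\ref{thm:PerVSSOld}, arguing through the two aggregation modifications and then tallying the cost. There is one place where your version is actually more careful than the paper's own exposition. The paper's discussion of the modified protocol asks the dealer to find a single $\C$ satisfying $\C \subseteq S_q$ and $S_q \setminus \C \in \AdvStructure$ for \emph{every} $q$, and offers $(S_1 \setminus Z^{\star}) \cap \ldots \cap (S_h \setminus Z^{\star})$ as the honest-dealer candidate. For a structure such as a threshold one, however, $\bigcup_q Z_q = \PartySet$, so this intersection is empty and the twin requirements are mutually unsatisfiable. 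Your instantiation — drop $\C \subseteq S_q$, require only $S_q \setminus \C \in \AdvStructure$, take $\C = \PartySet \setminus Z^{\star}$, and let $\C \cap S_q$ play the role of the per-slot core $\C_q$ — is the one that actually works, and your closing paragraph supplies the key glue: an edge in the aggregated consistency graph certifies agreement on every $S_q$ containing both endpoints, so $\C \cap S_q$ is a clique in the per-$q$ graph, $S_q \setminus (\C \cap S_q) = S_q \setminus \C \in \AdvStructure$, and the filtering argument of Claim~\ref{claim:PVSSHonestDCorrectness} goes through unchanged. Your communication accounting, including the $\Order(n^4 \log n)$ inflation from replacing $\FAcast$ by $\PiAcast$, also matches the claimed bounds. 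One small imprecision in the simulator sketch: for a corrupt dealer, $s_q$ must be read off an honest party in $\C \cap S_q$ (as the paper's $\SimPVSS$ does with $\C_q \setminus Z^{\star}$), not merely an honest party in $\C$, since only members of $S_q$ hold that share.
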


\subsection{Asynchronous Reconstruction Protocols}
Let $s$ be a value which is secret-shared with respect to some sharing specification $\ShareSpec = (S_1, \ldots, S_h)$, such that $\ShareSpec$ satisfies the
 $\Q^{(2)}(\ShareSpec, \AdvStructure)$ condition. We first present the protocol 
  $\PiPerRecShare$, which allows {\it all} parties in $\PartySet$ to reconstruct a {\it single} share $[s]_q$, corresponding to a
  designated set $S_q \in \ShareSpec$. In the protocol, every party in $S_q$ sends the share $[s]_q$ to all the parties outside $S_q$, who then
  ``filter" out the potentially incorrect versions of $[s]_q$ and output $[s]_q$. 
     Protocol $\PiPerRecShare$ is formally presented in Figure \ref{fig:PerRecShare}.
\begin{protocolsplitbox}{$\PiPerRecShare(q)$}{Perfectly-secure reconstruction protocol for session id $\sid$ to publicly
  reconstruct the share $[s]_q$ corresponding to $S_q \in \ShareSpec$. 
   The public inputs are $\PartySet$, $\AdvStructure$ and $\ShareSpec$. 
    The above steps are executed by every $P_i \in \PartySet$}{fig:PerRecShare}
\justify
\begin{myitemize}
\item {\bf Sending Share to All Parties}: If $P_i \in S_q$, then execute the following steps.
	\begin{myenumerate}
	\item On having $[s]_q$, send $(\Share,\sid,q,[s]_q)$ to all the parties in $\PartySet \setminus S_q$.
	\end{myenumerate} 
\item {\bf Computing Output}: Based on the following conditions, execute the corresponding steps.
	\begin{myenumerate}
	\item {\bf $P_i \in S_q$}: Output $[s]_q$. 
	\item {\bf $P_i \notin S_q$}: Upon receiving $(\Share, \sid, q, v)$ from a set of parties $S'_q \subseteq S_q$ such that
	$S_q \setminus S'_q \in \AdvStructure$, output $[s]_q = v$.
	\end{myenumerate} 
\end{myitemize}
\end{protocolsplitbox}

\begin{lemma}
\label{lemma:PerRecShare}
Let $\AdvStructure$ be an adversary structure and let $\ShareSpec = (S_1, \ldots, S_h)$ be a sharing specification, such that 
 $\ShareSpec$ satisfies the
 $\Q^{(2)}(\ShareSpec, \AdvStructure)$ condition. Moreover, let $s$ be a value, which is secret-shared as per $\ShareSpec$.
  Then for any $q \in \{1, \ldots, h \}$ and any adversary $\Adv$ corrupting a set of parties $Z^{\star} \in \AdvStructure$,
   all honest parties eventually output the share $[s]_q$ in the protocol $\PiPerRecShare$.
   The protocol incurs a communication of $\Order(n^2 \log{|\F|})$ bits.   
\end{lemma}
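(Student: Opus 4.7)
The plan is to verify three claims: eventual output for every honest party, agreement of that output with $[s]_q$, and the communication bound. The proof will split according to whether the honest party lies inside or outside $S_q$, with the $\Q^{(2)}(\ShareSpec, \AdvStructure)$ condition doing the heavy lifting for correctness.

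For any honest $P_i \in S_q$, I would use the definition of the secret-sharing: $P_i$ already holds $[s]_q$, so it outputs the correct value immediately. Moreover, every such honest $P_i$ sends $[s]_q$ to every party in $\PartySet \setminus S_q$, and by the eventual-delivery guarantee of the asynchronous channels these messages arrive at every honest recipient. For termination at an honest $P_i \notin S_q$, I would exhibit the set of honest parties in $S_q$, namely $H_q = S_q \setminus Z^{\star}$, as a candidate $S'_q$: since $S_q \setminus H_q = S_q \cap Z^{\star} \subseteq Z^{\star} \in \AdvStructure$, the set $S_q \setminus H_q$ lies in $\AdvStructure$, and all parties in $H_q$ eventually deliver the same value $[s]_q$ to $P_i$. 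Hence $P_i$ eventually finds a qualifying set and produces an output.

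The only non-routine step is correctness at an honest $P_i \notin S_q$. Suppose such a $P_i$ outputs $v$ on the basis of a set $S'_q$ whose members all sent $v$, and suppose for contradiction that $v \neq [s]_q$. No honest party ever sends anything other than $[s]_q$, so $S'_q \subseteq Z^{\star}$. Letting $Z_1 \defined S_q \setminus S'_q \in \AdvStructure$, we then get $S_q = S'_q \cup Z_1 \subseteq Z^{\star} \cup Z_1$, with $Z^{\star}, Z_1 \in \AdvStructure$. This contradicts the $\Q^{(2)}(\ShareSpec, \AdvStructure)$ condition, which asserts $S_q \not\subseteq Z_{i_1} \cup Z_{i_2}$ for any two sets in $\AdvStructure$. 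Therefore $v = [s]_q$, and all honest parties (inside or outside $S_q$) output the same correct share.

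For the communication bound, the only messages exchanged are the at most $|S_q|\cdot(n-|S_q|)$ share-transmissions from parties in $S_q$ to parties outside, each carrying one element of $\F$. This gives $\Order(n^2 \log|\F|)$ bits in total, as claimed. I do not anticipate any real obstacle; the single place where one must be careful is ruling out an all-corrupt $S'_q$ that could mislead $P_i$, and the $\Q^{(2)}$ argument above handles this cleanly.
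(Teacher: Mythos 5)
Your proof is correct and follows essentially the same two-case decomposition as the paper, with the $\Q^{(2)}(\ShareSpec,\AdvStructure)$ condition used in the same role for both termination and correctness. The only cosmetic difference is that for correctness at $P_i \notin S_q$ you argue by contradiction (an all-corrupt $S'_q$ would force $S_q \subseteq Z^{\star} \cup (S_q \setminus S'_q)$, violating $\Q^{(2)}$), whereas the paper phrases the same fact directly as ``$S'_q$ must contain at least one honest party''; these are contrapositives of one another.
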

\begin{proof}
Consider an arbitrary {\it honest} party $P_i \in \PartySet$. We consider two cases.
\begin{itemize}
\item[--] $P_i \in S_q$: In this case, $P_i$ outputs $[s]_q$.
\item[--] $P_i \notin S_q$: In this case, $P_i$ waits for a subset of parties $S'_q \subseteq S_q$ where $S_q \setminus S'_q \in \AdvStructure$, such that
      every party $P_j \in S'_q$ has sent the {\it same} share $v$ to $P_i$.
      If $P_i$ finds such a set $S'_q$, then it outputs $v$.
      To complete the proof, we need to show that $P_i$ will eventually find such a set $S'_q$
      and if such a set $S'_q$ is found by $P_i$, then the common $v$ is the same as $[s]_q$.
      
      Assuming that $P_i$ eventually finds such a common $S'_q$, the proof that the common $v$ is the same as $[s]_q$
      follows from the fact that $S'_q$ is guaranteed to contain at least one {\it honest} party from $S_q$, who would have sent the share
      $[s]_q$ to $P_i$. This is because the $\Q^{(2)}(\ShareSpec, \AdvStructure)$ condition is satisfied.
      Also, since the $\Q^{(2)}(\ShareSpec, \AdvStructure)$ condition is satisfied, the set of {\it honest} parties in $S_q$, namely the parties in 
      $S_q \setminus Z^{\star}$, always constitute a candidate $S'_q$ set. This is because every party $P_j \in S_q \setminus Z^{\star}$
      would have sent $[s]_q$ to $P_i$ and these values are eventually delivered to $P_i$. 
\end{itemize}
The communication complexity follows from the protocol steps.
\end{proof}

We now present the protocol $\PiPerRec$ (Fig \ref{fig:PerRec}), which allows {\it all} parties in $\PartySet$ to reconstruct a secret shared value $s$.
  The idea is to run an instance of $\PiPerRecShare$ for each $S_q \in \ShareSpec$, and to sum up the shares obtained as the
   output from each instantiation. 
\begin{protocolsplitbox}{$\PiPerRec$}{Perfectly-secure reconstruction protocol for session id $\sid$ to reconstruct a shared value $s$. The public inputs of the protocol are $\PartySet, \ShareSpec$
  and $\AdvStructure$. The above steps are executed by every $P_i \in \PartySet$}{fig:PerRec}
\justify

\begin{myitemize}
\item {\bf Reconstructing Shares}: For each $S_q \in \ShareSpec$, participate
 in an instance $\PiPerRecShare(q)$ with $\sid$ to obtain the output $[s]_q$.
\item {\bf Output Computation}: Output $s = \displaystyle \sum_{S_q \in \ShareSpec} [s]_q$.
\end{myitemize}
\end{protocolsplitbox}

The properties of the protocol $\PiPerRec$ are stated in Lemma \ref{lemma:PerRec}, which follow from the protocol steps and Lemma \ref{lemma:PerRecShare}.
\begin{lemma}
\label{lemma:PerRec}
Let $\AdvStructure$ be an adversary structure and let $\ShareSpec = (S_1, \ldots, S_h)$ be a sharing specification, such that 
 $\ShareSpec$ satisfies the
 $\Q^{(2)}(\ShareSpec, \AdvStructure)$ condition.
  Moreover, let $s$ be a value which is secret-shared as per $\ShareSpec$.
  Then for every adversary $\Adv$ corrupting a set of parties $Z^{\star} \in \AdvStructure$,
   all honest parties eventually output $s$ in the protocol $\PiPerRecShare$.
   The protocol incurs a communication of $\Order(|\ShareSpec| \cdot n^2 \log{|\F|})$ bits, which is 
   $\Order(|\AdvStructure| \cdot n^2 \log{|\F|})$ bits if $|\ShareSpec| = |\AdvStructure|$.
\end{lemma}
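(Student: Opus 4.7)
The plan is to reduce correctness and complexity of $\PiPerRec$ directly to the corresponding guarantees for $\PiPerRecShare$ already established in Lemma \ref{lemma:PerRecShare}, plus the definition of our linear secret-sharing. Since $\PiPerRec$ is built by invoking one instance of $\PiPerRecShare(q)$ per $S_q \in \ShareSpec$ and then adding up the outputs, essentially no new asynchronous reasoning is needed; the work is in assembling the pieces.

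First, I would fix an arbitrary honest party $P_i$ and an arbitrary $S_q \in \ShareSpec$. Since $\ShareSpec$ satisfies $\Q^{(2)}(\ShareSpec, \AdvStructure)$ and the adversary corrupts some $Z^{\star} \in \AdvStructure$, Lemma \ref{lemma:PerRecShare} applies to the instance $\PiPerRecShare(q)$ executed as a subroutine inside $\PiPerRec$. Hence $P_i$ eventually outputs the correct $q$-th share $[s]_q$ from that instance. Taking a union over the finitely many sets $S_1, \ldots, S_h$ in $\ShareSpec$, $P_i$ eventually obtains $[s]_1, \ldots, [s]_h$.

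Next, I would invoke the definition of our secret-sharing: a value $s$ is secret-shared w.r.t.\ $\ShareSpec$ precisely when there exist shares $[s]_1, \ldots, [s]_h$ with $s = [s]_1 + \ldots + [s]_h$ and $[s]_q$ held by every honest party in $S_q$. Therefore, once $P_i$ has collected all the reconstructed shares, computing $\sum_{S_q \in \ShareSpec} [s]_q$ yields exactly $s$, as required by the protocol's output step. This establishes that every honest party eventually outputs $s$.

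Finally, for the communication complexity I would simply multiply: $\PiPerRec$ consists of $|\ShareSpec|$ independent invocations of $\PiPerRecShare$, and Lemma \ref{lemma:PerRecShare} bounds each invocation by $\Order(n^2 \log|\F|)$ bits. Summing gives $\Order(|\ShareSpec| \cdot n^2 \log|\F|)$ bits; when $\ShareSpec = \{\PartySet \setminus Z \mid Z \in \AdvStructure\}$ so that $|\ShareSpec| = |\AdvStructure|$, this specialises to $\Order(|\AdvStructure| \cdot n^2 \log|\F|)$. I do not anticipate any real obstacle here; the only subtlety to flag is that the $h$ subprotocol instances run in parallel asynchronously, but since each terminates for every honest party by Lemma \ref{lemma:PerRecShare} and $h$ is finite, eventual termination of the outer protocol follows.
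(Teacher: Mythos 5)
Your proposal is correct and follows essentially the same route the paper takes: the paper's proof is simply the one-line observation that the claim ``follow[s] from the protocol steps and Lemma~\ref{lemma:PerRecShare},'' and you have correctly unpacked exactly that argument (per-set invocation of Lemma~\ref{lemma:PerRecShare}, summation of the reconstructed shares, and multiplying the per-instance communication by $|\ShareSpec|$). Nothing further is needed.
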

\subsection{Properties of the Optimistic Multiplication Protocol $\OptMult$}
In this section, we formally prove the properties of the protocol $\OptMult$ (Fig \ref{fig:OptMult}). 
 While proving these properties, we will assume that  $\AdvStructure$ satisfies the $\Q^{(4)}(\PartySet, \AdvStructure)$ condition.
  This further implies that the sharing specification $\ShareSpec = (S_1, \ldots, S_h) \defined \{ \PartySet \setminus Z | Z \in \AdvStructure\}$
  satisfies the $\Q^{(3)}(\ShareSpec, \AdvStructure)$ condition. 
  Moreover, while proving these properties, we also assume that for every $\iter$, 
  no honest party is ever included in the set $\Discarded$ and 
  all honest parties are eventually removed from the $ \Waitlist^{(i)}_{\iter'}, \LocalDiscarded^{(i)}_{\iter'}$ sets of every honest $P_i$ for every
  $\iter' < \iter$. Note that these conditions are guaranteed in the protocols $\MultLCE$ and $\MultGCE$ (where these sets are constructed and managed),
   where  $\OptMult$ is used as a subprotocol. 
  \begin{claim}
  \label{claim:OptMultAdvCondition}
 For every $Z \in \AdvStructure$ and every ordered pair $(p, q) \in \{1, \ldots, h \} \times \{1, \ldots, h \}$,
  the set $(S_p \cap S_q) \setminus Z$ contains at least one honest party.
  \end{claim}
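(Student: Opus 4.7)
The plan is to unfold the definitions of $\ShareSpec$ and the $\Q^{(4)}$ condition and then apply a straightforward counting/covering argument. Recall that by construction $\ShareSpec = \{\PartySet \setminus Z \mid Z \in \AdvStructure\}$, so every element $S_p \in \ShareSpec$ has the form $S_p = \PartySet \setminus Z_p$ for some $Z_p \in \AdvStructure$. Likewise $S_q = \PartySet \setminus Z_q$ for some $Z_q \in \AdvStructure$. Let $Z^{\star}$ denote the (unknown) set of actually corrupt parties, which by assumption satisfies $Z^{\star} \in \AdvStructure$.

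First I would rewrite the target statement as a non-coverage statement. A party $P \in (S_p \cap S_q)\setminus Z$ is honest precisely when $P \notin Z^{\star}$; thus the claim is equivalent to $(S_p \cap S_q) \setminus (Z \cup Z^{\star}) \neq \emptyset$. Substituting $S_p = \PartySet \setminus Z_p$ and $S_q = \PartySet \setminus Z_q$, this in turn is equivalent to
\[
\PartySet \;\not\subseteq\; Z_p \cup Z_q \cup Z \cup Z^{\star}.
\]

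Second, I would invoke the hypothesis on $\AdvStructure$. The four sets $Z_p, Z_q, Z, Z^{\star}$ all lie in $\AdvStructure$, and $\AdvStructure$ satisfies $\Q^{(4)}(\PartySet, \AdvStructure)$, which by definition says that no union of four sets from $\AdvStructure$ can cover $\PartySet$. Hence the displayed non-coverage holds, and consequently $(S_p \cap S_q)\setminus Z$ contains at least one honest party. I do not anticipate any serious obstacle: the whole argument is just unfolding definitions and applying the $\Q^{(4)}$ assumption once.
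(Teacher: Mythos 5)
Your proof is correct and follows essentially the same route as the paper's: both unfold $S_p = \PartySet\setminus Z_p$, $S_q = \PartySet\setminus Z_q$, observe that emptiness of $(S_p\cap S_q)\setminus(Z\cup Z^\star)$ would mean $\PartySet\subseteq Z_p\cup Z_q\cup Z\cup Z^\star$, and contradict $\Q^{(4)}(\PartySet,\AdvStructure)$. No gap.
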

  \begin{proof}
  From the definition of the sharing specification $\ShareSpec$, we have
   $S_p = \PartySet \setminus Z_p$ and $S_q = \PartySet \setminus Z_q$, where $Z_p, Z_q \in \AdvStructure$. 
   Let $Z^{\star} \in \AdvStructure$ be the set of corrupt parties during the protocol $\OptMult$.
   If $(S_p \cap S_q) \setminus Z$ {\it does not} contain any honest party, then it implies that
   $((S_p \cap S_q) \setminus Z) \subseteq Z^{\star}$.
  This further implies that   $\PartySet \subseteq Z_p \cup Z_q \cup Z \cup Z^{\star}$, implying that 
   $\AdvStructure$ {\it does not} satisfy the $\Q^{(4)}(\PartySet, \AdvStructure)$ condition, which is a contradiction.
  \end{proof}
\begin{claim}
  \label{claim:OptMultACS}
 For every $Z \in \AdvStructure$, if all honest parties participate during the hop number $\hop$ in the protocol $\OptMult$, then
 all honest parties eventually obtain a common summand-sharing party, say $P_j$, for this hop, such that the honest parties
  will eventually hold $[c^{(j)}_{(Z, \iter)}]$. Moreover, party $P_j$ will be distinct from the summand-sharing party
   selected for any hop number $\hop' < \hop$.
  \end{claim}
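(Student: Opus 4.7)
The plan is to argue that at every hop $\hop$ at least one honest party in $\PartySet \setminus Z$ is eligible to act as a summand-sharing party, that the honest parties will eventually vote $1$ on at least one such candidate through $\FABA$, and that the resulting ACS terminates with a commonly chosen $P_j$ whose summand-sum has been successfully secret-shared via $\FVSS$. Throughout the argument I will rely on the standing assumptions of the claim's context, namely that no honest party is ever placed in $\Discarded$ or in any $\LocalDiscarded^{(i)}_{\iter'}$ for $\iter' < \iter$, and that every honest party is eventually removed from every $\Waitlist^{(i)}_{\iter'}$ for $\iter' < \iter$.

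First I would show that the set of eligible candidates at the start of hop $\hop$ contains at least one honest party. Since at most $\hop - 1$ parties have been added to $\Selected_{(Z,\iter)}$ across previous hops, and since honest parties are, by assumption, eventually excluded from $\Discarded$ and from the waitlists and local-discarded sets of previous iterations, the $\Q^{(4)}(\PartySet, \AdvStructure)$ condition guarantees that $\PartySet$ cannot be covered by $Z$, the corrupt set $Z^{\star}$, and the at most $\hop - 1$ previously selected parties together, so an honest eligible candidate always exists whenever $\Products_{(Z,\iter)} \neq \emptyset$. For each such honest eligible $P_h$, its call to $\FVSS$ with $\sid_{\hop, h, \iter, Z}$ eventually delivers the share tuple $\{[c^{(h)}_{(Z,\iter)}]_q\}_{P_i \in S_q}$ to every honest party, and eventually each honest party sees the relevant waitlist entries of previous iterations involving $P_h$ cleared, so each honest party eventually satisfies all of the vote-$1$ preconditions for $P_h$ and sends $(\vote, \sid_{\hop, h, \iter, Z}, 1)$ to $\FABA$.

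Once all honest parties have voted $1$ in at least one of the $n$ ABA instances, the validity property of $\FABA$ forces that instance to output $1$, which in turn triggers every honest party to cast $0$ on every ABA for which it has not yet voted; this closes the ``wait-for-all'' loop and yields decisions on all $n$ invocations. By the agreement property of $\FABA$, all honest parties see identical decisions $(v_1, \ldots, v_n)$ and therefore deterministically pick the same least-indexed party $P_j$ with $v_j = 1$. Because $\FABA$ outputs $1$ only when some honest party voted $1$, the vote-$1$ preconditions must have held at some honest party for $P_j$, hence $P_j$'s $\FVSS$ call with $\sid_{\hop, j, \iter, Z}$ has produced output at that party; by the delivery guarantees of $\FVSS$, every honest party eventually obtains the same output and therefore holds $[c^{(j)}_{(Z,\iter)}]$. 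The same vote-$1$ precondition explicitly excludes $P_j \in \Selected_{(Z,\iter)}$, so $P_j$ is distinct from the summand-sharing party chosen in any earlier hop.

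The main obstacle I expect is establishing the termination of the ACS uniformly across all honest parties when they maintain their dynamic sets $\Waitlist^{(i)}_{\iter'}$ and $\LocalDiscarded^{(i)}_{\iter'}$ asynchronously: an honest eligible $P_h$ may be viewed as ``still wait-listed'' by different honest parties at different times. The hypothesis that every honest party is eventually removed from each previous waitlist and is never included in any locally- or globally-discarded set is the precise invariant that lets me conclude that the vote-$1$ preconditions for some common honest $P_h$ are simultaneously satisfied at every honest party, triggering at least one $\FABA$ to decide $1$ and thereby the entire ACS to terminate.
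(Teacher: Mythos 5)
Your plan has the same three-step shape as the paper's proof (exhibit an honest eligible candidate, force some $\FABA$ instance to decide $1$, appeal to agreement to pick a common $P_j$), but the justification you give for the first and crucial step does not hold. You assert that $\Q^{(4)}(\PartySet, \AdvStructure)$ guarantees $\PartySet$ cannot be covered by $Z$, the corrupt set $Z^{\star}$, and the at most $\hop-1$ parties in $\Selected_{(Z,\iter)}$. But $\Q^{(4)}$ is a statement about unions of four sets drawn from $\AdvStructure$; $\Selected_{(Z,\iter)}$ is an arbitrary subset of $\PartySet$ with no reason to lie inside any member (or union of two members) of $\AdvStructure$. When $\hop$ is large, $Z \cup Z^{\star} \cup \Selected_{(Z,\iter)}$ can certainly cover $\PartySet$; the reason this cannot happen while $\Products_{(Z,\iter)} \neq \emptyset$ is the protocol's index bookkeeping, which your argument never invokes. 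Indeed, the hypothesis ``$\Products_{(Z,\iter)} \neq \emptyset$'' plays no role in your derivation of non-coverage, which is a sign the step is vacuous for large $\hop$.

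The paper closes the gap by tying the eligible candidate to an unshared summand index: since $\Products_{(Z,\iter)} \neq \emptyset$, fix some $(p,q) \in \Products_{(Z,\iter)}$; by Claim~\ref{claim:OptMultAdvCondition} there is an honest $P_k \in (S_p \cap S_q) \setminus Z$; and $P_k$ cannot already be in $\Selected_{(Z,\iter)}$, because if $P_k$ had been selected at an earlier hop, the update $\Products_{(Z,\iter)} = \Products_{(Z,\iter)} \setminus \Products^{(k)}_{(Z,\iter)}$ would already have removed $(p,q)$ --- a contradiction. This link between a still-unshared pair $(p,q)$ and a still-unselected honest $P_k \in (S_p \cap S_q) \setminus Z$ is the missing ingredient in your proposal. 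A smaller issue: in the $\FABA$ step, validity forces an instance to output $1$ only if all honest parties vote $1$ on the \emph{same} instance; showing that each honest party eventually satisfies the vote-$1$ preconditions for $P_k$ is not the same as showing they all actually cast $1$ there, since some may first see another instance decide $1$ and vote $0$. The paper resolves this cleanly with a two-case split on whether any honest party votes $0$ on $P_k$'s instance (which can only happen if another instance has already decided $1$, and agreement then propagates that decision to everyone).
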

  \begin{proof}
  Since all honest parties participate in hop number $\hop$, it follows that 
  $\Products_{(Z, \iter)} \neq \emptyset$ at the beginning of hop number $\hop$. This implies that there exists at least one ordered pair $(p, q) \in \Products_{(Z, \iter)}$.
   From Claim \ref{claim:OptMultAdvCondition}, 
    there exists at least one {\it honest} party in $(S_p \cap S_q) \setminus Z$, say $P_k$, who will have both the shares 
  $[a]_p$ as well as $[b_q]$ (and hence the summand $[a]_p [b]_q$). We also note that $P_k$ {\it would not} have been selected as the common summand-sharing 
  party in any previous $\hop' < \hop$, as otherwise $P_k$ would have already included the summand $[a]_p[b]_q$ in the sum $c^{(k)}_{(Z, \iter)}$ shared
  by $P_k$ during hop $\hop'$, implying that $(p, q) \not \in \Products_{(Z, \iter)}$. Now, during the hop number $\hop$, party $P_k$
  will randomly secret-share the sum $c^{(k)}_{(Z, \iter)}$ by making a
  call to $\FVSS$  and every honest $P_i$ will eventually receive an output
   $(\Share, \sid_{\hop, k, \iter, Z}, P_k, \{ [c^{(k)}_{(Z, \iter)}]_q \}_{P_i \in S_q})$  from $\FVSS$
    with $\sid_{\hop, k, \iter, Z}$. 
 Moreover, $P_k$ {\it will not} be present in the set $\Discarded$
  and if $P_k$ is present in the sets $ \Waitlist^{(i)}_{\iter'}, \LocalDiscarded^{(i)}_{\iter'}$ of any honest $P_i$ for any
  $\iter' < \iter$, then it is eventually removed from these sets.
  
  We next claim that during the hop number $\hop$, there will at least one instance of $\FABA$ corresponding to which {\it all} honest parties
  eventually receive the output $1$. For this, we consider two possible cases:
  \begin{myitemize}
  \item[--] {\it At least one honest party participates with input $0$ in the $\FABA$ instance corresponding to $P_k$}: Let 
  $P_i$ be an {\it honest} party, who sends $(\vote, \sid_{\hop, k, \iter, Z}, 0)$ to $\FABA$ with 
  $\sid_{\hop, k, \iter, Z}$. Then from the steps of $\OptMult$, it follows that there exists some $P_j \in \PartySet$, such that
  $P_i$ has received
    $(\decide, \sid_{\hop, j, \iter, Z}, 1)$ as the output from $\FABA$ with  $\sid_{\hop, j, \iter, Z}$. Hence, {\it every} honest party
    will eventually receive the output $(\decide, \sid_{\hop, j, \iter, Z}, 1)$ as the output from $\FABA$ with  $\sid_{\hop, j, \iter, Z}$.
  \item[--] {\it No honest party participates with input $0$ in the $\FABA$ instance corresponding to $P_k$}:
  In this case, {\it every} honest party will eventually send $(\vote, \sid_{\hop, k, \iter, Z}, 1)$ to $\FABA$ with 
  $\sid_{\hop, k, \iter, Z}$ and eventually receives the output $(\decide, \sid_{\hop, k, \iter, Z}, 1)$ from $\FABA$. 
  \end{myitemize}
  Now, based on the above claim, we can further claim that 
   all honest parties will eventually participate with some input in all the
  $n$ instances of $\FABA$ invoked during the hop number $\hop$
  and hence, all the $n$ instances of $\FABA$ during the hop number $\hop$ will eventually produce an output.
  Since the summand-sharing party for hop number $\hop$   
  corresponds to the least indexed $\FABA$ instance in which all the honest parties obtain $1$ as the output,
   it follows that eventually the honest parties will select a summand-sharing party. Moreover, this summand-sharing party will be common,
  as it is based on the outcome of $\FABA$ instances.
  
  Let $P_j$ be the summand-sharing party for the hop number $\hop$.
   We next show that the honest parties will eventually hold $[c^{(j)}_{(Z, \iter)}]$.
   For this, we note that since $P_j$ has been selected as the summand-sharing party, {\it at least} one {\it honest} party, say
   $P_i$, must have sent $(\vote, \sid_{\hop, j, \iter, Z}, 1)$ to $\FABA$ with 
  $\sid_{\hop, j, \iter, Z}$. If not, then $\FABA$ with $\sid_{\hop, j, \iter, Z}$ will never produce the output
  $(\decide, \sid_{\hop, j, \iter, Z}, 1)$ and hence $P_j$ will not be the summand-sharing party for the hop number $\hop$.
  Now since $P_i$ sent $(\vote, \sid_{\hop, j, \iter, Z}, 1)$ to $\FABA$, it follows that
  $P_i$ has received an output $(\Share, \sid_{\hop, j, \iter, Z}, P_j, \{ [c^{(j)}_{(Z, \iter)}]_q \}_{P_i \in S_q})$  from $\FVSS$
    with $\sid_{\hop, j, \iter, Z}$. This implies that $P_j$ must have sent the message 
    $(\Dealer, \sid_{\hop, j, \iter, Z}, (c^{(j)}_{{(\iter, Z)}_1}, \ldots, c^{(j)}_{{(\iter, Z)}_h}))$ to
    $\FVSS$ with $\sid_{\hop, j, \iter, Z}$. 
    Consequently, every honest party will eventually receive their respective outputs from 
     $\FVSS$ with $\sid_{\hop, j, \iter, Z}$ and hence, the honest parties will eventually hold
     $[c^{(j)}_{(Z, \iter)}]$.
     
     Finally, to complete the proof of the claim, we need to show that party $P_j$ is different from the summand-sharing parties selected
     during the hops $1, \ldots, \hop - 1$. If $P_j$ has been selected as a summand-sharing party for any hop number $\hop' < \hop$, then
     no honest party ever sends $(\vote, \sid_{\hop, j, \iter, Z}, 1)$ to $\FABA$ with 
  $\sid_{\hop, j, \iter, Z}$. Consequently, $\FABA$ with $\sid_{\hop, j, \iter, Z}$ will never send the output 
  $(\decide, \sid_{\hop, j, \iter, Z}, 1)$ to any honest party and hence $P_j$ will not be selected as the summand-sharing party
  for hop number $\hop$, which is a contradiction.
  \end{proof}
\begin{claim}
  \label{claim:OptMultComplexity}
  In protocol $\OptMult$, all honest parties eventually obtain an output.
   The protocol makes $\Order(n^2)$ calls to $\FVSS$ and $\FABA$.
  \end{claim}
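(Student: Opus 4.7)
The plan is to reduce both statements to an inductive application of Claim \ref{claim:OptMultACS}, together with a counting argument bounding the number of hops by $n$. First I would argue termination of the inner ``do-while'' loop driven by the predicate $\Products_{(Z, \iter)} \neq \emptyset$. By induction on the hop number $\hop$, assume all honest parties eventually enter hop $\hop$. Applying Claim \ref{claim:OptMultACS} to hop $\hop$, the honest parties eventually agree on a common summand-sharing party $P_j$, obtain $[c^{(j)}_{(Z, \iter)}]$, and $P_j$ is distinct from all previously selected summand-sharing parties. The update step removes $\Products^{(j)}_{(Z, \iter)}$ from $\Products_{(Z, \iter)}$ and adds $P_j$ to $\Selected_{(Z, \iter)}$, so the honest parties consistently move to hop $\hop + 1$ (or exit the loop). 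Since the selected parties are all distinct and belong to $\PartySet \setminus Z$, at most $|\PartySet \setminus Z| \leq n$ hops occur.

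Next I would argue that $\Products_{(Z, \iter)}$ is actually exhausted within these hops, so the loop exit condition is met. Fix any surviving $(p, q) \in \Products_{(Z, \iter)}$ and invoke Claim \ref{claim:OptMultAdvCondition}: there is an honest party $P_k \in (S_p \cap S_q) \setminus Z$, and the summand index $(p, q)$ lies in $\Products^{(k)}_{(Z, \iter)}$ at the moment $P_k$ is selected (it can only be removed when $P_k$ itself is selected). Combined with the already established fact that every honest party in $\PartySet \setminus Z$ is eventually selected as a summand-sharing party (since the honest parties will eventually vote $1$ for it in the corresponding $\FABA$ instance, as used inside the proof of Claim \ref{claim:OptMultACS}), this forces $\Products_{(Z, \iter)}$ to eventually become empty. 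Once the loop exits, the non-interactive instance of $\PiPerDefaultShare$ trivially completes for every party in $\PartySet \setminus \Selected_{(Z, \iter)}$, so all honest parties eventually assemble the output $\{[c^{(j)}_{(Z, \iter)}]_q , [c_{(Z, \iter)}]_q\}_{P_i \in S_q}$.

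For the complexity bound, I would simply count per hop: every party in $\PartySet \setminus Z$ makes one call to $\FVSS$ (to share its partial sum), yielding $\Order(n)$ $\FVSS$ invocations, and the ACS step invokes one instance of $\FABA$ per candidate party, again $\Order(n)$. Multiplying by the $\Order(n)$ hops established above gives $\Order(n^2)$ calls to each of $\FVSS$ and $\FABA$. The post-loop $\PiPerDefaultShare$ invocations are non-interactive and contribute no additional calls to $\FVSS$ or $\FABA$.

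The only step requiring care is the exhaustion-of-$\Products_{(Z, \iter)}$ argument, because a corrupt $P_k$ selected in an earlier hop may have omitted some summands it was capable of sharing, and we must verify that such omissions are always covered by a later honest summand-sharing party. This is the main obstacle, but it is resolved by the $\Q^{(4)}$ condition via Claim \ref{claim:OptMultAdvCondition}, ensuring that every surviving summand has at least one honest ``capable'' party still unselected, who will eventually be selected by the ACS mechanism.
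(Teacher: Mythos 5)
Your proof is correct and follows essentially the same approach as the paper: combine Claims~\ref{claim:OptMultAdvCondition} and \ref{claim:OptMultACS} to bound the number of hops by $n$ (each hop selects a fresh party from $\PartySet \setminus Z$, and once all honest ones are selected every summand is covered), then multiply by the $\Order(n)$ calls to $\FVSS$ and $\FABA$ per hop. One small slip: the parenthetical claim that $(p,q)$ ``can only be removed when $P_k$ itself is selected'' is false --- $(p,q)$ is also removed from $\Products^{(k)}_{(Z,\iter)}$ whenever \emph{another} capable party covering it is selected --- but your conclusion survives because any such removal simultaneously deletes $(p,q)$ from $\Products_{(Z,\iter)}$, contradicting the assumption that $(p,q)$ is still surviving.
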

  \begin{proof}
  From Claim \ref{claim:OptMultAdvCondition} and \ref{claim:OptMultACS}, it follows that the number of hops in the protocol 
   is $\Order(n)$, as in each hop a new summand-sharing party is selected and if all honest parties are included
   in the set of summand-sharing parties $\Selected_{(Z, \iter)}$, then $\Products_{(Z, \iter)}$ becomes $\emptyset$.
   The proof now follows from the fact that in each hop, there are $\Order(n)$ calls to $\FVSS$ and $\FABA$. 
  \end{proof}
  \begin{claim}
  \label{claim:OptMultCorrectness}
  In protocol $\OptMult$, if no party in $\PartySet \setminus Z$ behaves maliciously, then for each $P_i \in \Selected_{(Z, \iter)}$, the condition
   $c^{(i)}_{(Z, \iter)} = \sum_{(p, q) \in \Products^{(i)}_{(Z, \iter)}} [a]_p[b]_q$ holds and
   $c_{(Z, \iter)} = ab$. 
  \end{claim}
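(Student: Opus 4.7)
The plan is to prove the two assertions separately, relying on the invariants that the protocol $\OptMult$ maintains on the dynamic sets $\Products_{(Z,\iter)}$ and $\{\Products^{(j)}_{(Z,\iter)}\}_{P_j \in \PartySet \setminus Z}$ across hops.

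First I would establish the individual summation claim. If $P_i \in \Selected_{(Z,\iter)}$ is honest, then by the description of the \emph{Sharing Summands} step, $P_i$ explicitly computes $c^{(i)}_{(Z,\iter)} = \sum_{(p,q) \in \Products^{(i)}_{(Z,\iter)}} [a]_p [b]_q$, using the value of $\Products^{(i)}_{(Z,\iter)}$ that exists at the hop in which $P_i$ is included in $\Selected_{(Z,\iter)}$. Since all parties in $\PartySet \setminus Z$ are assumed to behave honestly, this is exactly what $P_i$ inputs to $\FVSS$, and the ideal $\FVSS$ functionality then fixes $c^{(i)}_{(Z,\iter)}$ as the sum of the shares $P_i$ supplied, which equals the stated sum. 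So the first assertion is immediate from the protocol semantics together with the behavior of $\FVSS$.

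For the second assertion, the key is a partitioning invariant: at any moment during the protocol, the sets $\{\Products^{(j)}_{(Z,\iter)}\}_{P_j \in \Selected_{(Z,\iter)}}$ (frozen at the hop each $P_j$ was selected) together with the current $\Products_{(Z,\iter)}$ form a partition of $\{1,\dots,h\} \times \{1,\dots,h\}$. I would verify this by induction on the hop number $\hop$: at initialization, $\Products_{(Z,\iter)}$ equals the full index set and $\Selected_{(Z,\iter)}=\emptyset$, so the invariant holds trivially. In each hop, once $P_j$ is selected, the protocol removes $\Products^{(j)}_{(Z,\iter)}$ from both $\Products_{(Z,\iter)}$ and from $\Products^{(k)}_{(Z,\iter)}$ for every $P_k \in \PartySet \setminus (Z \cup \Selected_{(Z,\iter)})$, so the frozen copy for $P_j$ exactly absorbs the indices it removes from the remaining pool, preserving the partition. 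By Claim \ref{claim:OptMultACS} and Claim \ref{claim:OptMultComplexity}, the loop terminates with $\Products_{(Z,\iter)}=\emptyset$, so at termination the frozen sets $\{\Products^{(j)}_{(Z,\iter)}\}_{P_j \in \Selected_{(Z,\iter)}}$ form a partition of $\{1,\dots,h\} \times \{1,\dots,h\}$.

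Finally, combining the two pieces and using the default sharings $c^{(j)}_{(Z,\iter)}=0$ for $P_j \notin \Selected_{(Z,\iter)}$, I would conclude
\[
c_{(Z,\iter)} \;=\; \sum_{j=1}^{n} c^{(j)}_{(Z,\iter)} \;=\; \sum_{P_j \in \Selected_{(Z,\iter)}} \sum_{(p,q) \in \Products^{(j)}_{(Z,\iter)}} [a]_p [b]_q \;=\; \sum_{p,q = 1}^{h} [a]_p [b]_q \;=\; \Big(\sum_{p=1}^{h}[a]_p\Big)\Big(\sum_{q=1}^{h}[b]_q\Big) \;=\; ab.
\]
I do not expect any genuine obstacle here; the only mild subtlety is being careful that $\Products^{(j)}_{(Z,\iter)}$ is used in its \emph{frozen} form at the hop of $P_j$'s selection rather than its current form (which might have been further reduced by later hops involving other candidate summand-sharing parties), but the protocol as written correctly freezes this value at the sharing step, so the partition argument goes through.
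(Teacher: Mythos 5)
Your proposal is correct and follows essentially the same route as the paper: both arguments reduce to showing that the frozen sets $\Products^{(j)}_{(Z,\iter)}$ for $P_j \in \Selected_{(Z,\iter)}$ are pairwise disjoint and cover $\{1,\ldots,h\}\times\{1,\ldots,h\}$ (you phrase this as an inductive partitioning invariant, the paper proves disjointness and coverage directly), then sum and use the default zero sharings. The only minor point you gloss over is that $\Selected_{(Z,\iter)} \subseteq \PartySet \setminus Z$ (honest parties never vote for a member of $Z$), which is what lets the hypothesis ``no party in $\PartySet\setminus Z$ misbehaves'' imply that every selected summand-sharing party shares correctly.
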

  \begin{proof}
  From the protocol steps, it follows that $\Selected_{(Z, \iter)} \cap Z = \emptyset$, as no honest part ever votes for any party from $Z$ as a candidate
  summand-sharing party during any hop in the protocol.  
  Now since $\Selected_{(Z, \iter)} \subseteq (\PartySet \setminus Z)$, 
  if no party in $\PartySet \setminus Z$ behaves maliciously, then it implies that every party 
  $P_i \in \Selected_{(Z, \iter)}$ behaves honestly and secret-shares $c^{(i)}_{(Z, \iter)}$ by calling $\FVSS$, where
  $c^{(i)}_{(Z, \iter)} = \sum_{(p, q) \in \Products^{(i)}_{(Z, \iter)}} [a]_p[b]_q$.
  Moreover, from the protocol steps, it follows that for every $P_j, P_k \in \Selected_{(Z, \iter)}$:
  \[ \displaystyle \Products^{(j)}_{(Z, \iter)} \cap \Products^{(k)}_{(Z, \iter)} = \emptyset.  \]
  To prove this, suppose $P_j$ and $P_k$ are included in $\Selected_{(Z, \iter)}$ during hop number $\hop_j$ and $\hop_k$ respectively, where
  without loss of generality, $\hop_j < \hop_k$. 
  Then from the protocol steps, during $\hop_j$, the parties would set
  $\Products^{(k)}_{(Z, \iter)} = \Products^{(k)}_{(Z, \iter)} \setminus \Products^{(j)}_{(Z, \iter)}$. This ensures that
  during $\hop_k$, there exists no ordered pair $(p, q) \in \{1, \ldots, |\ShareSpec| \} \times \{1, \ldots, |\ShareSpec| \}$, such that
  $(p, q) \in \Products^{(j)}_{(Z, \iter)} \cap \Products^{(k)}_{(Z, \iter)}$.
  
  Since all the parties $P_i \in \Selected_{(Z, \iter)}$ have behaved honestly, 
   from the protocol steps, it also follows that :
  \[ \displaystyle \bigcup_{P_i \in \Selected_{(Z, \iter)}} \Products^{(i)}_{(Z, \iter)} = \{ (p, q) \}_{p, q = 1, \ldots, |\ShareSpec|}. \]
  Finally, from the protocol steps, it follows that 
   $\forall P_j \in \PartySet \setminus \Selected_{(Z, \iter)}$, the condition
  $c^{(j)}_{(Z, \iter)} = 0$ holds. Now since $c_{(Z, \iter)} =  c^{(1)}_{(Z, \iter)} + \ldots + c^{(n)}_{(Z, \iter)}$, it follows that
  if no party in $\PartySet \setminus Z$ behaves maliciously, then $c_{(Z, \iter)} = ab$ holds. 
  \end{proof}
\begin{claim}
\label{claim:OptMultPrivacy}
In $\OptMult$, $\Adv$ does not learn any additional information about $a$ and $b$.
\end{claim}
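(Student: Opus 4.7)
The plan is to construct a simulator that reproduces the adversary's view in $\OptMult$ using only the shares of $a$ and $b$ that $\Adv$ already holds at the start of the protocol, namely $\{[a]_p, [b]_p\}_{S_p \cap Z^{\star} \neq \emptyset}$. Showing that such a simulator exists, and that the simulated view is identically distributed to the real one, establishes that nothing additional about $a, b$ leaks.

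First I would enumerate the messages in $\Adv$'s view across an execution of $\OptMult$: (i) the inputs supplied to, and the outputs received from, every call to $\FVSS$ on behalf of the corrupt parties in $Z^{\star}$; (ii) the inputs supplied to, and the outputs received from, every call to $\FABA$; and (iii) the dynamic sets $\Products_{(Z,\iter)}, \Products^{(j)}_{(Z,\iter)}, \Selected_{(Z,\iter)}$ that are computed deterministically from the above. For each call to $\FVSS$ with dealer an honest $P_j$, the functionality (see Fig.~\ref{fig:FVSS}) delivers to $\Adv$ only the shares $\{[c^{(j)}_{(Z,\iter)}]_q\}_{S_q \cap Z^{\star} \neq \emptyset}$. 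For each call to $\FABA$, the output is a single agreed-upon bit, which is a deterministic function of the set of honest-party votes, and the honest votes are themselves deterministic functions of the outputs of prior $\FVSS$ instances (plus membership checks against the sets $\Discarded, \Waitlist, \LocalDiscarded$, which are independent of $a, b$).

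The key step is to argue that for every honest summand-sharing party $P_j \in \Selected_{(Z,\iter)} \setminus Z^{\star}$, the shares $\{c^{(j)}_{{(Z,\iter)}_q}\}_{S_q \cap Z^{\star} \neq \emptyset}$ that $\Adv$ receives from $\FVSS$ on behalf of the corrupt parties are distributed uniformly at random, independently of $c^{(j)}_{(Z,\iter)}$. This is where the $\AdvStructure$-privacy of $\ShareSpec$ is used: since $Z^{\star} \in \AdvStructure$, there exists some $S_{q^{\star}} \in \ShareSpec$ with $S_{q^{\star}} \cap Z^{\star} = \emptyset$, so the share $c^{(j)}_{{(Z,\iter)}_{q^{\star}}}$ is completely hidden from $\Adv$. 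Since $P_j$ picks $(c^{(j)}_{{(Z,\iter)}_1}, \ldots, c^{(j)}_{{(Z,\iter)}_h})$ uniformly subject to summing to $c^{(j)}_{(Z,\iter)}$, the marginal distribution of the subset of shares $\Adv$ sees is the uniform distribution on $\F^{|\{q : S_q \cap Z^{\star} \neq \emptyset\}|}$, independently of the summed value. The simulator can therefore produce these shares by sampling uniformly at random, with no knowledge of $a, b$ beyond the shares it already has. For corrupt summand-sharing parties $P_j \in \Selected_{(Z,\iter)} \cap Z^{\star}$, the adversary itself supplies the input to $\FVSS$, so there is nothing to simulate. Finally, the $\FABA$ transcript is reproduced by running the simulator's honest parties forward on the simulated $\FVSS$ outputs.

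The main obstacle I anticipate is purely notational: carefully threading the simulation through the hop-by-hop selection of summand-sharing parties, making sure that the membership checks driving the honest parties' $\FABA$ votes (in particular, whether $P_j$ has delivered its $\FVSS$ output and whether $P_j$ is in any of the dynamic sets) depend only on quantities already available to the simulator. Because all these checks are functions of public protocol transcripts and of sets $\Discarded, \Waitlist^{(i)}_{\iter'}, \LocalDiscarded^{(i)}_{\iter'}$ whose construction never uses $a$ or $b$, the simulation goes through, and the simulated view is perfectly identical to the real one. Hence $\Adv$'s view is a deterministic (randomized) function of $\{[a]_p, [b]_p\}_{S_p \cap Z^{\star} \neq \emptyset}$ alone, proving the claim.
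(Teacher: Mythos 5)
Your proposal is correct and rests on the same core observation as the paper's proof: because $\ShareSpec$ is $\AdvStructure$-private, there is some $S_{q^\star}$ with $S_{q^\star}\cap Z^\star=\emptyset$, so when an honest $P_j$ secret-shares its summand-sum via $\FVSS$ with a uniformly random share vector, the subset of shares that reaches $\Adv$ is marginally uniform and independent of the shared value. The paper phrases this as a direct distributional statement — fix an arbitrary summand $[a]_p[b]_q$ with $S_p\cap Z^\star=\emptyset$, observe that the party sharing it must be honest, and conclude the shares $\Adv$ learns for that party's summand-sum are independent of $[a]_p$ — without bothering to package the rest of the transcript (the $\FABA$ outputs, the dynamic sets) into an explicit simulator. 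You go one step further and wrap the same observation in a simulator construction that enumerates the full view. That framing is cleaner and more self-contained, though you should be a bit careful with the claim that $\Waitlist^{(i)}_{\iter'}$ and $\LocalDiscarded^{(i)}_{\iter'}$ are ``independent of $a,b$'': strictly, one should treat them as given parameters of $\OptMult$ (i.e., condition on them), leaving it to the $\MultLCE$/$\MultGCE$ privacy claims to argue their construction does not leak; asserting their independence outright would be circular. That said, this is a presentational point, not a gap — both proofs are essentially the same argument.
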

\begin{proof}
Let $Z^{\star} \in \AdvStructure$ be the set of corrupt parties. To prove the claim, we argue that in the protocol, $\Adv$ does not learn any additional information about
 the shares $\{[a]_p, [b]_p\}_{S_p \cap Z^{\star} = \emptyset}$. For this, consider an arbitrary summand $[a]_p [b]_q$ where $S_p \cap Z^{\star} = \emptyset$ and where
 $q \in \{1, \ldots, h \}$. Clearly, the summand $[a]_p [b]_q$ will not be available with any party in $Z^{\star}$.
 Let $P_j$ be the party from $\Selected_{(Z, \iter)}$, such that $(p, q) \in \Products^{(j)}_{(Z, \iter)}$; i.e.~the summand
 $[a]_p[b]_q$ is included by $P_j$ while computing the summand-sum $c^{(j)}_{(Z, \iter)}$. 
 Clearly $P_j$ is {\it honest}, since $P_j \not \in Z^{\star}$.
 In the protocol, party $P_j$ randomly secret-shares the summand-sum $c^{(j)}_{(Z, \iter)}$, by supplying a random vector of shares
 for $c^{(j)}_{(Z, \iter)}$ to the corresponding $\FVSS$. Now, since $\ShareSpec$ is {\it $\AdvStructure$-private},
  it follows that the shares $\{ [c^{(j)}_{(Z, \iter)}]_r \}_{S_r \cap Z^{\star} \neq \emptyset}$ learnt by $\Adv$ in the protocol
  will be independent of the summand $[a]_p[b]_q$ and hence, independent of $[a]_p$. Using a similar argument, we can conclude
  that the shares learnt by $\Adv$ in the protocol will be independent of the summands $[a]_q [b]_p$ (and hence independent of $[b]_p$),
  where $S_p \cap Z^{\star} = \emptyset$ and where
 $q \in \{1, \ldots, h \}$. 
\end{proof}

\noindent {\bf Lemma \ref{lemma:OptMult}.}
{\it Let  $\AdvStructure$ satisfy the $\Q^{(4)}(\PartySet, \AdvStructure)$ condition and let 
 $\ShareSpec = (S_1, \ldots, S_h) = \{ \PartySet \setminus Z | Z \in \AdvStructure\}$. 
 Consider an arbitrary $Z \in \AdvStructure$ and $\iter$, 
  such that all honest parties participate in the instance $\OptMult(\PartySet,\AdvStruct,\SharingSpec,[a],[b], Z,\iter)$. Then all
  honest parties eventually compute  $[c_{(Z, \iter)}]$ and $([c^{(1)}_{(Z, \iter)}], \ldots, [c^{(n)}_{(Z, \iter)}])$
  where $c_{(Z, \iter)} = c^{(1)}_{(Z, \iter)} + \ldots + c^{(n)}_{(Z, \iter)}$, 
  provided no honest party is ever included in the $\Discarded$ and $\LocalDiscarded^{(i)}_{\iter'}$ sets 
  and every honest party in the $ \Waitlist^{(i)}_{\iter'}$ sets of every honest $P_i$ is eventually removed, 
   for every
  $\iter' < \iter$. If no party in $\PartySet \setminus Z$ behaves maliciously, then 
     $c_{(Z, \iter)} = ab$ holds. In the protocol, $\Adv$ does not learn any additional information about $a$ and $b$.
          The protocol makes $\Order(n^2)$ calls to $\FVSS$ and $\FABA$
}.
\begin{proof}
The proof follows from Claims \ref{claim:OptMultAdvCondition}-\ref{claim:OptMultPrivacy}.
\end{proof}

We end this section by claiming an important property about the protocol $\OptMult$, which will be useful later when we analyze the properties of the
 protocol $\MultGCE$ where $\OptMult$ is used as a sub-protocol.
 \begin{claim}
 \label{claim:OptMultFuture}
 For every $Z \in \AdvStructure$ and every $\iter$, all the following hold for every $P_j \in \Selected_{(Z, \iter)}$ during the instance
   $\OptMult(\PartySet,\AdvStruct,\SharingSpec,[a],[b], Z,\iter)$.
   \begin{myitemize}
     \item[--] There exists at least one honest party $P_i$, such that
      $P_j$ will not be present in the $\Waitlist^{(i)}_{\iter'}$ and $\LocalDiscarded^{(i)}_{\iter'}$ sets of $P_i$ for any $\iter' < \iter$.
     \item[--] $P_j$ will not be present in the set $\Discarded$.
   \end{myitemize}
 \end{claim}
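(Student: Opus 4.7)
The plan is to trace the event ``$P_j$ is added to $\Selected_{(Z, \iter)}$'' back to the ACS step of $\OptMult$ and then exploit the validity of $\FABA$ to exhibit an honest witness $P_i$ whose local voting conditions at that moment directly supply both items of the claim.

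First, I would fix an arbitrary $P_j \in \Selected_{(Z, \iter)}$ and let $\hop^{\star}$ denote the hop in which $P_j$ was added (selection happens exactly once, at the least-indexed party whose $\FABA$ instance decided $1$). By the protocol, this requires $\FABA$ with session id $\sid_{\hop^{\star}, j, \iter, Z}$ to deliver $(\decide, \sid_{\hop^{\star}, j, \iter, Z}, 1)$ to the honest parties. The key step is then to invoke the validity of $\FABA$ (as specified in Fig \ref{fig:FABA}): had every honest party in the corresponding core-set voted $0$, the functionality would have been forced to output $0$. Since the core-set chosen by the simulator has its complement in $\AdvStructure$ and $\AdvStructure$ satisfies $\Q^{(1)}(\PartySet, \AdvStructure)$ (implied by $\Q^{(4)}$), it is guaranteed to contain an honest party, so this appeal is nonvacuous. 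Hence some honest $P_i$ must have sent $(\vote, \sid_{\hop^{\star}, j, \iter, Z}, 1)$.

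Next, I would unpack the vote-$1$ condition of $\OptMult$ for this $P_i$: at the moment of voting, $P_i$ had verified that $P_j \notin \Discarded$, and that for every $\iter' < \iter$, both $P_j \notin \Waitlist^{(i)}_{\iter'}$ and $P_j \notin \LocalDiscarded^{(i)}_{\iter'}$. The set $\Discarded$ is manipulated only by $\MultGCE$ between iterations (when a failed $\MultLCE$ triggers ACS), and therefore stays fixed throughout the current invocation of $\OptMult$ inside $\MultLCE(\iter)$; this gives the second bullet of the claim. Each $\Waitlist^{(i)}_{\iter'}$ is monotonically non-increasing (parties are only removed), so non-membership persists; and the $\LocalDiscarded^{(i)}_{\iter'}$ check is witnessed at the instant of $P_i$'s vote, which is precisely what the first bullet asserts (``there exists at least one honest $P_i$'').

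The main obstacle, and the only non-routine point, is the appeal to $\FABA$'s strong validity: one has to rule out the scenario in which $\FABA$ outputs $1$ despite no honest party voting $1$, which is where the honest-party guarantee on the core-set (and hence the $\Q^{(k)}$ condition on $\AdvStructure$) becomes essential. Everything else is bookkeeping about the monotonicity of the dynamic sets $\Discarded$, $\Waitlist$, $\LocalDiscarded$ maintained across iterations by $\MultGCE$ and $\MultLCE$.
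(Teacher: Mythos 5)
Your argument is correct and is essentially the same as the paper's proof of this claim, differing only in being stated as a direct (contrapositive) derivation rather than by contradiction: both pivot on the observation that $\FABA$ returns $1$ for $P_j$ only if some honest party cast a $\vote$ of $1$, and that honest voter's pre-vote checks supply both bullets. One small correction: to conclude that the $\FABA$ core-set contains an honest party you need $\Q^{(2)}(\PartySet,\AdvStructure)$, not $\Q^{(1)}$ --- if $\CoreSet \subseteq Z^{\star}$ while $\PartySet\setminus\CoreSet\in\AdvStructure$, then $\PartySet$ is covered by two sets from $\AdvStructure$, which $\Q^{(1)}$ alone does not forbid; since the hypothesis is $\Q^{(4)}$, the needed condition holds and your conclusion stands.
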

 \begin{proof}
 Consider an arbitrary $P_j \in \Selected_{(Z, \iter)}$, such that $P_j$ is included in $\Selected_{(Z, \iter)}$ during the hop number $\hop$
  in the instance $\OptMult(\PartySet,\AdvStruct,\SharingSpec,[a],[b], Z,\iter)$.
  We prove the first part of the claim through a contradiction.
   Let $\Hon$ be the set of {\it honest} parties and
  for every $P_i \in \Hon$, let there exist some $\iter' < \iter$, such that either $P_j \in \Waitlist^{(i)}_{\iter'}$ or 
   $P_j \in \LocalDiscarded^{(i)}_{\iter'}$. This implies that during 
  hop number $\hop$, no $P_i \in \Hon$ will send $(\vote, \sid_{\hop, j, \iter, Z}, 1)$ to $\FABA$ with 
  $\sid_{\hop, j, \iter, Z}$. Consequently, $\FABA$ with  $\sid_{\hop, j, \iter, Z}$ will never return the output
  $(\decide, \sid_{\hop, j, \iter, Z}, 1)$ for any honest party and hence, $P_j$ will not be selected as the summand-sharing party
  for hop number $\hop$, which is a contradiction.
  
  The second part of the claim also follows using a similar argument as above. Namely, if 
  $P_j$ is present in the set $\Discarded$, then
  no $P_i \in \Hon$ will send $(\vote, \sid_{\hop, j, \iter, Z}, 1)$ to $\FABA$ with 
  $\sid_{\hop, j, \iter, Z}$ and consequently, 
  $P_j$ will not be selected as the summand-sharing party
  for hop number $\hop$, which is a contradiction.
 \end{proof}
\subsection{Properties of the Multiplication Protocol $\MultLCE$ with Cheater Identification}
In this section, we formally prove the properties of the protocol $\MultLCE$ (see Fig \ref{fig:MultLCE} for the formal description of the protocol). 
 While proving these properties, we will assume that  $\AdvStructure$ satisfies the $\Q^{(4)}(\PartySet, \AdvStructure)$ condition.
  This further implies that the sharing specification $\ShareSpec = (S_1, \ldots, S_h) \defined \{ \PartySet \setminus Z | Z \in \AdvStructure\}$
  satisfies the $\Q^{(3)}(\ShareSpec, \AdvStructure)$ condition. Moreover, we will also assume that {\it no} honest party is ever included
  in the set $\Discarded$, which will be guaranteed in the protocol $\MultGCE$ where the set $\Discarded$ is constructed and managed, and where
  $\MultLCE$ is used as a sub-protocol. 
  
  We first give the definition of a {\it successful} $\MultLCE$ instance, which will be used
  throughout this section and the next.
  
   \begin{definition}[{\bf Successful $\MultLCE$ Instance}]
  \label{def:MultGCE}
  For an instance $\MultLCE(\PartySet, \AdvStruct, \SharingSpec, [a], [b], \iter)$, we define the following.
  \begin{myitemize}
  \item[--] The instance is called {\it successful} if and only if 
   for every $Z \in \AdvStruct$, the value $c_{(Z, \iter)} - c_{(Z', \iter)} = 0$, where $Z' \in \AdvStructure$ is the fixed
   set used in the protocol.  
   \item[--] If the instance is not successful, then the sets $Z, Z'$ are called the {\it conflicting-sets} for the instance,
   if $Z$ is the smallest indexed set from $\AdvStructure$ such that $c_{(Z, \iter)} - c_{(Z', \iter)} \neq 0$.
  \end{myitemize}
  \end{definition} 
  
   We first show that any instance of $\MultLCE$ will be eventually found to be either a success or a failure by the honest parties. 
    \begin{claim}
   \label{claim:MultLCETermination}
For every $\iter$, any instance $\MultLCE(\PartySet, \AdvStruct, \SharingSpec, [a], [b], \iter)$
 will eventually be deemed to either succeed or fail by the honest parties, 
  provided no honest party is ever included in the $\Discarded$ and $\LocalDiscarded^{(i)}_{\iter'}$ sets, 
  and all honest parties are eventually removed from the $ \Waitlist^{(i)}_{\iter'}$ sets of every honest $P_i$ for every
  $\iter' < \iter$.
   Moreover, for a 
  successful
  instance, the parties output a sharing of $ab$. If the instance is not successful, then 
   the parties identify the  {\it conflicting-sets} $Z, Z'$ for the instance.
  \end{claim}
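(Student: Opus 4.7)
The plan is to argue termination first and then to separate the ``success'' and ``failure'' verdicts. For termination, I first invoke Lemma \ref{lemma:OptMult}: under the stated hypotheses (no honest party in $\Discarded$ or $\LocalDiscarded^{(i)}_{\iter'}$ for $\iter' < \iter$, and every honest party eventually removed from $\Waitlist^{(i)}_{\iter'}$), each of the $|\AdvStructure|$ parallel instances $\OptMult(\PartySet,\AdvStruct,\SharingSpec,[a],[b],Z,\iter)$ eventually produces the shared summand-sums and the aggregate $[c_{(Z,\iter)}]$ at every honest party. Since $\AdvStructure$ satisfies $\Q^{(4)}(\PartySet,\AdvStructure)$, the induced specification $\ShareSpec$ satisfies $\Q^{(2)}(\ShareSpec,\AdvStructure)$, so Lemma \ref{lemma:PerRec} guarantees that each reconstruction $\PiPerRec$ of the differences $c_{(Z,\iter)}-c_{(Z',\iter)}$ eventually terminates with a common value at every honest party. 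Hence, after polynomially many activations, every honest $P_i$ has computed the entire vector of differences, and since this vector is identical across honest parties, they agree on whether every difference is $0$ (setting $\flag^{(i)}_{\iter}=0$) or on the smallest-indexed $Z$ for which the difference is non-zero (setting $\flag^{(i)}_{\iter}=1$ with the same conflicting pair $(Z,Z')$).

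In the success case, I have to exhibit that the common output $[c_{(Z',\iter)}]$ is a sharing of $ab$. The key observation is that the actual set of corrupt parties $Z^{\star}$ lies in $\AdvStructure$, so the protocol ran the instance $\OptMult(\PartySet,\AdvStruct,\SharingSpec,[a],[b],Z^{\star},\iter)$ in parallel. By construction, $\Selected_{(Z^{\star},\iter)} \subseteq \PartySet \setminus Z^{\star}$, i.e.\ all summand-sharing parties of this instance are honest. Lemma \ref{lemma:OptMult} (``if no party in $\PartySet \setminus Z$ acts maliciously, then $c_{(Z,\iter)}=ab$'') therefore implies $c_{(Z^{\star},\iter)}=ab$. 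Since the instance was deemed successful, $c_{(Z,\iter)}-c_{(Z',\iter)}=0$ holds for every $Z \in \AdvStructure$, and in particular $c_{(Z',\iter)}=c_{(Z^{\star},\iter)}=ab$. So the shares $\{[c_{(Z',\iter)}]_q\}$ output by the honest parties indeed form a sharing of $ab$.

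The failure case is essentially a bookkeeping argument: the non-zero differences are obtained via $\PiPerRec$, which outputs the same value to every honest party, so the least index $Z \in \AdvStructure$ with $c_{(Z,\iter)}-c_{(Z',\iter)} \neq 0$ is a deterministic function of a vector on which honest parties already agree. Consequently, every honest party fixes the same conflicting-pair $(Z,Z')$ in the sense of Definition \ref{def:MultGCE}, and sets $\flag^{(i)}_{\iter}=1$ accordingly.

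The main obstacle I anticipate is discharging the precondition of Lemma \ref{lemma:OptMult} uniformly over all $Z \in \AdvStructure$, because the hypothesis on $\Waitlist^{(i)}_{\iter'}$, $\LocalDiscarded^{(i)}_{\iter'}$ and $\Discarded$ is an ``eventual'' property that has to be inherited from previous iterations; this must be quoted carefully from the hypothesis of the claim, and later justified by induction when $\MultLCE$ is used inside $\MultGCE$. The other subtlety is to ensure that the conflicting pair identified during a failure coincides across honest parties even when adversarial $\OptMult$ instances differ in their $\Selected$ sets at different honest parties, which is handled by the fact that only the reconstructed {\it values} $c_{(Z,\iter)}-c_{(Z',\iter)}$ are used for the decision, and these are produced by $\PiPerRec$ with a common output guarantee.
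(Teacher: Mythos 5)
Your proposal is correct and follows essentially the same line of reasoning as the paper's proof: invoke Lemma~\ref{lemma:OptMult} (under the stated hypotheses) for termination of each $\OptMult$ instance, invoke the reconstruction protocol to agree on the differences, and handle the success case by observing that the instance $\OptMult$ run for $Z = Z^{\star}$ has only honest participants so $c_{(Z^{\star},\iter)} = ab$, forcing $c_{(Z',\iter)} = ab$. One minor terminological caution: in the failure branch what is agreed upon are the \emph{conflicting-sets} $Z, Z'$ of Definition~\ref{def:MultGCE}, not the ``conflicting-pair of parties'' of Definition~\ref{MultLCEConflictingPair}; your argument is about the former and is correct, but the phrasing blends the two terms.
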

  \begin{proof}
  Let $Z^{\star} \in \AdvStructure$ be the set of corrupt parties. If the lemma conditions hold, then
   it follows from 
    Lemma \ref {lemma:OptMult}, that corresponding to every $Z \in \AdvStructure$, the instance $\OptMult(\PartySet, \AdvStruct, \SharingSpec,[a], [b], Z, \iter)$ 
 eventually completes with honest parties obtaining the outputs $[c^{(1)}_{(Z, \iter)}] , \ldots, [c^{(n)}_{(Z, \iter)}], [c_{(Z, \iter)}]$, where
 $c_{(Z, \iter)} = c^{(1)}_{(Z, \iter)} + \ldots + c^{(n)}_{(Z, \iter)}$. Moreover, in the $\OptMult$ instance corresponding to $Z^{\star}$,
 the output $c_{(Z^{\star}, \iter)}$ will be the same as $ab$, since all the parties in $\PartySet \setminus Z^{\star}$ will be honest.
 
 Since $\ShareSpec$ satisfies the $\Q^{(3)}(\ShareSpec, \AdvStructure)$ condition, it follows that with respect to the fixed
  $Z' \in \AdvStruct$, the honest parties will eventually reconstruct the difference $c_{(Z, \iter)} - c_{(Z', \iter)}$, corresponding to {\it every}
  $Z \in \AdvStructure$. Now there are two possibilities. If all the differences $c_{(Z, \iter)} - c_{(Z', \iter)}$ turn out to be $0$, then 
  the $\MultLCE$ instance will be considered to be successful by the honest parties and the honest parties will output
  $[c_{(Z', \iter)}]$, which is bound to be the same as $ab$. This is because $c_{(Z', \iter)} - c_{(Z^{\star}, \iter)} = 0$ and hence
  $c_{(Z', \iter)} = c_{(Z^{\star}, \iter)} = ab$ holds. The other possibility is that all the differences are {\it not} zero, in which case
  the instance $\MultLCE$ will not be considered successful by the honest parties. Moreover, in this case, the parties will set
  $(Z, Z')$ as the conflicting-sets for the instance, where $Z$ is the smallest indexed set from $\AdvStructure$ such that $c_{(Z, \iter)} - c_{(Z', \iter)} \neq 0$.
  \end{proof}

    We next prove a series of claims regarding any $\MultLCE$ instance which is {\it not} successful. 
    We begin by showing that 
    if any instance of $\MultLCE$ is {\it not} successful, then every honest party in eventually removed from the waiting-list
  of the honest parties for that instance. Moreover, no honest party will be ever included in the $\LocalDiscarded$ set of any honest party for that instance.
   \begin{claim}
   \label{claim:MultLCEHonestPartyParticipation}
   For every $\iter$, if the instance $\MultLCE(\PartySet, \AdvStruct, \SharingSpec, [a], [b], \iter)$ is not successful, then 
   every honest party from the set $\Selected_{(Z, \iter)} \cup \Selected_{(Z', \iter)}$ is eventually removed from the waiting set
      $\Waitlist^{(i)}_{\iter}$ of every honest party $P_i$. Moreover, no honest party is ever included
      in the $\LocalDiscarded^{(i)}_{\iter}$ set of any honest party $P_i$.
      \end{claim}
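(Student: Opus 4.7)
The plan is to prove the two parts of the claim separately, leveraging the properties of $\FVSS$ together with the structural guarantees of the underlying $\OptMult$ instances.

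For the first part (eventual removal from $\Waitlist^{(i)}_{\iter}$), I would observe that by the protocol specification, an honest $P_j \in \Selected_{(Z,\iter)} \cup \Selected_{(Z',\iter)}$ honestly computes the required partitions $d^{(jk)}_{(Z,\iter)}$ (for every $P_k \in \Selected_{(Z',\iter)}$ when $P_j \in \Selected_{(Z,\iter)}$) and/or $e^{(jk)}_{(Z',\iter)}$ (for every $P_k \in \Selected_{(Z,\iter)}$ when $P_j \in \Selected_{(Z',\iter)}$), and invokes $\FVSS$ with the corresponding random vectors of shares on these values. By the specification of $\FVSS$, every honest party eventually obtains the corresponding share outputs. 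Hence every one of the pertinent $\Share$ messages listed in the \emph{Removing Parties from Wait List} step is eventually received by every honest $P_i$, and so $P_j$ is eventually removed from $\Waitlist^{(i)}_{\iter}$.

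For the second part (no honest party is placed in $\LocalDiscarded^{(i)}_{\iter}$), I would use the fact that $\OptMult$ guarantees that across each of its runs, the sets $\{\Products^{(k)}_{(Z,\iter)}\}_{P_k \in \Selected_{(Z,\iter)}}$ form a partition of $\{(p,q)\}_{p,q=1,\ldots,|\ShareSpec|}$ (disjointness follows from the update rule removing processed pairs after each hop; covering follows from the loop running until $\Products_{(Z,\iter)} = \emptyset$). For an honest $P_j \in \Selected_{(Z,\iter)}$, this gives
\[
\sum_{P_k \in \Selected_{(Z',\iter)}} d^{(jk)}_{(Z,\iter)} \;=\; \sum_{P_k \in \Selected_{(Z',\iter)}} \sum_{(p,q) \in \Products^{(j)}_{(Z,\iter)} \cap \Products^{(k)}_{(Z',\iter)}} [a]_p[b]_q \;=\; \sum_{(p,q) \in \Products^{(j)}_{(Z,\iter)}} [a]_p[b]_q \;=\; c^{(j)}_{(Z,\iter)},
\]
so the reconstructed difference in the first check of the cheater-identification phase is $0$, and $P_j$ is not added to $\LocalDiscarded^{(i)}_{\iter}$. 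The symmetric argument handles honest $P_j \in \Selected_{(Z',\iter)}$.

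Finally, for the pairwise comparison step, I would fix an honest $P_j \in \Selected_{(Z,\iter)}$ and any $P_k \in \Selected_{(Z',\iter)}$. If $P_k$ is also honest then by definition $d^{(jk)}_{(Z,\iter)} = e^{(kj)}_{(Z',\iter)}$ so no reconstruction is triggered. If $P_k$ is corrupt and the difference is nonzero, the parties reconstruct $d^{(jk)}_{(Z,\iter)}$, $e^{(kj)}_{(Z',\iter)}$, and all relevant shares $[a]_p,[b]_q$ via $\PiPerRec$ and $\PiPerRecShare$ (which is permitted since $\ShareSpec$ satisfies $\Q^{(2)}(\ShareSpec,\AdvStructure)$); the recomputed value $\sum_{(p,q) \in \Products^{(j)}_{(Z,\iter)} \cap \Products^{(k)}_{(Z',\iter)}} [a]_p[b]_q$ matches $d^{(jk)}_{(Z,\iter)}$ because $P_j$ is honest, so the corrupt party $P_c$ identified is $P_k$ rather than $P_j$. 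The analogous argument covers honest $P_j \in \Selected_{(Z',\iter)}$. The main obstacle is being careful about the partition property of the $\Products^{(k)}$ sets, since everything else is a direct unwinding of the protocol; but this partition property is exactly what the hop-wise update rule of $\OptMult$ enforces, so it comes essentially for free.
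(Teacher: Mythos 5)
Your proposal is correct and takes essentially the same route as the paper's proof: an honest $P_j$ shares all required $d^{(jk)}_{(Z,\iter)}$ and $e^{(jk)}_{(Z',\iter)}$ via $\FVSS$, so it is eventually removed from every honest $\Waitlist^{(i)}_{\iter}$, and because those shared partitions are correct (your explicit partition argument via the hop-wise update rule makes this precise), $P_j$ passes the first two summand-sum checks with difference $0$, while in the pairwise comparison any nonzero difference is attributed to the other party after reconstruction since $P_j$'s shared value matches the recomputed one. The paper states this more tersely and leans implicitly on the partition property; your version spells it out but adds nothing structurally new.
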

  \begin{proof}
  Suppose that the instance $\MultLCE(\PartySet, \AdvStruct, \SharingSpec, [a], [b], \iter)$ is not successful. This implies that the parties identify
  a pair of conflicting-sets $(Z, Z')$, such that $c_{(Z, \iter)} - c_{(Z', \iter)} \neq 0$. From the protocol steps, every honest party
  $P_i$ initializes $\Waitlist^{(i)}_{\iter}$ to  $\Selected_{(Z, \iter)} \cup \Selected_{(Z', \iter)}$. Let $P_j$ be an arbitrary 
  {\it honest} party belonging to $\Selected_{(Z, \iter)} \cup \Selected_{(Z', \iter)}$. From the protocol steps, 
  party $P_j$ secret-shares all the required values $d^{(jk)}_{(Z, \iter)}, e^{(jk)}_{(Z', \iter)}$ by calling appropriate instances of $\FVSS$
  and eventually these values are secret-shared, with every honest $P_i$ receiving the appropriate shares
  from corresponding $\FVSS$ instances. Consequently, $P_j$ will eventually be removed from the set $\Waitlist^{(i)}_{\iter}$.
    Moreover, since $P_j$ is an {\it honest} party, the $d^{(jk)}_{(Z, \iter)}, e^{(jk)}_{(Z', \iter)}$ values shared
  by $P_j$ will be correct and consequently, the conditions for including $P_j$ to the $\LocalDiscarded^{(i)}_{\iter}$ 
  set of any honest party $P_i$ will fail. That is, if $P_j \in \Selected_{(Z, \iter)}$, then the parties will find that 
                  $\displaystyle c^{(j)}_{(Z, \iter)} - \sum_{P_k \in \Selected_{(Z', \iter)}} d^{(jk)}_{(Z, \iter)} = 0$.
   On the other hand, if $P_j \in \Selected_{(Z', \iter)}$, then the parties will find that
                $\displaystyle c^{(j)}_{(Z', \iter)} - \sum_{P_k \in \Selected_{(Z, \iter)}} e^{(jk)}_{(Z', \iter)} = 0$.
   Moreover, if there exists any $P_k \in \Selected_{(Z, \iter)} \cup \Selected_{(Z', \iter)}$ such that
   either $d^{(jk)}_{(Z, \iter)} \neq e^{(kj)}_{(Z', \iter)}$ or $d^{(kj)}_{(Z, \iter)} \neq e^{(jk)}_{(Z', \iter)}$, 
   then after reconstructing the values shared by $P_j$ and the shares held by $P_j$, the parties
   will find that $P_j$ has behaved honestly and hence, $P_j$ will not be included to the  
    $\LocalDiscarded^{(i)}_{\iter}$ set of any honest $P_i$. 
  \end{proof}

We next give the definition of a {\it conflicting-pair} of parties, which is defined based on the partitions of the summand-sum shared by the summand-sharing parties.   
\begin{definition}[{\bf Conflicting-Pair of Parties}]
\label{MultLCEConflictingPair}
Let $\MultLCE(\PartySet, \AdvStruct, \SharingSpec, [a], [b], \iter)$ be an instance of $\MultLCE$ which  is not successful and let $Z, Z'$ be the corresponding conflicting-sets for the instance. A pair of parties $(P_j,P_k)$ is said to be a 
 {\it conflicting-pair} of parties for this $\MultLCE$ instance if all the following hold:
   \begin{myitemize}
   \item[--] $P_j \in \Selected_{(Z,\iter)}, P_k \in \Selected_{(Z',\iter)}$;
   \item[--] $d^{(jk)}_{(Z, \iter)} \neq e^{(kj)}_{(Z', \iter)}$.   
   \end{myitemize}
\end{definition}
We next show that if an instance of $\MultLCE$ is not successful,
 then certain conditions hold with respect to the summand-sums and the respective partitions shared by the 
  summand-sharing parties during the underlying instances of $\OptMult$ and the 
  cheater-identification
   phase of the $\MultLCE$ instance.
\begin{claim}
\label{MultLCEFailureReasons}
Let $\MultLCE(\PartySet, \AdvStruct, \SharingSpec, [a], [b], \iter)$ be an instance of $\MultLCE$ which is not successful and let $Z, Z'$ be the corresponding conflicting-sets for the instance. Moreover, let $Z^{\star}$ be the set of corrupt parties. 
 Then, one of the following must hold true for some $P_j \in Z^{\star}$.
\begin{myenumerate}
      \item[i.] $P_j \in \Selected_{(Z, \iter)}$ and  
       $\displaystyle c^{(j)}_{(Z, \iter)} - \sum_{P_k \in \Selected_{(Z', \iter)}} d^{(jk)}_{(Z, \iter)} \neq 0$.
       \item[ii.] $P_j \in \Selected_{(Z', \iter)}$ and 
               $\displaystyle c^{(j)}_{(Z', \iter)} - \sum_{P_k \in \Selected_{(Z, \iter)}} e^{(jk)}_{(Z', \iter)} \neq 0$.
        \item[iii.] There is some $P_k \in \Selected_{(Z, \iter)} \cup \Selected_{(Z', \iter)}$ such that 
        either $(P_j,P_k)$ or $(P_k,P_j)$ constitutes a conflicting-pair of parties.
\end{myenumerate} 
\end{claim}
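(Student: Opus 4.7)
The plan is to proceed by contradiction: assume that for every corrupt $P_j \in Z^{\star}$, none of conditions (i), (ii), (iii) hold, and derive that $c_{(Z, \iter)} = c_{(Z', \iter)}$, contradicting the fact that $Z, Z'$ are conflicting-sets.

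First I would exploit honesty to get consistency for free. For an \emph{honest} $P_j \in \Selected_{(Z, \iter)}$, the value computed and shared is $d^{(jk)}_{(Z, \iter)} = \sum_{(p,q) \in \Products^{(j)}_{(Z,\iter)} \cap \Products^{(k)}_{(Z',\iter)}} [a]_p [b]_q$, and likewise $c^{(j)}_{(Z,\iter)} = \sum_{(p,q) \in \Products^{(j)}_{(Z,\iter)}} [a]_p[b]_q$. The same holds symmetrically for an honest $P_k \in \Selected_{(Z', \iter)}$ with $e^{(kj)}_{(Z', \iter)}$. In particular, whenever both $P_j$ and $P_k$ are honest, the values $d^{(jk)}_{(Z, \iter)}$ and $e^{(kj)}_{(Z', \iter)}$ equal the same summand-sum and therefore agree. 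Using the negation of (iii), no corrupt-involving pair can be a conflicting-pair, so $d^{(jk)}_{(Z, \iter)} = e^{(kj)}_{(Z', \iter)}$ holds for \emph{every} pair $(P_j, P_k) \in \Selected_{(Z,\iter)} \times \Selected_{(Z',\iter)}$.

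Next I would rewrite the summand-sums using a partition argument. Following Claim~\ref{claim:OptMultCorrectness}, the sets $\{\Products^{(k)}_{(Z', \iter)}\}_{P_k \in \Selected_{(Z', \iter)}}$ form a disjoint partition of $\{(p,q)\}_{p,q=1,\ldots,|\ShareSpec|}$. Hence for any $P_j \in \Selected_{(Z, \iter)}$,
\[
\Products^{(j)}_{(Z, \iter)} \;=\; \bigsqcup_{P_k \in \Selected_{(Z', \iter)}} \Bigl(\Products^{(j)}_{(Z, \iter)} \cap \Products^{(k)}_{(Z', \iter)}\Bigr).
\]
For an honest $P_j \in \Selected_{(Z,\iter)}$, substituting this partition into the formula for $c^{(j)}_{(Z,\iter)}$ immediately gives $c^{(j)}_{(Z,\iter)} = \sum_{P_k \in \Selected_{(Z', \iter)}} d^{(jk)}_{(Z, \iter)}$. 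For a corrupt $P_j \in \Selected_{(Z,\iter)}$, the same identity holds because, by assumption, (i) fails for $P_j$. So the identity $c^{(j)}_{(Z,\iter)} = \sum_{P_k \in \Selected_{(Z', \iter)}} d^{(jk)}_{(Z, \iter)}$ holds uniformly, and by a symmetric argument using the failure of (ii), $c^{(j)}_{(Z',\iter)} = \sum_{P_k \in \Selected_{(Z, \iter)}} e^{(jk)}_{(Z', \iter)}$ holds for every $P_j \in \Selected_{(Z',\iter)}$.

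Finally, summing,
\[
c_{(Z, \iter)} = \sum_{P_j \in \Selected_{(Z, \iter)}} \sum_{P_k \in \Selected_{(Z', \iter)}} d^{(jk)}_{(Z, \iter)}
\quad\text{and}\quad
c_{(Z', \iter)} = \sum_{P_k \in \Selected_{(Z', \iter)}} \sum_{P_j \in \Selected_{(Z, \iter)}} e^{(kj)}_{(Z', \iter)},
\]
and the pairwise equality $d^{(jk)}_{(Z, \iter)} = e^{(kj)}_{(Z', \iter)}$ established above forces $c_{(Z, \iter)} = c_{(Z', \iter)}$, contradicting $Z, Z'$ being conflicting-sets. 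The main obstacle I expect is book-keeping: carefully verifying the disjoint-cover property across the two independent $\OptMult$ instances (with possibly very different $\Selected$ sets) so that the double sum for $c_{(Z,\iter)}$ really matches the double sum for $c_{(Z',\iter)}$ term-by-term.
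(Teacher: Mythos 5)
Your proof is correct and follows essentially the same route as the paper: negate all three conditions, use the partition of the summand index set to turn each $c^{(j)}_{(Z,\iter)}$ into $\sum_{P_k} d^{(jk)}_{(Z,\iter)}$, use the absence of conflicting pairs to swap $d^{(jk)}_{(Z,\iter)}$ for $e^{(kj)}_{(Z',\iter)}$, and re-sum to force $c_{(Z,\iter)}=c_{(Z',\iter)}$. The only difference is that you spell out the honest-party case (where the identities hold by correct computation plus the partition) and the corrupt-party case (where they hold by the negated conditions) that the paper folds implicitly into its chain of equalities.
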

\begin{proof}
Since the instance of $\MultLCE$ is not successful and $Z, Z'$ constitute
 conflicting-sets, it follows that
   $c_{(Z,\iter)} \neq c_{(Z',\iter)}$. Assume for the sake of contradiction that the none of the conditions
   in the claim is true. Then, we can infer the following.
\begin{align*} 
\displaystyle c_{(Z,\iter)} &= \sum_{P_j \in \Selected_{(Z,\iter)}}c^{(j)}_{(Z,\iter)} \\
                                        &= \sum_{P_j \in \Selected_{(Z,\iter)}} \sum_{P_k \in \Selected_{(Z',\iter)}} d^{(jk)}_{(Z,\iter)} \\                                     
                                        &= \sum_{P_j \in \Selected_{(Z,\iter)}} \sum_{P_k \in \Selected_{(Z',\iter)}} e^{(kj)}_{(Z',\iter)} \\
                                         &= \sum_{P_k \in \Selected_{(Z',\iter)}} \sum_{P_j \in \Selected_{(Z,\iter)}} e^{(kj)}_{(Z',\iter)} \\
                                         &= \sum_{P_k \in \Selected_{(Z',\iter)}} c^{(k)}_{(Z',\iter)} \\ 
                                        &= c_{(Z',\iter)},
\end{align*}
where the first equation follows from the definition of $c_{(Z,\iter)}$, the 
 second equation holds because, as per our assumption, $\displaystyle c^{(j)}_{(Z, \iter)} - \sum_{P_k \in \Selected_{(Z', \iter)}} d^{(jk)}_{(Z, \iter)} = 0$ for every
  $P_j \in \Selected_{(Z, \iter)}$, 
 the third equation holds because, as per our assumption, there is {\it no} conflicting-pair of parties, the fifth equation holds because as per our assumption  
 $\displaystyle c^{(k)}_{(Z', \iter)} - \sum_{P_j \in \Selected_{(Z, \iter)}} e^{(kj)}_{(Z', \iter)} = 0$ for every
  $P_k \in \Selected_{(Z', \iter)}$ and the last equation follows from the definition of $c_{(Z',\iter)}$. However,  $c_{(Z,\iter)} = c_{(Z',\iter)}$ is a contradiction.
\end{proof}

We next define a {\it characteristic function} with respect to the partitions of the summands-sum shared by the summand-sharing parties, 
  to ``characterize" instances of $\MultLCE$ which are {\it not} successful. Looking ahead, this will be helpful to upper bound the number of 
  failed $\MultLCE$ instances in the protocol $\MultGCE$.  
\begin{definition}[{\bf Characteristic Function}]
\label{MultCICharacterization}
Let $\MultLCE(\PartySet, \AdvStruct, \SharingSpec, [a], [b], \iter)$ be an instance of $\MultLCE$ 
 which is not successful and let $Z, Z'$ be the corresponding conflicting-sets for the instance. Then 
  the {\it characteristic function} $f_{\Char}$ for this instance is defined as follows. 
\begin{myitemize}
\item[--] If there is some 
  $P_j \in \Selected_{(Z,\iter)}$ such that $\displaystyle c^{(j)}_{(Z, \iter)} - \sum_{P_k \in \Selected_{(Z', \iter)}} d^{(jk)}_{(Z, \iter)} \neq 0$, then $f_{\Char}(\iter) \defined (P_j, P_k)$,
   where $P_k$ is the smallest-indexed party from $\PartySet \setminus \{P_j \}$.\footnote{If there are multiple parties $P_j$ satisfying this condition, then we consider
   the $P_j$ with the smallest index.}
\item[--] Else, if there is some 
 $P_j \in \Selected_{(Z',\iter)}$ such that $\displaystyle c^{(j)}_{(Z', \iter)} - \sum_{P_k \in \Selected_{(Z, \iter)}} e^{(jk)}_{(Z', \iter)} \neq 0$, then $f_{\Char}(\iter) = (P_k, P_j)$,
  where $P_k$ is the smallest-indexed party from $\PartySet \setminus \{P_j \}$.\footnote{If there are multiple parties $P_j$ satisfying this condition, then we consider
   the $P_j$ with the smallest index.}
\item[--] Else, $f_{\Char}(\iter) \defined (P_j, P_k)$, where $(P_j,P_k)$ is a conflicting-pair of parties, corresponding to the $\MultLCE$ instance.\footnote{If there are multiple
 conflicting-pairs, then we consider the one having parties with the smallest indices.}
\end{myitemize}
\end{definition}
Before we proceed, we would like to stress that if $f_{\Char}$ is defined either with respect to the first or the second condition, then party $P_k$ in the pair
 $(P_j, P_k)$ serves as
   a ``dummy" party. This is just for notational convenience to ensure uniformity so that $f_{\Char}$ is {\it always} a pair of parties irrespective of whether
  it is defined with respect to the first, second or third condition.
 
From the definition, it is easy to see that if $f_\Char(\iter) = (P_j, P_k)$, then at least one party among $P_j, P_k$ is 
 {\it maliciously-corrupt}. We next claim that the characteristic function is well defined.
\begin{claim}
\label{MultCIWellDefinedCharacterization}
Let $\MultLCE(\PartySet, \AdvStruct, \SharingSpec, [a], [b], \iter)$ be an instance of $\MultLCE$ 
 which is not successful and let $Z, Z'$ be the corresponding conflicting-sets for the instance. Then $f_\Char(\iter)$ is well defined.
\end{claim}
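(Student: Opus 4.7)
The plan is to invoke Claim \ref{MultLCEFailureReasons} directly, since that claim has already done the main combinatorial work. Specifically, Claim \ref{MultLCEFailureReasons} guarantees that whenever an instance of $\MultLCE$ is not successful with conflicting-sets $(Z, Z')$, at least one of the following three conditions is satisfied: (i) there exists $P_j \in \Selected_{(Z,\iter)}$ with $c^{(j)}_{(Z,\iter)} - \sum_{P_k \in \Selected_{(Z',\iter)}} d^{(jk)}_{(Z,\iter)} \neq 0$, (ii) there exists $P_j \in \Selected_{(Z',\iter)}$ with $c^{(j)}_{(Z',\iter)} - \sum_{P_k \in \Selected_{(Z,\iter)}} e^{(jk)}_{(Z',\iter)} \neq 0$, or (iii) there is a conflicting-pair of parties in $\Selected_{(Z,\iter)} \cup \Selected_{(Z',\iter)}$.

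First I would observe that these three cases are exactly the three branches in Definition \ref{MultCICharacterization} (in the same order), and the definition uses an ``if--else if--else'' structure. So to argue well-definedness, I need to verify two things: (a) at least one branch is applicable, and (b) within each branch, the output is uniquely determined. Part (a) is immediate from Claim \ref{MultLCEFailureReasons}, since at least one of the three conditions holds. For part (b), the definition explicitly breaks all ties by choosing smallest-indexed parties: in the first two branches a unique $P_j$ is obtained by taking the smallest-indexed party satisfying the respective non-zero condition, and the dummy $P_k$ is also chosen as the smallest-indexed party from $\PartySet \setminus \{P_j\}$; in the third branch, the conflicting-pair $(P_j, P_k)$ is selected as the pair with the smallest-indexed components (for instance, lexicographically smallest).

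Finally, I would note that the conflicting-sets $(Z, Z')$ used in the definition are themselves uniquely determined by the protocol $\MultLCE$: the set $Z'$ is publicly fixed at the start of the protocol, and $Z$ is defined as the smallest-indexed set from $\AdvStructure$ with $c_{(Z,\iter)} - c_{(Z',\iter)} \neq 0$ (see Claim \ref{claim:MultLCETermination} and the ``Waiting-List and Cheater Identification in Case of Failure" step in Fig \ref{fig:MultLCE}). Hence all quantities referenced in Definition \ref{MultCICharacterization} are unambiguously fixed, and $f_\Char(\iter)$ evaluates to a unique pair of parties.

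I do not expect any real obstacle here: the substantive content is encapsulated in Claim \ref{MultLCEFailureReasons}, and this claim is essentially a bookkeeping statement verifying that the case analysis in the definition is exhaustive and that each case produces a uniquely specified output.
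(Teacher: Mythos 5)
Your proposal is correct and takes essentially the same route as the paper, whose proof is simply a one-line invocation of Claim~\ref{MultLCEFailureReasons}. The additional remarks about tie-breaking and the uniqueness of the conflicting-sets are reasonable elaborations but add nothing the paper considers non-obvious.
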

\begin{proof}
Proof follows from Claim \ref{MultLCEFailureReasons}.
\end{proof}

We next prove an important property by showing that if $f_\Char(\iter) = (P_j, P_k)$ for some instance
  of $\MultLCE$ which is not successful, and if both $P_j$ and $P_k$
   have been removed from the waiting-list of some honest party for that instance, 
   then the corrupt party(ies) among $P_j, P_k$ will eventually be discarded by {\it all} honest parties.
  \begin{claim}
  \label{claim:MultLCECheaterIdentification}
  Let $\MultLCE(\PartySet, \AdvStruct, \SharingSpec, [a], [b], \iter)$ be an instance of $\MultLCE$  which  is not successful and let
   $Z, Z'$ be the corresponding conflicting-sets for the instance. Moreover. let $f_\Char(\iter) = (P_j, P_k)$. If both
  $P_j$ and $P_k$ are removed from the set $\Waitlist^{(h)}_{\iter}$ of {\it any} honest party $P_h$, then
   the corrupt party(ies) among $P_j, P_k$ will eventually be included
        in the set $\LocalDiscarded^{(i)}_{\iter}$ of {\it every} honest $P_i$.        
  \end{claim}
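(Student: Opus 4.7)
The plan is to argue in three steps: (i) that removal of $P_j$ (and $P_k$) from one honest party's waiting list eventually propagates to every honest party; (ii) that once both parties are removed, every reconstruction in the cheater-identification phase terminates with the correct value at every honest party; and (iii) that the reconstructed values necessarily expose the corrupt party(ies) among $\{P_j,P_k\}$, and do so uniformly.

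For step (i), I would use the fact that $P_h$ could only have removed $P_j$ from $\Waitlist^{(h)}_{\iter}$ after receiving outputs from each relevant $\FVSS$ instance in which $P_j$ acts as dealer: the instances producing $\{d^{(jk)}_{(Z,\iter)}\}_{P_k \in \Selected_{(Z',\iter)}}$ if $P_j \in \Selected_{(Z,\iter)}$, or the instances producing $\{e^{(jk)}_{(Z',\iter)}\}_{P_k \in \Selected_{(Z,\iter)}}$ if $P_j \in \Selected_{(Z',\iter)}$ (or both). By the eventual-delivery guarantee of $\FVSS$ in the asynchronous ideal world, every other honest party will also eventually obtain the outputs of those same $\FVSS$ instances, and therefore will eventually remove $P_j$ from its own waiting list. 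The identical argument applies to $P_k$.

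For step (ii), once both $P_j$ and $P_k$ are removed from every honest party's waiting list, all the requisite sharings $[c^{(j)}_{(Z,\iter)}],[c^{(k)}_{(Z',\iter)}],[d^{(j\ell)}_{(Z,\iter)}],[e^{(k\ell)}_{(Z',\iter)}]$ are held at the honest parties. Since $\AdvStructure$ satisfies $\Q^{(4)}(\PartySet,\AdvStructure)$, $\ShareSpec$ satisfies $\Q^{(2)}(\ShareSpec,\AdvStructure)$, so Lemmas \ref{lemma:PerRecShare} and \ref{lemma:PerRec} guarantee that every invocation of $\PiPerRec$ and $\PiPerRecShare$ triggered in the cheater-identification phase eventually terminates with the unique correct field element at every honest party. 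For step (iii) I would then case-analyze on which clause of Definition \ref{MultCICharacterization} defines $f_\Char(\iter)=(P_j,P_k)$. In the first two clauses, the discrepancy is an imbalance in a single summand-sharing party's partition, so the reconstruction of either $c^{(j)}_{(Z,\iter)} - \sum_{P_\ell \in \Selected_{(Z',\iter)}} d^{(j\ell)}_{(Z,\iter)}$ or $c^{(j)}_{(Z',\iter)} - \sum_{P_\ell \in \Selected_{(Z,\iter)}} e^{(j\ell)}_{(Z',\iter)}$ yields the same nonzero value at every honest $P_i$, so $P_j$ (which is necessarily corrupt by Claim \ref{MultLCEFailureReasons}) gets added to $\LocalDiscarded^{(i)}_{\iter}$ by every honest $P_i$. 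In the third clause $(P_j,P_k)$ is a conflicting-pair, so $d^{(jk)}_{(Z,\iter)} - e^{(kj)}_{(Z',\iter)}$ reconstructs to a nonzero value, triggering the additional reconstruction of the individual values $d^{(jk)}_{(Z,\iter)},e^{(kj)}_{(Z',\iter)}$ together with every share $[a]_p,[b]_q$ for $(p,q)\in \Products^{(j)}_{(Z,\iter)}\cap \Products^{(k)}_{(Z',\iter)}$; each honest $P_i$ then locally computes the true value $\sum_{(p,q)} [a]_p [b]_q$ and deterministically identifies whichever of $P_j,P_k$ (at least one of which is corrupt) shared an inconsistent value.

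The main obstacle I anticipate is in the third clause: guaranteeing \emph{uniformity}, namely that every honest $P_i$ ends up adding exactly the same party (or parties) to $\LocalDiscarded^{(i)}_{\iter}$. This ultimately reduces to the observation that all the reconstructions in step (ii) deliver the same field element at every honest party, so the identification rule is a deterministic function of publicly reconstructed data and hence the local updates coincide. A secondary subtlety is that the reconstructions of $[a]_p$ and $[b]_q$ in the third clause leak these shares, but this is harmless for the statement of the claim, which concerns only the eventual population of $\LocalDiscarded^{(i)}_{\iter}$; privacy is handled elsewhere.
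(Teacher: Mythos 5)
Your proposal is correct and follows essentially the same argument as the paper: in each case you derive, from the removal of $P_j$ (and, in the conflicting-pair case, $P_k$) from $\Waitlist^{(h)}_{\iter}$, that the relevant $\FVSS$ instances have been invoked and will eventually deliver shares to all honest parties; you then observe that the resulting $\PiPerRec$/$\PiPerRecShare$ reconstructions terminate identically at every honest party under the $\Q^{(2)}(\ShareSpec,\AdvStructure)$ condition and expose a corrupt party uniformly. The only cosmetic point worth noting is that in the first two clauses of Definition~\ref{MultCICharacterization} the party $P_k$ is a dummy (it need not be in $\Selected_{(Z,\iter)}\cup\Selected_{(Z',\iter)}$), so ``removal of $P_k$'' is vacuous there and only $P_j$'s removal is actually used, which your case analysis implicitly respects.
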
 
  \begin{proof}
  Let $f_\Char(\iter) = (P_j, P_k)$, where without loss of generality,
   $P_j \in \Selected_{(Z,\iter)}$ and $P_k \in \Selected_{(Z',\iter)}$.  
   From the definition of characteristic function (Def \ref{MultCICharacterization}), one of the following holds for $P_j$ and $P_k$:
     \begin{myitemize}
     \item[--] {\it $(P_j, P_k)$ constitutes a conflicting-pair}: In this case, $d^{(jk)}_{(Z,\iter)} \neq e^{(kj)}_{(Z',\iter)}$. 
     Since the {\it honest} $P_h$ has removed both $P_j$ and $P_k$ from $\Waitlist^{(h)}_\iter$, from the protocol steps, 
      the outputs $(\Share, \sid_{j, k, \iter, Z},P_j, \allowbreak \{[d^{(jk)}_{(Z,\iter)}]_q\}_{P_h \in S_q})$ and 
      $(\Share, \sid_{k, j, \iter, Z'}, P_k,\{[e^{(kj)}_{(Z',\iter)}]_q\}_{P_h \in S_q})$ have been obtained by $P_h$ from
       $\FVSS$ with $\sid_{j, k, \iter, Z}$ and $\sid_{k, j, \iter, Z'}$ respectively.
       Consequently, each honest party will eventually receive its respective share corresponding to 
       $[d^{(jk)}_{(Z,\iter)}]$ and $[e^{(kj)}_{(Z',\iter)}]$ from the corresponding $\FVSS$ instances. 
       Hence, each honest  party will be able to locally compute its share of $d^{(jk)}_{(Z,\iter)} - e^{(kj)}_{(Z',\iter)}$ and
        participate in the instance of $\PiPerRec$ to reconstruct the difference. Since $\ShareSpec$ satisfies the $\Q^{(3)}(\ShareSpec, \AdvStructure)$
        condition, all honest parties will eventually reconstruct $d^{(jk)}_{(Z,\iter)} - e^{(kj)}_{(Z',\iter)}$ and find that the difference is not $0$.
        Consequently, the honest parties will 
        participate in appropriate instances of $\PiPerRec$ to reconstruct the values
         $d^{(jk)}_{(Z,\iter)}$, $e^{(kj)}_{(Z',\iter)}$, and instances of $\PiPerRecShare$ to reconstruct
          the shares $[a]_p$ and $[b]_q$, such that 
         $(p,q) \in \Products^{(j)}_{(Z,\iter)} \cap \Products^{(k)}_{(Z',\iter)}$. Now, 
         either $d^{(jk)}_{(Z,\iter)}$ or $e^{(kj)}_{(Z',\iter)}$ will not be equal to $\displaystyle \sum_{(p, q) \in \Products^{(j)}_{(Z, \iter)} \cap \Products^{(k)}_{(Z', \iter)}} [a]_p [b]_q$,
         as otherwise $d^{(jk)}_{(Z,\iter)} = e^{(kj)}_{(Z',\iter)}$ will hold, which is a contradiction. 
         Consequently, every honest party $P_i$ will eventually add the corrupt party(ies) among $P_j, P_k$ to $\LocalDiscarded^{(i)}_{\iter}$.
       \item[--] {\it The condition $\displaystyle c^{(j)}_{(Z, \iter)} - \sum_{P_k \in \Selected_{(Z', \iter)}} d^{(jk)}_{(Z, \iter)} \neq 0$ holds}: 
       Since the {\it honest} $P_h$ has removed $P_j$  from $\Waitlist^{(h)}_\iter$, then from the protocol steps, corresponding to every
       $P_k \in \Selected_{(Z', \iter)}$, party $P_h$ has received the output 
       $(\Share, \sid_{j, k, \iter, Z},P_j,  \{[d^{(jk)}_{(Z,\iter)}]_q\}_{P_h \in S_q})$ from  $\FVSS$ with $\sid_{j, k, \iter, Z}$.
       Consequently, for every $P_k \in \Selected_{(Z', \iter)}$,
       all honest parties eventually receive their respective shares corresponding to $[d^{(jk)}_{(Z,\iter)}]$ from the respective $\FVSS$ instances. 
      In the protocol, all honest parties participate in an instance of $\PiPerRec$ with their respective shares corresponding to 
       $\displaystyle [c^{(j)}_{(Z, \iter)}] - \sum_{P_k \in \Selected_{(Z', \iter)}} [d^{(jk)}_{(Z, \iter)}]$ to reconstruct the difference 
       $\displaystyle c^{(j)}_{(Z, \iter)} - \sum_{P_k \in \Selected_{(Z', \iter)}} d^{(jk)}_{(Z, \iter)}$. Now since the difference is not $0$,
       each honest $P_i$ will eventually include the corrupt $P_j$ to $\LocalDiscarded^{(i)}_{\iter}$.     
       \item[--] {\it The condition $\displaystyle c^{(k)}_{(Z', \iter)} - \sum_{P_j \in \Selected_{(Z, \iter)}} e^{(kj)}_{(Z', \iter)} \neq 0$ holds}: 
       This case is symmetric to the previous case and using a similar argument as above, we can conclude that 
         each honest $P_i$ will eventually include the corrupt $P_k$ to $\LocalDiscarded^{(i)}_{\iter}$.    
     \end{myitemize}        
  \end{proof}
  
  We next claim that the adversary does not learn anything additional about $a$ and $b$ in the protocol. 
  \begin{claim}
  \label{claim:MultLCEPrivacy}
  In $\MultLCE$, $\Adv$ does not not learn any additional information about $a$ and $b$.
  \end{claim}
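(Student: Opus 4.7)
The plan is to walk through every step of $\MultLCE$ in which messages reach $\Adv$, and argue that each such message is either a random encoding delivered through $\FVSS$, or a public value already determined by $\Adv$'s own corruption choices, hence independent of the honest-only shares $\{[a]_p, [b]_p\}_{S_p \cap Z^{\star} = \emptyset}$, where $Z^{\star}$ denotes the set of corrupt parties. The argument splits naturally into four stages tracking the structure of the protocol.

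First, the underlying calls $\OptMult(\PartySet, \AdvStruct, \SharingSpec, [a], [b], Z, \iter)$ for each $Z \in \AdvStructure$ already satisfy the desired privacy by Claim \ref{claim:OptMultPrivacy}, and can be dismissed immediately. Next, for the public reconstructions of $c_{(Z, \iter)} - c_{(Z', \iter)}$, I would exploit the fact that the hop-based selection in $\OptMult$ partitions all ordered pairs $(p, q) \in \{1, \ldots, h\}^2$ among the sets $\{\Products^{(j)}_{(Z, \iter)}\}_{P_j \in \Selected_{(Z, \iter)}}$ in a disjoint and exhaustive fashion (established in Claim \ref{claim:OptMultCorrectness}). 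Writing $c_{(Z, \iter)} = ab + \Delta_Z$, where $\Delta_Z$ is the total deviation introduced by the corrupt summand-sharing parties of $\Selected_{(Z, \iter)}$, the reconstructed difference collapses to $\Delta_Z - \Delta_{Z'}$, a quantity chosen by $\Adv$ and therefore already known to $\Adv$.

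For the cheater-identification phase, the honest summand-sharing parties invoke $\FVSS$ only with independently sampled random vectors of shares of $d^{(ij)}_{(Z, \iter)}$ or $e^{(ij)}_{(Z', \iter)}$. Since $\ShareSpec$ is $\AdvStructure$-private, the shares delivered by $\FVSS$ to the corrupt parties are uniformly distributed and statistically independent of the underlying value. The subsequent $\PiPerRec$ reconstructions of $c^{(j)}_{(Z, \iter)} - \sum_k d^{(jk)}_{(Z, \iter)}$, of $c^{(j)}_{(Z', \iter)} - \sum_k e^{(jk)}_{(Z', \iter)}$, and of $d^{(jk)}_{(Z, \iter)} - e^{(kj)}_{(Z', \iter)}$, each evaluate to $0$ whenever the concerned party (or pair) is honest, and otherwise to a quantity determined entirely by values that $\Adv$ itself supplied through corrupt parties.

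The main obstacle, and the step that requires the most care, is the branch triggered by $d^{(jk)}_{(Z, \iter)} - e^{(kj)}_{(Z', \iter)} \neq 0$, in which the parties publicly reconstruct not only $d^{(jk)}_{(Z, \iter)}$ and $e^{(kj)}_{(Z', \iter)}$ but also the individual shares $[a]_p, [b]_q$ for every $(p, q) \in \Products^{(j)}_{(Z, \iter)} \cap \Products^{(k)}_{(Z', \iter)}$. The key observation I would use is that this branch is entered only when at least one of $P_j, P_k$ is maliciously corrupt: two honest summand-sharing parties would both compute $\sum_{(p, q) \in \Products^{(j)}_{(Z, \iter)} \cap \Products^{(k)}_{(Z', \iter)}} [a]_p [b]_q$ correctly, so the reconstructed difference would vanish. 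By the very definition of $\Products^{(j)}_{(Z, \iter)}$ and $\Products^{(k)}_{(Z', \iter)}$, both $P_j$ and $P_k$ lie in $S_p \cap S_q$ for every reconstructed pair $(p, q)$; hence the corrupt party among $\{P_j, P_k\}$ already possesses $[a]_p$ and $[b]_q$, and the public reconstruction adds nothing to $\Adv$'s view. Combining the four stages then closes the claim.
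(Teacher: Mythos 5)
Your proposal is correct and follows essentially the same route as the paper: dispose of the $\OptMult$ instances via Claim~\ref{claim:OptMultPrivacy}, observe that the reconstructed pairwise differences are determined by the adversary's own deviations, invoke $\AdvStructure$-privacy of $\ShareSpec$ for the $\FVSS$-shared partitions $d^{(ij)}$ and $e^{(ij)}$, note that the reconstructed differences in the local-discard tests are zero for honest parties and adversary-chosen otherwise, and finally observe that the shares $[a]_p, [b]_q$ reconstructed in the conflict branch involve only $(p,q) \in \Products^{(j)}_{(Z,\iter)} \cap \Products^{(k)}_{(Z',\iter)}$, so the corrupt party among $\{P_j,P_k\}$ already holds them. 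Your explicit decomposition $c_{(Z,\iter)} = ab + \Delta_Z$ via the disjoint-and-exhaustive partition is a slightly sharper way of stating the middle step than the paper's informal ``depends completely upon the inputs of the adversary,'' but it is the same idea.
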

  \begin{proof}
  From Claim \ref{claim:OptMultPrivacy}, $\Adv$ does not learn any additional information about $a$ and $b$ from the instances of $\OptMult$ executed in
   $\MultLCE$. Corresponding to every $Z \in \AdvStructure$, $\Adv$ learns the difference  $c_{(Z, \iter)} - c_{(Z', \iter)}$ which are all $0$,  
    unless the adversary cheats. In case of cheating, the reconstructed differences $c_{(Z, \iter)} - c_{(Z', \iter)}$
    depend completely upon
    the inputs of the adversary and hence learning these differences does not add anything additional about $a$ and $b$ to the adversary's view.
    Next, corresponding to every {\it honest} $P_j \in \Selected_{(Z, \iter)} \cup \Selected_{(Z', \iter)}$, the shares corresponding to 
    $d^{(jk)}_{(Z,\iter)}$ or $e^{(kj)}_{(Z',\iter)}$ learnt by $\Adv$ will be distributed uniformly, since $\ShareSpec$ is $\AdvStructure$-private
    and hence, these shares do not add anything additional about $a$ and $b$ to the adversary's view. 
    Moreover, for every {\it honest} $P_j \in \Selected_{(Z, \iter)} \cup \Selected_{(Z', \iter)}$, 
    $\Adv$ will know beforehand that the differences
     $\displaystyle c^{(j)}_{(Z, \iter)} - \sum_{P_k \in \Selected_{(Z', \iter)}} d^{(jk)}_{(Z, \iter)}$ as well as
     $\displaystyle c^{(j)}_{(Z', \iter)} - \sum_{P_k \in \Selected_{(Z, \iter)}} e^{(jk)}_{(Z', \iter)}$ will be $0$ and hence, learning these differences does not add anything
     additional about $a$ and $b$ to adversary's view.
     On the other hand, for every {\it corrupt} $P_j \in \Selected_{(Z, \iter)} \cup \Selected_{(Z', \iter)}$, the above differences
     completely depend upon the adversary's inputs and hence, reveal no additional information. Finally, if for any ordered pair of parties $(P_j, P_k)$, 
     the condition $d^{(jk)}_{(Z, \iter)} \neq e^{(kj)}_{(Z', \iter)}$ holds, then at least one among $P_j$ and $P_k$ is {\it corrupt}.
     Consequently, the shares $[a]_p$ and $[b]_q$ where $(p, q) \in \Products^{(j)}_{(Z, \iter)} \cap \Products^{(k)}_{(Z', \iter)}$
     reconstructed in this case are already known to the adversary, and do not add anything new to the view of the adversary regarding $a$
     and $b$. 
  \end{proof}

  \begin{claim}
  \label{claim:MultLCECommunication}
  Protocol $\MultLCE$  needs $\Order(|\AdvStructure| \cdot n^2)$ calls to $\FVSS$ and $\FABA$ and incurs an additional communication of 
      $\Order(|\AdvStructure|^2 \cdot n^2 \log{|\F|} + |\AdvStructure| \cdot n^4 \log{|\F|})$ bits. 
  \end{claim}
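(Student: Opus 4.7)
The proof will be a bookkeeping argument that accounts separately for (a) the executions of $\OptMult$, (b) the secret-sharings performed during the cheater-identification phase, and (c) the reconstructions performed via $\PiPerRec$ and $\PiPerRecShare$. The cost of $\OptMult$ and of $\PiPerRec$ / $\PiPerRecShare$ are already established in Lemma~\ref{lemma:OptMult} and Lemma~\ref{lemma:PerRec} / Lemma~\ref{lemma:PerRecShare} respectively, so the work is simply to multiply each by the appropriate number of invocations.

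First I would handle the calls to $\FVSS$ and $\FABA$. In the ``Running Optimistic Multiplication'' step, one instance of $\OptMult(\PartySet, \AdvStruct, \SharingSpec, [a], [b], Z, \iter)$ is executed for each $Z \in \AdvStructure$, and by Lemma~\ref{lemma:OptMult} each such instance makes $\Order(n^2)$ calls to $\FVSS$ and $\Order(n^2)$ calls to $\FABA$. Summing over $Z \in \AdvStructure$ gives $\Order(|\AdvStructure| \cdot n^2)$ calls to each functionality from this step. In the cheater-identification phase that is triggered only upon failure, each $P_i \in \Selected_{(Z,\iter)} \cup \Selected_{(Z',\iter)}$ invokes $\FVSS$ once per partner in the conflicting selected set to share a $d^{(ij)}_{(Z,\iter)}$ or $e^{(ij)}_{(Z',\iter)}$ value; since both selected sets have size at most $n$, the total number of such calls is $\Order(n^2)$, which is absorbed by the previous bound. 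No additional $\FABA$ calls are made in this phase. Hence the overall totals are $\Order(|\AdvStructure| \cdot n^2)$ for both $\FVSS$ and $\FABA$, as claimed.

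Next I would bound the additional communication (outside of $\FVSS$ and $\FABA$). The only source of such communication is the invocations of $\PiPerRec$ and $\PiPerRecShare$. There are two places where reconstructions occur. First, to check pair-wise differences the parties run one instance of $\PiPerRec$ for each $Z \in \AdvStructure$ to reconstruct $c_{(Z,\iter)} - c_{(Z',\iter)}$; by Lemma~\ref{lemma:PerRec} each such $\PiPerRec$ costs $\Order(|\AdvStructure| \cdot n^2 \log|\F|)$ bits, giving $\Order(|\AdvStructure|^2 \cdot n^2 \log|\F|)$ bits in total. Second, in the cheater-identification phase the parties run $\PiPerRec$ once per $P_j \in \Selected_{(Z,\iter)}$ (to reconstruct $c^{(j)}_{(Z,\iter)} - \sum_k d^{(jk)}_{(Z,\iter)}$), once per $P_j \in \Selected_{(Z',\iter)}$ (symmetric), and once per ordered pair $(P_j,P_k) \in \Selected_{(Z,\iter)} \times \Selected_{(Z',\iter)}$ (to reconstruct $d^{(jk)}_{(Z,\iter)} - e^{(kj)}_{(Z',\iter)}$); the dominant term is the pairwise one, giving $\Order(n^2)$ reconstructions. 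Using Lemma~\ref{lemma:PerRec} each contributes $\Order(|\AdvStructure| \cdot n^2 \log|\F|)$ bits, for a total of $\Order(|\AdvStructure| \cdot n^4 \log|\F|)$ bits. The $\PiPerRecShare$ invocations needed to reconstruct the individual $[a]_p,[b]_q$ values for guilt assignment contribute only $\Order(n^2 \log|\F|)$ per pair and are dominated by the $\PiPerRec$ cost. Summing the two sources yields $\Order(|\AdvStructure|^2 \cdot n^2 \log|\F| + |\AdvStructure| \cdot n^4 \log|\F|)$ bits.

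The argument is almost purely accounting; the only mildly delicate point is making sure that $|\Selected_{(Z,\iter)}|, |\Selected_{(Z',\iter)}| = \Order(n)$ so that the number of $(P_j,P_k)$ pairs in the cheater-identification phase is $\Order(n^2)$, which follows from the fact (already used in Claim~\ref{claim:OptMultComplexity}) that each hop of $\OptMult$ adds a distinct party to the $\Selected$ set and there are at most $n$ hops. Everything else reduces to invoking the cited lemmas and summing.
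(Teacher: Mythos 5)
Your proposal is correct and takes essentially the same bookkeeping route as the paper's proof: count $\Order(n^2)$ calls to $\FVSS$/$\FABA$ per $\OptMult$ instance times $|\AdvStructure|$ instances, $|\AdvStructure|$ reconstructions via $\PiPerRec$ for the pairwise difference checks at $\Order(|\AdvStructure| \cdot n^2 \log|\F|)$ each, and $\Order(n^2)$ reconstructions via $\PiPerRec$ in the cheater-identification phase at the same per-instance cost. The only slight imprecision is the phrase "$\Order(n^2 \log|\F|)$ per pair" for the $\PiPerRecShare$ calls — a single $(P_j,P_k)$ pair may require reconstructing several $[a]_p,[b]_q$ shares — but since at most $\Order(|\AdvStructure|)$ distinct shares ever need reconstruction, the total $\Order(|\AdvStructure| \cdot n^2 \log|\F|)$ is indeed dominated, so the conclusion stands.
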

  \begin{proof}
  In the protocol, corresponding to each $Z \in \AdvStructure$, an instance of $\OptMult$ is executed. From Lemma \ref{lemma:OptMult}, this will require
  $\Order(|\AdvStructure| \cdot n^2)$ calls to $\FVSS$ and $\FABA$. There are $\Order(|\AdvStructure|)$ instances of $\PiPerRec$ to reconstruct
  $\Order(|\AdvStructure|)$ difference values for checking whether the instance is successful or not, incurring a communication of 
  $\Order(|\AdvStructure|^2 \cdot n^2 \log{|\F|})$ bits. If the instance is not successful, then there are $\Order(n^2)$ calls to $\FVSS$ to share
  various summand-sum partitions. To check whether the correct partitions are shared, $\Order(n^2)$ values need to be publicly reconstructed through
   these many instances of $\PiPerRec$, which incurs a communication of $\Order(|\AdvStructure| \cdot n^4 \log{|\F|})$ bits.
  \end{proof}

  The proof of Lemma \ref{lemma:MultLCE} now follows from Claims \ref{claim:MultLCETermination}-\ref{claim:MultLCECommunication}.\\~\\
 \noindent {\bf Lemma \ref{lemma:MultLCE}.} 
 {\it 
  Let $\AdvStructure$ satisfy the $\Q^{(4)}(\PartySet, \AdvStructure)$ condition and let 
  $\ShareSpec = (S_1, \ldots, S_h) = \{ \PartySet \setminus Z | Z \in \AdvStructure\}$. Moreover, 
   let all honest parties participate in the instance $\MultLCE(\PartySet, \AdvStruct, \SharingSpec, [a], [b], \iter)$.
   Then the following hold.
 \begin{myitemize}
 \item[--] The instance  will eventually be deemed to succeed or fail by the honest parties, where for a successful
  instance, the parties output a sharing of $ab$.
 \item[--] If the instance is not successful,
   then the honest parties will agree on a pair $Z, Z' \in \AdvStructure$ such that 
  $c_{(Z, \iter)} - c_{(Z', \iter)} \neq 0$. Moreover, all honest parties present in the 
  $ \Waitlist^{(i)}_{\iter}$ set of any honest party $P_i$ will eventually be removed and no honest party
    is ever included in the $\LocalDiscarded^{(i)}_{\iter}$ set
   of any honest $P_i$.
   Furthermore, there will be a pair of parties $P_j, P_k$ from
  $\Selected_{(Z, \iter)} \cup \Selected_{(Z', \iter)}$, with at least one of them being maliciously-corrupt, such that if both
  $P_j$ and $P_k$ are removed from the set $\Waitlist^{(h)}_{\iter}$ of any honest party $P_h$, then eventually
   the corrupt party(ies) among $P_j, P_k$ will be included
        in the set $\LocalDiscarded^{(i)}_{\iter}$ of every honest $P_i$.        
    \item[--] In the protocol, $\Adv$ does not learn any additional information $a$ and $b$.
      \item[--] The protocol needs $\Order(|\AdvStructure| \cdot n^2)$ calls to $\FVSS$ and $\FABA$ and incurs an additional communication of 
      $\Order(|\AdvStructure|^2 \cdot n^2 \log{|\F|} + |\AdvStructure| \cdot n^4 \log{|\F|})$ bits.
   \end{myitemize}   
   }
 \paragraph{$\MultLCE$ for Inputs $\{([a^{(\ell)}], [b^{(\ell)}] ) \}_{\ell = 1, \ldots, M}$:} The modifications to the protocol $\MultLCE$ for handling
  $M$ pairs of secret-shared inputs is simple. The
   parties now run
  instances of $\OptMult$ handling $M$ pairs of inputs. Corresponding to every pair $(Z, Z')$, the parties reconstruct $M$ differences. If
   any of these differences is non-zero, the parties 
   focus on the smallest-indexed $([a^{(\ell)}], [b^{(\ell)}])$ such that $c^{(\ell)}_{(Z, \curr)} - c^{(\ell)}_{(Z', \curr)} \neq 0$. The 
   parties then proceed to the cheater identification phase with respect to $(Z, Z')$ and $([a^{(\ell)}], [b^{(\ell)}])$. 
   The protocol will require $\Order(M \cdot |\AdvStructure| \cdot n^2)$ calls to $\FVSS$, $\Order(|\AdvStructure| \cdot n^2)$ calls to $\FABA$
   and additionally communicates $\Order(M \cdot |\AdvStructure|^2 \cdot n^2 \log{|\F|} + |\AdvStructure| \cdot n^4 \log{|\F|})$ bits. 
\subsection{Properties of the Multiplication Protocol $\MultGCE$}
In this section, we formally prove the properties of the protocol $\MultGCE$ (see Fig \ref{fig:MultGCE} for the formal description of the protocol). 
 While proving these properties, we will assume that  $\AdvStructure$ satisfies the $\Q^{(4)}(\PartySet, \AdvStructure)$ condition.
  This further implies that the sharing specification $\ShareSpec = (S_1, \ldots, S_h) \defined \{ \PartySet \setminus Z | Z \in \AdvStructure\}$
  satisfies the $\Q^{(3)}(\ShareSpec, \AdvStructure)$ condition. 
 
We begin with the definition of a {\it successful iteration} in protocol $\MultGCE$.
\begin{definition}[{\bf Successful Iteration}]
\label{def:SuccessfulIteration}
In protocol $\MultGCE$, an iteration $\iter$ is called {\it successful}, if every honest $P_i$ sets $\flag^{(i)}_{\iter} = 0$ during the 
 corresponding instance $\MultLCE(\PartySet, \AdvStruct, \SharingSpec,  [a], [b],  \iter)$ of $\MultLCE$.
\end{definition}
  We next claim that during each iteration of the protocol $\MultGCE$, the honest parties will know whether the iteration is successful or not.
  \begin{claim}
  \label{claim:MultGCEIteration}
  For any $\iter$, if all honest parties participate in iteration number $\iter$ of the protocol $\MultGCE$ and if no honest party is ever included
   in the set $\Discarded$, then all honest parties will eventually
  agree on whether the iteration is successful or not.  
  \end{claim}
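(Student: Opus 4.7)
The plan is to prove the claim by (strong) induction on $\iter$, with the main work being to verify the preconditions of Lemma~\ref{lemma:MultLCE} for the $\MultLCE$ instance invoked in iteration $\iter$. For the base case $\iter = 1$, there are no previous iterations and $\Discarded = \emptyset$, so the hypotheses of Lemma~\ref{lemma:MultLCE} are vacuously satisfied; applying that lemma immediately yields that all honest parties eventually agree on whether iteration $1$ succeeds or fails. For the inductive step, assume the claim holds for every $\iter' < \iter$ and that, additionally, no honest party is in $\Discarded$ when iteration $\iter$ begins. By the inductive hypothesis, for every $\iter' < \iter$ the honest parties eventually agreed on whether iteration $\iter'$ succeeded. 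For a successful iteration $\iter'$, the sets $\Waitlist^{(i)}_{\iter'}$ and $\LocalDiscarded^{(i)}_{\iter'}$ of every honest $P_i$ remain the empty set with which they were initialized. For a failed iteration $\iter'$, Lemma~\ref{lemma:MultLCE} guarantees that every honest party eventually leaves $\Waitlist^{(i)}_{\iter'}$ and that no honest party is ever added to $\LocalDiscarded^{(i)}_{\iter'}$. Combined with the assumption that no honest party is in $\Discarded$, the preconditions of Lemma~\ref{lemma:MultLCE} for iteration $\iter$ hold, so the honest parties eventually agree on whether iteration $\iter$ is successful.

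The one subtlety, and the main obstacle, is justifying the assumption that no honest party is in $\Discarded$ at the start of iteration $\iter$. This requires arguing about the ACS step triggered every $tn+1$ failed iterations, which uses $\FABA$ to add a single party to $\Discarded$. We claim that $\FABA$ can output $1$ for a party $P_j$ only if some honest party voted $1$ on $\sid_{j,\iter,k}$. Indeed, by the protocol, each honest party eventually votes on every $\FABA$ instance: an honest party either sends its own $\vote$ in the first step, or sends $(\vote,\sid_{\ell,\iter,k},0)$ after seeing some other ABA decide $1$. Hence every $\FABA$ call has participation from all honest parties; by validity of $\FABA$, if all honest voters send $0$ then the decision is $0$. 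Consequently, a decision of $1$ for $P_j$ implies that at least one honest $P_i$ sent $(\vote,\sid_{j,\iter,k},1)$, which by the protocol happens only when $P_j \in \LocalDiscarded^{(i)}_{r}$ for some $r \le \iter$. By the inductive hypothesis and Lemma~\ref{lemma:MultLCE}, the $\LocalDiscarded$ sets of honest parties across all prior iterations contain only corrupt parties; thus $P_j$ is corrupt, preserving the invariant that $\Discarded$ contains no honest party.

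It remains to argue that iteration $\iter$ actually reaches a decision in finite time even when the ACS subroutine is invoked. For each $j$, all honest parties eventually vote on $\sid_{j,\iter,k}$ (as above), so every such $\FABA$ instance eventually produces an output at every honest party. After receiving outputs for all $n$ instances, the honest parties deterministically select the smallest-indexed $P_j$ with $v_j = 1$; such a $P_j$ exists, because an iteration with $\flag^{(i)}_{\iter}=1$ corresponds to a failed $\MultLCE$ instance, and applying the inductive hypothesis together with Lemma~\ref{lemma:MultLCE} to the $tn+1$ most recent failed iterations shows that some common corrupt party is eventually added to the $\LocalDiscarded$ sets of every honest party (this is exactly the counting argument sketched in Section~\ref{sec:perfectoverview} using that there are at most $tn$ distinct conflicting-pairs across $tn+1$ failed iterations, so some pair, and hence some corrupt party in it, repeats). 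All honest parties then move to iteration $\iter+1$ in agreement, with $\Discarded$ still containing only corrupt parties. Putting everything together, the inductive step goes through and the claim follows.
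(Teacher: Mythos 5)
Your proof is correct and follows the same core approach as the paper: induction on $\iter$, invoking Lemma~\ref{lemma:MultLCE} (paired with Claim~\ref{claim:MultLCETermination}) to verify that the waiting-list and local-discarded preconditions carry over from all prior iterations. Two small remarks on presentation. First, the ``successful prior iteration'' branch of your case analysis is vacuous: the paper's proof makes explicit that if all honest parties participate in iteration $\iter$, then by the protocol steps \emph{none} of the iterations $1, \ldots, \iter-1$ was successful (a successful iteration terminates $\MultGCE$), so only the failed-iteration case arises. Second, and more substantively, the claim statement already takes as a hypothesis that no honest party is ever in $\Discarded$; your two added paragraphs --- proving that $\FABA$ puts only corrupt parties into $\Discarded$, and that the ACS subroutine terminates so iteration $\iter+1$ is reached --- re-derive facts the paper handles elsewhere (Claim~\ref{claim:MultGCECheaterIdentification}) and are out of scope for this particular claim. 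They are not wrong, but they address material not needed here and somewhat obscure the short argument. Trimming to the bare induction would match the paper's proof almost line for line.
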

  \begin{proof}
  We prove the claim through induction on $\iter$. The statement is obviously true for $\iter = 1$, since
   during the instance $\MultLCE(\PartySet, \AdvStruct, \SharingSpec,  [a], [b], 1)$, all honest parties $P_i$
   will eventually set $\flag^{(i)}_{1}$ to a common value from $\{0, 1 \}$ (follows from Lemma \ref{lemma:MultLCE}).
   Assume that the statement is true for $\iter = r$.
  Now consider $\iter = r + 1$ and let all honest parties participate in iteration number $r+1$ by invoking the instance 
  $\MultLCE(\PartySet, \AdvStruct, \SharingSpec,  [a], [b], r + 1)$. From the protocol steps, since the honest parties participate in iteration number
  $r + 1$, it implies that none of the previous $r$ iterations were successful. From Lemma \ref{lemma:MultLCE}, all honest parties from 
  the sets $\Waitlist^{(i)}_1, \ldots, \Waitlist^{(i)}_r$ will eventually be removed for every {\it honest} $P_i$. 
  Moreover, no honest party will ever be included in the sets $\LocalDiscarded^{(i)}_{1}, \ldots, \LocalDiscarded^{(i)}_{r}$.
  Furthermore, as per the lemma condition, no honest party is ever included  in the set $\Discarded$.
  It now follows from Claim \ref{claim:MultLCETermination} and Lemma \ref{lemma:MultLCE}
  that during the instance $\MultLCE(\PartySet, \AdvStruct, \SharingSpec,  [a], [b], r + 1)$,
  all honest parties $P_i$ will eventually set $\flag^{(i)}_{r + 1}$ to a common value from $\{0, 1 \}$  and learn whether the iteration is successful or not.
  \end{proof}
  
  We next claim that if any iteration of $\MultGCE$ is successful, then honest parties output $[ab]$ in that iteration.
    \begin{claim}
  \label{claim:MultGCECorrectness}
If the iteration number $\iter$ in $\MultGCE$ is successful, then honest parties output $[ab]$ during iteration number $\iter$.
  \end{claim}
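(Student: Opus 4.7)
The plan is to identify the single instance of $\OptMult$ inside the iteration that is guaranteed to produce a correct sharing, and then use the success criterion to transfer correctness to the output. Let $Z^{\star} \in \AdvStructure$ denote the actual set of corrupt parties. Inside the invocation of $\MultLCE$ in iteration $\iter$, consider the instance $\OptMult(\PartySet, \AdvStruct, \SharingSpec, [a], [b], Z^{\star}, \iter)$. By the selection rule of $\OptMult$, every party in $\Selected_{(Z^{\star}, \iter)}$ lies in $\PartySet \setminus Z^{\star}$ and is therefore honest, so no party in $\PartySet \setminus Z^{\star}$ acts maliciously in this instance. Hence Lemma \ref{lemma:OptMult} yields $c_{(Z^{\star}, \iter)} = ab$. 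To invoke Lemma \ref{lemma:OptMult} I need its preconditions, namely that no honest party lies in $\Discarded$ or in any $\LocalDiscarded^{(i)}_{\iter'}$ for $\iter' < \iter$, and that every honest party is eventually removed from each $\Waitlist^{(i)}_{\iter'}$ for $\iter' < \iter$; the latter two are guaranteed by Lemma \ref{lemma:MultLCE} applied to the failed prior iterations, and the former holds because $\MultGCE$ only adds $P_j$ to $\Discarded$ when the ACS decides $1$ for $P_j$, which in turn requires that some honest $P_i$ cast a vote of $1$, and such a vote is cast only when $P_j$ appears in some $\LocalDiscarded^{(i)}_r$, which never happens for an honest $P_j$.

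Now, since iteration $\iter$ is successful, every honest $P_i$ sets $\flag^{(i)}_{\iter} = 0$ in the underlying $\MultLCE$ instance, which by its output rule means $c_{(Z, \iter)} - c_{(Z', \iter)} = 0$ holds for every $Z \in \AdvStructure$, where $Z'$ is the publicly-fixed reference set of $\MultLCE$. Specializing to $Z = Z^{\star}$ gives $c_{(Z', \iter)} = c_{(Z^{\star}, \iter)} = ab$, and the honest parties output the sharing $\{[c_{(Z', \iter)}]_q\}_{P_i \in S_q}$ obtained during that instance, which is precisely a sharing of $ab$. The only non-routine step is justifying that no honest party has been placed in $\Discarded$ by the beginning of iteration $\iter$, and this is discharged by the inductive argument on the ACS rounds sketched above together with Lemma \ref{lemma:MultLCE}.
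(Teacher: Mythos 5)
Your proof is correct and takes essentially the same approach as the paper. The paper's proof simply observes that a successful iteration means the underlying $\MultLCE$ instance is successful and then cites Lemma~\ref{lemma:MultLCE} (first bullet) directly; you have instead unfolded that lemma's internal argument (Claim~\ref{claim:MultLCETermination}) inline — identifying the $\OptMult$ instance for $Z^{\star}$, verifying the preconditions via Lemma~\ref{lemma:MultLCE} for prior iterations and the ACS mechanism for $\Discarded$, applying Lemma~\ref{lemma:OptMult} to get $c_{(Z^{\star},\iter)} = ab$, and then using the success criterion — which is the same logical content presented one level of indirection down.
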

 \begin{proof}
 Let iteration number $\iter$ in $\MultGCE$ be successful. This implies that every honest $P_i$ sets $\flag_{\iter}^{(i)} = 0$
  during the corresponding instance $\MultLCE(\PartySet, \AdvStruct, \SharingSpec,  [a], [b],  \iter)$ of $\MultLCE$
  and hence this instance of $\MultLCE$ is successful. The proof now follows from Lemma \ref{lemma:MultLCE}.
 \end{proof} 
  
  We next prove that after every $tn + 1$ consecutive unsuccessful iterations of $\MultGCE$, a new corrupt party is globally discarded.
  \begin{claim}
  \label{claim:MultGCECheaterIdentification}
  Let $t \defined \max\{ |Z| :  Z \in \AdvStruct 	\}$.
    In $\MultGCE$, for every $k \geq 1$, if none of the iterations $(k - 1)(tn + 1) + 1, \ldots, k(tn + 1)$ is successful, then eventually, a new corrupt
   party is included in the set $\Discarded$.
  \end{claim}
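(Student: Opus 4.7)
The plan is to prove the claim via a pigeonhole argument on the characteristic function $f_\Char$ (Definition \ref{MultCICharacterization}), combined with the cheater-identification guarantee of Claim \ref{claim:MultLCECheaterIdentification} and the propagation property of $\FVSS$. First I would observe that, under the hypothesis that no honest party is (yet) in $\Discarded$, Claim \ref{claim:MultGCEIteration} ensures each of the iterations $\iter \in \{(k-1)(tn+1)+1, \ldots, k(tn+1)\}$ is eventually deemed successful or unsuccessful by the honest parties. Since by hypothesis all of these iterations are unsuccessful, Claim \ref{MultCIWellDefinedCharacterization} gives a well-defined $f_\Char(\iter)=(P_j,P_k)$ for each of them, with at least one of $P_j, P_k$ being maliciously corrupt.

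Next I would count the possible values of $f_\Char(\iter)$ across the window. Since any such pair contains a corrupt party (of which there are at most $t$) and a second party (at most $n$ choices), the number of distinct pairs is at most $tn$. The window has $tn+1$ unsuccessful iterations, so by the pigeonhole principle there exist two iterations $\iter_1 < \iter_2$ in the window with $f_\Char(\iter_1) = f_\Char(\iter_2) = (P_j, P_k)$.

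The main obstacle, and the heart of the argument, is showing that this repetition forces both $P_j$ and $P_k$ to eventually be removed from $\Waitlist^{(h)}_{\iter_1}$ of \emph{every} honest $P_h$, so that Claim \ref{claim:MultLCECheaterIdentification} can be invoked for iteration $\iter_1$. Because $f_\Char(\iter_2)=(P_j,P_k)$, each of $P_j,P_k$ that is not a dummy lies in some $\Selected_{(Z,\iter_2)}$ or $\Selected_{(Z',\iter_2)}$ set. Claim \ref{claim:OptMultFuture} then gives that each such party is absent from the waiting list of at least one honest party at iteration $\iter_1$. The key additional observation is that the criterion for removing a party from $\Waitlist^{(h)}_{\iter_1}$ depends only on outputs from $\FVSS$, which by the request-based delayed-output semantics of $\FVSS$ propagate to every honest party once any honest party has them. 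Hence once $P_j$ is removed from one honest party's waiting list for $\iter_1$, it is eventually removed from every honest party's waiting list, and similarly for $P_k$. The dummy-party cases of $f_\Char$ require only the non-dummy party to have its $\FVSS$ shares delivered, and the same propagation argument handles them.

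With both parties removed from every honest waiting list for $\iter_1$, Claim \ref{claim:MultLCECheaterIdentification} ensures that the corrupt party(ies) in $\{P_j,P_k\}$ is eventually added to $\LocalDiscarded^{(i)}_{\iter_1}$ of every honest $P_i$. Finally I would analyze the ACS step triggered after iteration $k(tn+1)$: every honest $P_i$ will send $(\vote,\sid_{j,\iter,k},1)$ (or for some equally-corrupt party) to the appropriate $\FABA$ instance, since this party lies in some $\LocalDiscarded^{(i)}_{r}$ with $r \le \iter$ and is not yet in $\Discarded$. By $\FABA$'s validity the corresponding instance outputs $1$, and the lexicographic-minimum selection rule in $\MultGCE$ makes every honest party add a \emph{common} new corrupt party to $\Discarded$, which is the desired conclusion.
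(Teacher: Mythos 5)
Your proof follows essentially the same route as the paper's: pigeonhole on the characteristic function $f_\Char$ over a window of $tn+1$ consecutive failures to obtain a repeated pair $(P_j,P_k)$, then the $\FVSS$ request-based delayed-output semantics to propagate waiting-list removal to all honest parties so that Claim~\ref{claim:MultLCECheaterIdentification} applies, and finally the $\FABA$-based ACS step to discard a common new corrupt party. Two points you state more loosely than the paper: the paper wraps the argument in a strong induction on $k$ precisely to establish (rather than assume as a hypothesis) that every party already in $\Discarded$ is corrupt, and your appeal to ``$\FABA$'s validity'' in the ACS step elides the two-case analysis in the paper (an honest party may vote $0$ for the identified cheater if it has already seen a $1$ from some other $\FABA$ instance, and the conclusion then follows from agreement on that instance, not validity).
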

  \begin{proof}
  Let $Z^{\star} \in \AdvStructure$ be the set of corrupt parties during the execution of $\MultGCE$.
   We prove the claim through strong induction over $k$. 
    \paragraph{\bf Base case: $k = 1$.} We first note that from the protocol steps, the condition $\Discarded = \emptyset$ holds for each of the iterations
    $1, \ldots, tn + 1$, during the corresponding instance of $\MultLCE$ in these iterations. Consequently, 
    from Claim \ref{claim:MultGCEIteration}, for the iterations $1, \ldots, tn + 1$, the honest parties agree on whether the iteration is successful or not.
    Let none of the iterations $1, \ldots, tn + 1$ be successful.
   This implies that for $\iter = 1, \ldots, tn + 1$, none of the instances  $\MultLCE(\PartySet, \AdvStruct, \SharingSpec,  [a], [b],  \iter)$ of $\MultLCE$
   is successful. This further implies that for every $\iter \in \{1, \ldots, tn + 1 \}$, there exists a well-defined {\it unordered} pair of parties $(P_j, P_k)$, such that
   $f_\Char(\iter) = (P_j, P_k)$, with at least one among $P_j, P_k$ being {\it maliciously-corrupt}
    (follows from Claim \ref{MultCIWellDefinedCharacterization}). Let ${\cal C}$ denote the set of all pairs of ``characteristic parties" for the first
    $tn + 1$ instances of $\MultLCE$. That is,
    \[ {\cal C} \defined \{(P_j, P_k): f_\Char(\iter) = (P_j, P_k) \; \mbox{ and } \iter \in \{1, \ldots, tn + 1 \}.  \]
    It then follows that $|{\cal C}| \leq tn$. This is because $|Z^{\star}| \leq t$, implying that there can be at most $tn$ distinct (unordered) pairs of 
    parties, where at least one of the parties in the pair is corrupt. Since the cardinality of ${\cal C}$ is smaller than the number of failed $\MultLCE$ instances,
   from the pigeonhole principle, we can conclude that
    there exist at least two iterations $r, r' \in \{1, \ldots, tn + 1 \}$ where $r < r'$, such that $f_\Char(r) = f_\Char(r') =  (P_j, P_k)$.
    
    Now, let us focus on the failed instances $\MultLCE(\PartySet, \AdvStruct, \SharingSpec,  [a], [b], r)$ and $\MultLCE(\PartySet, \AdvStruct, \SharingSpec,  [a], [b], r')$,
    corresponding to iteration number $r$ and $r'$ respectively in $\MultGCE$.
        Let $\Waitlist^{(i)}_{r}$ and $\LocalDiscarded^{(i)}_{r}$ be the dynamic sets maintained by every party $P_i$ during the instance 
    $\MultLCE(\PartySet, \AdvStruct, \allowbreak \SharingSpec,  [a], [b], r)$. Note that at the time of initializing 
    $\Waitlist^{(i)}_{r}$, both $P_j$ as well as $P_k$ will be present in $\Waitlist^{(i)}_{r}$ (this follows from the protocol steps of $\MultLCE$).
    Let $Z, Z' \in \AdvStructure$ be the {\it conflicting-sets} for the failed instance $\MultLCE(\PartySet, \AdvStruct, \SharingSpec,  [a], [b], r')$.
    From the definition of characteristic function $f_{\Char}$, it follows that $P_j, P_k \in \Selected_{(Z, r')} \cup \Selected_{(Z', r')}$.
    Hence, $P_j$ (resp.~$P_k$) is selected as a summand-sharing party in at least one of instances
    $\OptMult(\PartySet, \AdvStruct, \SharingSpec,[a], [b], Z, r')$ or $\OptMult(\PartySet, \AdvStruct, \SharingSpec,[a], [b], Z', r')$.
    This further implies that there exists at least one honest party, say $P_h$, such that {\it both} $P_j$ as well as $P_k$ are removed
    by $P_h$ from the set $\Waitlist^{(h)}_{r}$. This is because if both $P_j$ as well as $P_k$ are still present in the $\Waitlist^{(i)}_{r}$ set of 
    {\it all} honest parties during the instances $\OptMult(\PartySet, \AdvStruct, \SharingSpec,[a], [b], Z, r')$ and 
    $\OptMult(\PartySet, \AdvStruct, \SharingSpec,[a], [b], Z', r')$, then neither
    $P_j$ not $P_k$ will be selected as a summand-sharing party and hence
    $P_j, P_k \not \in \Selected_{(Z, r')} \cup \Selected_{(Z', r')}$ (follows from Claim \ref{claim:OptMultFuture}), which is a contradiction.  
    Now, if both $P_j$ and $P_k$ are removed from $\Waitlist^{(h)}_{r}$, then from Claim \ref{claim:MultLCECheaterIdentification}, 
    the corrupt party(ies) among $P_j, P_k$ will be eventually included in the $\LocalDiscarded^{(i)}_{r}$ set of {\it every} honest $P_i$.
    For simplicity and without loss of generality, let $P_j$ be the corrupt party among $P_j, P_k$.
    
    In the protocol $\MultGCE$, once the parties find that iteration number $tn + 1$ has failed, they run an instance of ACS to identify a cheating party across
    the first $tn + 1$ failed instances, where
    the parties vote for candidate cheating parties based on the contents of their local $\LocalDiscarded$ sets. To complete the proof for the base case, we need to show 
    that ACS will eventually output a common corrupt party for all the honest parties.  The proof for this is similar to that of Claim \ref{claim:OptMultACS}.
    Namely, as argued above, the corrupt party $P_j$ from the pair $(P_j, P_k)$ above will be eventually included in the $\LocalDiscarded^{(i)}_{r}$ set of {\it every} honest $P_i$.
    We first show that there will be at least one instance of $\FABA$, corresponding to which {\it all} honest parties
  eventually receive the output $1$. For this, we consider two possible cases:
  \begin{myitemize}
  \item[--] {\it At least one honest party participates with input $0$ in the $\FABA$ instance corresponding to $P_j$}: Let 
  $P_i$ be an {\it honest} party, who sends $(\vote, \sid_{j, tn + 1, 1}, 0)$ to $\FABA$ with 
  $\sid_{j, tn + 1, 1}$. Then from the steps of $\MultGCE$, it follows that there exists some $P_{\ell} \in \PartySet$, such that
  $P_i$ has received
    $(\decide, \sid_{\ell, tn + 1, 1}, 1)$ as the output from $\FABA$ with  $\sid_{\ell, tn + 1, 1}$. Hence, {\it every} honest party
    will eventually receive the output $(\decide, \sid_{\ell, tn + 1, 1}, 1)$ as the output from $\FABA$ with  $\sid_{\ell, tn + 1, 1}$.
  \item[--] {\it No honest party participates with input $0$ in the $\FABA$ instance corresponding to $P_j$}:
  In this case, {\it every} honest party will eventually send $(\vote, \sid_{j, tn + 1, 1}, 1)$ to $\FABA$ with 
  $\sid_{j, tn + 1, 1}$. This is because $P_j$ will be eventually included in the $\LocalDiscarded^{(i)}_{r}$ set of {\it every} honest $P_i$.
  Consequently, every honest party eventually receives the output $(\decide, \sid_{j, tn + 1, 1}, 1)$ from $\FABA$. 
  \end{myitemize}
  Now based on the above argument, we can further infer that 
   all honest parties will eventually participate with some input in all the
  $n$ instances of $\FABA$ invoked after the first $tn + 1$ failed iterations
  and hence, all the $n$ instances of $\FABA$  will eventually produce an output.
  Let $P_m$ be the smallest indexed party such that $\FABA$ with 
  $\sid_{m, tn + 1, 1}$ has returned the output $(\decide, \sid_{m, tn + 1, 1}, 1)$.
  Hence, all honest parties eventually include $P_m$ to $\Discarded$. 
  
  Finally, it is easy to see that $P_m \in Z^{\star}$. This is because if $P_m \not \in Z^{\star}$, then $P_m$ is {\it honest}.
  From Claim \ref{claim:MultLCEHonestPartyParticipation} it follows that $P_m$ will {\it not} be included in the
  $\LocalDiscarded^{(i)}_{\iter}$ of any honest $P_i$ for any $\iter \in \{1, \ldots, tn + 1 \}$. 
  Consequently, {\it no} honest $P_i$ will ever send 
   $(\vote, \sid_{m, tn + 1, 1}, 1)$ to $\FABA$ with 
  $\sid_{m, tn + 1, 1}$. Hence, $\FABA$ with 
  $\sid_{m, tn + 1, 1}$ will never return the output $(\decide, \sid_{m, tn + 1, 1}, 1)$, which is a contradiction.
  This completes the proof for the base case.
  \paragraph{\bf Inductive Step:} Let the statement be 
   true for $k = 1, \ldots, k'$. Now consider the case when $k = k' + 1$.
   Let $\Discarded_1, \ldots, \Discarded_{k'}$ be the set of discarded cheating parties 
   after the iteration number $tn + 1, \ldots, k'(tn + 1)$
    respectively.\footnote{Recall that in the protocol, ACS is executed after every block of $tn + 1$ failed iterations and $\Discarded$ gets updated
   through ACS. In the context of the given scenario, the parties would have run ACS after iteration numbers $tn + 1, 2(tn + 1), \ldots, (k' - 1)(tn + 1)$ and $k'(tn + 1)$
   to update the set $\Discarded$. The set $\Discarded_{k'}$ denotes the updated $\Discarded$ set after the $k'^{th}$ ACS execution.}
   From the inductive hypothesis, $\Discarded_1 \subset \Discarded_2 \subset \ldots \subset \Discarded_{k'}$
   and no honest party is present in $\Discarded_{k'}$. Consequently, from the protocol steps and from 
   Claim \ref{claim:MultGCEIteration}, for the iterations $k'(tn + 1) + 1, \ldots, (k' + 1)(tn + 1)$, the honest parties agree on whether the iteration is successful or not.
   Let none of the iterations $k'(tn + 1) + 1, \ldots, (k' + 1)(tn + 1)$ be successful.
   This implies that for $\iter = k'(tn + 1) + 1, \ldots, (k' + 1)(tn + 1)$, 
   none of the instances  $\MultLCE(\PartySet, \AdvStruct, \SharingSpec,  [a], [b],  \iter)$ of $\MultLCE$
   is successful. In the protocol, once the parties find that the iteration number $(k' + 1)(tn + 1)$ is
   not successful, they proceed to select a common cheating party through ACS. Let $\LocalDiscarded^{(i)}_{\iter}, \Waitlist^{(i)}_{\iter}$ be the dynamic sets maintained
   by each party $P_i$ across the iterations $1, \ldots, (k' + 1)(tn + 1)$. 
   
   We first note that none of the parties from $\Discarded_{k'}$ will be selected as a summand-sharing party in any of the
   underlying $\OptMult(\PartySet, \AdvStruct, \SharingSpec,[a], [b], Z, \iter)$ instances, for any $\iter \in \{ k'(tn + 1) + 1, \ldots, (k' + 1)(tn + 1)\}$
   and any $Z \in \AdvStructure$ (this follows from Claim \ref{claim:OptMultFuture}). We also note that 
   there will be at least one party from $Z^{\star}$, which is not present in $\Discarded_{k'}$; i.e.~$\Discarded_{k'} \subset Z^{\star}$.
   If not, then {\it only} honest parties will be selected as summand-sharing parties in all the underlying instances of $\OptMult$
   during the iteration number $k'(tn + 1) + 1$ and hence, the iteration number
   $k'(tn + 1) + 1$ in $\MultGCE$ would be successful, which is a contradiction. Since the iteration number
   $k'(tn + 1) + 1, \ldots, (k' + 1)(tn + 1)$ constitutes $tn + 1$ failed iterations, by applying the same pigeonhole-principle based argument as applied for the base
   case, we can infer that there exists a pair of iterations $r, r' \in \{ k'(tn + 1) + 1, \ldots, (k' + 1)(tn + 1)\}$ where $r < r'$, such that
   $f_{\Char}(r) = f_{\Char}(r') = (P_j, P_k)$, with at least one among $P_j$ and $P_k$ being maliciously-corrupt.
   Moreover, the corrupt party(ies) among $P_j, P_k$ will be from the set $Z^{\star} \setminus \Discarded_{k'}$, since the parties from $\Discarded_{k'}$
   will {\it not} be selected as a summand-sharing party during the iteration number $r$ and $r'$. Next, following the same argument as used for the base case,
   we can infer that the corrupt party(ies) among $P_j$ and $P_k$ will be eventually included in the $\LocalDiscarded^{(i)}_{r}$ set of every {\it honest}
   $P_i$. This will further imply all the $n$ instances of $\FABA$ with $\sid_{1, (k'+1)(tn + 1), (k'+1)}, \ldots, \sid_{n, (k'+1)(tn + 1), (k'+1)}$
   will eventually return an output for all the honest parties, such that at least one of the $\FABA$ instances with $\sid_{\ell, (k'+1)(tn + 1), (k' + 1)}$
   corresponding to the party $P_{\ell}$ will return an output $(\decide, \sid_{\ell, (k' + 1)(tn + 1), (k' + 1)}, 1)$. Let $P_m$ be the smallest indexed party corresponding to 
   which the $\FABA$ instance with  $\sid_{m, (k'+1)(tn + 1), (k' + 1)}$ returns the output $(\decide, \sid_{m, (k' + 1)(tn + 1), (k' + 1)}, 1)$.
   Hence the honest parties will update $\Discarded$ to $\Discarded_{k'} \cup \{P_m \}$.
   To complete the proof, we need to show that $P_m \not \in \Discarded_{k'}$ and $P_m \in Z^{\star}$.
   The former follows from the fact that if $P_m \in \Discarded_{k'}$, then it implies that 
    then no honest party ever sends
   $(\vote, \sid_{m, (k' + 1)(tn + 1), (k' + 1)}, 1)$ to $\FABA$ with 
   $\sid_{m, (k' + 1)(tn + 1), (k' + 1)}$ and consequently , $\FABA$ with 
   $\sid_{m, (k' + 1)(tn + 1), (k' + 1)}$ will never return the output $(\decide, \sid_{m, (k' + 1)(tn + 1), (k' + 1)}, 1)$.
   On the other hand, $P_m \in Z^{\star}$ follows using a similar argument as used for the base case.      
  \end{proof}
  
  An immediate corollary of Claim \ref{claim:MultGCECheaterIdentification} is that there can be at most $t(tn + 1)$ consecutive failed iterations in the protocol 
  $\MultGCE$.
  \begin{corollary}
  \label{corollary:MultGCE}
  In protocol $\MultGCE$, there can be at most $t(tn + 1)$ consecutive failed iterations, where 
  $t \defined \max\{ |Z| :  Z \in \AdvStruct 	\}$.
  \end{corollary}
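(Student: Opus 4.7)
The plan is to derive the corollary as a direct consequence of Claim \ref{claim:MultGCECheaterIdentification} together with a counting argument on the maximum number of corrupt parties. First, I would observe that by Claim \ref{claim:MultGCECheaterIdentification}, whenever a contiguous block of $tn+1$ iterations $(k-1)(tn+1)+1, \ldots, k(tn+1)$ all fail, the honest parties eventually add a new corrupt party to the globally-discarded set $\Discarded$. Crucially, this added party is \emph{new}, i.e.~it was not present in $\Discarded$ prior to that block, since the proof of Claim \ref{claim:MultGCECheaterIdentification} guarantees $\Discarded_{k-1} \subsetneq \Discarded_k$.

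Next, I would set up the counting argument. Let $Z^{\star} \in \AdvStructure$ denote the (fixed but unknown) set of corrupt parties, with $|Z^{\star}| \le t$. Proceeding by contradiction, suppose there were $t(tn+1) + 1$ or more consecutive failed iterations. Partitioning the first $t(tn+1)$ of these into $t$ disjoint contiguous blocks of size $tn+1$ each, Claim \ref{claim:MultGCECheaterIdentification} ensures that after the $k$-th such block (for $k=1,\ldots,t$), a distinct new corrupt party is added to $\Discarded$. Since each added party belongs to $Z^{\star}$ and each is new, after all $t$ blocks we would have $|\Discarded \cap Z^{\star}| \ge t \ge |Z^{\star}|$, so $Z^{\star} \subseteq \Discarded$.

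Finally, I would argue that once $Z^{\star} \subseteq \Discarded$, the very next iteration must succeed. Indeed, in any instance of $\OptMult$ within that iteration, the protocol never votes in $\FABA$ for a candidate summand-sharing party that lies in $\Discarded$; hence every $P_j \in \Selected_{(Z,\iter)}$ is honest. By Claim \ref{claim:OptMultCorrectness}, $c_{(Z,\iter)} = ab$ for every $Z \in \AdvStructure$, so all pairwise differences $c_{(Z,\iter)} - c_{(Z',\iter)}$ reconstructed during $\MultLCE$ are zero, meaning every honest party sets $\flag^{(i)}_{\iter}=0$ and the iteration succeeds by Definition \ref{def:SuccessfulIteration}. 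This contradicts the assumed $(t(tn+1)+1)$-st failed iteration, completing the proof. The argument is essentially bookkeeping on top of Claim \ref{claim:MultGCECheaterIdentification}, so no single step should be hard; the only subtle point to state cleanly is that the ``new corrupt party'' guarantee forces the inclusions $\Discarded_1 \subsetneq \cdots \subsetneq \Discarded_t$, capping the number of blocks at $t$.
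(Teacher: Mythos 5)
Your proof is correct and matches the argument the paper treats as an ``immediate'' consequence of Claim~\ref{claim:MultGCECheaterIdentification}: partition $t(tn+1)$ failed iterations into $t$ blocks of $tn+1$, extract $t$ distinct corrupt parties, and observe that once $Z^{\star} \subseteq \Discarded$ only honest parties can be chosen as summand-sharers so the next iteration must succeed (the same step the paper makes inside the inductive case of Claim~\ref{claim:MultGCECheaterIdentification}). One small imprecision worth noting: you invoke Claim~\ref{claim:OptMultCorrectness}, whose stated hypothesis is that \emph{no} party in $\PartySet \setminus Z$ acts maliciously, whereas what you actually have after $Z^{\star} \subseteq \Discarded$ is only that no \emph{selected} summand-sharing party is corrupt (since $\Selected_{(Z,\iter)} \cap \Discarded = \emptyset$); that weaker hypothesis does suffice, as the claim's own proof and the paper's use of it show, but strictly speaking you are reading slightly more into the claim than its literal statement gives.
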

  
  We next claim that it will take at most $t(tn + 1) + 1$ iterations in the protocol $\MultGCE$ to guarantee that there is at least one successful 
  iteration.
  \begin{claim}
  \label{claim:MultGCETermination}
  In protocol $\MultGCE$, it will take at most $t(tn + 1) + 1$ iterations to ensure that one of these iterations is successful.
  \end{claim}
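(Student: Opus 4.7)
The plan is to combine Corollary \ref{corollary:MultGCE} (which already bounds the number of consecutive failed iterations by $t(tn+1)$) with an argument that the iteration immediately following $t(tn+1)$ consecutive failures must be successful. Since $\MultGCE$ only advances to iteration $\iter+1$ when iteration $\iter$ fails, it suffices to exhibit success at iteration $t(tn+1)+1$ in the worst case.

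First, I would use induction on $k \in \{1, \ldots, t\}$, invoking Claim \ref{claim:MultGCECheaterIdentification} at each block boundary, to show that if all iterations in the blocks $1, \ldots, tn+1$, $tn+2, \ldots, 2(tn+1)$, $\ldots$, $(k-1)(tn+1)+1, \ldots, k(tn+1)$ fail, then a distinct maliciously-corrupt party is globally added to $\Discarded$ after each block, and no honest party is ever inserted into $\Discarded$. Since $|Z^\star| \leq t$, after $t$ such blocks we have $Z^\star \subseteq \Discarded$ and $(\PartySet \setminus \Discarded) \subseteq \Hon$, where $\Hon$ denotes the set of honest parties.

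Second, I would consider the iteration $\iter = t(tn+1)+1$. Since no honest party is ever placed in $\Discarded$, the hypothesis required by Lemma \ref{lemma:OptMult} (together with Lemma \ref{lemma:MultLCE}, which guarantees that honest parties are eventually removed from every $\Waitlist^{(i)}_{\iter'}$ and are never inserted into any $\LocalDiscarded^{(i)}_{\iter'}$ for $\iter' < \iter$) holds for iteration $\iter$. Consequently the $\MultLCE$ instance at iteration $\iter$ eventually completes, and within each underlying $\OptMult(\PartySet,\AdvStruct,\SharingSpec,[a],[b],Z,\iter)$ instance, Claim \ref{claim:OptMultFuture} forces $\Selected_{(Z,\iter)} \cap \Discarded = \emptyset$. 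Combined with $Z^\star \subseteq \Discarded$, this yields $\Selected_{(Z,\iter)} \subseteq \Hon$ for every $Z \in \AdvStructure$, so no party in $\PartySet \setminus Z$ acts maliciously in the instance.

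Third, invoking Claim \ref{claim:OptMultCorrectness} (equivalently, the second part of Lemma \ref{lemma:OptMult}), I conclude $c_{(Z,\iter)} = ab$ for every $Z \in \AdvStructure$. In particular $c_{(Z,\iter)} - c_{(Z',\iter)} = 0$ for every $Z \in \AdvStructure$, which by Definition \ref{def:MultGCE} means the $\MultLCE$ instance at iteration $\iter$ is successful, and hence iteration $\iter$ of $\MultGCE$ itself is successful. The main obstacle is the bookkeeping around the asynchronous, cumulative nature of $\Discarded$ and the waitlists across iterations; the cleanest way to handle this is to formalize and carry the two invariants ``$Z^\star \cap (\PartySet \setminus \Discarded)$ shrinks strictly across every $tn+1$-block of failed iterations'' and ``$\Hon \cap \Discarded = \emptyset$'' through the induction on $k$, which is exactly what Claim \ref{claim:MultGCECheaterIdentification} provides.
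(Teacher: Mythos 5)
Your proof is correct and follows the same underlying strategy as the paper's one-line derivation from Claim~\ref{claim:MultGCEIteration} and Corollary~\ref{corollary:MultGCE}. You have usefully unfolded the reasoning that the paper leaves implicit inside that corollary — namely that after $t$ failed blocks of $tn+1$ iterations each, Claim~\ref{claim:MultGCECheaterIdentification} forces $Z^{\star} \subseteq \Discarded$, and then Claims~\ref{claim:OptMultFuture} and~\ref{claim:OptMultCorrectness} ensure only honest parties can be summand-sharing parties at iteration $t(tn+1)+1$, so that iteration must succeed.
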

  \begin{proof}
  Follows easily from Claim \ref{claim:MultGCEIteration} and Corollary \ref{corollary:MultGCE}.
  \end{proof}
  We next claim that the adversary does not learn anything additional about $a$ and $b$ in the protocol.
  \begin{claim}
  \label{claim:MultGCEPrivacy}
  In protocol $\MultGCE$, $\Adv$ does not learn anything additional about $a$ and $b$.
  \end{claim}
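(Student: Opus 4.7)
The plan is to reduce the privacy of $\MultGCE$ directly to that of $\MultLCE$, which has already been established in Claim \ref{claim:MultLCEPrivacy}. The view of $\Adv$ during an execution of $\MultGCE$ consists of: (i) its view across (at most $t(tn+1)+1$) invocations of $\MultLCE$ on the same inputs $[a], [b]$, and (ii) the transcript of the ACS phases that are triggered once every $tn+1$ failed iterations, which amount to $\FABA$ invocations together with the $\vote$ messages sent on behalf of corrupt parties. I would argue that neither component leaks anything beyond what $\Adv$ already knows about $a$ and $b$.

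First, for the $\MultLCE$ instances, I would apply Claim \ref{claim:MultLCEPrivacy} to each individual instance. Since every instance is invoked on the same $[a], [b]$, a single application of the claim is not literally enough; the subtle point is that across different iterations the set $\Selected_{(Z, \iter)}$ may differ, and the shares and difference-values reconstructed in each iteration constitute fresh randomness from the adversary's viewpoint. However, the proof of Claim \ref{claim:MultLCEPrivacy} goes through unchanged per iteration because: (a) the shares chosen by honest summand-sharing parties for their summand-sums $c^{(j)}_{(Z, \iter)}$ are independent across iterations (drawn freshly inside each $\OptMult$ via $\FVSS$) and the $\AdvStructure$-privacy of $\ShareSpec$ keeps the corrupt shares distributed uniformly; (b) the differences $c_{(Z, \iter)} - c_{(Z', \iter)}$ either equal $0$ (public information in a successful iteration) or are determined entirely by the corrupt parties' own misbehaviour inside the instance, and similarly for the summand-sum partitions $d^{(jk)}_{(Z, \iter)}, e^{(jk)}_{(Z', \iter)}$ reconstructed in the cheater-identification phase. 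Thus a simulator can produce the joint transcript of all $\MultLCE$ invocations given only the shares of $a, b$ held by $\Adv$.

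Next, for the ACS phases, the $\vote$ messages sent by honest parties are deterministic functions of the $\LocalDiscarded^{(i)}_{\iter'}$ sets, which in turn are deterministic functions of the honest parties' views during previous $\MultLCE$ instances. Hence they carry no information about $a, b$ that is not already implicit in the $\MultLCE$ transcripts. The outputs of $\FABA$ likewise are determined by these votes together with adversarial input. Therefore, conditioned on the $\MultLCE$ transcripts, the ACS transcripts are independent of $a$ and $b$.

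The main obstacle I anticipate is the careful bookkeeping that the per-iteration privacy argument of Claim \ref{claim:MultLCEPrivacy} actually composes across iterations, given that the same inputs $[a], [b]$ are re-used. The critical observation needed is that the only values $\Adv$ learns in an iteration are either (i) sampled uniformly at random by honest parties, independently of $a, b$ and of previous iterations, or (ii) fully determined by $\Adv$'s own messages in that iteration. Hence a simulator proceeds iteration by iteration, using fresh randomness for the honest summand-sharing parties, and the resulting simulated view is distributed identically to the real view of $\Adv$ in $\MultGCE$, which establishes the claim.
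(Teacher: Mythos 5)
Your proposal is correct and follows essentially the same route as the paper, namely reducing privacy of $\MultGCE$ to the per-iteration privacy guarantee of Claim \ref{claim:MultLCEPrivacy}. The paper simply states the reduction in one line, whereas you also spell out why the reuse of $[a],[b]$ across iterations is harmless and why the ACS votes add nothing new; both of these elaborations are accurate and consistent with the paper's intent.
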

  \begin{proof}
  Follows directly from the fact that in every iteration of $\MultGCE$, $\Adv$ does not learn anything additional about $a$ and $b$, 
   which in turn follows from Claim \ref{claim:MultLCEPrivacy}.
  \end{proof}

  Lemma \ref{lemma:MultGCE} now follows from Claim \ref{claim:MultGCETermination}, Claim \ref{claim:MultGCECorrectness} and 
  Claim \ref{claim:MultGCEPrivacy}, where the communication
  complexity follows from the communication complexity of $\MultLCE$ and 
   the fact that there are $t(tn + 1) + 1 = \Order(n^3)$ instances of $\MultLCE$ executed inside the protocol $\MultGCE$. \\~\\
  \noindent {\bf Lemma \ref{lemma:MultGCE}.}
  {\it   Let $\AdvStructure$ satisfy the $\Q^{(4)}(\PartySet, \AdvStructure)$ condition and let 
    $\ShareSpec = (S_1, \ldots, S_h) = \{ \PartySet \setminus Z | Z \in \AdvStructure\}$.
    Then $\MultGCE$ will take at most $t(tn + 1)$ iterations and all honest parties eventually output a secret-sharing of $[ab]$, where
    $t = \max\{ |Z| :  Z \in \AdvStruct 	\}$.
    In the protocol, adversary does not learn anything additional about $a$ and $b$.
         The protocol makes $\Order(|\AdvStructure| \cdot n^5)$ calls to $\FVSS$ and $\FABA$ and additionally incurs a communication of
     $\Order(|\AdvStructure|^2 \cdot n^5 \log{|\F|} + |\AdvStructure| \cdot n^7 \log{|\F|})$ bits.
} 

  \subsection{Perfectly-Secure Pre-Processing Phase Protocol $\PiPerTriples$ and Its Properties}
 Protocol $\PiPerTriples$ for securely realizing $\FTriples$ with $M = 1$ in the $(\FVSS, \FABA)$-hybrid model is presented in
  Fig \ref{fig:PerTriples}.
  
  \begin{protocolsplitbox}{$\PiPerTriples(\PartySet, \AdvStruct, \SharingSpec$)}{A perfectly-secure protocol 
  to securely realize $\FTriples$ with $M = 1$ in $(\FVSS, \FABA)$-hybrid model for session id $\sid$.
 The above code is executed by every party $P_i$}{fig:PerTriples}
 \justify
 \begin{myitemize}
 \item[--] \textbf{Stage I: Generating a Secret-Sharing of Random Pair of Values}.
    \begin{myitemize}
       \item \textbf{Sharing Random Pairs of Values}:
         \begin{myenumerate}
          \item[1.] Randomly select  $a^{(i)}, b^{(i)} \in \F$ and shares $(a^{(i)}_1, \ldots, a^{(i)}_h)$ and 
	    $(b^{(i)}_1, \ldots, b^{(i)}_h)$, such that $a^{(i)}_1 + \ldots + a^{(i)}_h = a^{(i)}$ and $b^{(i)}_1 + \ldots + b^{(i)}_h = b^{(i)}$.
	     Call $\FVSS$ with $(\Dealer, \sid_{i, 1}, (a^{(i)}_1, \ldots, a^{(i)}_h))$ and
	     $\FVSS$ with $(\Dealer, \sid_{i, 2}, (b^{(i)}_1, \ldots, b^{(i)}_h))$ for $\sid_{i, 1}$ and $\sid_{i, 2}$,
	      where $\sid_{i, 1} = \sid || i || 1$ and 
	     $\sid_{i, 2} = \sid || i || 2$.     
	  \item[2.] For $j = 1, \ldots, n$, keep requesting for an output from $\FVSS$ with $\sid_{j, 1}$ and $\sid_{j, 2}$, 
	      till an output is received.
          \end{myenumerate}
         \item {\bf Selecting a Common Subset of Parties Through ACS} 
       \begin{mydescription}
             \item[1.]  If $(\Share, \sid_{j, 1}, P_j, \{[a^{(j)}]_q \}_{P_i \in S_q})$ 
             and $(\Share, \sid_{j, 2}, P_j, \{[b^{(j)}]_q \}_{P_i \in S_q})$ are received from $\FVSS$ with $\sid_{i, 1}$ and $\sid_{i, 2}$ respectively, then send              
              $(\vote, \sid_j, 1)$ to $\FABA$, 
              where $\sid_j \defined \sid || j$.
             \item [2.] For $j = 1, \ldots, n$, request for output from $\FABA$ with $\sid_j$, till an output is received.
             \item [3.] If there exists a subset of parties $\GlobalProv_i$ such that $\PartySet \setminus \GlobalProv_i \in \AdvStructure$
             and $(\decide, \sid_j, 1)$ is received from $\FABA$ with $\sid_j$ corresponding to every $P_j \in \GlobalProv_i$, 
             then send
              $(\vote, \sid_j, 0)$ to $\FABA$ with $\sid_j$
              corresponding to every $P_j$, for which no message has been sent yet.
             \item [4.] Once $(\decide, \sid_j, v_j)$ is received from $\FABA$ for $j = 1, \ldots, n$, set
                  $\CoreSet = \{P_j: v_j = 1\}$.
            \item [5.] 
            Let $a \defined \displaystyle \sum_{P_j \in \CoreSet} a^{(j)}, b \defined \displaystyle \sum_{P_j \in \CoreSet} b^{(j)}$.
            Locally compute the shares $\{[a]_q \}_{P_i \in S_q}$ and $\{[b]_q \}_{P_i \in S_q}$.
       \end{mydescription}
    \end{myitemize}
 \item[--] \textbf{Stage II: Generating the Product}.
      \begin{myitemize}
      \item Participate in the instance $\MultGCE(\PartySet, \AdvStruct, \SharingSpec, [a], [b])$ with $\sid$ 
       and compute $\{[c]_q \}_{P_i \in S_q}$.      
      Output $\{[a]_q, [b]_q, [c]_q \}_{P_i \in S_q}$.
      \end{myitemize}
 
 \end{myitemize}
 \end{protocolsplitbox}

\paragraph{\bf Protocol $\PiPerTriples$ for Generating $L$ Multiplication-Triples:}
 The modifications in $\PiPerTriples$ to generate $M$ multiplication-triples are straight forward.
 During the first stage, each party 
  secret-shares
  $M$ pairs of values, by calling $\FVSS$ $2M$ number of times. While running ACS, a party votes ``positively" for party $P_j$, only
  after receiving an output from {\it all} the $2M$ instances of $\FVSS$ corresponding to $P_j$. During the second stage, the instance of $\MultGCE$ will now take
  $M$ pairs of secret-shared inputs. 

We now prove the security of the protocol $\PiPerTriples$ in the $(\FVSS, \FABA)$-hybrid model.
  While proving these properties, we will assume that  $\AdvStructure$ satisfies the $\Q^{(4)}(\PartySet, \AdvStructure)$ condition.
  This further implies that the sharing specification $\ShareSpec = (S_1, \ldots, S_h) \defined \{ \PartySet \setminus Z | Z \in \AdvStructure\}$
  satisfies the $\Q^{(3)}(\ShareSpec, \AdvStructure)$ condition. 
  \\~\\
   \noindent {\bf Theorem \ref{thm:PiPerTriples}.}
  {\it If $\AdvStructure$ satisfies the $\Q^{(4)}(\PartySet, \AdvStructure)$ condition,
  then  $\PiPerTriples$ is a perfectly-secure protocol for securely realizing $\FTriples$ with UC-security in the $(\FVSS, \FABA)$-hybrid model. 
   The protocol makes $\Order(M \cdot |\AdvStructure| \cdot n^5)$ calls to $\FVSS$, $\Order(|\AdvStructure| \cdot n^5)$ calls to $\FABA$
   and additionally incurs a communication of $\Order(M \cdot |\AdvStructure|^2 \cdot n^5 \log{|\F|} + |\AdvStructure| \cdot n^7 \log{|\F|})$ bits. 
  }
  \begin{proof}
  The communication complexity and the number of calls to $\FVSS$ and $\FABA$ follow from the protocol steps and the communication
  complexity of the protocol $\MultGCE$. So we next prove the security. For ease of explanation, we consider the case where
  only one multiplication-triple is generated in $\PiPerTriples$; i.e.~
   $M = 1$. The proof can be easily modified for any general $M$.
   
   Let $\Adv$ be an arbitrary adversary, attacking the protocol $\PiPerTriples$ by corrupting a set of parties
  $Z^{\star} \in \AdvStructure$, and let $\Env$ be an arbitrary environment. We show the existence of a simulator $\SimPerTriples$ (Fig \ref{fig:SimPerTriples}),
   such that for any
 $Z^{\star} \in \AdvStructure$,
  the outputs of the honest parties and the view of the adversary in the 
   protocol $\PiPerTriples$ is indistinguishable from the outputs of the honest parties and the view of the adversary in an execution in the ideal world involving 
  $\SimPerTriples$ and $\FTriples$. 
  
  The high level idea of the simulator is as follows. Throughout the simulation,
  the simulator itself performs the role of the ideal functionality $\FVSS$ and
   $\FABA$ whenever required. During the first stage,
   whenever $\Adv$ sends a pair of vector of shares to $\FVSS$  on the behalf of a corrupt party, the simulator records
   these vectors.
    On the other hand, for the honest parties, the simulator picks pairs of random values and random shares for those values,
    and distributes the appropriate shares to the corrupt parties, as per $\FVSS$. During ACS, to select the common subset of parties, the simulator
   itself performs the role of $\FABA$ and simulates the honest parties as per the steps of the protocol and $\FABA$. This allows the simulator learn the 
   common subset of parties $\CoreSet$. Notice that 
   the secret-sharing of the pairs of values shared by {\it all} the parties in $\CoreSet$ will be available with the simulator. While the secret-sharing
   of pairs of the honest parties in
   $\CoreSet$ are selected by the simulator itself, for every {\it corrupt} party $P_j$ which is added to $\CoreSet$, at least one honest party $P_i$ should 
    participate with input $1$ in the corresponding call to
   $\FABA$. This implies that the honest party $P_i$ must have received some shares from $\FVSS$ corresponding to the vector of shares
    which $P_j$ sent to
   $\FVSS$. Since in the simulation, the role of $\FVSS$ is played by the simulator itself, it implies that the vector of shares used by 
   $P_j$ will be known to the simulator.
   
   Once the simulator learns $\CoreSet$ and the secret-sharing of the pairs of values shared by the parties
    in $\CoreSet$, during the second stage, the simulator  simulates the rest of the interaction between the honest parties
   and $\Adv$ as per the protocol steps, by itself playing the role of the honest parties. Moreover, in the underlying instances of
   $\OptMult, \MultLCE$ and $\MultGCE$, the simulator itself performs the role of $\FVSS$ and $\FABA$ whenever required. 
   Notice that simulator will be knowing the values which should be shared by the respective parties through $\FVSS$ during the underlying instances
   of $\OptMult$ and $\MultLCE$. This is because these values are completely determined by the secret-sharing of the pairs of values shared by the parties
    in $\CoreSet$, which will be known to the simulator. Consequently, in the simulated execution, the simulator will be knowing which instances
    of $\MultLCE$ are successful and which iterations of $\MultGCE$ are successful. 
    Once the simulated execution is over, 
    the simulator learns the shares of the corrupt parties corresponding to the
   output multiplication-triple in the simulated execution. The simulator then communicates these shares on the behalf of the corrupt parties during its interaction with
   $\FTriples$. This ensures that the shares of the corrupt parties remain the same in both the worlds.
   
   In the steps of the simulator, to distinguish between the values used by the various parties during the real execution and simulated execution, the variables in the simulated execution
   will be used with a \;$\widetilde{}$ symbol.   
\begin{simulatorsplitbox}{$\SimPerTriples$}{Simulator for the protocol $\PiPerTriples$ with $M = 1$
  where $\Adv$ corrupts the parties in set $Z^{\star} \in \AdvStructure$}{fig:SimPerTriples}
	\justify
$\SimPerTriples$ constructs virtual real-world honest parties and invokes the real-world adversary $\Adv$. The simulator simulates the view of
 $\Adv$, namely its communication with $\Env$, the messages sent by the honest parties, and the interaction with $\FVSS$ and $\FABA$. 
  In order to simulate $\Env$, the simulator $\SimPerTriples$ forwards every message it receives from 
   $\Env$ to $\Adv$ and vice-versa.  The simulator then simulates the various stages of the protocol as follows. 
\begin{myitemize}
 \item[--] \textbf{Stage I: Generating a Secret-Sharing of a Random Pair of Values}.
    \begin{myitemize}
       \item \textbf{Sharing Random Pairs of Values}:
          \begin{myitemize}
          \item The simulator simulates the operations of the honest parties during this phase by picking random random pairs of values and random vector of shares
          for those values on their behalf. Namely, when $\Adv$ 
          requests for output from $\FVSS$  with $\sid_{j, 1}$ and $\sid_{j, 2}$ for any $P_j \not \in Z^{\star}$, 
          the simulator picks random values  $\sima^{(j)}, \simb^{(j)} \in \F$ and random shares $(\sima^{(j)}_1, \ldots, \sima^{(j)}_h)$ and 
	    $(\simb^{(j)}_1, \ldots, \simb^{(j)}_h)$, such that $\sima^{(j)}_1 + \ldots + \sima^{(j)}_h = \sima^{(j)}$ and $\simb^{(j)}_1 + \ldots + \simb^{(j)}_h = \simb^{(j)}$.
	    The simulator then sets $[\sima^{(j)}]_q = \sima^{(j)}_q$ and 
             $[\simb^{(j)}]_q = \simb^{(j)}_q$ for $q = 1, \ldots, h$, and 
	    responds to $\Adv$ with the output
          $(\Share, \sid_{j, 1}, P_j, \{[\sima^{(j)}]_q \}_{S_q \cap Z^{\star} \neq \emptyset})$ 
             and $(\Share, \sid_{j, 2}, P_j, \{[\simb^{(j)}]_q \}_{S_q \cap Z^{\star} \neq \emptyset})$ on the behalf of $\FVSS$ with  
             $\sid_{j, 1}$ and $\sid_{j, 2}$ respectively.
           \item Whenever $\Adv$ sends $(\Dealer, \sid_{i, 1}, (\sima^{(i)}_1, \ldots, \sima^{(i)}_h))$ 
           and $(\Dealer, \sid_{i, 2}, (\simb^{(i)}_1, \ldots, \simb^{(i)}_h))$
           to $\FVSS$ with $\sid_{i, 1}$ and $\sid_{i, 2}$ respectively on the behalf of any $P_i \in Z^{\star}$, the simulator sets
           $[\sima^{(i)}]_q = \sima^{(i)}_q$ and $[\simb^{(i)}]_q = \simb^{(i)}_q$ for $q = 1, \ldots, h$, where
           $\sima^{(i)} \defined \sima^{(i)}_1 + \ldots + \sima^{(i)}_h$ and $\simb^{(i)} \defined \simb^{(i)}_1 + \ldots + \simb^{(i)}_h$.
          \end{myitemize}
      \item {\bf Selecting a Common Subset of Parties (ACS)}: 
      The simulator simulates the interface to $\FABA$ for $\Adv$ by itself
      performing the role of $\FABA$ and playing the role of the honest parties, as per the steps of the protocol. 
      When the first honest party completes this phase during the simulated execution, $\SimPerTriples$ learns the set $\CoreSet$. 
      It then sets $\sima \defined \displaystyle \sum_{P_j \in \CoreSet} \sima^{(j)}, \simb \defined \displaystyle \sum_{P_j \in \CoreSet} \simb^{(j)}$
      and computes $[\sima] =  \displaystyle \sum_{P_j \in \CoreSet} [\sima^{(j)}], [\simb] = \displaystyle \sum_{P_j \in \CoreSet} [\simb^{(j)}]$.      
    \end{myitemize}
    \item[--] \textbf{Stage II: Generating the Product}. The simulator plays the role of the honest parties as per the protocol
    and interacts with $\Adv$ for the instance $\MultGCE(\PartySet, \AdvStruct, \SharingSpec, [\sima], [\simb])$, where during the instance,
    the simulator uses the shares $\{[\sima]_q, [\simb]_q \}_{P_i \in S_q}$ on the behalf of every $P_i \not \in Z^{\star}$.
    Moreover, during this instance of $\MultGCE$, the simulator simulates the interface to $\FABA$ for $\Adv$ during the underlying instances of
    $\OptMult$ and during cheater identification, 
     by itself
    performing the role of $\FABA$, as per the steps of the protocol. 
    Furthermore, during the underlying instances of $\OptMult$ and $\MultLCE$, whenever required, the simulator itself plays the role $\FVSS$.    
    \item[--] \textbf{Interaction with $\FTriples$}: Let $\{ [\simc]_q \}_{S_q \cap Z^{\star} \neq \emptyset}$
     be the output shares of 
    the parties in $Z^{\star}$, during the instance $\MultGCE(\PartySet, \AdvStruct, \allowbreak \SharingSpec, [\sima], [\simb])$. The simulator sends 
     $(\shares, \sid, \{[\sima]_q, [\simb]_q, [\simc]_q \}_{S_q \cap Z^{\star} \neq \emptyset})$ to $\FTriples$, on the behalf of the parties in $Z^{\star}$.
 
\end{myitemize}    
\end{simulatorsplitbox}

  We now prove a series of claims, which will help us to finally prove the theorem. 
    We first claim that in
  any execution of $\PiPerTriples$, a set $\CoreSet$ is eventually generated.
  \begin{claim}
  \label{claim:PerTriplesTermination}
  In any execution of $\PiPerTriples$, a common
   set $\CoreSet$ is eventually generated where $\PartySet \setminus \CoreSet \in \AdvStructure$,
    such that for every $P_j \in \CoreSet$, there exists a pair of values held by $P_j$,
   which are eventually secret-shared.
  \end{claim}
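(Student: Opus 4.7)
The plan is to follow the standard argument for asynchronous agreement on a common subset (ACS), adapted to the general-adversary setting. The argument has two symmetric halves: first, showing that every instance of $\FABA$ invoked in Stage I eventually terminates for every honest party with a common output; second, showing that the resulting set $\CoreSet$ has the two claimed structural properties.

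For termination, I first observe that every honest $P_j$ eventually invokes $\FVSS$ with $\sid_{j,1}$ and $\sid_{j,2}$ on a randomly chosen pair $(a^{(j)}, b^{(j)})$ and the corresponding random vectors of shares, so by the specification of $\FVSS$ every honest party $P_i$ eventually obtains the outputs $(\Share, \sid_{j,1}, P_j, \cdot)$ and $(\Share, \sid_{j,2}, P_j, \cdot)$ for every honest $P_j$. Consequently every honest $P_i$ eventually sends $(\vote, \sid_j, 1)$ to $\FABA$ with $\sid_j$ for every honest $P_j$. Since $\PartySet \setminus (\PartySet \setminus Z^{\star}) = Z^{\star} \in \AdvStructure$, the set of honest parties is itself a valid candidate for $\GlobalProv_i$, and hence $P_i$ eventually locates some $\GlobalProv_i$ meeting the required condition. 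At that point $P_i$ sends $(\vote, \sid_j, 0)$ to each $\FABA$ instance for which it has not yet voted. Thus every honest party eventually casts a vote in every one of the $n$ instances of $\FABA$, so by the termination property of $\FABA$ each instance eventually returns a common bit $v_j$ to every honest party, and $\CoreSet = \{P_j : v_j = 1\}$ is therefore a well-defined common set.

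For the structural properties, I would then argue as follows. Fix any honest $P_i$ and the set $\GlobalProv_i$ it used; by construction, $(\decide, \sid_j, 1)$ was obtained from $\FABA$ with $\sid_j$ for every $P_j \in \GlobalProv_i$, which by agreement of $\FABA$ means that $v_j = 1$ for every such $P_j$. Hence $\GlobalProv_i \subseteq \CoreSet$ and therefore $\PartySet \setminus \CoreSet \subseteq \PartySet \setminus \GlobalProv_i \in \AdvStructure$; monotonicity of the adversary structure then yields $\PartySet \setminus \CoreSet \in \AdvStructure$. To see that each $P_j \in \CoreSet$ has an eventually secret-shared pair, note that $v_j = 1$ implies (by the validity of $\FABA$ together with the fact that honest parties vote $0$ on $P_j$ before any $1$ has been observed only via the $\GlobalProv_i$ mechanism) that at least one honest party must have sent $(\vote, \sid_j, 1)$, which in turn means that party received the $\FVSS$ outputs for $P_j$. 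By the definition of $\FVSS$, this guarantees that $P_j$ provided some valid input vectors to $\FVSS$ with $\sid_{j,1}$ and $\sid_{j,2}$, and hence every honest party will eventually obtain its shares of the corresponding pair $(a^{(j)}, b^{(j)})$.

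I expect the only delicate point to be verifying that the ``vote-$0$'' trick does not cause a deadlock in the asynchronous setting — specifically, that the act of waiting to locate $\GlobalProv_i$ before casting any $0$ votes is consistent with every $\FABA$ instance eventually terminating. This is exactly what the first paragraph handles: by the time $P_i$ has seen $\FVSS$ outputs for all honest dealers, $\GlobalProv_i$ exists, so the $0$ votes are cast without delay. Beyond this, the claim reduces to routine properties of $\FVSS$, $\FABA$, and monotonicity of $\AdvStructure$.
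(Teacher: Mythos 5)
Your overall decomposition (termination of every $\FABA$ instance, then the two structural properties of $\CoreSet$) is the right one, and your second half is fine. But the termination argument has a genuine gap: the step ``\emph{Consequently every honest $P_i$ eventually sends $(\vote,\sid_j,1)$ to $\FABA$ with $\sid_j$ for every honest $P_j$}'' does not follow. An honest $P_i$ may locate a valid $\GlobalProv_i$ and cast $0$ votes \emph{before} it has received the $\FVSS$ outputs with $\sid_{j,1},\sid_{j,2}$ for some honest dealer $P_j$, in which case $P_i$ votes $0$, not $1$, in $\FABA$ with $\sid_j$. Worse, the step from ``all honest parties vote $1$'' to ``the honest set is a valid $\GlobalProv_i$'' conflates votes with outputs: $\GlobalProv_i$ is defined in terms of $(\decide,\sid_j,1)$ outputs received from $\FABA$, not $\vote$ inputs cast into $\FABA$. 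You flag exactly this circularity in your closing paragraph, but your resolution --- ``by the time $P_i$ has seen $\FVSS$ outputs for all honest dealers, $\GlobalProv_i$ exists'' --- is false for the same reason: seeing $\FVSS$ outputs lets $P_i$ vote $1$, it does not produce $\FABA$ outputs.

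The circularity is broken, as in the paper, by a case analysis over the execution. Either some honest party casts a $0$ vote in some $\FABA$ instance, in which case that party has already received $(\decide,\sid_j,1)$ from $\FABA$ for every $P_j$ in some $\GlobalProv_i$ with $\PartySet\setminus\GlobalProv_i\in\AdvStructure$, and by the agreement of $\FABA$ every honest party eventually receives the same outputs; or no honest party ever casts a $0$ vote, in which case every honest party eventually votes $1$ in $\FABA$ with $\sid_j$ for every honest $P_j$ (this now \emph{does} follow from your first paragraph, since no $0$ vote pre-empts the $1$ vote), so each such instance outputs $1$ for everyone and $Z=Z^{\star}$ works. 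In either case there is some $Z\in\AdvStructure$ such that all $\FABA$ instances over $\PartySet\setminus Z$ eventually return $1$ to all honest parties; only then can you conclude that every honest party eventually casts a vote in every remaining instance and hence that every $\FABA$ instance terminates with a common output. Make this case split explicit.
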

  \begin{proof}
  We first show that there always exists some set $Z \in \AdvStructure$ such that in the $\FABA$ instances corresponding to every party
   in $\PartySet \setminus Z$, all honest parties eventually obtain an output $1$. For this, we consider the following two cases.
  \begin{myitemize}
  \item[--] {\it If some {\it honest} party $P_i$ has participated with $\vote$ input $0$ in any instance of $\FABA$ during step 3 of the ACS phase}: 
   this implies that there exists a subset $\GlobalProv_i$ for $P_i$ where $\PartySet \setminus \GlobalProv_i \in \AdvStructure$, such that
   $P_i$ receives the output $(\decide, \sid_j, 1)$ from $\FABA$ with $\sid_j$, corresponding to every $P_j \in \GlobalProv_i$. 
   Consequently, every honest party will eventually receive the same outputs from these $\FABA$ instances.
    Since $\PartySet \setminus \GlobalProv_i \in \AdvStructure$, we get that
   there exists some set $Z \in \AdvStructure$ such that the $\FABA$ instances  corresponding to every party
   in $\PartySet \setminus Z$ responded with output $1$, which is what
  we wanted to show.
   \item[--] {\it No honest party has participated with $\vote$ input $0$ in any instance of $\FABA$}:
    In the protocol, each party $P_j \not \in Z^{\star}$
    sends its vector of shares to $\FVSS$ with
    $\sid_{j, 1}$ and $\sid_{j, 2}$ and every 
    {\it honest} party eventually receives its respective shares from these vectors as the output from the corresponding instances of $\FVSS$.
     Hence, corresponding to
    every $P_j \not \in Z^{\star}$, {\it all} honest parties eventually participate with input $(\vote, \sid_j, 1)$ during the 
    instance of $\FABA$ with $\sid_j$, and this $\FABA$ instance will eventually respond with output $(\decide, \sid_j, 1)$.
    Since $Z^{\star} \in \AdvStructure$, it then follows that even in this case,  
     there exists some set $Z \in \AdvStructure$ such that the $\FABA$ instances  corresponding to every party
   in $\PartySet \setminus Z$ responded with output $1$.
  \end{myitemize}
  We next show that all honest parties eventually receive an output from {\it all} the instances of $\FABA$. Since we have shown
  there exists some set $Z \in \AdvStructure$ such that the $\FABA$ instances  corresponding to every party
   in $\PartySet \setminus Z$ eventually returns the output $1$, it thus follows that all honest parties eventually participate with some $\vote$ inputs in the remaining $\FABA$
  instances and hence will eventually obtain some output from these $\FABA$ instances as well. Since the set 
  $\CoreSet$ corresponds to the $\FABA$ instances in which the honest parties obtain $1$ as the output, it thus follows that eventually,
  the honest parties obtain some $\CoreSet$ where $\PartySet \setminus \CoreSet \in \AdvStructure$.
   Moreover, the set $\CoreSet$ will be common, as it is based
  on the outcome of $\FABA$ instances.
  
  Now consider an arbitrary $P_j \in \CoreSet$. This implies that the parties obtain $1$ as the output from the
  $j^{th}$ instance of $\FABA$. This further implies that at least one {\it honest} party $P_i$ participated in this 
  $\FABA$ instance with $\vote$ input $1$.    This is possible only if $P_i$ received its respective shares from
   the instances of $\FVSS$ with $\sid_{j, 1}$ and $\sid_{j, 2}$,
    further implying that $P_j$ has provided some vector of shares
    $(a^{(j)}_1, \ldots, a^{(j)}_h)$ and $(b^{(i)}_1, \ldots, b^{(i)}_h)$    as inputs to these $\FVSS$ instances.
     It now follows easily that eventually, all honest parties will have their respective shares corresponding to the
    vectors of shares provided by $P_j$, implying that the honest parties will eventually hold $[a^{(j)}]$ and
    $[b^{(j)}]$, where $a^{(j)} \defined a^{(j)}_1 +  \ldots + a^{(j)}_h$ and  $b^{(j)} \defined b^{(j)}_1 +  \ldots + b^{(j)}_h$.
  \end{proof}
We next show that the view generated by $\SimPerTriples$ for $\Adv$ is identically distributed to $\Adv$'s view during the real execution of $\PiPerTriples$.
\begin{claim}
\label{claim:PerTriplesPrivacy}
The view of $\Adv$ in the simulated execution with $\SimPerTriples$ is identically distributed as the view of $\Adv$ in the real execution of
 $\PiPerTriples$.
\end{claim}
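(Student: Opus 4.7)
The plan is to decompose $\Adv$'s view into the messages it receives in each phase of $\PiPerTriples$ and argue, component by component, that the distribution is identical in the real and simulated executions. Concretely, the view consists of: (i) the outputs from $\FVSS$ with $\sid_{j,1}, \sid_{j,2}$ for every $P_j \not\in Z^{\star}$ during Stage I; (ii) the outputs from $\FABA$ with $\sid_j$ during the ACS sub-phase; and (iii) the messages exchanged and the $\FVSS/\FABA$ outputs received during the instance of $\MultGCE(\PartySet, \AdvStruct, \SharingSpec, [a], [b])$ in Stage II. I would argue each component is identically distributed, conditioned on the previous ones.

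For component (i), every honest dealer $P_j$ in the real execution picks $a^{(j)}, b^{(j)} \in \F$ uniformly at random, and then a uniformly random vector of shares summing to each of these values; $\SimPerTriples$ does exactly the same with $\sima^{(j)}, \simb^{(j)}$. Since $\Adv$ only receives the shares $\{[a^{(j)}]_q, [b^{(j)}]_q\}_{S_q \cap Z^{\star} \neq \emptyset}$ through $\FVSS$, and since $\ShareSpec$ is $\AdvStructure$-private (so there exists some $S_{q^{\star}} \in \ShareSpec$ with $S_{q^{\star}} \cap Z^{\star} = \emptyset$ whose share acts as a one-time pad on the remaining ones), the revealed shares are uniform and mutually independent in both worlds. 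For component (ii), $\SimPerTriples$ plays the role of $\FABA$ exactly as specified by $\FABA$: the outputs are a deterministic function of the $\vote$ inputs, which in turn are a deterministic function of the $\FVSS$ outputs already shown to be identically distributed. Hence the resulting $\CoreSet$ is identically distributed in both worlds.

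For component (iii), first note that conditioned on the view so far, the shares $\{[a]_q, [b]_q\}_{S_q \cap Z^{\star} \neq \emptyset}$ held by the corrupt parties at the start of Stage II are identically distributed in the two worlds, since $[a] = \sum_{P_j \in \CoreSet}[a^{(j)}]$ and similarly for $[b]$, and each summand is identically distributed. The simulator then plays the honest parties in $\MultGCE$ exactly as per the protocol, using these shares; by Claim~\ref{claim:MultGCEPrivacy} (which unfolds into Claims~\ref{claim:OptMultPrivacy} and~\ref{claim:MultLCEPrivacy}), everything $\Adv$ sees during $\MultGCE$ is a deterministic function of the corrupt parties' shares together with fresh random coins used by honest parties to secret-share summand-sums (and their partitions during cheater identification) through $\FVSS$, all of which are uniformly chosen subject to $\AdvStructure$-privacy of $\ShareSpec$. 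Since these random coins and the corrupt parties' shares are identically distributed in both worlds, so is $\Adv$'s entire view of Stage II.

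The main obstacle I expect is bookkeeping rather than a conceptual issue: one must verify that in each invocation of $\FVSS$ by an honest party inside the nested $\OptMult$, $\MultLCE$ and cheater-identification subroutines, the shares of the dealt value revealed to $\Adv$ have the same joint distribution in both worlds. This reduces to the observation that honest dealers always choose a fresh uniformly random vector of shares summing to the intended value, and $\AdvStructure$-privacy of $\ShareSpec$ guarantees that the marginal of the revealed shares is uniform and independent of the dealt value. The simulator's honest-party emulation follows the same distribution, so the inductive comparison of views across all phases goes through and yields identical distributions.
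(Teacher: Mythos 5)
Your decomposition matches the paper's own proof almost exactly: the paper splits the view into the adversary's own $\FVSS$ inputs, the shares of honest parties' pairs revealed through $\FVSS$ (argued identical via $\AdvStructure$-privacy), the ACS sub-phase (argued identical because $\FABA$'s outputs are determined by the votes, which in turn depend on delivery ordering controlled by $\Adv$), and finally the $\MultGCE$ view (argued identical via Claims~\ref{claim:OptMultPrivacy}, \ref{claim:MultLCEPrivacy}, \ref{claim:MultGCEPrivacy} and the fact that the simulator knows the full sharings $[\sima], [\simb]$ and plays the honest parties faithfully). Your argument is correct and follows essentially the same route; the only minor imprecision is that the $\FABA$ votes are a function not just of the $\FVSS$ outputs but of when the adversary schedules their delivery, a point the paper makes explicit but which your framing glosses over without harm.
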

\begin{proof}
We first note that in the real-world (during the real execution of $\PiPerTriples$), the view of $\Adv$ consists of the following:
 \begin{mydescription}
 \item[(1)] The vector of shares $(a^{(j)}_1, \ldots, a^{(j)}_h)$ and $(b^{(i)}_1, \ldots, b^{(i)}_h)$   (if any) for
  $\FVSS$ with $\sid_{j, 1}$ and $\sid_{j, 2}$ respectively, corresponding to $P_j \in Z^{\star}$.
 \item[(2)] Shares  $\{[a^{(j)}]_q, [b^{(j)}]_q \}_{S_q \cap Z^{\star} \neq \emptyset}$, corresponding to $P_j \not \in Z^{\star}$.
 \item[(3)] Inputs of the various parties during the $\FABA$ instances as part of ACS and the outputs from the $\FABA$ instances.
 \item[(4)] The view generated for $\Adv$ during the instance of $\MultGCE$.
 \end{mydescription}
The vectors of shares in $(1)$ are the inputs of $\Adv$ and hence they are identically distributed in both the real as well as simulated
 execution of $\PiPerTriples$, so let us fix these vectors. 
  In the real execution, every $P_j \not \in Z^{\star}$ picks its pair of values randomly and the 
  vectors of shares for $\FVSS$, corresponding to these values, uniformly at random.
  On the other hand, in the simulated execution, the simulator picks the pair of values and their shares randomly 
   on the behalf of $P_j$. Now since the sharing specification
   $\ShareSpec = (S_1, \ldots, S_h) \defined \{ \PartySet \setminus Z | Z \in \AdvStructure\}$ is {\it $\AdvStructure$-private}, 
     it follows that the distribution of the shares in $(2)$ is identical in both the real, as well as the simulated execution.
   Specifically, conditioned on the shares in $(2)$, the underlying pairs of values shared by the parties $P_j \not \in Z^{\star}$
   are uniformly distributed.
  Since the partial view of $\Adv$ containing $(1)$ and $(2)$ are identically distributed, let us fix them.
  
  Now conditioned on $(1)$ and $(2)$, it is easy to see that the partial view of $\Adv$ consisting of $(3)$ is identically distributed in both the executions.
 This is because the outputs of the $\FABA$ instances are determined {\it deterministically} based on the inputs provided by the various parties
 in these $\FABA$ instances. Furthermore, the inputs of the parties in these $\FABA$ instances depend upon the order in which various parties
 receive outputs from various $\FVSS$ instances, which is completely determined by $\Adv$, since message scheduling is under the control of
 $\Adv$. Since in the simulated execution, the simulator provides the interface to various instances of $\FABA$ to $\Adv$
  in exactly the same way as $\FABA$ would have been accessed by $\Adv$ in the real execution, it follows
   that the partial view of $\Adv$ containing $(1), (2)$ and $(3)$ is identically distributed in both the executions and so let us fix this.
 This also fixes the set $\CoreSet$, which according to Claim \ref{claim:PerTriplesTermination}, is guaranteed to be generated. 
 
 Let $[a]$ and $[b]$ be the secret-sharing held by the honest parties after stage I, conditioned on the view of $\Adv$ in $(1), (2)$ and $(3)$. 
  Note that in the simulated execution, the simulator will be knowing the complete sharing $[a]$ and $[b]$. 
   This is because $[a]$ and $[b]$ are computed {\it deterministically} based on the secret-sharing of the pairs of the values shared by the
   parties in $\CoreSet$, all of which will be completely known to the simulator in the simulated execution. 
   To complete the proof of the claim, we need to show that the partial view of $\Adv$ consisting of $(4)$
   is identically distributed in both the executions (conditioned on $(1), (2)$ and $(3)$). 
   However, this follows from the privacy of $\OptMult, \MultLCE$ and $\MultGCE$
   (Claims \ref{claim:OptMultPrivacy}, \ref{claim:MultLCEPrivacy} and \ref{claim:MultGCEPrivacy}) 
   and the fact that in the simulated execution, simulator plays the role of the honest parties
   during the instance of $\MultGCE$ exactly as per the steps of $\MultGCE$, where the simulator will be completely knowing the 
   shares of both $[a]$ and $[b]$ corresponding to both the honest as well as corrupt parties. Consequently, it will be knowing the shares
   with which different parties have to participate in the underlying instances of $\OptMult$ and $\MultLCE$.
   Moreover, in the simulated execution, the simulator honestly plays the role of $\FVSS$ and $\FABA$ in these $\OptMult$
   and $\MultLCE$ instances. This 
   guarantees that the view of $\Adv$ during the real execution of the $\MultGCE$ instance
   is exactly the same as the view of $\Adv$ during the simulated execution of $\MultGCE$.
\end{proof}
Finally, we show that conditioned on the view of $\Adv$, the outputs of the honest parties are identically distributed in both the worlds.
\begin{claim}
\label{claim:PerTriplesOutputDistribution}
Conditioned on the view of $\Adv$, the output of the honest parties are identically distributed in the real execution of $\PiPerTriples$
  involving $\Adv$, as well as in the ideal execution
  involving $\SimPerTriples$ and $\FTriples$.
\end{claim}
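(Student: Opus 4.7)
The plan is to leverage Claim~\ref{claim:PerTriplesPrivacy} (which gives identical distributions of $\Adv$'s view in both worlds) and then argue that, conditioned on an arbitrary fixed view of $\Adv$, the honest parties' output shares have the same distribution in the real execution and in the ideal execution with $\FTriples$. First, I fix an arbitrary view $\View$ of $\Adv$. This view determines: (i) the vectors of shares $(\sima^{(i)}_1,\ldots,\sima^{(i)}_h), (\simb^{(i)}_1,\ldots,\simb^{(i)}_h)$ provided to $\FVSS$ on behalf of every $P_i \in Z^{\star}$; (ii) the shares $\{[a^{(j)}]_q, [b^{(j)}]_q\}_{S_q \cap Z^{\star} \neq \emptyset}$ for every $P_j \notin Z^{\star}$ that $\Adv$ received during Stage I; (iii) the common set $\CoreSet$; and (iv) the entire view of $\Adv$ during the instance of $\MultGCE$, which fixes the corrupt parties' shares $\{[c]_q\}_{S_q \cap Z^{\star} \neq \emptyset}$ of the product.

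Next I need to describe the conditional distribution of the honest parties' output shares, namely the values $\{[a]_q, [b]_q, [c]_q\}_{S_q \cap Z^{\star} = \emptyset}$, in both worlds (the shares for $S_q$ with $S_q \cap Z^{\star} \neq \emptyset$ are already fixed by $\View$). In the real execution, for every honest $P_j \in \CoreSet$, the shares $\{[a^{(j)}]_q, [b^{(j)}]_q\}_{S_q \cap Z^{\star} = \emptyset}$ are independently and uniformly chosen subject to summing (together with the fixed corrupt shares) to the random values $a^{(j)}, b^{(j)}$. Since $\ShareSpec$ is $\AdvStructure$-private, for every $S_q$ with $S_q \cap Z^{\star} = \emptyset$ the shares are unconstrained by $\View$, and therefore the totals $a = \sum_{P_j \in \CoreSet} a^{(j)}$ and $b = \sum_{P_j \in \CoreSet} b^{(j)}$ are uniform over $\F$ conditioned on $\View$, since $\CoreSet$ contains at least one honest party (because $\PartySet \setminus \CoreSet \in \AdvStructure$). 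Moreover, the shares $\{[a]_q, [b]_q\}_{S_q \cap Z^{\star} = \emptyset}$ are uniform subject to summing to $a$ and $b$ respectively. By Lemma~\ref{lemma:MultGCE}, the subsequent $\MultGCE$ instance yields $[c]$ with $c = ab$, and the shares $\{[c]_q\}_{S_q \cap Z^{\star} = \emptyset}$ are uniform subject to summing (together with the fixed corrupt shares) to $ab$.

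In the ideal execution, $\FTriples$ picks $a,b \in \F$ uniformly at random, sets $c = ab$, takes the shares $\{[\sima]_q, [\simb]_q, [\simc]_q\}_{S_q \cap Z^{\star} \neq \emptyset}$ from $\SimPerTriples$ (which equal the corrupt shares determined by $\View$, since $\SimPerTriples$ runs the exact same protocol steps internally), and then picks the remaining shares $\{[\sima]_q, [\simb]_q, [\simc]_q\}_{S_q \cap Z^{\star} = \emptyset}$ uniformly at random subject to summing to $a$, $b$, $ab$ respectively. Thus in both worlds the conditional distribution of the honest parties' output shares is the same: the totals $a, b$ are independent and uniform over $\F$, $c = ab$, and the shares indexed by sets $S_q$ disjoint from $Z^{\star}$ are uniform subject to the three summation constraints.

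The only subtle step is ensuring that $\SimPerTriples$'s message to $\FTriples$ carries the same corrupt shares as the corrupt shares that appear in the real execution. This holds because $\SimPerTriples$ internally executes an honest simulation of the entire $\PiPerTriples$ protocol against $\Adv$, using the same $\ShareSpec$, the same role for $\FVSS$ and $\FABA$, and the honest parties' behaviour; hence the corrupt shares of $a, b, c$ produced in the simulation are exactly the quantities that would appear in the real execution with the same view $\View$. Combining this with the above distributional argument and Claim~\ref{claim:PerTriplesPrivacy} completes the proof. The main obstacle is bookkeeping: justifying that (a) the total $a$ and $b$ are truly uniform conditioned on $\View$ (which relies on $\AdvStructure$-privacy of $\ShareSpec$ and the presence of at least one honest party in $\CoreSet$), and (b) the honest shares of $c$ generated inside $\MultGCE$ are uniform subject to summing to $ab$, rather than carrying some residual information from the summand-partition tests; the latter is precisely what Claim~\ref{claim:MultGCEPrivacy} guarantees.
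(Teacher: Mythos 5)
Your proof is correct and follows essentially the same route as the paper's: fix the adversary's view, argue that at least one honest party is in $\CoreSet$ so that $(a,b)$ is uniform over $\F^2$ conditioned on the view, invoke Lemma~\ref{lemma:MultGCE} for $c=ab$, and match the corrupt parties' shares to those $\SimPerTriples$ forwards to $\FTriples$. You add one additional bookkeeping step — explicitly arguing that the honest parties' shares $\{[a]_q,[b]_q,[c]_q\}_{S_q\cap Z^{\star}=\emptyset}$ are uniform subject to the summation constraints — which the paper leaves implicit, but this is the same argument at a slightly finer level of detail.
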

\begin{proof}
 Let $\view$ be an arbitrary view of $\Adv$, and let $\{([a^{(j)}], [b^{(j)}])\}_{P_j \in \CoreSet}$
 be the secret-sharing of the pairs of values as per $\view$, shared by the parties in $\CoreSet$.
  Note that $\CoreSet$ is bound to have at least one honest party. This is because $\PartySet \setminus \CoreSet \in \AdvStructure$
   and if $\CoreSet \subseteq Z^{\star}$, then it implies that $\AdvStructure$ {\it does not} satisfy
   the $\Q^{(2)}(\PartySet, \AdvStructure)$ condition, which is a contradiction.
  From the proof of Claim \ref{claim:PerTriplesPrivacy}, it follows that
  corresponding to every {\it honest} $P_j \in \CoreSet$, the pairs $(a^{(j)}, b^{(j)})$
  are uniformly distributed conditioned on the shares of these pairs, as determined by $\view$. 
  Let us fix these pairs. 
  
  We show that in the real-world, the 
  honest parties eventually output $([a], [b], [c])$, where conditioned on $\view$, 
  the triple $(a, b, c)$ is a uniformly random multiplication-triple over $\F$.
   From the protocol steps, the parties set 
  $[a] \defined \displaystyle \sum_{P_j \in \CoreSet} [a^{(j)}], [b] \defined \displaystyle \sum_{P_j \in \CoreSet} [b^{(j)}]$.
  Since corresponding to every $P_j \in \CoreSet$, the honest parties eventually hold 
  $[a^{(j)}]$ and $[b^{(j)}]$ (follows from Claim \ref{claim:PerTriplesTermination}), it follows that the honest parties eventually hold
  $[a]$ and $[b]$. Moreover, since $[c]$ is computed as the output of the instance
  $\MultGCE(\PartySet, \AdvStruct, \SharingSpec, [a], [b])$, it follows from Lemma \ref{lemma:MultGCE} that the honest parties will eventually hold
  $[c]$, where $c = ab$. We next show that conditioned on $\view$, the multiplication-triple $(a, b, c)$ is uniformly distributed over $\F$.
   However, this follows from the fact that there exists a one-to-one correspondence between the random pairs shared by the honest
   parties in $\CoreSet$ and $(a, b)$. 
     Namely, from the view point of $\Adv$, for every candidate pair $(a^{(j)}, b^{(j)})$
     shared by the {\it honest} parties $P_j \in \CoreSet$,
    there exists a unique $(a, b)$ which is consistent with $\view$. Since the pairs
     shared by the {\it honest} parties $P_j$ are uniformly distributed and independent of
    $\view$, it follows that $(a, b)$ is also uniformly distributed.
    Since $c = ab$ holds, it follows that $(a, b, c)$ is uniformly distributed. 

To complete the proof, we now show that conditioned on the shares $\{([a]_q, [b]_q, [c]_q) \}_{S_q \cap Z^{\star} \neq \emptyset}$ (which are determined by
    $\view$), the honest
    parties output a secret-sharing
     of some random multiplication-triple in the ideal-world which is consistent with
   the shares  $\{([a]_q, [b]_q, [c]_q) \}_{S_q \cap Z^{\star} \neq \emptyset}$.
    However, this simply follows from the fact that in the ideal-world, the simulator
    $\SimPerTriples$ sends the shares $\{([a]_q, [b]_q, [c]_q) \}_{S_q \cap Z^{\star} \neq \emptyset}$
     to $\FTriples$ on the behalf of the parties in $Z^{\star}$.
    As an output, $\FTriples$ generates a random secret-sharing of some random multiplication-triple consistent with the shares
    provided by $\SimPerTriples$. 
\end{proof}
  The proof of Theorem \ref{thm:PiPerTriples} now follows from Claim \ref{claim:PerTriplesPrivacy} and Claim \ref{claim:PerTriplesOutputDistribution}. 
  \end{proof}

 \section{Properties of the Statistically-Secure Pre-Processing Phase}
\label{app:Statistical}
In this section, we prove the security properties of all the statistically-secure subprotocols, followed by the
 statistically-secure preprocessing phase. We first start with the AICP.
\subsection{Properties of Our AICP}
\label{app:AICP}
In this section, we formally prove the properties of our AICP. While proving these properties, we assume that $\AdvStruct$ satisfies the $\Q^{(3)}(\PartySet,\AdvStruct)$ condition. We first show that when $\mathsf{S}$, $\mathsf{I}$ and $\mathsf{R}$ are honest, then all honest parties set the local bit indicating that the authentication has completed to $1$. Furthermore, $\mathsf{R}$ will accept the signature revealed by $\mathsf{I}$.

\begin{claim}[{\bf Correctness}]
\label{claim:AICPCorrectness}
If $\mathsf{S}, \mathsf{I}$ and $\mathsf{R}$ are {\it honest}, then each honest $P_i$ 
 eventually sets $\authCompleted^{(\sid,i)}_{\mathsf{S},\mathsf{I},\mathsf{R}}$ to $1$ during $\Auth$. Moreover, $\mathsf{R}$ eventually
  outputs $s$ during $\Reveal$.  
\end{claim}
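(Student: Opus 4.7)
The plan is to trace the protocol execution step-by-step under the assumption that $\mathsf{S}, \mathsf{I}, \mathsf{R}$ are all honest, and to argue that each asynchronous wait in both $\Auth$ and $\Reveal$ is eventually satisfied. Let $Z^{\star} \in \AdvStruct$ denote the set of corrupt parties, and set $\Hon \defined \PartySet \setminus Z^{\star}$. Note that $\PartySet \setminus \Hon = Z^{\star} \in \AdvStruct$, so $\Hon$ is itself a valid candidate for any ``supporting'' set that the protocol expects to form.

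First I will handle $\Auth$. Since $\mathsf{S}$ is honest, it eventually delivers $(\authPoly,\sid,F(x),M(x))$ to $\mathsf{I}$ and $(\authPoint,\sid,(\alpha_j,v_j,m_j))$ to every $P_j$, so every honest $P_j$ eventually replies with $(\Received,\sid,j)$ to $\mathsf{I}$. In particular $\mathsf{I}$ eventually receives $\Received$ from every party in $\Hon$, and since $\PartySet \setminus \Hon \in \AdvStruct$, the set $\R$ that $\mathsf{I}$ fixes satisfies the protocol condition and the $\Acast$ of $(d,B(x),\R)$ is triggered. By the correctness of $\FAcast$, every honest party (including $\mathsf{S}$) eventually obtains this broadcast. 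Now I would verify that $\mathsf{S}$ passes its check: $B(x)=dF(x)+M(x)$ has degree at most $t$; $\PartySet\setminus\R\in\AdvStruct$ by the choice of $\R$; and for every $P_j\in\R$, the point $(\alpha_j,v_j,m_j)$ was generated by $\mathsf{S}$ itself as $v_j=F(\alpha_j),\, m_j=M(\alpha_j)$ (even for corrupt $P_j\in\R$, since $\mathsf{S}$ has no way to know who is corrupt, but the values $v_j,m_j$ it compares against are the ones it originally chose). Hence $dv_j+m_j=B(\alpha_j)$ holds for all $P_j\in\R$, so $\mathsf{S}$ broadcasts $\OK$ through $\FAcast$. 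By eventual delivery, every honest $P_i$ receives both the $\mathsf{I}$-broadcast and the $\OK$ from $\mathsf{S}$, and therefore sets $\authCompleted^{(\sid,i)}_{\mathsf{S},\mathsf{I},\mathsf{R}} = 1$.

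Next I will handle $\Reveal$. Since $\OK$ (not $\NOK$) was delivered, the signature was not set publicly to $s$, so $\mathsf{R}$ proceeds to the second branch. The honest $\mathsf{I}$ eventually sends $(\revealPoly,\sid,F(x))$ to $\mathsf{R}$, and every honest $P_j\in\R$ eventually sends $(\revealPoint,\sid,(\alpha_j,v_j,m_j))$ to $\mathsf{R}$. For each such honest $P_j$, we have $v_j=F(\alpha_j)$ (this is the very point $\mathsf{S}$ distributed), so $\mathsf{R}$ accepts the revealed point. Let $\R' \defined \R \cap \Hon$; then $\R \setminus \R' \subseteq Z^{\star} \in \AdvStruct$, so $\R'$ is a valid witnessing set for the acceptance rule, and $\mathsf{R}$ eventually outputs $F(0)=s$.

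I do not anticipate a serious obstacle here: the argument is entirely a liveness argument that uses only the eventual-delivery guarantee of point-to-point channels and $\FAcast$, together with the $\Q^{(3)}(\PartySet,\AdvStruct)$ condition to ensure that $\Hon$ can play the role of each required ``supporting'' set. The one subtle point I will state carefully is why $\mathsf{S}$'s verification passes for \emph{every} $P_j \in \R$ (not just honest ones): the verification compares $\mathsf{S}$'s own stored $(v_j,m_j)$ against $B(\alpha_j)$, and these values were chosen so as to lie on $F$ and $M$ respectively, independently of whether $P_j$ is corrupt.
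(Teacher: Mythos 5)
Your proposal is correct and follows essentially the same line of reasoning as the paper's proof: trace $\Auth$ using the honesty of $\mathsf{S}$ and $\mathsf{I}$, observe that $\Hon$ witnesses the supporting-set condition so that $\mathsf{I}$ broadcasts $(d,B(x),\R)$, argue $\mathsf{S}$'s check succeeds and it broadcasts $\OK$, and then in $\Reveal$ use that the honest verifiers in $\R$ supply points lying on $F(x)$ so that $\R\cap\Hon$ witnesses acceptance. Your explicit remark that $\mathsf{S}$'s check passes even for corrupt $P_j\in\R$ because the comparison is against values $\mathsf{S}$ itself chose is a nice clarification, but the underlying argument is the same.
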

\begin{proof}
Let $\mathsf{S},\mathsf{I}$ and $\mathsf{R}$ be honest and let $\Hon$ be the set of {\it honest} parties among $\PartySet$. 
 Moreover, let $Z^{\star} \in \AdvStructure$ be the set of {\it corrupt} parties.
 We first show that each honest party $P_i$
eventually sets $\authCompleted^{(\sid,i)}_{\mathsf{S},\mathsf{I},\mathsf{R}}$ to $1$ during $\Auth$. During $\Auth$, $\mathsf{S}$ chooses 
 the signing-polynomial $F(x)$ such that $s = F(0)$ holds. $\mathsf{S}$ will then
 send the signing-polynomial $F(x)$ and masking-polynomial $M(x)$  to $\mathsf{I}$, and the 
 corresponding verification-point $(\alpha_i, v_i, m_i)$ to each verifier $P_i$, such that $v_i = F(\alpha_i)$ and $m_i = M(\alpha_i)$ holds.
 Consequently, each verifier in $\Hon$ 
  will eventually receive its verification-point and indicates this to $\mathsf{I}$. Since $\PartySet \setminus \Hon = Z^{\star} \in \AdvStructure$,
  it follows that $\mathsf{I}$ will eventually find a set $\R$, such that $\PartySet \setminus \R \in \AdvStructure$, and where 
   each verifier in $\R$ has indicated to $\mathsf{I}$ that it has received its verification-point. Consequently,
    $\mathsf{I}$ will compute $B(x) = dF(x) + M(x)$, and broadcast $(d, B(x), \R)$, which is eventually delivered to every honest party, including 
     $\mathsf{S}$.
     Moreover,  $\mathsf{S}$ will find that $B(\alpha_j) = dv_j + m_j$ holds 
     for all the verifiers $P_j \in \R$. Consequently, $\mathsf{S}$ will broadcast an $\OK$ message, which is eventually received by every honest party $P_i$, who then
     sets $\authCompleted^{(\sid,i)}_{\mathsf{S},\mathsf{I},\mathsf{R}}$ to $1$. Moreover, $\mathsf{I}$ will set $\ICSig(\mathsf{S}, \mathsf{I}, \mathsf{R}, s)$ to
     $F(x)$. 
     
     During $\Reveal$, $\mathsf{I}$ will send $F(x)$ to $\mathsf{R}$, and each verifier $P_i \in \Hon \cap \R$ will send its 
     verification points $(\alpha_i,v_i,m_i)$ to $\mathsf{R}$. These points and the polynomial $F(x)$ are eventually received by
     $\mathsf{R}$. Moreover, the condition $v_i = F(\alpha_i)$ will hold true for these points, and consequently these points
      will be {\it accepted}. 
     Since $\R \setminus (\Hon \cap \R) \subseteq Z^{\star} \in \AdvStructure$, it follows that
     $\mathsf{R}$ will eventually find a subset $\R' \subseteq \R$ where $\R \setminus \R' \in \AdvStructure$, such that the points corresponding to
     all the parties in $\R'$ are accepted. This implies that 
      $\mathsf{R}$ will eventually output $s = F(0)$.  
\end{proof}

We next show that when $\mathsf{S},\mathsf{I}$ and $\mathsf{R}$ are {\it honest}, then the adversary does not learn anything about
  $s$ during either $\Auth$ or $\Reveal$.
\begin{claim}[{\bf Privacy}]
\label{claim:AICPPrivacy}
If $\mathsf{S}, \mathsf{I}$ and $\mathsf{R}$ are {\it honest}, then the view of adversary $\Adv$ throughout $\Auth$ and $\Reveal$ is 
 independent of $s$. 
\end{claim}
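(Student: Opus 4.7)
The plan is to carry out a distributional argument showing that for every candidate secret $s^\ast \in \F$, the adversary's view is compatible with $s = s^\ast$ via exactly the same number of internal random tapes of $\mathsf{S}$, so the posterior distribution of $s$ given the view is uniform over $\F$. First, I would enumerate everything $\Adv$ observes. During $\Auth$, since $\mathsf{S}$ and $\mathsf{I}$ are honest, the point-to-point messages $(\authPoly,\sid,F(x),M(x))$ and the tuples $(\authPoint,\sid,(\alpha_j,v_j,m_j))$ sent to honest verifiers travel over secure channels and stay hidden. Hence the only information $\Adv$ gains is the corrupt verifiers' tuples $\{(\alpha_i,v_i,m_i)\}_{P_i \in Z^\star}$, the broadcast $(d,B(x),\R)$ from $\mathsf{I}$, and $\mathsf{S}$'s broadcast of $\OK$ (which is deterministic in this setting, since an honest $\mathsf{S}$ always validates an honest $\mathsf{I}$'s message). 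During $\Reveal$, every message is addressed to the honest $\mathsf{R}$ over point-to-point channels, so $\Adv$ learns nothing beyond what it already saw in $\Auth$.

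Next, I would fix an arbitrary view $V$ of $\Adv$ and count, for each $s^\ast \in \F$, the number of pairs $(F,M)$ of $t$-degree polynomials with $F(0)=s^\ast$ and $M$ unconstrained \emph{a priori} that are consistent with $V$. The constraints on $F$ are $F(\alpha_i) = v_i$ for every $P_i \in Z^\star$ together with $F(0)=s^\ast$. Since $|Z^\star| \leq t$ and the evaluation points $\alpha_i$ are nonzero, these at most $|Z^\star|+1 \leq t+1$ interpolation constraints are independent and cut out an affine subspace of dimension $t - |Z^\star|$ inside the $(t+1)$-dimensional space of $t$-degree polynomials, so there are exactly $|\F|^{t-|Z^\star|}$ candidates for $F$. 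Given any such $F$, the polynomial $M$ is forced to be $M(x) = B(x) - dF(x)$, which is $t$-degree (as $d \neq 0$); the residual consistency condition $M(\alpha_i)=m_i$ holds automatically because an honest $\mathsf{S}$ computes $v_i = F(\alpha_i)$, $m_i = M(\alpha_i)$ and then sets $B(x) = dF(x) + M(x)$, so $B(\alpha_i) = dv_i + m_i$ identically.

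Since this count $|\F|^{t-|Z^\star|}$ is independent of $s^\ast$, and since the prior over $(F,M)$ is uniform on its domain, I obtain $\Pr[V \mid s = s^\ast] = \Pr[V \mid s = s^{\ast\ast}]$ for all $s^\ast, s^{\ast\ast} \in \F$, whence $V$ is statistically independent of $s$. The main subtlety I expect to address most carefully is confirming that the publicly broadcast $B(x)$ imposes no constraint on $F$ beyond those already captured by the corrupt verifiers' verification points: knowing the $|Z^\star|$ values $m_i$ together with the identity $B = dF + M$ lets $\Adv$ deduce $dF(\alpha_i) = B(\alpha_i) - m_i$, but this is already determined by $v_i$, and at every other point $M$ remains uniformly free, so $F$ retains precisely $t - |Z^\star|$ degrees of freedom on top of the single constraint $F(0)=s$, which is exactly what the counting argument needs.
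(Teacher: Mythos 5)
Your argument is correct and takes essentially the same approach as the paper: both observe that the adversary's view consists only of the corrupt verification-points $\{(\alpha_i,v_i,m_i)\}_{P_i\in Z^\star}$, the broadcast $(d,B(x),\R)$, and the $\OK$ message, and both conclude that the randomness of the $t$-degree polynomials $F$ and $M$ leaves enough degrees of freedom that every candidate $s^\ast$ is equally consistent with the observed view. The only difference is presentational: you make the degrees-of-freedom argument explicit by counting the $|\F|^{t-|Z^\star|}$ consistent $(F,M)$ pairs per $s^\ast$ (with $M$ forced once $F$ is chosen), whereas the paper's proof states the corresponding bijection informally without carrying out the count.
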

\begin{proof}
Let $t = \max\{ |Z| :  Z \in \AdvStruct 	\}$ and let $Z^\star \in \AdvStructure$ be the set of corrupt parties. 
 For simplicity and without loss of generality, let 
  $|Z^\star| = t$. 
  During $\Auth$, the adversary $\Adv$ learns  $t$ verification-points $\{(\alpha_i, v_i, m_i) \}_{P_i \in Z^\star}$. 
  However, since $F(x)$ is a random $t$-degree polynomial with $F(0) = s$,
  the points $\{(\alpha_i, v_i) \}_{P_i \in Z^{\star}}$ are distributed independently of $s$.
  That is, for every candidate $s \in \F$ from the point of view of $\Adv$, there is a corresponding unique
  $t$-degree polynomial $F(x)$, such that $F(\alpha_i) = v_i$ holds corresponding to every $P_i \in Z^{\star}$.

  During $\Auth$, the adversary $\Adv$ also learns $d$ and the blinded-polynomial $B(x) = dF(x) + M(x)$, 
  along with the points $\{(\alpha_i, v_i) \}_{P_i \in Z^{\star}}$.
    However, this does not add any new information about $s$ to the view of the adversary. This is because
  $M(x)$ is a random $t$-degree polynomial. Hence for every candidate $M(x)$ polynomial from the point of view of
  $\Adv$ where $M(\alpha_i) = m_i$ holds for every $P_i \in Z^{\star}$, there is a corresponding unique 
  $t$-degree polynomial $F(x)$, such that $F(\alpha_i) = v_i$ holds corresponding to every $P_i \in Z^{\star}$, and where
  $dF(x) + M(x) = B(x)$. We also note that in $\Auth$, the signer $\mathsf{S}$ does not broadcast $s$, which follows from the 
  Claim \ref{claim:AICPCorrectness}. Finally, $\Adv$ does not learn anything new about $s$ during $\Reveal$, 
  since the verification-points and the signing-polynomial are sent only to $\mathsf{R}$.  
\end{proof}

We next prove the unforgeability property.
\begin{claim}[{\bf Unforgeability}]
\label{claim:AICPUnforgeability}
If $\mathsf{S}, \mathsf{R}$ are {\it honest}, $\mathsf{I}$ is corrupt
      and if $\mathsf{R}$ outputs $s'$ during $\Reveal$, then 
      $s' = s$ holds, except with probability 
      at most $\frac{nt}{|\F| - 1}$.
\end{claim}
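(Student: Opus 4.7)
The plan is to split the argument by the message broadcast by the honest signer $\mathsf{S}$ during $\Auth$. If $\mathsf{S}$ broadcasts $(\Sender,\Acast,\sid_\mathsf{S},\NOK,s)$, then by the protocol $\mathsf{R}$ publicly sets $\ICSig(\mathsf{S},\mathsf{I},\mathsf{R},s)=s$ during $\Auth$ and outputs $s$ in $\Reveal$, so $s'=s$ with probability $1$. The substantive case is when $\mathsf{S}$ broadcasts $\OK$; then $\mathsf{S}$'s check certifies that $B(\alpha_j) = dv_j + m_j$ holds for every $P_j \in \R$ with the original triples that $\mathsf{S}$ had distributed, and in particular this equality holds for every honest $P_j \in \R$ at exactly the points that $P_j$ later reveals to $\mathsf{R}$.

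The central structural step is to show that if the corrupt $\mathsf{I}$ drives $\mathsf{R}$ to output $s' = F'(0)$ for some $t$-degree polynomial $F'(x) \neq F(x)$, then at least one honest $P_j \in \R$ must satisfy $F(\alpha_j) = F'(\alpha_j)$. For any such honest $P_j$ the disjunct $B(\alpha_j) \neq dv_j + m_j$ of the acceptance rule fails, so its revealed point is accepted only when $v_j = F'(\alpha_j)$. Assuming for contradiction that no honest $P_j \in \R$ satisfies $F(\alpha_j) = F'(\alpha_j)$, the accepting set $\R' \subseteq \R$ at $\mathsf{R}$ lies entirely in the corrupt set $Z^\star$, whence $\R \setminus \R' \supseteq \R \cap \Hon = \PartySet \setminus (Z_\R \cup Z^\star)$, where $Z_\R \defined \PartySet \setminus \R \in \AdvStructure$ is guaranteed by $\mathsf{S}$'s check. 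Demanding $\R \setminus \R' \in \AdvStructure$ would then force $\PartySet \subseteq Z_\R \cup Z^\star \cup Z''$ for some $Z'' \in \AdvStructure$, directly contradicting the $\Q^{(3)}(\PartySet,\AdvStructure)$ condition.

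For the probability analysis I would observe that all $\revealPoly$ and $\revealPoint$ messages in $\Reveal$ travel over pairwise secure channels to $\mathsf{R}$, so $\Adv$'s view at the moment $\mathsf{I}$ commits to $F'$ is entirely determined by the $\Auth$ transcript: $F(x), M(x)$, the triples of the corrupt verifiers, and the public broadcasts $d$, $B(x) = dF(x) + M(x)$, $\R$ and $\OK$. None of these carries any information about the $\alpha_j$ of an honest $P_j$ beyond its being distinct from the $\alpha_k$ of the corrupt parties, so conditioned on $\Adv$'s view the marginal of any single honest $\alpha_j$ is uniform over $\F \setminus (\{0\} \cup \{\alpha_k : P_k \in Z^\star\})$, a set of size at least $|\F|-1-t$. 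Since $F - F'$ is non-zero of degree at most $t$ and hence has at most $t$ roots, the probability that $\alpha_j$ is a root is at most $\frac{t}{|\F|-1}$ up to lower-order corrections from the conditioning set size. A union bound over the at most $n$ honest verifiers in $\R$ then yields the claimed forgery probability $\frac{nt}{|\F|-1}$.

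The main obstacle I expect is formalizing the independence of $F'$ from the honest $\alpha_j$'s in the asynchronous, adaptive-scheduling UC setting: I would need to verify that no public message during $\Auth$, including $\mathsf{I}$'s chosen challenge $d$, the set $\R$, the blinded polynomial $B(x) = dF(x) + M(x)$, and $\mathsf{S}$'s $\OK/\NOK$ verdict, leaks any per-coordinate information about the $\alpha_j$ of an honest verifier beyond what is already implied by the corrupt $\alpha_k$'s. Once this independence is in place, the $\Q^{(3)}$-based combinatorial step and the polynomial-root counting make the bound fall out routinely.
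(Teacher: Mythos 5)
Your proposal is correct and follows essentially the same route as the paper's proof: case split on $\mathsf{S}$'s verdict, observe that in the $\OK$ case the disjunct $dv_j + m_j \neq B(\alpha_j)$ can never fire for an honest verifier, use the $\Q^{(3)}$ condition to force at least one honest verifier into any accepting set $\R'$, and bound the per-verifier probability that $\alpha_j$ is a common root of the $t$-degree polynomials $F$ and $F'$ by $\frac{t}{|\F|-1}$ with a union bound over the at most $n$ honest members of $\R$. One small note: the protocol draws each $\alpha_j$ independently and uniformly from $\F \setminus \{0\}$ without a distinctness constraint, so the conditional marginal of an honest $\alpha_j$ given the corrupt $\alpha_k$'s is exactly uniform on a set of size $|\F|-1$ and the ``lower-order correction'' you flag does not actually arise.
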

\begin{proof}
Let $\Hon$ be the set of honest parties in $\PartySet$ and let $Z^{\star}$ be the set of corrupt parties.
 Since $\mathsf{R}$ outputs $s'$ during $\Reveal$, it implies that during $\Auth$, the variable 
 $\authCompleted^{(\sid,i)}_{\mathsf{S},\mathsf{I},\mathsf{R}}$ is set to $1$ by $\mathsf{R}$, if
 $\mathsf{R} = P_i$. This further implies that $\mathsf{S}$ has broadcasted either an $\OK$ or an $\NOK$ message
 during $\Auth$, which further implies that $\mathsf{I}$ has broadcasted some blinded-polynomial $B(x)$ during $\Auth$.
 Now there are now two possible cases.
 \begin{myitemize}
     \item[--] {\it $\mathsf{S}$ has broadcasted $\NOK$ along with $s$ during $\Auth$}:
       In this case, every honest party including $\mathsf{R}$ would set $\ICSig(\mathsf{S}, \mathsf{I}, \mathsf{R}, s)$ to $s$ during $\Auth$. Moreover, during
       $\Reveal$, the receiver $\mathsf{R}$ outputs $s$. Hence, in this case, $s' = s$ holds with probability $1$.       
    \item[--] {\it $\mathsf{S}$ has broadcasted $\OK$ during $\Auth$}: This implies that during $\Auth$, 
    $\mathsf{I}$ had broadcasted a $t$-degree blinded-polynomial $B(x)$, along with the set $\R$. Furthermore,
    $\mathsf{S}$ has verified that $\PartySet \setminus \R \in \AdvStructure$ and 
    $B(\alpha_i) = dv_i + m_i$ holds for every verifier $P_i \in \R$.
    Now during $\Reveal$, if $\mathsf{I}$ sends $F(x)$ as $\ICSig(\mathsf{S}, \mathsf{I}, \mathsf{R}, s)$ to $\mathsf{R}$, then again
    $s' = s$ holds with probability $1$.     
    So consider the case when $\mathsf{I}$ sends $F'(x)$ as $\ICSig(\mathsf{S}, \mathsf{I}, \mathsf{R}, s)$ to $\mathsf{R}$, where 
    $F'(x)$ is a $t$-degree polynomial such that $F'(x) \neq F(x)$ and where $F'(0) = s'$. In this case, we show that except with probability at most $\frac{nt}{|\F| - 1}$,
    the verification-point of no {\it honest} verifier from $\R$ will get accepted by $\mathsf{R}$ during $\Reveal$, with respect to $F'(x)$.
     Now assuming that this statement is true,
    the proof follows from the fact that in order for $F'(x)$ to be accepted by $\mathsf{R}$, it should accept the verification-point of at least
    one {\it honest} verifier from $\R$ with respect to $F'(x)$. This is because 
    $\mathsf{R}$ should find a subset of verifiers $\R' \subseteq \R$ whose corresponding verification-points are accepted, where $\R \setminus \R' \in \AdvStructure$. So clearly,
    the set of {\it corrupt} verifiers
     in $\R$ {\it cannot} form a candidate $\R'$. This is because since $\AdvStructure$ satisfies the $\Q^{(3)}(\PartySet,\AdvStructure)$ condition,  
     it satisfies the $\Q^{(2)}(\R,\AdvStructure)$ condition as $\PartySet \setminus \R \in \AdvStructure$.
     This further implies that $\AdvStructure$ satisfies the $\Q^{(1)}(\R',\AdvStructure)$ condition as $\R \setminus \R' \in \AdvStructure$. 
     Hence, any candidate for $\R'$ must contain at least one honest party from $\R$.

    Consider an arbitrary verifier $P_i \in \Hon \cap \R$ from which $\mathsf{R}$ receives the verification-point $(\alpha_i,v_i,m_i)$ during $\Reveal$. 
      This point can be {\it accepted} only if either of the following holds.
	\begin{myitemize}
	\item $v_i = F'(\alpha_i)$:
	 This is possible with probability at most $\frac{t}{|\F| - 1}$. 
	 This is because $F'(x)$ and $F(x)$, being distinct $t$-degree polynomials can have  at most $t$ points in common, and since the evaluation-point $\alpha_i$ corresponding to $P_i$, being randomly selected from $\F - \{0\}$, will not be known to
	 $\mathsf{I}$.
	\item $dv_i + m_i \neq B(\alpha_i)$: This is impossible, as otherwise $\mathsf{S}$ would have broadcasted $s$ and $\NOK$ during 
	$\Auth$, which is a contradiction.	
	\end{myitemize}
     Now as there could be up to $n - 1$ {\it honest} verifiers in $\R$, it follows from the union bound that except with probability at most $\frac{nt}{|\F| - 1}$,
     the polynomial $F'(x)$ will not be accepted.     
 \end{myitemize}
\end{proof}

We next prove the non-repudiation property.
\begin{claim}[{\bf Non-Repudiation}]
\label{claim:AICPNonRepudiation}
If $\mathsf{S}$ is {\it corrupt} and $\mathsf{I}, \mathsf{R}$ are {\it honest}
    and if $\mathsf{I}$ has set $\ICSig(\mathsf{S}, \mathsf{I}, \mathsf{R}, s)$ during $\Auth$, then $\mathsf{R}$ 
   eventually outputs 
    $s$ during $\Reveal$, except with probability at most $\frac{n}{|\F| - 1}$.    
\end{claim}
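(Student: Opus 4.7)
The plan is to split on which verdict $\mathsf{S}$ broadcasts during $\Auth$ (this is well-defined because the broadcast is via $\FAcast$ and $\mathsf{I}$ set $\ICSig$), handle the $\NOK$-case trivially, and then for the $\OK$-case exploit the fact that honest $\mathsf{I}$ commits to a random challenge $d$ only \emph{after} all verification-points have been fixed by $\mathsf{S}$.

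First I would dispose of the easy case. If $\mathsf{S}$ broadcast $(\NOK, s)$, then by the $\Auth$ code every honest party, $\mathsf{R}$ included, publicly sets $\ICSig(\mathsf{S},\mathsf{I},\mathsf{R},s) = s$; during $\Reveal$ the receiver $\mathsf{R}$ therefore outputs $s$ unconditionally, i.e.\ with probability $1$. Otherwise $\mathsf{S}$ broadcast $\OK$, so $\mathsf{I}$ holds $\ICSig = F(x)$ where $F(x)$ is the $t$-degree polynomial $\mathsf{S}$ sent to $\mathsf{I}$ (and by the convention of $\Reveal$, $s = F(0)$); honest $\mathsf{I}$ sends this exact $F(x)$ to $\mathsf{R}$. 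In this case it suffices to show that, except with probability at most $n/(|\F|-1)$, every honest $P_j \in \R$ has its triple $(\alpha_j, v_j, m_j)$ \emph{accepted} by $\mathsf{R}$: once that holds, the set $\R' = \R \cap \Hon$ satisfies $\R \setminus \R' \subseteq Z^\star \in \AdvStructure$ (using monotonicity of $\AdvStructure$), so $\mathsf{R}$ eventually finds a valid $\R'$ and outputs $F(0) = s$.

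For the probability bound I would argue as follows. Fix an honest $P_j \in \R$. By the acceptance rule in $\Reveal$, $\mathsf{R}$ rejects $(\alpha_j,v_j,m_j)$ only when simultaneously $v_j \neq F(\alpha_j)$ and $B(\alpha_j) = d v_j + m_j$. Combined with $B(x) = dF(x)+M(x)$, these two conditions force
\[
d \;=\; \frac{m_j - M(\alpha_j)}{F(\alpha_j)-v_j},
\]
a single value in $\F$ fully determined by $\mathsf{S}$'s choices of $F(x), M(x), \alpha_j, v_j, m_j$. All these choices are committed \emph{before} $d$ is sampled, because $\mathsf{I}$ draws $d$ only after gathering the $\Received$ acknowledgements from $\R$, which each honest $P_j$ sends only after receiving its triple from $\mathsf{S}$. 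Since $\mathsf{I}$ is honest and picks $d$ uniformly from $\F \setminus \{0\}$, the probability of hitting this specific bad value is at most $1/(|\F|-1)$ for each $P_j$. A union bound over at most $n$ honest verifiers in $\R$ yields overall failure probability at most $n/(|\F|-1)$.

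The principal technical subtlety is keeping the information ordering straight: $\mathsf{S}$ may adaptively choose the verification-points and may also decide whether to broadcast $\OK$ or $\NOK$ after seeing $(d, B(x), \R)$, but crucially the triples dispatched to honest verifiers are frozen before $d$ is sampled, and $\mathsf{S}$ cannot retroactively modify a triple already in an honest verifier's hand. Once this commit-before-challenge ordering is pinned down the bound is immediate; combining it with the trivial $\NOK$-case gives the claimed $n/(|\F|-1)$ error probability.
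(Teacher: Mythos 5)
Your proposal is correct and tracks the paper's own proof essentially step for step: case split on whether $\mathsf{S}$ broadcasts $\NOK$ or $\OK$, a trivial $\NOK$ branch, and in the $\OK$ branch a per-verifier analysis showing that rejection of an honest $P_j$'s point forces $d$ onto a single value fixed before $\mathsf{I}$ samples $d$, followed by a union bound over at most $n$ honest verifiers. The only cosmetic difference is that you merge the two acceptance sub-cases into one rejection condition, whereas the paper treats $v_j = F(\alpha_j)$ and $v_j \neq F(\alpha_j)$ separately; the substance is identical.
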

\begin{proof}
Let $\Hon$ be the set of honest parties in $\PartySet$ and $Z^{\star} \in \AdvStructure$ be the set of corrupt parties.
 Since  $\mathsf{I}$ has set $\ICSig(\mathsf{S}, \mathsf{I}, \mathsf{R}, s)$ during $\Auth$, it implies that 
 that it has set the variable $\authCompleted^{(\sid, i)}_{\mathsf{S},\mathsf{I},\mathsf{R}}$ to $1$ during $\Auth$, if 
 $\mathsf{I} = P_i$. This further implies that $\mathsf{I}$ has broadcasted a blinded-polynomial 
 $B(x)$, the linear combiner $d$ and the set $\R$, where $B(x) = dF(x) + M(x)$ and where
  $F(x)$ and $M(x)$ are the signing and masking polynomials received by $\mathsf{I}$ from $\mathsf{S}$.  
   Moreover, $\mathsf{S}$ has broadcasted either an $\OK$ message or an $\NOK$ message. Consequently, all {\it honest}
   parties $P_j$, including $\mathsf{R}$, eventually set $\authCompleted^{(\sid, j)}_{\mathsf{S},\mathsf{I},\mathsf{R}}$ to $1$.
   Now there are two possible cases.   
   \begin{myitemize}
    \item {\it $\mathsf{S}$ has broadcasted $\NOK$, along with $s$ during $\Auth$}: In this case, all {\it honest} parties, including
    $\mathsf{I}$ and $\mathsf{R}$, set $\ICSig(\mathsf{S}, \mathsf{I}, \mathsf{R}, s)$ to $s$ during $\Auth$. 
    Moreover, from the steps of $\Reveal$, $\mathsf{R}$ outputs $s$ during $\Reveal$. Thus, the claim holds in this case with probability $1$.
    \item {\it $\mathsf{S}$ has broadcasted $\OK$ during $\Auth$}: In this case, 
    $\mathsf{I}$ sets $\ICSig(\mathsf{S}, \mathsf{I}, \mathsf{R}, s)$ to $F(x)$, where
    $F(0) = s$. 
    During $\Reveal$, $\mathsf{I}$ sends
    $F(x)$ to $\mathsf{R}$. Moreover, every verifier $P_i \in \Hon \cap \R$ eventually sends its verification-point 
    $(\alpha_i,v_i,m_i)$ to $\mathsf{R}$. We next show that except with probability at most $\frac{n}{|\F|-1}$, 
    all these verification-points are  accepted by $\mathsf{R}$. 
    Now assuming that this statement is true, the proof follows from the fact that 
      $\Hon \cap \R = \R \setminus Z^{\star}$. Consequently, $\mathsf{R}$ eventually accepts the verification-points from a subset of
      parties $\R' \subseteq \R$ where $\R \setminus \R' \in \AdvStructure$ and outputs
      $s$.
      
     Consider an arbitrary verifier $P_i \in \Hon \cap \R$ whose verification-point $(\alpha_i, v_i, m_i)$ is received by $\mathsf{R}$
     during $\Reveal$. Now there are two possible cases, depending upon the relationship that holds between $F(\alpha_i)$ and $v_i$
      during $\Auth$.
     \begin{myitemize}
	\item[--] {\it $v_i = F(\alpha_i)$ holds}: In this case, according to the protocol steps of $\Reveal$, the point $(\alpha_i, v_i, m_i)$ is {\it accepted} by $\mathsf{R}$.
	\item[--] {\it $v_i \neq F(\alpha_i)$ holds}: In this case, we claim that except with probability at most $\frac{1}{|\F| - 1}$, the condition 
	$dv_i + m_i \neq B(\alpha_i)$ will hold, implying that the point $(\alpha_i, v_i, m_i)$ is {\it accepted} by $\mathsf{R}$.
	This is because the {\it only} way $dv_i + m_i = B(\alpha_i)$ holds is when $\mathsf{S}$ distributes $(\alpha_i, v_i, m_i)$ to $P_i$
	where $v_i \neq F(\alpha_i)$ and $m_i \neq M(\alpha_i)$ holds, and $\mathsf{I}$ selects $d = (M(\alpha_i) - m_i)\cdot(v_i - F(\alpha_i))^{-1}$.
	However, $\mathsf{S}$ will {\it not} be knowing the random $d$ from $\F \setminus \{ 0\}$ which $\mathsf{I}$ picks while distributing
	$F(x), M(x)$ to $\mathsf{I}$, and $(\alpha_i, v_i, m_i)$ to $P_i$. 
	\end{myitemize}
    Now, as there can be up to $n - 1$ {\it honest} verifiers in $\R$, from the union bound, it follows that except with probability at most 
    $\frac{n}{|\F| - 1}$, the verification-point of {\it all} honest verifiers in $\R$ are accepted by $\mathsf{R}$.
   \end{myitemize}  
\end{proof}
We next derive the communication complexity of $\Auth$ and $\Reveal$.
\begin{claim}
\label{claim:AICPCommunication}
 Protocol $\Auth$ incurs a communication of $\Order(n \cdot \log{|\F|})$ bits
 and $\Order(1)$ calls to $\FAcast$ with $\Order(n \cdot \log{|\F|})$-bit messages. Protocol 
  $\Reveal$ requires a communication of $\Order(n \cdot \log{|\F|})$ bits.
\end{claim}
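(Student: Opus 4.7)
The plan is to directly inspect each step of $\Auth$ and $\Reveal$ and tally the bits transmitted over point-to-point channels separately from the messages handed to $\FAcast$. Throughout, I will use the bound $t = \max\{|Z| : Z \in \AdvStruct\} \leq n$, so that any $t$-degree polynomial over $\F$ is described by $t+1 = \Order(n)$ coefficients and hence by $\Order(n \log |\F|)$ bits, while each individual field element or evaluation point costs $\Order(\log |\F|)$ bits.

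For $\Auth$, I would first account for the \emph{Distributing the Polynomials and the Verification Points} step: the signer $\mathsf{S}$ sends the two $t$-degree polynomials $F(x), M(x)$ to $\mathsf{I}$, contributing $\Order(n \log |\F|)$ bits, and sends a triple $(\alpha_j, v_j, m_j) \in \F^3$ to each of the $n$ parties, contributing another $\Order(n \log |\F|)$ bits. The \emph{Confirming Receipt of Verification Points} step adds $\Order(n)$ short $(\Received, \sid, j)$ messages directed to $\mathsf{I}$, which are absorbed into the same $\Order(n \log |\F|)$ bound. The two remaining steps invoke $\FAcast$: $\mathsf{I}$ broadcasts $(d, B(x), \R)$, whose payload is dominated by the $t$-degree polynomial $B(x)$ of size $\Order(n \log |\F|)$ bits together with the set $\R \subseteq \PartySet$ of size $\Order(n \log n)$ bits, and $\mathsf{S}$ broadcasts either $\OK$ or $(\NOK, s)$, whose payload is $\Order(\log |\F|)$ bits. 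These are $\Order(1)$ calls to $\FAcast$ with messages of size $\Order(n \log |\F|)$ bits, and all other communication is $\Order(n \log |\F|)$ bits over the point-to-point channels, matching the claimed bound.

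For $\Reveal$, I would simply observe that the only messages transmitted (when $\ICSig$ was not set publicly during $\Auth$) are: the polynomial $F(x)$ sent from $\mathsf{I}$ to $\mathsf{R}$, costing $\Order(n \log |\F|)$ bits, and one triple $(\alpha_i, v_i, m_i) \in \F^3$ sent from each verifier $P_i \in \R$ to $\mathsf{R}$, contributing at most $\Order(n \log |\F|)$ bits in total across the at most $n$ verifiers. No invocations of $\FAcast$ occur in $\Reveal$. Summing these yields the asserted $\Order(n \log |\F|)$ bits.

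There is no real technical obstacle here; the only thing to be careful about is to (i) correctly separate the accounting into point-to-point communication versus $\FAcast$ invocations, as the lemma statement distinguishes the two, and (ii) use $t = \Order(n)$ uniformly so that polynomial descriptions are charged $\Order(n \log |\F|)$ rather than $\Order(t \log |\F|)$. The argument is essentially a one-pass bookkeeping through the pseudocode of Figure \ref{fig:AICP}.
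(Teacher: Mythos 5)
Your bookkeeping is correct and matches the paper's proof: both tally the point-to-point messages ($F,M$ and the verification-points in $\Auth$; $F$ and the revealed points in $\Reveal$) and the two $\FAcast$ calls by $\mathsf{I}$ and $\mathsf{S}$. Your accounting is slightly more explicit (e.g., noting the $\Received$ messages, the size of $\R$, and the implicit assumption $\log|\F| = \Omega(\log n)$ needed to absorb $\R$'s description into $\Order(n\log|\F|)$), but the approach is the same.
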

\begin{proof}
During $\Auth$, signer $\mathsf{S}$ sends $t$-degree polynomials
  $F(x)$ and $M(x)$ to $\mathsf{I}$, and verification-points to each verifier. This requires a communication of $\Order(n \cdot \log{|\F|})$ bits.
  Intermediary $\mathsf{I}$ needs to broadcast $B(x), d$ and the set $\R$, which requires one call to $\FAcast$ with a message of size
  $\Order(n \cdot \log{|\F|})$ bits. Moreover, $\mathsf{S}$ may need to broadcast $s$, 
  which requires one call to $\FAcast$ with a message of size
  $\Order(\log{|\F|})$ bits. 
   During $\Reveal$, $\mathsf{I}$ may send $F(x)$ to $\mathsf{R}$, and each verifier may send its verification-point to $\mathsf{R}$.
    This will require a communication of $\Order(n \cdot \log{|\F|})$ bits.
\end{proof}

Lemma \ref{lemma:AICP} now follows from Claims \ref{claim:AICPCorrectness}-\ref{claim:AICPCommunication}. \\~\\
\noindent {\bf Lemma \ref{lemma:AICP}.}
{\it Let $\AdvStructure$ satisfy the $\Q^{(3)}(\PartySet, \AdvStructure)$ condition. Then the pair of protocols $(\Auth, \Reveal)$ satisfies the following properties, except with
  probability at most $\errorAICP \defined \frac{nt}{|\F|-1}$, where
   $t = \max\{ |Z| :  Z \in \AdvStruct 	\}$.
    \begin{myitemize}
       \item[--] {\bf Correctness}:  If $\mathsf{S}, \mathsf{I}$ and $\mathsf{R}$ are {\it honest}, then each honest 
   $P_i$ eventually sets $\authCompleted^{(\sid,i)}_{\mathsf{S},\mathsf{I},\mathsf{R}}$ 
   to $1$ during $\Auth$. Moreover, $\mathsf{R}$ eventually outputs $s$ during $\Reveal$.
       \item[--] {\bf Privacy}: If $\mathsf{S}, \mathsf{I}$ and $\mathsf{R}$ are {\it honest}, then the view of adversary remains
      independent of $s$.
     \item[--] {\bf Unforgeability}: If $\mathsf{S}, \mathsf{R}$ are {\it honest}, $\mathsf{I}$ is corrupt
      and if $\mathsf{R}$ outputs $s' \in \F$ during $\Reveal$, then 
      $s' = s$ holds.      
    \item[--] {\bf Non-repudiation}: If $\mathsf{S}$ is {\it corrupt} and $\mathsf{I}, \mathsf{R}$ are {\it honest}
    and if $\mathsf{I}$ has set $\ICSig(\mathsf{S}, \mathsf{I}, \mathsf{R}, s)$ during $\Auth$, then $\mathsf{R}$ 
    eventually outputs 
    $s$ during $\Reveal$.    
    \end{myitemize}
 Protocol $\Auth$ incurs a communication of $\Order(n \cdot \log{|\F|})$ bits
 and $\Order(1)$ calls to $\FAcast$ with $\Order(n \cdot \log{|\F|})$-bit messages. Protocol 
  $\Reveal$ requires a communication of $\Order(n \cdot \log{|\F|})$ bits.}
\subsection{Statistical VSS Protocol}
\label{app:StatisticalVSS}
In this section, we prove the properties of $\PiStatVSS$ (see Fig \ref{fig:VSS_stat} for the formal description of the protocol)
 stated in Theorem \ref{thm:StatVSS}. Throughout the section, we assume that $\AdvStruct$ satisfies the $\Q^{(3)}(\PartySet,\AdvStruct)$ condition, 
  implying that $\ShareSpec = (S_1, \ldots, S_h) \defined \{ \PartySet \setminus Z | Z \in \AdvStructure\}$ satisfies the $\Q^{(2)}(\ShareSpec,\AdvStruct)$ condition. \\ \\
{\bf Theorem \ref{thm:StatVSS}}
 {\it  Let $\AdvStructure$ satisfy the $\Q^{(3)}(\PartySet, \AdvStructure)$ condition
  and let  $\ShareSpec = (S_1, \ldots, S_h) = \{ \PartySet \setminus Z | Z \in \AdvStructure\}$. 
  Then protocol $\PiStatVSS$ UC-securely computes $\FVSS$
   in the $\FAcast$-hybrid model, except with an error probability of at most $|\AdvStructure| \cdot n^3 \cdot \errorAICP$, where $\errorAICP \approx \frac{n^2}{|\F|}$. 
   The protocol makes $\Order(|\AdvStructure| \cdot n^3)$ calls to $\FAcast$ with $\Order(n \cdot \log{|\F|})$ bit messages
    and additionally incurs a communication of 
   $\Order(|\AdvStructure| \cdot n^4 \log{|\F|})$ bits. By replacing the calls to $\FAcast$ with protocol $\PiAcast$, the protocol incurs a total communication of
   $\Order(|\AdvStructure| \cdot n^6 \log{|\F|})$ bits.   
}
\begin{proof}
In the protocol, the dealer needs to send the share $s_q$ to all the parties in $S_q$, and this requires a communication of 
 $\Order(|\AdvStruct| \cdot n \log{|\F|})$ bits. An instance of $\Auth$ and $\Reveal$ is executed with respect to every ordered triplet of parties
  $P_i, P_j, P_k \in S_q$, leading to $\Order(|\AdvStructure| \cdot n^3)$ instances of $\Auth$ and $\Reveal$ being executed. 
    The communication complexity now follows from the communication complexity of $\Auth$ and $\Reveal$ (Claim \ref{claim:AICPCommunication})
    and from the communication complexity of the protocol $\PiAcast$ (Theorem \ref{thm:Acast}).

We next prove the security of the protocol. Let $\Adv$ be an arbitrary adversary, attacking the protocol $\PiStatVSS$ by corrupting a set of parties
  $Z^{\star} \in \AdvStructure$, and let $\Env$ be an arbitrary environment. We show the existence of a simulator $\SimSVSS$, such that for any
 $Z^{\star} \in \AdvStructure$,
  the outputs of the honest parties and the view of the adversary in the 
   protocol $\PiStatVSS$ is indistinguishable from the outputs of the honest parties and the view of the adversary in an execution in the ideal world involving 
  $\SimSVSS$ and $\FVSS$, except with probability at most $|\AdvStructure| \cdot n^3 \cdot \errorAICP$, where $\errorAICP \approx \frac{n^2}{|\F|}$
    (see Lemma \ref{lemma:AICP}). The simulator is very 
     similar to the simulator $\SimPVSS$ for the protocol $\PiPerVSS$ (see Fig \ref{fig:SimPVSS} in Appendix \ref{app:PerfectVSS}),
      except that the simulator now has to simulate giving and accepting signatures on the behalf of honest parties, as part of pairwise consistency checks.
       In addition, for each $S_q \in \ShareSpec$, the simulator has to simulate revealing signatures to the corrupt parties in
        $S_q \setminus \C_q$ on the behalf of the honest parties in $\C_q$. The simulator is formally presented in Figure \ref{fig:SimSVSS}.

\begin{simulatorsplitbox}{$\SimSVSS$}{Simulator for the protocol $\PiStatVSS$ where $\Adv$ corrupts the parties in set $Z^{\star}  \in \AdvStructure$}{fig:SimSVSS}
	\justify
$\SimSVSS$ constructs virtual real-world honest parties and invokes the real-world adversary $\Adv$. The simulator simulates the view of
 $\Adv$, namely its communication with $\Env$, the messages sent by the honest parties and the interaction with $\FAcast$. 
  In order to simulate $\Env$, the simulator $\SimPVSS$ forwards every message it receives from 
   $\Env$ to $\Adv$ and vice-versa.  The simulator then simulates the various phases of the protocol as follows, depending upon whether the dealer is honest or corrupt. \\~\\
 \centerline{\underline{\bf Simulation When $P_{\D}$ is Honest}} 
\noindent {\bf Interaction with $\FVSS$}: the simulator interacts with the functionality $\FVSS$ and receives a request based delayed output
  $(\Share,\sid,P_{\D},\{[s]_q\}_{S_q \cap Z^{\star} \neq \emptyset})$, on the behalf of the parties in $Z^{\star}$.\\[.2cm]
\noindent {\bf Distribution of Shares}: On the behalf of the dealer, the simulator sends $(\dist, \sid, P_{\D}, q, [s]_q)$ to $\Adv$, corresponding to 
 every $P_i \in Z^{\star} \cap S_q$. \\[.2cm]
 \noindent {\bf Pairwise Consistency Tests on IC-Signed Values}: 
 \begin{myitemize}
 \item[--]  For each $S_q \in \SharingSpec$ such that $S_q \cap Z^{\star} \neq \emptyset$,
  corresponding to each $P_i \in S_q \cap Z^{\star}$, the simulator does the following.
 	\begin{myitemize}
 	\item  On the behalf of every party $P_j \in S_q \setminus Z^{\star}$ as a signer
	and every $P_k \in S_q$ as a receiver, perform the role of the signer and the honest verifiers as per the steps of $\Auth$
	and interact with $\Adv$ on the behalf of the honest parties to give 
	 $\ICSig(\sid^{(P_\D,q)}_{j,i,k},P_j, P_i, P_k, s_{qj})$ to $P_i$, where $s_{qj} = [s]_q$.
 	\item On the behalf of every $P_j, P_k \in S_q$ as intermediary and receiver respectively, perform the role of the honest parties
	as per the steps of $\Auth$ and interact with $\Adv$ on the behalf of the honest parties, if $\Adv$
	gives the signature $\ICSig(\sid^{(P_\D,q)}_{i,j,k}, P_i, P_j, P_k, s_{qi})$  to $P_j$ on the behalf of the signer $P_i$. 
	Upon receiving the signature $\ICSig(\sid^{(P_\D,q)}_{i,j,k}, P_i, P_j, P_k, s_{qi})$ from $P_i$, record it.
	 	\end{myitemize}
 \item[--] For each $S_q \in \SharingSpec$ and for every $P_i, P_j \in S_q \setminus Z^{\star}$ the simulator simulates $P_i$ giving
   $\ICSig(\sid^{(P_\D,q)}_{i,j,k},P_i, P_j, P_k,v)$ to $P_j$, corresponding to each $P_k \in S_q$, 
   by playing the role of the honest parties and interacting with $\Adv$ on their behalf,
   as per the steps of $\Auth$, in the respective $\Auth$ instances. Based on the following conditions, 
   the simulator chooses the value $v$ in these instances  as follows.
 	\begin{myitemize}
 	\item {$S_q \cap Z^{\star} \neq \emptyset$}: Choose $v$ to be $[s]_q$. 
 	\item {$S_q \cap Z^{\star} = \emptyset$}: Pick a random element from $\F$ as $v$.
 	\end{myitemize}
 
 \end{myitemize}
 \noindent {\bf Announcing Results of Pairwise Consistency Tests}: 
\begin{myitemize}
\item[--] If for any $S_q \in \SharingSpec$, $\Adv$ requests an output from $\FAcast$ with $\sid^{(P_{\D}, q)}_{i, j}$ corresponding to parties
 $P_i \in S_q \setminus Z^{\star}$ and $P_j \in S_q$, then the simulator provides the output on the behalf of $\FAcast$ as follows.
 	\begin{myitemize}
 	\item If $P_j \in S_q \setminus Z^{\star}$, then send the output $(P_i, \Acast, \sid^{(P_{\D}, q)}_{i, j}, \OK_q(i,j))$.
 	\item If $P_j \in (S_q \cap Z^{\star})$, then send the output $(P_i, \Acast ,\sid^{(P_{\D}, q)}_{i, j}, \OK_q(i,j))$, if 
	$\ICSig(\sid^{(P_\D,q)}_{j,i,k},P_j, P_i, P_k, s_{qj})$ has been recorded 
	         on the behalf of $P_j$ as a signer, corresponding to the intermediary $P_i$ and every $P_k \in S_q$ as a receiver, 
	         such that $s_{qj} = [s]_q$ holds.	         	       
	\end{myitemize} 
\item[--] If for any $S_q \in \SharingSpec$ and any $P_i \in S_q \cap Z^{\star}$, 
 $\Adv$ sends $(P_i, \Acast, \sid^{(P_{\D}, q)}_{i, j}, \OK_q(i,j))$ to $\FAcast$
 with $\sid^{(P_{\D}, q)}_{i, j}$ on the behalf of 
  $P_i$ for any $P_j \in S_q$, then the simulator records it. 
  Moreover, if $\Adv$ requests an output from $\FAcast$ with $\sid^{(P_{\D}, q)}_{i, j}$, then 
  the simulator sends the output $(P_i, \Acast ,\sid^{(P_{\D}, q)}_{i, j}, \OK_q(i,j))$ on the behalf of $\FAcast$.
\end{myitemize}
\noindent {\bf Construction of Core Sets and Public Announcement}: 
\begin{myitemize}
\item[--] For each $S_q \in \ShareSpec$, the simulator plays the role of $P_{\D}$ and adds the 
 edge $(P_i, P_j)$ to the graph $G^{(\D)}_q$ over the vertex set $S_q$, if any one of the following is true.
		\begin{myenumerate}
		\item $P_i,P_j \in S_q \setminus Z^{\star}$.
		\item If $P_i \in S_q \cap Z^{\star}$ and $P_j \in S_q \setminus Z^{\star}$, then the simulator has recorded 
		$(P_i, \Acast, \sid^{(P_{\D}, q)}_{i, j}, \OK_q(i,j))$ sent by $\Adv$ on the behalf of $P_i$ to $\FAcast$ with $\sid^{(P_{\D}, q)}_{i, j}$, and has 
		recorded $\ICSig(\sid^{(P_\D,q)}_{i,j,k},P_i, P_j, P_k, s_{qi})$ on the behalf of $P_i$ as a signer
		and $P_j$ as an intermediary corresponding to every party $P_k \in S_q$ as a receiver, such that $s_{qi} = [s]_q$ holds.
		\item If $P_i, P_j \in S_q \cap Z^{\star}$, then the simulator has recorded $(P_i, \Acast, \sid^{(q)}_{i, j}, \OK_q(i,j))$
		 and $(P_j, \Acast,  \sid^{(q)}_{j, i}, \allowbreak, \OK_q(j,i))$ sent by $\Adv$ on behalf $P_i$ and $P_j$ respectively, to 
		 $\FAcast$ with $\sid^{(P_{\D}, q)}_{i, j}$ and $\FAcast$ with $\sid^{(P_{\D}, q)}_{j, i}$.
		\end{myenumerate}	
\item[--] For each $S_q \in \ShareSpec$, the simulator finds a set $\C_q$ which forms a clique in $G^{\D}_q$, such that $S_q \setminus \C_q \in \AdvStructure$.
  When $\Adv$ requests output from $\FAcast$ with $\sid_{P_{\D}}$, 
  the simulator sends the output $(\Sender, \Acast, \sid_{P_{\D}}, \{\C_q\}_{S_q \in \ShareSpec})$ on the behalf of $\FAcast$.
\end{myitemize}
\noindent {\bf Share Computation}: Once $\C_1, \ldots, \C_q$ are computed, then 
 for each $S_q \in \ShareSpec$, 
  simulator does the following for every $P_i \in (S_q \setminus \C_q) \cap Z^{\star}$ and every $P_j \in \C_q \setminus Z^{\star}$.
	\begin{myitemize}
	\item[--] Simulate the revelation of the signature $\ICSig(\sid^{(P_\D,q)}_{k,j,i},P_k, P_j, P_i, s_{qk})$ to $P_i$
	on the behalf of the intermediary $P_j$ corresponding to  every signer $P_k \in \C_q$, where $s_{qk} = [s]_q$, 
	by playing the role of the honest parties as per $\Reveal$ and interacting with $\Adv$. \\[.2cm] 	
	\end{myitemize}
 \centerline{\underline{\bf Simulation When $P_{\D}$ is Corrupt}} 
  In this case, the simulator $\SimSVSS$ interacts with $\Adv$ during the various phases of $\PiStatVSS$ as follows. \\[.2cm]
\noindent {\bf Distribution of Shares}: For $q = 1, \ldots, h$, if $\Adv$ sends $(\dist, \sid, P_{\D}, q, v)$ 
 on the behalf of $P_{\D}$ to any party $P_i \in S_q \setminus Z^{\star}$, then the simulator records it and sets $s_{qi}$ to $v$. \\[.2cm]
 \noindent {\bf Pairwise Consistency Tests on IC-Signed Values}: 
 \begin{myitemize}
 \item[--] For each $S_q \in \SharingSpec$ such that 
  $S_q \cap Z^{\star} \neq \emptyset$, corresponding to each party $P_i \in S_q \cap Z^{\star}$ and each 
  $P_j \in S_q \setminus Z^{\star}$, the simulator does the following.
 	\begin{myitemize}
 	\item If $s_{qj}$ has been set to some value, then simulate giving
	 $\ICSig(\sid^{(P_\D,q)}_{j,i,k},P_j, P_i, P_k, s_{qj})$ to $\Adv$ on the behalf of $P_j$ as a signer, corresponding to every $P_k \in \PartySet$ as receiver,
	 by playing the role of the honest parties as per the steps of $\Auth$.
 	\item Upon receiving $\ICSig(\sid^{(P_\D,q)}_{i,j,k}, P_i, P_j, P_k, s_{qi})$ from $\Adv$ on the behalf of $P_i$ as a 
	signer, corresponding to $P_j \in S_q$ as an intermediary and $P_k \in S_q$ as a receiver, record $\ICSig(\sid^{(P_\D,q)}_{i,j,k}, P_i, P_j, P_k, s_{qi})$.
 	\end{myitemize}
 \item[--] For each $S_q \in \SharingSpec$ such that 
  $S_q \cap Z^{\star} = \emptyset$, corresponding to each party $P_i, P_j \in S_q$, the simulator does the following.
  	\begin{myitemize}
  	\item Upon setting $s_{qi}$ to some value, simulate $P_i$ giving $\ICSig(\sid^{(P_\D,q)}_{i,j,k},P_i, P_j, P_k,s_{qi})$ to $P_j$, corresponding
	to every receiver $P_k \in S_q$, by playing the role of the honest parties and interacting with $\Adv$ as per the steps of $\Auth$.
  	\end{myitemize}
 \end{myitemize}
 
 \noindent {\bf Announcing Results of Pairwise Consistency Tests}: 
\begin{myitemize}
\item[--] If for any $S_q \in \SharingSpec$, $\Adv$ requests an output from $\FAcast$ with $\sid^{(P_{\D}, q)}_{i, j}$ corresponding to
 parties $P_i \in S_q \setminus Z^{\star}$ and $P_j \in S_q$, then the simulator provides the 
  output on the behalf of $\FAcast$ as follows, if $s_{qi}$ has been set to some value.
 	\begin{myitemize}
 	\item If $P_j \in S_q \setminus Z^{\star}$, then send the output $(P_i,\Acast,\sid^{(P_{\D}, q)}_{i, j},\OK_q(i,j))$, if
	$s_{qj}$ has been set to some value and $s_{qi} = s_{qj}$ holds.
 	\item If $P_j \in S_q \cap Z^{\star}$, then send the output $(P_i, \Acast ,\sid^{(P_{\D}, q)}_{i, j}, \OK_q(i,j))$, if 
	$\ICSig(\sid^{(P_\D,q)}_{j,i,k},P_j, P_i, P_k, s_{qj})$ has been recorded 
	         on the behalf of $P_j$ as a signer for the intermediary $P_i$, corresponding to every $P_k \in S_q$ as a receiver, such that $s_{qj} = s_{qi}$ holds.
 	\end{myitemize} 
\item[--]  If for any $S_q \in \SharingSpec$ and any $P_i \in S_q \cap Z^{\star}$, 
 $\Adv$ sends $(P_i, \Acast, \sid^{(P_{\D}, q)}_{i, j}, \OK_q(i,j))$ to $\FAcast$
 with $\sid^{(P_{\D}, q)}_{i, j}$ on the behalf of 
  $P_i$ for any $P_j \in S_q$, then the simulator records it. 
  Moreover, if $\Adv$ requests for an output from $\FAcast$ with $\sid^{(P_{\D}, q)}_{i, j}$, then 
  the simulator sends the output $(P_i, \Acast ,\sid^{(P_{\D}, q)}_{i, j}, \OK_q(i,j))$ on the behalf of $\FAcast$.
  \end{myitemize}
\noindent {\bf Construction of Core Sets}: For each $S_q \in \ShareSpec$, the simulator plays the role of 
 the honest parties $P_i \in S_q \setminus Z^{\star}$ and adds the edge $(P_j,P_k)$ to the graph $G^{(i)}_q$ over vertex set $S_q$, if any one of the following is true.
		\begin{myitemize}
		\item If $P_j, P_k \in S_q \setminus Z^{\star}$, then the simulator has set $s_{qj}$ and $s_{qk}$ to some values, such that $s_{qj} = s_{qk}$ holds.
		\item If $P_j \in S_q \cap Z^{\star}$ and $P_k \in S_q \setminus Z^{\star}$, then all the following should hold.
		\begin{myitemize}
		\item[--] The simulator has recorded 
		$(P_j, \Acast, \sid^{(P_{\D}, q)}_{j, k}, \OK_q(j,k))$ sent by $\Adv$ on the behalf of $P_j$ to $\FAcast$ with $\sid^{(P_{\D}, q)}_{j, k}$;
		\item[--] The simulator has recorded 
		$\ICSig(\sid^{(P_\D,q)}_{j,k,m},P_j, P_k, P_m, s_{qj})$ on the behalf of $P_j$ as a signer and $P_k$ as an intermediary, corresponding to
		 every receiver  $P_m \in S_q$;
		 \item[--] The simulator has set $s_{qk}$ to a value such that $s_{qj} = s_{qk}$ holds.
		 \end{myitemize}
		\item If $P_j, P_k \in S_q \cap Z^{\star}$, then the simulator has recorded 
		$(P_j, \Acast, \sid^{(P_{\D}, q)}_{j, k}, \OK_q(j, k))$ and $(P_k, \Acast, \sid^{(P_{\D}, q)}_{k, j}, \allowbreak  \OK_q(k, j))$ sent by $\Adv$ on behalf of
		$P_j$ and $P_k$ respectively, to $\FAcast$ with $\sid^{(P_{\D}, q)}_{j, k}$ and $\FAcast$ with $\sid^{(P_{\D}, q)}_{k, j}$. 		
		\end{myitemize}
\noindent {\bf Verification of Core Sets and Interaction with $\FVSS$}:
      \begin{myitemize}
        \item If  $\Adv$ sends $(\Sender, \Acast, \sid_{P_{\D}}, \{\C_q\}_{S_q \in \ShareSpec})$ to $\FAcast$ with $ \sid_{P_{\D}}$ on the behalf of
  $P_{\D}$, then the simulator records it. Moreover, if $\Adv$ requests for an output from $\FAcast$ with $ \sid_{P_{\D}}$, then on the behalf
  of $\FAcast$, the simulator sends the output $(P_{\D}, \Acast , \sid_{P_{\D}}, \{\C_q\}_{S_q \in \ShareSpec})$.
       \item If simulator has recorded the sets  $ \{\C_q\}_{S_q \in \ShareSpec}$, then it plays 
     the role of the honest parties and verifies if for $q = 1, \ldots, h$, the set 
     $\C_q$ is valid with respect to $S_q$, by checking
     if $S_q \setminus \C_q \in \AdvStructure$ and if $\C_q$ constitutes a clique in the graph $G^{(i)}_q$ of every party $P_i \in \PartySet \setminus Z^{\star}$.
    If $\C_1, \ldots, \C_q$ are valid, then the simulator sends $(\Share, \sid, P_{\D}, \{s_q\}_{S_q \in \ShareSpec})$ to $\FVSS$, where $s_q$ is set to 
     $s_{qi}$ corresponding to any $P_i \in \C_q \setminus Z^{\star}$. 
     \end{myitemize}
     
\end{simulatorsplitbox}   

We now prove a series of claims, which helps us to prove the theorem. We start with an {\it honest}
 $P_{\D}$.
\begin{claim}
\label{claim:SVSSHonestDPrivacy}
If $P_{\D}$ is honest, then the view of $\Adv$ in the simulated execution of $\PiStatVSS$ with $\SimPVSS$ is identically distributed 
 to the view of $\Adv$ in the real execution of $\PiStatVSS$ involving honest parties.
\end{claim}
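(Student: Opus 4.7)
The plan is to decompose the view of $\Adv$ into its constituent components, show that each component is identically distributed in the real and simulated executions, and then appeal to the fact that the joint distribution is determined by these components together with the (adversarially controlled) message scheduling. The components are: (i) the shares $\{[s]_q\}_{S_q \cap Z^\star \neq \emptyset}$ that $P_\D$ sends to corrupt parties; (ii) the messages of all invoked $\Auth$ instances visible to $\Adv$, coming from pairwise consistency tests; (iii) the $\OK_q(i,j)$ messages returned by $\FAcast$; (iv) the public core sets $\{\C_q\}_{S_q \in \ShareSpec}$ broadcast by $P_\D$; and (v) the $\Reveal$ messages during the share-computation phase.

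For (i), both worlds deliver exactly the shares output by $\FVSS$ from an identical distribution. For (iv), once the consistency graphs in the two worlds coincide, $\SimSVSS$ computes $\C_q$ by the same rule as the honest dealer, so the core sets match deterministically. For (v), whenever an honest intermediary $P_j \in \C_q$ reveals a signature to a corrupt receiver $P_k \in S_q \setminus \C_q$, the signed value is $[s]_q$ in both worlds: for an honest signer $P_i$ this is by direct inspection, and for a corrupt signer $P_i \in \C_q$ it follows from the fact that $P_j$ broadcast $\OK_q(j,i)$ only after verifying $s_{qi} = s_{qj} = [s]_q$; hence the $\Reveal$ transcripts match distributionally by the honest execution of $\Reveal$ in both worlds.

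Components (ii) and (iii) require more care, since $\SimSVSS$ uses the true $[s]_q$ when at least one of the parties in $S_q$ is corrupt but signs a \emph{random} value $v$ in every $\Auth$ instance whose signer, intermediary, and receiver are \emph{all} in $S_q \setminus Z^\star$ (which only occurs when $S_q \cap Z^\star = \emptyset$). I will split on cases accordingly. When at least one of $\mathsf{S},\mathsf{I},\mathsf{R}$ lies in $Z^\star$, the simulator plays every honest party in the instance exactly according to $\Auth$ using the real value $[s]_q$, so the transcript is drawn from the same distribution as in the real world. In the all-honest case, I will invoke the privacy property of Lemma~\ref{lemma:AICP}: the only adversary-visible messages of such an $\Auth$ instance are the public broadcasts $(d, B(x), \R)$ from $\mathsf{I}$ and $\OK$ from $\mathsf{S}$. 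Since $B(x) = dF(x) + M(x)$ with $M(x)$ a uniformly random $t$-degree polynomial, $B(x)$ is uniformly random in $\F[x]_{\leq t}$ independently of $F(0)$, and $\mathsf{S}$ always broadcasts $\OK$ (honest $\mathsf{S}$ never finds a defect); thus the marginal distribution of the public transcript is independent of the signed value and so matches whether this value is $[s]_q$ or a random $v$. Consequently the $\OK_q(i,j)$ messages in (iii), being deterministic functions of successful pairwise consistency checks, are also identically distributed: between two honest parties in a fully honest $S_q$ the check always succeeds in both worlds (the simulator uses a single consistent $v$ across the pair), and between pairs involving corrupt parties the success bit is a deterministic function of the already-matched (i) and (ii) components together with $\Adv$'s actions.

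The main obstacle will be bookkeeping the all-honest $\Auth$ case cleanly: I must ensure that the \emph{same} random $v$ is used by $\SimSVSS$ across the multiple $\Auth$ instances that $P_i$ initiates as a signer within a fully honest $S_q$ (so that every pairwise test among the honest parties of $S_q$ succeeds), and I must argue identity of the joint distribution over \emph{all} such instances simultaneously rather than one at a time. This follows because the masking polynomials and linear combiners of the different instances are sampled independently, so the joint transcript factors into a product of marginals each of which, by the argument above, is independent of the signed value; hybridizing over the fully honest instances therefore gives identity of the joint distribution.
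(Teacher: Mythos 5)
Your proposal follows the same decomposition of the adversary's view (shares, $\Auth$ transcripts, $\OK_q$ messages, core sets, $\Reveal$ transcripts) and invokes the same key lemma (AICP privacy, Lemma~\ref{lemma:AICP}) that the paper's own proof uses, and it is correct. Two small imprecisions worth flagging, though neither breaks the argument: the simulator's switch between signing $[s]_q$ and a random $v$ is keyed on $S_q \cap Z^\star = \emptyset$, not on whether the particular $(\mathsf{S},\mathsf{I},\mathsf{R})$ triple is all-honest (an all-honest triple can sit inside a group that contains corrupt members, in which case the simulator uses $[s]_q$ and the two worlds agree trivially, so your appeal to privacy there is unnecessary but not wrong); and in the all-honest-group case the adversary's view of each $\Auth$ instance also includes its own verification points as a corrupt verifier in $\PartySet \setminus S_q$, not only the public $(d, B(x), \R)$ and $\OK$ broadcasts, which is exactly why you must rely on the full privacy guarantee of the cited lemma rather than only on the uniformity of $B(x)$.
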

\begin{proof}
Let $\ShareSpec^{\star} \defined
  \{S_q \in \ShareSpec \mid S_q \cap Z^{\star} \neq \emptyset \}$. Then the 
   view of $\Adv$ during the two executions consists of the following.
\begin{myenumerate}
\item[--] {\bf The shares $\{[s]_q\}_{S_q \in \ShareSpec^{\star}}$ distributed by $P_{\D}$}: In the real execution, $\Adv$ receives $[s]_q$ from $P_{\D}$ for each $S_q \in \ShareSpec^{\star}$. In the simulated execution, the simulator provides this to $\Adv$ on behalf of $P_{\D}$. Clearly, the distribution of the shares is identical in both the
 executions.
\item[--] {\bf Corresponding to every $S_q \in \ShareSpec^{\star}$ and every triplet of parties $P_i, P_j, P_k$ where $P_j \in S_q \setminus Z^{\star}$,
  $P_i \in S_q \cap Z^{\star}$ and $P_k \in S_q$, the signature
  $\ICSig(\sid^{(P_\D,q)}_{j,i,k},P_j, P_i, P_k, s_{qj})$ received from $P_j$ 
    as part of pairwise consistency tests}: While $P_j$ sends this to $\Adv$ in the real execution, the simulator sends this on the behalf of $P_j$ in the simulated execution.
    Clearly, the distribution of the messages learnt by $\Adv$ during the corresponding instances of $\Auth$ is identical in both the executions.    
\item[--] {\bf Corresponding to every $S_q \in \ShareSpec$, every pair of parties $P_i, P_j \in S_q \setminus Z^{\star}$ and every $P_k \in S_q$, the view generated 
  when $P_i$ gives $\ICSig(\sid^{(P_\D,q)}_{i,j,k}, P_i, P_j, P_k, v))$ to $P_j$}: We consider the following two cases.
	\begin{myitemize}
	\item $S_q \in \ShareSpec^{\star}$ : In both the real and simulated execution, the value of $v$ is $[s]_q$. Since the simulator simulates the interaction of honest parties with $\Adv$ during the simulated execution, the distribution of messages is identical in both the executions.
	\item $S_q \notin \ShareSpec^{\star}$ : In the simulated execution, the simulator chooses $v$ to be a random element from $\F$, while in the real execution, $v$ is $[s]_q$. However, due to the privacy property of AICP (Claim \ref{claim:AICPPrivacy}), the view of $\Adv$ is independent of the value of $v$ in either of the 
	executions. Hence, the distribution of the messages is identical in both the executions. 
	\end{myitemize}
\item[--] {\bf For every $S_q \in \ShareSpec$ and every $P_i,P_j \in S_q$, the outputs $(P_i, \Acast, \sid^{(P_{\D}, q)}_{i, j}, \OK_q(i,j))$
 of the pairwise consistency tests, received from $\FAcast$ with $\sid^{(P_{\D}, q)}_{i, j}$}: 
 To compare the distribution of these messages in the two executions,
  we consider the following cases, considering an arbitrary $S_q \in \ShareSpec$ and arbitrary $P_i, P_j \in S_q$.
	\begin{myitemize}
	\item[--] $P_i,P_j \in S_q \setminus Z^{\star}$: In both the executions, 
	 $\Adv$ receives $(P_i, \Acast, \sid^{(P_{\D}, q)}_{ij}, \OK_q(i,j))$ as the output from $\FAcast$ with $\sid^{(P_{\D}, q)}_{i, j}$.
	\item[--] $P_i \in S_q \setminus Z^{\star}, P_j \in (S_q \cap Z^{\star}) $:  In both the 
	 executions, $\Adv$ receives $(P_i, \Acast, \sid^{(P_{\D}, q)}_{i, j}, \allowbreak \OK_q(i,j))$ as the output from $\FAcast$ with 
	 $\sid^{(P_{\D}, q)}_{i, j}$ if and only if $\Adv$ gave $\ICSig(\sid^{P_\D,q}_{j,i,k},P_j, P_i, P_k,s_{qj})$ on the behalf of $P_j$ to
	  $P_i$, corresponding to every $P_k \in S_q$, such that $s_{qj} = [s]_q$ holds.
	\item[--] $P_i \in (S_q \cap Z^{\star}) $:  In both the executions, 
	$\Adv$ receives $(P_i, \Acast, \sid^{(q)}_{i, j}, \OK_q(i,j))$ if and only if 
	$\Adv$ on the behalf of $P_i$ has sent $(P_i, \Acast, \sid^{(P_{\D}, q)}_{i, j}, \OK_q(i,j))$ to $\FAcast$
	with $\sid^{(P_{\D}, q)}_{i, j}$ for $P_j$.
	\end{myitemize}
   Clearly, irrespective of the case, the distribution of the $\OK$ messages is identical in both the executions.
\item[--] {\bf The core sets $\{\C_q\}_{S_q \in \ShareSpec}$}: In both the 
 executions, the sets $\C_q$ are determined based on the $\OK_q$ messages delivered to $P_{\D}$. So the distribution of these sets is also identical. 
\end{myenumerate}
\item[--] {\bf Corresponding to every $S_q \in \ShareSpec^{\star}$, for every triplet of parties $P_i, P_j, P_k$ where
 $P_i \in \C_q \setminus Z^{\star}$, $P_j \in (S_q \setminus \C_q) \cap Z^{\star}$ and 
 $P_k \in \C_q$, 
  the signatures $\ICSig(\sid^{P_\D,q}_{k,i,j},P_k, P_i, P_j,s_{qk})$ revealed by party $P_i$ to $P_j$, signed by party $P_k$}: We note that the distribution of core sets $\C_q$ 
   is the same in both the executions. In the real execution, $P_i$, upon receiving $\ICSig(\sid^{P_\D,q}_{k,i,j},P_k, P_i, P_j,s_{qk})$ during $\Auth$,
    checks if $s_{qk} = s_{qi}$ holds, before adding the edge $(P_i,P_k)$ in $G^i_q$. Since $P_\D$ is honest, $s_{qi} = [s]_q$. In the simulated
     execution as well, the simulator reveals $\ICSig(\sid^{P_\D,q}_{k,i,j},P_k, P_i, P_j,s_{qk})$ to $\Adv$, where $s_{qk} = [s]_q$. Hence, the distribution of messages is identical in both executions.
\end{proof}

We next claim that if the dealer is {\it honest}, then conditioned on the view of the adversary $\Adv$ (which is identically distributed in both the executions, as per the previous claim),
  the outputs of the honest parties are identically distributed in both the executions.
\begin{claim}
\label{claim:SVSSHonestDCorrectness}
If $P_{\D}$ is honest, then conditioned on the view of $\Adv$, the output of the honest parties during the execution of $\PiStatVSS$ involving $\Adv$ has the
   same distribution as the output of the honest parties in the ideal-world involving $\SimPVSS$ and $\FVSS$, except with probability
   at most $|\AdvStructure| \cdot n^3 \cdot \errorAICP$, where $\errorAICP \approx \frac{n^2}{|\F|}$.
\end{claim}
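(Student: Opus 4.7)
The plan is to follow the structure of Claim \ref{claim:PVSSHonestDCorrectness} for the perfectly-secure VSS, with the key difference being the use of the \textbf{unforgeability} and \textbf{non-repudiation} properties of AICP (Lemma \ref{lemma:AICP}) in place of the majority-filtering argument, since $\C_q$ is now only guaranteed to contain at least one honest party rather than an honest majority. First, I would fix an arbitrary view $\View$ of $\Adv$ (which by Claim \ref{claim:SVSSHonestDPrivacy} has the same distribution in both executions), letting $\{s_q\}_{S_q \cap Z^\star \neq \emptyset}$ denote the shares of the corrupt parties and $\{s_q\}_{S_q \cap Z^\star = \emptyset}$ denote the random shares chosen by $\SimSVSS$ on behalf of $P_\D$; in the ideal world the honest $P_i$ obtains $\{[s]_q = s_q\}_{P_i \in S_q}$ from $\FVSS$. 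The goal is to show that in the real execution, except with probability at most $|\AdvStructure|\cdot n^3 \cdot \errorAICP$, every honest $P_i$ also eventually outputs $\{[s]_q = s_q\}_{P_i \in S_q}$.

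Next I would establish \emph{eventual termination}: arguing that for every $S_q$, since $P_\D$ is honest and sends the same $s_q$ to all parties in $S_q$, every pair of honest parties in $S_q$ eventually exchanges valid IC-signatures on the same value and broadcasts $\OK_q$, so the set $S_q \setminus Z^\star$ forms a clique in every honest consistency graph and constitutes a valid candidate for $\C_q$. Hence $P_\D$ eventually broadcasts core sets $\{\C_q\}$ which pass verification at every honest party, and each honest party $P_i \in \C_q$ trivially sets $[s]_q = s_{qi} = s_q$.

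The main obstacle, and the step that carries the entire error probability, is handling honest parties $P_i \in S_q \setminus \C_q$. For such $P_i$, I would argue in two directions. For \emph{liveness}, note that $\C_q$ must contain at least one honest party $P_h$ (since $S_q \setminus \C_q \in \AdvStructure$ and $\AdvStructure$ satisfies $\Q^{(3)}(\PartySet,\AdvStructure)$, so $\C_q \subseteq Z^\star$ would violate $\Q^{(2)}(S_q,\AdvStructure)$); this $P_h$ holds $s_{qh} = s_q$ and, during $\Auth$, has received $\ICSig(\sid^{(P_\D,q)}_{k,h,i},P_k,P_h,P_i,s_q)$ from every $P_k \in \C_q$. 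By the \textbf{non-repudiation} property of AICP applied to each signature whose signer $P_k$ is corrupt (and, trivially, to those where $P_k$ is honest), except with probability $\errorAICP$ per instance, $P_i$ eventually accepts every signature that $P_h$ reveals, so $P_i$ eventually finds a valid $P_j \in \C_q$ (namely $P_h$) and sets $[s]_q$ to some value. For \emph{correctness}, whenever $P_i$ accepts signatures from any (possibly corrupt) $P_j \in \C_q$ with revealed value $s_{qj}$, applying the \textbf{unforgeability} property to the signature $\ICSig(\sid^{(P_\D,q)}_{h,j,i},P_h,P_j,P_i,s_{qj})$ issued by the honest $P_h \in \C_q$ guarantees, except with probability $\errorAICP$, that $s_{qj}$ equals the value $P_h$ signed, namely $s_q$.

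Finally I would apply the union bound: for each $S_q \in \ShareSpec$ (there are $|\ShareSpec| = |\AdvStructure|$ of them) and each ordered triple of parties in $S_q$ (at most $n^3$ per set), at most one instance of $\Auth/\Reveal$ is invoked whose unforgeability or non-repudiation could fail, each with error $\errorAICP$. Summing gives a total failure probability of at most $|\AdvStructure|\cdot n^3 \cdot \errorAICP$, and conditioned on none of these bad events occurring, every honest $P_i$ outputs $\{[s]_q = s_q\}_{P_i \in S_q}$, matching the distribution in the ideal world.
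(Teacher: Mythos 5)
Your proof is correct and follows essentially the same route as the paper's: fix the adversary's view, use the honest dealer's pairwise consistency to guarantee valid core sets, then invoke non-repudiation of AICP for liveness (the honest $P_h\in\C_q$'s revealed signatures are accepted) and unforgeability for correctness (a corrupt $P_j\in\C_q$ cannot forge the honest $P_h$'s signature on a wrong $s_{qj}$), and finally union-bound. Your accounting of the union bound — one AICP instance per ordered triple in $S_q$, $n^3$ triples per set, $|\AdvStructure|$ sets — is in fact a somewhat cleaner way to arrive at the $|\AdvStructure|\cdot n^3\cdot\errorAICP$ bound than the nested $n\times n\times n$ counting the paper uses.
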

\begin{proof}
Let $P_{\D}$ be honest and let $\View$ be an arbitrary view of $\Adv$. Moreover, let $\{s_q \}_{S_q \cap Z^{\star} \neq \emptyset}$ be the shares
 of the corrupt parties, as per $\View$. Furthermore, let $\{s_q \}_{S_q \cap Z^{\star} = \emptyset}$ be the shares used by $P_{\D}$
  in the simulated execution corresponding to the set $S_q \in \ShareSpec$, such that $S_q \cap Z^{\star} = \emptyset$. 
     Let $s \defined \displaystyle \sum_{S_q \cap Z^{\star} \neq \emptyset} s_q + \sum_{S_q \cap Z^{\star} = \emptyset} s_q$.
   Then, in the simulated execution, each {\it honest} party $P_i$ obtains the output $\{[s]_q\}_{P_i \in S_q}$ from $\FVSS$, where
   $[s]_q = s_q$. We now show that except with probability at most $|\AdvStructure| \cdot n^3 \cdot \errorAICP$,
   each honest $P_i$ eventually obtains the
    output $\{[s]_q\}_{P_i \in S_q}$ in the real execution as well, if $P_{\D}$'s inputs in the protocol $\PiStatVSS$ are 
    $\{s_q \}_{S_q \in \ShareSpec}$.
    
    Since $P_{\D}$ is {\it honest}, it sends the share $s_q$ to {\it all} the parties in the set $S_q$, which is eventually delivered. Now consider {\it any} $S_q \in \ShareSpec$.
     During the pairwise consistency tests, each {\it honest} $P_k \in S_q$ will eventually send $\ICSig(\sid^{(P_\D,q)}_{k,j,m}, P_k, P_j, P_m, s_{qk})$ to 
      {\it all} the parties $P_j$ in $S_q$, with respect to every receiver
       $P_m \in \PartySet$, where $s_{qk} = s_q$. Consequently, every {\it honest} $P_j \in S_q$ will eventually broadcast the $\OK_q(j, k)$ message, 
       corresponding to every {\it honest} $P_k \in S_q$. This is because, by the correctness of AICP (Claim \ref{claim:AICPCorrectness}), 
       $P_j$ will receive $s_{qk}$, and $s_{qj} = s_{qk} = s_q$ will hold.
      So, every {\it honest} party (including $P_{\D}$) eventually receives the 
      $\OK_q(j, k)$ messages
          This implies that the parties in $S_q \setminus Z^{\star}$ will eventually form a clique in the graph $G^{(i)}_q$
      of every {\it honest} $P_i$. This further implies that $P_{\D}$ will eventually find a set $\C_q$ where $S_q \setminus \C_q \in \AdvStructure$
      and where $\C_q$ constitutes a clique in the consistency graph of every honest party. 
      This is because the set $S_q \setminus Z^{\star}$ is guaranteed to eventually constitute a clique.
            Hence, $P_{\D}$ eventually broadcasts the sets $\{ \C_q\}_{S_q \in \ShareSpec}$, which are eventually delivered to every honest
      party. Moreover, the verification of these sets will eventually be successful for every honest party.
      
      Next consider an arbitrary $S_q$ and an arbitrary
            {\it honest} $P_i \in S_q$. If $P_i \in \C_q$, then it has already received the share $s_{qi}$ from $P_{\D}$
      and $s_{qi} = s_q$ holds. Hence, $P_i$ sets $[s]_q$ to $s_q$. So consider the case when $P_i \not \in \C_q$. In this case, 
      $P_i$ waits to find some $P_j \in \C_q$ such that $P_i$ accepts the signature
		$\ICSig(\sid^{(P_\D,q)}_{k,j,i},P_k, P_j, P_i, s_{qj})$ from intermediary $P_j$, corresponding to every signer $P_k \in \C_q$
		and upon finding such a $P_j$, party $P_i$ sets $[s]_q$ to $s_{qj}$. We show that except with probability at most $n \cdot \errorAICP$,
		party $P_i$ will eventually find a candidate $P_j$ satisfying the above condition. Moreover,
		if $P_i$ finds a candidate $P_j$ satisfying the above condition, then except with probability at most $n \cdot \errorAICP$,
		the condition $s_{qj} = s_q$ holds. As $P_i$ can have up to $\Order(n)$ candidates for $P_j$, 
		it will follow from the union bound that except with probability at most $n^2 \cdot \errorAICP$, 
		party $P_i$ will eventually compute $[s]_q$. 
		 Now assuming these statements are true, the proof follows from the union bound
		and the fact that $S_q$ can be any set out of $|\AdvStructure|$ subsets in $\ShareSpec$ and for any $S_q$, there could be upto
		$\Order(n)$ honest parties $P_i$ in $S_q \setminus C_q$. We next proceed to prove the above two statements.
		
   Since $\ShareSpec$ satisfies the $\Q^{(2)}(\ShareSpec, \AdvStructure)$ condition and $S_q \setminus \C_q  \in \AdvStructure$, it follows that  		
  $\AdvStructure$ satisfies the $\Q^{(1)}(\C_q,\AdvStructure)$ condition and hence $\C_q$  contains at least one {\it honest} party, say $P_{h}$. 
  Consider any arbitrary $P_k \in \C_q$. From the protocol steps, $P_h$ has broadcasted the $\OK_q(h, k)$ after receiving 
  $\ICSig(\sid^{(P_\D,q)}_{k,h,i},P_k, P_h, P_i, s_{qk})$ from $P_k$ during $\Auth$ and verifying that $s_{qk} = s_{qh}$ holds, where $s_{qh} = s_q$. 
  It then follows from Lemma \ref{lemma:AICP}, that except with probability at most $\errorAICP$, party
  $P_i$ will accept the signature  $\ICSig(\sid^{(P_\D,q)}_{k,h,i},P_k, P_h, P_i, s_{qh})$ revealed by $P_h$.
  Hence, except with probability at most $n \cdot \errorAICP$, party $P_i$ will eventually accept the signature 
   $\ICSig(\sid^{(P_\D,q)}_{k,h,i},P_k, P_h, P_i, s_{qh})$ corresponding to {\it all} $P_k \in \C_q$, revealed by $P_h$. 
   
   Finally, consider an {\it arbitrary} $P_j \in \C_q$, such that $P_i$ has accepted the signature $\ICSig(\sid^{(P_\D,q)}_{k,j,i},P_k, P_j, P_i, s_{qj})$
    corresponding to {\it all} $P_k \in \C_q$ and sets $[s]_q$ to $s_{qj}$. Now one of these signatures corresponds to the signer $P_k = P_h$.
    If $P_j$ is {\it corrupt}, then it follows from Lemma \ref{lemma:AICP}, that except with probability at most $\errorAICP$,
    the condition $s_{qj} = s_{qh}$ holds. As there can be up to $\Order(n)$ honest parties $P_h$ in $\C_q$,
    it follows that $P_j$ will fail to reveal signature of any honest party from $\C_q$ on any $s_{qj} \neq s_q$, except with probability at most
    $n \cdot \errorAICP$. Since there can be up to $\Order(n)$ corrupt parties $P_j \in \C_q$, it then follows from the union bound that
    except with error probability $n^2 \cdot \errorAICP$, no corrupt party from $\C_q$ will be able to forge the signature of any honest party from $\C_q$
    on an incorrect $s_q$. 
\end{proof}

We next prove certain claims with respect to a {\it corrupt} dealer. The first claim is that the view of $\Adv$ in this
case is also identically distributed in both the real as well as simulated execution. This is simply
because in this case, the {\it honest} parties have {\it no} inputs
  and the simulator simply plays the role of the honest parties, {\it exactly}
 as per the steps of the protocol $\PiStatVSS$ in the simulated execution.

\begin{claim}
\label{claim:SVSSCorruptDPrivacy}
If $P_{\D}$ is corrupt, then 
the view of $\Adv$ in the simulated execution of $\PiStatVSS$ with $\SimPVSS$ is identically distributed to the view of $\Adv$ in the real execution of
$\PiStatVSS$ involving honest parties.
\end{claim}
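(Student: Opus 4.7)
The plan is to mirror the argument used for Claim \ref{claim:PVSSCorruptDPrivacy}: when $P_\D$ is corrupt, the honest parties have no private inputs of their own to protect, so the simulator $\SimSVSS$ is able to emulate them step-for-step according to the real-world code of $\PiStatVSS$. I would therefore argue a one-to-one correspondence between real and simulated transcripts, so that the distribution of $\Adv$'s view in each world matches exactly.

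Concretely, I would proceed as follows. First, observe from the corrupt-$P_\D$ portion of Fig.~\ref{fig:SimSVSS} that every value $s_{qi}$ held by an honest $P_i\in S_q\setminus Z^{\star}$ in the simulated execution is set to the exact value $v$ that $\Adv$ distributes on behalf of $P_\D$ via $(\dist,\sid,P_\D,q,v)$; these are precisely the same values the honest parties record in the real execution. Hence all subsequent messages the simulator produces on behalf of the honest parties in the pairwise consistency phase—namely the AICP instances $\Auth$ that distribute $\ICSig(\sid^{(P_\D,q)}_{i,j,k},P_i,P_j,P_k,s_{qi})$ and the ones that verify incoming signatures—are produced by running the real honest-party code with the same inputs and independent fresh random coins of the correct distribution (the random signing polynomial $F(x)$, masking polynomial $M(x)$, evaluation points $\alpha_\ell$, and challenge $d$ all have the same marginal distributions as in the real protocol).

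Next, I would check the two remaining simulation points: the simulated $\FAcast$ interface and the $\OK_q(i,j)$ broadcasts. The simulator delivers $(P_i,\Acast,\sid^{(P_\D,q)}_{i,j},\OK_q(i,j))$ on behalf of $\FAcast$ under exactly the conditions an honest $P_i$ in the real execution would have sent such a message (i.e., after receiving the signature $\ICSig(\sid^{(P_\D,q)}_{j,i,k},P_j,P_i,P_k,s_{qj})$ for every $P_k\in S_q$ and verifying $s_{qi}=s_{qj}$); for corrupt-party broadcasts, the simulator simply relays whatever $\Adv$ submits. Similarly, the construction of the consistency graphs $G^{(i)}_q$ is executed literally as in the protocol, and the verification of $\{\C_q\}_{S_q\in\ShareSpec}$ against $\AdvStructure$ and the cliques in $G^{(i)}_q$ is identical. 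Finally, the eventual interaction with $\FVSS$ happens only after the simulated execution has terminated and therefore does not appear in $\Adv$'s view at all.

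The hard part is not really hard here: it is merely a careful case analysis to confirm that each of the simulator's branches (for $S_q$ with or without corrupt parties, for intermediary/receiver honest or corrupt, etc.) invokes the same deterministic rule on the same recorded inputs as the real protocol, and that every random coin sampled by the simulator has the same distribution as the one sampled by the corresponding honest party in the real world. Once this matching is verified branch by branch, we conclude that the joint distribution of (messages delivered to $\Adv$, outputs of honest-party Acast requests, recorded $\FAcast$ contents) is identical in both executions, which gives the stated identity of view distributions.
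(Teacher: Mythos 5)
Your proposal matches the paper's argument: since $P_\D$ is corrupt, the honest parties have no private inputs, so $\SimSVSS$ simply runs the honest-party code of $\PiStatVSS$ verbatim, yielding a one-to-one correspondence between simulated and real transcripts. The branch-by-branch verification you outline is a more explicit rendering of the same one-line observation the paper makes.
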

\begin{proof}
The proof follows from the fact that if $P_{\D}$ is {\it corrupt}, then $\SimPVSS$ participates in a full execution of the protocol
 $\PiStatVSS$ by playing the role of the honest parties as per the steps of $\PiStatVSS$.
 Hence, there is a one-to-one correspondence between simulated
  executions and real executions.
\end{proof}

We finally claim that if the dealer is {\it corrupt}, then conditioned on the view of the adversary (which is identical in both
the executions as per the last claim), the outputs of the honest parties are identically distributed in 
 both the executions.
\begin{claim}
\label{claim:SVSSCorruptDCorrectness}
If $\D$ is corrupt, then conditioned on the view of $\Adv$, the output of the honest parties during the execution of $\PiStatVSS$ involving $\Adv$ has the  same distribution as the output of the honest parties in the ideal-world involving $\SimPVSS$ and $\FVSS$, except with probability
   at most  $|\AdvStructure| \cdot n^3 \cdot \errorAICP$, where $\errorAICP \approx \frac{n^2}{|\F|}$.
\end{claim}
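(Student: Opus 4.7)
The plan is to mirror the case analysis of Claim~\ref{claim:PVSSCorruptDCorrectness} by trichotomizing on what the view $\View$ of $\Adv$ reveals about the broadcast of the core sets $\{\C_q\}_{S_q \in \ShareSpec}$, and then to absorb AICP error terms into the third case using the unforgeability and non-repudiation guarantees from Lemma~\ref{lemma:AICP}. I would first fix an arbitrary view $\View$ and split: (i) no core sets are broadcast, (ii) the broadcast core sets fail local validity at any honest party, (iii) valid core sets are generated. Cases (i) and (ii) are handled exactly as in Claim~\ref{claim:PVSSCorruptDCorrectness}: in the real world the honest parties never finish share computation, and in the ideal world $\SimSVSS$ never feeds input to $\FVSS$, so no party gets output in either world and the distributions match with zero error.

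The main work is case (iii). First I would argue that for every $S_q \in \ShareSpec$, since $\C_q \subseteq S_q$ and $S_q \setminus \C_q \in \AdvStructure$, the $\Q^{(2)}(\ShareSpec,\AdvStructure)$ condition forces $\C_q$ to contain at least one honest party, say $P_h$. Next, using that $\C_q$ is a clique in the consistency graph of every honest party, each pair $P_j,P_k$ in $\C_q \setminus Z^{\star}$ has broadcast $\OK_q(j,k)$ and $\OK_q(k,j)$, which by the protocol requires completion of the corresponding $\Auth$ instances and the local check $s_{qj}=s_{qk}$. Hence all honest members of $\C_q$ agree on a common share $s_q$ distributed by $P_\D$. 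This is deterministically the value that $\SimSVSS$ extracts on the honest $P_h$'s behalf and hands to $\FVSS$, so in the ideal world every honest party $P_i \in S_q$ outputs $[s]_q = s_q$.

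I would then show that every honest $P_i \in S_q \setminus \C_q$ also outputs $s_q$ in the real world except with error $\Order(n^2)\cdot \errorAICP$ per pair $(S_q,P_i)$, using the same two-pronged AICP argument already employed in Claim~\ref{claim:SVSSHonestDCorrectness}. By non-repudiation, for each signer $P_k \in \C_q$ the honest intermediary $P_h$ will eventually have $P_i$ accept $\ICSig(\sid^{(P_\D,q)}_{k,h,i},P_k,P_h,P_i,s_q)$ except with probability $\errorAICP$; a union bound over the at most $n$ signers $P_k \in \C_q$ guarantees that $P_h$ is eventually found as a valid intermediary by $P_i$ and thus $[s]_q$ is set, except with probability $n\cdot\errorAICP$. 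By unforgeability, no corrupt intermediary $P_j \in \C_q$ can get $P_i$ to accept an honest signer's signature on a value other than $s_q$, except with probability $\errorAICP$ per (corrupt intermediary, honest signer) pair; a union bound over the $\Order(n^2)$ such pairs bounds the forgery error by $\Order(n^2)\cdot\errorAICP$. Combining, for each $(S_q,P_i)$ the real-world share equals $s_q$ except with probability $\Order(n^2)\cdot\errorAICP$.

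Finally I would apply a last union bound over the at most $|\AdvStructure|$ sets $S_q \in \ShareSpec$ and, for each, over the $\Order(n)$ honest parties $P_i \in S_q \setminus \C_q$, yielding a total error of $|\AdvStructure|\cdot n^3 \cdot \errorAICP$. The main obstacle is in this bookkeeping: one must ensure the AICP error events are counted only once across the shared $\Auth/\Reveal$ instances used by different receivers, and that the deterministic consequence ``all honest parties in $\C_q$ agree on $s_q$'' is truly consequence of the graph structure fixed by $\View$ and not an additional probabilistic event, so that the only randomness that enters the bound is that of the evaluation points chosen during the underlying AICP instances.
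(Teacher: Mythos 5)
Your proposal is correct and follows essentially the same route as the paper: the identical three-way case split (no core sets / invalid core sets / valid core sets), the same argument that all honest parties in $\C_q$ already hold a common share $s_q$ that the simulator forwards to $\FVSS$, the same non-repudiation/unforgeability invocation of Lemma~\ref{lemma:AICP} to extend agreement to parties in $S_q \setminus \C_q$, and the same union-bound bookkeeping over $|\AdvStructure|$ sets, $\Order(n)$ non-core honest receivers, and $\Order(n^2)$ AICP instances per receiver. The only cosmetic difference is that you bound the forgery probability directly over (corrupt intermediary, honest signer) pairs rather than splitting, as the paper does, into a ``candidate found'' event and a per-candidate correctness event, but the resulting $\Order(n^2)\cdot\errorAICP$ per $(S_q,P_i)$ is the same.
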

\begin{proof}
Let $P_{\D}$ be {\it corrupt} and let 
 $\View$ be an arbitrary view of $\Adv$. We note that it can be found out from $\View$ whether valid
   core sets $\{\C_q\}_{S_q \in \ShareSpec}$ have been generated during the corresponding execution of $\PiStatVSS$ or not. We now consider the following cases.
\begin{myitemize}
\item[--] {\it No core sets $\{\C_q\}_{S_q \in \ShareSpec}$ are generated as per $\View$}: In this case, the honest parties do not obtain any output in either execution.
 This is because in the real execution of $\PiStatVSS$, the honest parties compute their output only when they get valid core sets $\{\C_q\}_{S_q \in \ShareSpec}$
 from $P_{\D}$'s broadcast. If this is not the case, then in the simulated execution, 
  the simulator $\SimPVSS$ does not provide any input to $\FVSS$ on behalf of $P_\D$; hence, $\FVSS$ does not produce any output for the honest parties.
\item[--] {\it Core sets $\{\C_q\}_{S_q \in \ShareSpec}$ generated as per $\View$ are invalid}:
 Again, in this case, the honest parties do not obtain any output in either execution.
  This is because in the real execution of $\PiStatVSS$, even if the sets 
   $\{\C_q\}_{S_q \in \ShareSpec}$ are received from $P_{\D}$'s broadcast,
    the honest parties compute their output only when each $\C_q$ set is found to be {\it valid} with respect to the verifications
    performed by the honest parties in their own consistency graphs.
    If these verifications fail (implying that the core sets are invalid), then in the simulated execution, 
    the simulator $\SimPVSS$ does not provide any input to $\FVSS$ on behalf of $P_\D$, implying that
     $\FVSS$ does not produce any output for the honest parties.
\item[--] {\it Valid core sets $\{\C_q\}_{S_q \in \ShareSpec}$ are generated as per $\View$}:
  We first note that in this case, $P_{\D}$ has distributed some common share, say $s_q$, as determined by $\View$, 
  to all the parties in $\C_q \setminus Z^{\star}$, during the real execution of $\PiStatVSS$. 
  This is because all the parties in $\C_q \setminus Z^{\star}$ are {\it honest}, and form a clique
  in the consistency graph of the honest parties. Hence, 
  each $P_j, P_k \in \C_q \setminus Z^{\star}$ has broadcasted the messages $\OK_q(j, k)$ and $\OK_q(k, j)$
  after checking that $s_{qj} = s_{qk}$ holds, where $s_{qj}$ and $s_{qk}$ are the values 
  received from $P_{\D}$ by $P_j$ and $P_k$ respectively. 
  
  We next show that in the real execution of $\PiStatVSS$, except with probability at most $n^3 \cdot \errorAICP$, all
  {\it honest} parties in $S_q \setminus Z^{\star}$ eventually set
  $[s]_q$ to $s_q$. While this is obviously true for the parties in $\C_q \setminus Z^{\star}$, 
  the proof when 
  $P_i \in S_q \setminus (Z^{\star} \cup \C_q)$ is exactly the {\it same}, as in Claim \ref{claim:SVSSHonestDCorrectness}. 

     Since $|\ShareSpec| = |\AdvStructure|$, it then follows that 
     in the real execution, except with probability at most $n^3 \cdot \errorAICP$, every honest party $P_i$
      eventually outputs $\{[s]_q = s_q \}_{P_i \in S_q}$.
      From the steps of $\SimPVSS$, the simulator sends the shares $\{s_q \}_{S_q \in \ShareSpec}$ to $\FVSS$ on the behalf of
      $P_{\D}$ in the simulated execution. Consequently, in the simulated execution, $\FVSS$ will eventually deliver
      the shares $\{[s]_q = s_q \}_{P_i \in S_q}$ to every honest $\mathsf{I}$.
      Hence, except with probability at most $|\AdvStructure| \cdot n^3 \cdot \errorAICP$, the outputs of the honest parties are identical in both the executions.
\end{myitemize}
\end{proof}
The proof of the theorem now follows
  from Claims \ref{claim:SVSSHonestDPrivacy}-\ref{claim:SVSSCorruptDCorrectness}.
\end{proof}



\subsection{The Basic Multiplication Protocol $\BasicMult$ and Its Properties}
Protocol $\BasicMult$ 
 is presented in Figure \ref{fig:BasicMult}, which is executed with respect to a set $\Discarded$ of globally discarded parties, and an iteration number $\iter$. 
  Looking ahead, it will be guaranteed that no honest party is ever included in $\Discarded$. The protocol is almost the same as
   the protocol $\OptMult$, except that it {\it does not} take any subset $Z \in \AdvStructure$ as input. 
    Consequently, the various dynamic sets and session ids maintained in the protocol {\it will} not be notated with $Z$ (unlike the protocol $\OptMult$).

\begin{protocolsplitbox}{$\BasicMult(\PartySet, \AdvStruct, \SharingSpec, [a], [b], \Discarded, \iter)$}{Non-robust basic multiplication protocol in the $(\FVSS, \FABA)$-hybrid model
  for session id $\sid$. The above code is executed by every party $P_i$}{fig:BasicMult}
\justify
\begin{myitemize}
\item[--] \textbf{Initialization}: 
  Initialize $\Products_\iter = \{(p, q)\}_{p, q = 1, \ldots, |\SharingSpec|}$, $\Selected_\iter = \emptyset$, 
   $\hop = 1$ and corresponding to each $P_j \in \PartySet \setminus \Discarded$, set
	$\Products^{(j)}_\iter = \{(p, q)\}_{P_j \in S_p \cap S_q}$.
	\item[--] Do the following till $\Products_\iter \neq \emptyset$:
	\begin{myitemize}
	\item \textbf{Sharing Summands}: Same as in $\OptMult$, except that 
	$P_i$ randomly secret-shares $\displaystyle c^{(i)}_\iter = \sum_{(p, q) \in \Products^{(i)}_\iter} [a]_p[b]_q$
	 by calling $\FVSS$ with 
	     $\sid_{\hop, i} \defined \sid || \hop || i $, if $P_i \notin \Selected_\iter$.
	\item \textbf{Selecting Summand-Sharing Party Through ACS}:
	 Same as in $\OptMult$, except that $(\vote, \allowbreak \sid_{\hop, j}, 1)$ is sent to $\FABA$
	 with $\sid_{\hop, j}$ corresponding to any $P_j \in \PartySet$, if all the following hold:
	       \begin{myitemize}
		\item[--] $P_j \notin \Discarded$, 
		$P_j \notin \Selected_\iter$ and an 
		output $(\Share, \sid_{\hop, j}, P_j, \{ [c^{(j)}_\iter]_q \}_{P_i \in S_q})$ is received from $\FVSS$ with 
		$\sid_{\hop, j}$,
		 corresponding to the dealer $P_j$.
		\end{myitemize}
	       If $P_j$ is selected as common summand-sharing party for this hop, then update the following.
                \begin{myitemize}
		\item[--] $\Selected_\iter = \Selected_\iter \cup \{P_j\}$.
		\item[--] $\Products_\iter = \Products_\iter \setminus \Products^{(j)}_\iter$.
		\item[--] $\forall P_k \in \PartySet \setminus \{\Discarded \cup \Selected_\iter \}$:
		$\Products^{(k)}_\iter = \Products^{(k)}_\iter \setminus \Products^{(j)}_\iter$.
		\item[--] $\hop = \hop + 1$.
		\end{myitemize}
	\end{myitemize}
\item[--] $\forall P_j \in \PartySet \setminus \Selected_\iter$, participate in an instance of $\PiPerDefaultShare$ with public input 
  $c^{(j)} = 0$.
\item[--] \textbf{Output} : Let $c_\iter \defined c^{(1)}_\iter + \ldots + c^{(n)}_\iter$. Output
$\{[c^{(1)}_\iter]_q , \ldots, [c^{(n)}_\iter]_q, [c_\iter]_q \}_{P_i \in S_q}$.
\end{myitemize}
\end{protocolsplitbox}

We next formally prove the properties of the protocol $\BasicMult$. 
 While proving these properties, we will assume that  $\AdvStructure$ satisfies the $\Q^{(3)}(\PartySet, \AdvStructure)$ condition.
  This further implies that the sharing specification $\ShareSpec = (S_1, \ldots, \allowbreak S_h) \defined \{ \PartySet \setminus Z | Z \in \AdvStructure\}$
  satisfies the $\Q^{(2)}(\ShareSpec, \AdvStructure)$ condition. 
  Moreover, while proving these properties, we assume that no honest party is ever included in the set $\Discarded$. 
   Note that this will be ensured in the protocol $\RandMultLCE$, where $\BasicMult$ is used as a subprotocol. We first show that the intersection of any two sets in $\ShareSpec$ contains at least one honest party {\it outside} $\Discarded$.
  \begin{claim}
  \label{claim:BasicMultAdvCondition}
 For every $Z \in \AdvStructure$ and every ordered pair $(p, q) \in \{1, \ldots, h \} \times \{1, \ldots, h \}$,
  the set $(S_p \cap S_q) \setminus \Discarded$ contains at least one honest party.
  \end{claim}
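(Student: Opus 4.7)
The proof will follow the same contradiction template used in Claim \ref{claim:OptMultAdvCondition}, with the globally-discarded set $\Discarded$ playing the role previously played by the hypothetical corrupt set $Z$. The key leverage point, which distinguishes this claim's setting from the perfectly-secure one, is that here we only assume the weaker $\Q^{(3)}(\PartySet,\AdvStructure)$ condition; to compensate, I will exploit the standing assumption that no honest party is ever included in $\Discarded$, so that $\Discarded$ is effectively swallowed by the actual corruption set.

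First, I would unpack the sharing specification: since $\ShareSpec = \{\PartySet \setminus Z \mid Z \in \AdvStructure\}$, there exist $Z_p, Z_q \in \AdvStructure$ with $S_p = \PartySet \setminus Z_p$ and $S_q = \PartySet \setminus Z_q$. Let $Z^\star \in \AdvStructure$ denote the set of actually corrupt parties. Because no honest party is ever placed in $\Discarded$, we have $\Discarded \subseteq Z^\star$.

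Next, I would assume for contradiction that $(S_p \cap S_q) \setminus \Discarded$ contains no honest party, so that $(S_p \cap S_q) \setminus \Discarded \subseteq Z^\star$. Combined with $\Discarded \subseteq Z^\star$, this gives $S_p \cap S_q \subseteq Z^\star$. Taking complements in $\PartySet$ yields
\[
\PartySet \;=\; Z_p \cup Z_q \cup \bigl(\PartySet \setminus (S_p \cap S_q)\bigr) \;\subseteq\; Z_p \cup Z_q \cup Z^\star,
\]
so three sets from $\AdvStructure$ cover $\PartySet$, which directly contradicts the $\Q^{(3)}(\PartySet,\AdvStructure)$ condition.

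The only real obstacle, and it is essentially bookkeeping rather than a genuine mathematical difficulty, is ensuring that the hypothesis ``no honest party is included in $\Discarded$'' is actually available at the point where $\BasicMult$ is invoked. This will be maintained as an inductive invariant by the outer protocols $\RandMultLCE$ and $\PiStatTriples$, where $\Discarded$ is populated only through reconstructions that provably expose malicious behavior; when $\BasicMult$ is called, the containment $\Discarded \subseteq Z^\star$ therefore genuinely holds. Once that invariant is in place, the counting argument above is a direct specialization of the $\Q^{(4)}$-style argument used for $\OptMult$, now with one fewer ``free'' adversary slot because we have only $\Q^{(3)}$ — the missing slot being filled precisely by $Z^\star$, which absorbs $\Discarded$.
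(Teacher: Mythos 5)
Your proof is correct and matches the paper's own argument essentially step for step: the paper likewise deduces that $S_p \cap S_q \subseteq Z^\star$ would force $\PartySet \subseteq Z_p \cup Z_q \cup Z^\star$, violating $\Q^{(3)}(\PartySet,\AdvStructure)$, and then invokes $\Discarded \subseteq Z^\star$ to conclude the honest party survives removal of $\Discarded$. One mechanical slip in your display: $\PartySet \setminus (S_p \cap S_q)$ already equals $Z_p \cup Z_q$, so the middle term should read $S_p \cap S_q$; your surrounding prose (``this gives $S_p \cap S_q \subseteq Z^\star$'') already encodes the correct step.
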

  \begin{proof}
  From the definition of the sharing specification $\ShareSpec$, we have
   $S_p = \PartySet \setminus Z_p$ and $S_q = \PartySet \setminus Z_q$, where $Z_p, Z_q \in \AdvStructure$. Let $Z^{\star} \in \AdvStructure$ be the set of corrupt parties during the protocol $\BasicMult$. Now, $S_p \cap S_q = (\PartySet \setminus Z_p) \cap (\PartySet \setminus Z_q) = \PartySet \setminus (Z_p \cup Z_q)$. This means that $(S_p \cap S_q) \cup Z_p \cup Z_q = \PartySet$. If $(S_p \cap S_q) \subseteq Z^{\star}$, then $Z^{\star} \cup Z_p \cup Z_q = \PartySet$. This is a violation of the $\Q^{(3)}(\PartySet,\AdvStructure)$ condition, and hence, $S_p \cap S_q$ contains at least one honest party. Since $\Discarded$ contains only corrupt parties, $(S_p \cap S_q) \setminus \Discarded$ contains at least one honest party.
  \end{proof}
  
  We next claim a series of properties related to protocol $\BasicMult$ whose proofs are almost identical to the proof of the corresponding properties
   for protocol $\OptMult$. Hence, we skip the formal proofs.
\begin{claim}
  \label{claim:BasicMultACS}
 For any $\iter$, if all honest parties participate during the hop number $\hop$ in the protocol 
  $\BasicMult(\PartySet, \allowbreak \AdvStruct,\SharingSpec, [a],[b],\iter)$, then
 all honest parties eventually obtain a common summand-sharing party, say $P_j$, for this hop, such that the honest parties
  will eventually hold $[c^{(j)}_\iter]$. Moreover, party $P_j$ will be distinct from the summand-sharing party
   selected for any hop number $\hop' < \hop$.
  \end{claim}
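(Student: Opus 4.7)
The plan is to mirror the argument used for Claim \ref{claim:OptMultACS}, since the only substantive difference between $\OptMult$ and $\BasicMult$ is that summand-sharing candidates are drawn from $\PartySet \setminus \Discarded$ rather than $\PartySet \setminus Z$, and by hypothesis no honest party is in $\Discarded$. First, I would note that since the hop loop has been entered, $\Products_\iter \neq \emptyset$ at the start of hop $\hop$, so there is at least one ordered pair $(p,q)$ still to be covered. By Claim \ref{claim:BasicMultAdvCondition}, the set $(S_p \cap S_q) \setminus \Discarded$ contains at least one honest party $P_k$, who holds both $[a]_p$ and $[b]_q$ and hence the summand $[a]_p[b]_q$. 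Moreover $P_k \notin \Selected_\iter$, since otherwise the pair $(p,q)$ would have been removed from $\Products_\iter$ at an earlier hop. Thus during hop $\hop$ the honest party $P_k$ will call $\FVSS$ with session id $\sid_{\hop,k}$, and every honest party will eventually receive the corresponding share output.

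Next, I would argue that at least one $\FABA$ instance of hop $\hop$ eventually outputs $1$ for all honest parties, via the same two-case analysis as in Claim \ref{claim:OptMultACS}. Either some honest $P_i$ votes $0$ in the $\FABA$ instance corresponding to $P_k$, in which case $P_i$ must have seen some $\FABA$ instance with session id $\sid_{\hop,j}$ already decide $1$ (propagated to all honest parties by $\FABA$'s agreement property); or no honest party votes $0$ on $P_k$, in which case every honest party eventually votes $1$ on $P_k$ since $P_k \notin \Discarded \cup \Selected_\iter$ and the required $\FVSS$ output for $\sid_{\hop,k}$ will eventually arrive. Consequently every honest party eventually supplies an input to all $n$ instances of $\FABA$ for this hop, so all $n$ instances produce output, and the selected summand-sharing party is determined deterministically (least-indexed $P_j$ with $v_j = 1$) from this common output, ensuring agreement.

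Finally, I would show that the selected $P_j$ satisfies the two remaining claims. For $[c^{(j)}_\iter]$: since $P_j$ was selected, at least one honest $P_i$ voted $1$ in the $\FABA$ instance $\sid_{\hop,j}$, which by the voting precondition means $P_i$ had already received an output $(\Share, \sid_{\hop,j}, P_j, \{[c^{(j)}_\iter]_q\}_{P_i \in S_q})$ from $\FVSS$; this in turn implies $P_j$ submitted a vector of shares to $\FVSS$ with $\sid_{\hop,j}$, so every honest party eventually receives their respective shares, and the honest parties collectively hold $[c^{(j)}_\iter]$. For distinctness from previously selected summand-sharing parties: if $P_j$ had been included in $\Selected_\iter$ during some earlier hop $\hop' < \hop$, then the voting precondition $P_j \notin \Selected_\iter$ would fail for every honest party, so no honest party would vote $1$ in $\FABA$ with $\sid_{\hop,j}$, contradicting the fact that $\FABA$ returned $1$ for this instance. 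No step here is expected to be a significant obstacle, as the structural argument is a near-verbatim adaptation of Claim \ref{claim:OptMultACS} after replacing the role of $Z$ by $\Discarded$ and invoking Claim \ref{claim:BasicMultAdvCondition} in place of Claim \ref{claim:OptMultAdvCondition}.
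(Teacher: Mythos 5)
Your proof is correct and follows exactly the same approach as the paper: the paper's proof of Claim \ref{claim:BasicMultACS} simply states that the argument is identical to that of Claim \ref{claim:OptMultACS}, with Claim \ref{claim:BasicMultAdvCondition} invoked in place of Claim \ref{claim:OptMultAdvCondition}, which is precisely what you do. The two-case analysis for $\FABA$, the argument that the selected $P_j$ has received a valid $\FVSS$ input so every honest party eventually holds $[c^{(j)}_\iter]$, and the distinctness argument via the $P_j \notin \Selected_\iter$ voting precondition all match the structure of Claim \ref{claim:OptMultACS}'s proof.
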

  \begin{proof}
  The proof is identical to that of Claim \ref{claim:OptMultACS}, except that we now use Claim \ref{claim:BasicMultAdvCondition} to argue that
   for every ordered pair $(p, q) \in \Products_\iter$, there exists at least one {\it honest} party in $(S_p \cap S_q) \setminus \Discarded$,
    say $P_k$, who will have both the shares 
  $[a]_p$ as well as $[b_q]$ (and hence the summand $[a]_p [b]_q$). 
  \end{proof}
\begin{claim}
  \label{claim:BasicMultComplexity}
  In protocol $\BasicMult$, all honest parties eventually obtain an output.
   The protocol makes $\Order(n^2)$ calls to $\FVSS$ and $\FABA$.
  \end{claim}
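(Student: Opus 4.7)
The plan is to mimic the proof of the analogous result for $\OptMult$ (Claim \ref{claim:OptMultComplexity}), since the two protocols are structurally almost identical. The key observation is that the number of hops is bounded by $\Order(n)$, and in each hop the number of $\FVSS$ and $\FABA$ calls is bounded by $\Order(n)$; together this yields the claimed $\Order(n^2)$ bound. The termination part reduces to showing that each hop eventually produces a common summand-sharing party (already established) and that the loop terminates after $\Order(n)$ hops.

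First, I would handle termination and the hop bound. By Claim \ref{claim:BasicMultACS}, whenever all honest parties enter a hop with $\Products_\iter \neq \emptyset$, they eventually agree on a common summand-sharing party $P_j$, and moreover $P_j$ must be distinct from all summand-sharing parties chosen in previous hops. Thus $|\Selected_\iter|$ strictly increases with each hop, so after at most $|\PartySet \setminus \Discarded| \le n$ hops, either $\Products_\iter$ becomes empty or $\Selected_\iter = \PartySet \setminus \Discarded$. In the latter case, I would invoke Claim \ref{claim:BasicMultAdvCondition} to argue that every ordered pair $(p,q)$ is covered by some honest party outside $\Discarded$, and since honest parties never refuse to be summand-sharing parties, $\Products_\iter$ would in fact already be empty. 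Hence the protocol exits its main loop within $\Order(n)$ hops.

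Next, I would argue that after leaving the loop, every honest party participates in at most $n$ instances of $\PiPerDefaultShare$ (which are non-interactive and non-counted towards $\FVSS/\FABA$ calls) and then outputs the values $\{[c^{(j)}_\iter]_q\}$, so all honest parties obtain an output. For the complexity count, in each hop there are at most $n$ calls to $\FVSS$ (one per party acting as dealer to share its summand-sum) and at most $n$ calls to $\FABA$ (one per candidate summand-sharing party in the ACS step), both with session ids tagged by $\hop$. Since the number of hops is $\Order(n)$, the total number of calls to each of $\FVSS$ and $\FABA$ is $\Order(n^2)$.

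The only mild subtlety is to verify that the hop bound truly holds without input from the adversary who might leave $\Products_\iter$ populated indefinitely. The hard part, if any, is ensuring that when honest parties outside $\Discarded$ exhaust the coverage of all pairs $(p,q)$ via Claim \ref{claim:BasicMultAdvCondition}, the ACS-based selection does not stall; but this is exactly the content of Claim \ref{claim:BasicMultACS}, so no new argument is needed. The claim then follows.
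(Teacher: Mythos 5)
Your proof is correct and follows essentially the same route as the paper: the paper's own proof of this claim is just a pointer to the analogous argument for $\OptMult$ (Claim~\ref{claim:OptMultComplexity}), which is exactly the hop-counting argument you reproduce -- each hop selects a distinct new party (Claim~\ref{claim:BasicMultACS}), at most $n$ hops suffice because once all honest parties outside $\Discarded$ are selected every ordered pair $(p,q)$ is covered (Claim~\ref{claim:BasicMultAdvCondition}), and each hop incurs $\Order(n)$ calls to $\FVSS$ and $\FABA$.
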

  \begin{proof}
 The proof is similar to that of Claim \ref{claim:OptMultComplexity}.
  \end{proof}
\begin{claim}
  \label{claim:BasicMultPrivacy}
  During protocol $\BasicMult$, $\Adv$ learns nothing about $a$ and $b$.
  \end{claim}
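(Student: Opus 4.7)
The plan is to mirror the proof of Claim~\ref{claim:OptMultPrivacy} essentially verbatim, since $\BasicMult$ is obtained from $\OptMult$ by replacing the role of the excluded set $Z \in \AdvStructure$ with the globally-discarded set $\Discarded$. Let $Z^\star \in \AdvStructure$ denote the set of corrupt parties. It suffices to argue that $\Adv$'s view in $\BasicMult$ is distributed independently of the shares $\{[a]_p, [b]_p\}_{S_p \cap Z^\star = \emptyset}$, since these are the only shares of $a$ and $b$ not already known to $\Adv$.

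First, I would fix an arbitrary summand $[a]_p[b]_q$ with $S_p \cap Z^\star = \emptyset$ (and $q$ arbitrary). By the structure of the protocol, there is some summand-sharing party $P_j \in \Selected_\iter$ with $(p,q) \in \Products^{(j)}_\iter$; in particular $P_j \in S_p \cap S_q$. I would then observe that $P_j$ must be honest: the ACS step only admits candidates from $\PartySet \setminus \Discarded$, and since, by the standing assumption for $\BasicMult$, no honest party ever enters $\Discarded$, membership in $S_p$ together with $S_p \cap Z^\star = \emptyset$ forces $P_j \not\in Z^\star$. Hence the summand $[a]_p[b]_q$ is only ever handled by an honest party, and is not available to any party in $Z^\star$.

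Next, since $P_j$ is honest, it samples a uniformly random vector $(c^{(j)}_{\iter,1}, \ldots, c^{(j)}_{\iter,h})$ subject to $\sum_r c^{(j)}_{\iter,r} = c^{(j)}_\iter$ and hands it to $\FVSS$. The only information $\Adv$ obtains about $c^{(j)}_\iter$ from this instance is the subvector $\{[c^{(j)}_\iter]_r\}_{S_r \cap Z^\star \neq \emptyset}$. Because $\ShareSpec$ is $\AdvStructure$-private, there exists at least one index $r^\star$ with $S_{r^\star} \cap Z^\star = \emptyset$, and the corresponding share is hidden from $\Adv$; the remaining shares are therefore uniformly distributed and independent of $c^{(j)}_\iter$. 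Consequently, the view of $\Adv$ contains no information about $c^{(j)}_\iter$, and in particular none about the summand $[a]_p[b]_q$, hence none about $[a]_p$ beyond what was already known. The symmetric argument for summands of the form $[a]_q [b]_p$ with $S_p \cap Z^\star = \emptyset$ establishes independence from $[b]_p$.

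Finally, I would observe that the remaining public data produced by the protocol — the $\FABA$ outputs selecting the common summand-sharing party in each hop, and the default shares fixed to $0$ for parties in $\PartySet \setminus \Selected_\iter$ — depend only on the identity of summand-sharing parties and on public constants, and therefore contribute no additional information about $a$ or $b$. The main (only mild) obstacle is the careful bookkeeping that every honest-controlled summand actually ends up in the sum of an honest summand-sharing party; this follows from Claim~\ref{claim:BasicMultAdvCondition} together with Claim~\ref{claim:BasicMultACS}, which guarantees that every pair $(p,q) \in \Products_\iter$ is eventually covered by some selected $P_j \in (S_p \cap S_q) \setminus \Discarded$, and by the argument above, such a $P_j$ is honest.
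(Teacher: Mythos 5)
Your proposal is correct and follows essentially the same argument as the paper, which defers to Claim~\ref{claim:OptMultPrivacy}: for any summand $[a]_p[b]_q$ with $S_p \cap Z^\star = \emptyset$, the party $P_j \in S_p \cap S_q$ that shares it is automatically honest (since $P_j \in S_p$ and $S_p \cap Z^\star = \emptyset$), and $\AdvStructure$-privacy of $\ShareSpec$ hides $c^{(j)}_\iter$ from $\Adv$. Your detour through $\Discarded$ when arguing $P_j$ is honest is superfluous (membership in $S_p$ alone already forces $P_j \notin Z^\star$), but harmless.
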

  \begin{proof}
  The proof is similar to that of Claim \ref{claim:OptMultPrivacy}.
  \end{proof}
  \begin{claim}
  \label{claim:BasicMultCorrectness}
  In $\BasicMult$, if no party in $\PartySet \setminus \Discarded$ behaves maliciously, then for each $P_i \in \Selected_\iter$, the condition
   $\displaystyle c^{(i)}= \sum_{(p, q) \in \Products^{(i)}_\iter} [a]_p[b]_q$ holds, which further implies that $c = ab$ holds. 
  \end{claim}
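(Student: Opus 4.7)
The plan is to follow exactly the template of Claim \ref{claim:OptMultCorrectness}, with the only change being that the role played there by the set $\PartySet\setminus Z$ is now played by $\PartySet\setminus\Discarded$. Concretely, I will establish the following three facts and then combine them: (a) every party in $\Selected_\iter$ behaves honestly in $\BasicMult$; (b) the sets $\{\Products^{(j)}_\iter\}_{P_j\in\Selected_\iter}$ form a partition of $\{(p,q)\}_{p,q=1,\ldots,|\ShareSpec|}$; (c) every $P_j\notin\Selected_\iter$ contributes a default sharing of $0$. Summing over all $P_j$ then yields $c_\iter=\sum_{p,q}[a]_p[b]_q=ab$.

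For (a), I would note that from the ACS step of $\BasicMult$, an honest party sends $(\vote,\sid_{\hop,j},1)$ only if $P_j\notin\Discarded$, so $\Selected_\iter\cap\Discarded=\emptyset$ (otherwise no honest party ever votes $1$ and the corresponding $\FABA$ instance could never output $1$ for all honest parties). Under the hypothesis that no party in $\PartySet\setminus\Discarded$ behaves maliciously, every $P_i\in\Selected_\iter$ therefore follows the protocol honestly: it computes $c^{(i)}_\iter=\sum_{(p,q)\in\Products^{(i)}_\iter}[a]_p[b]_q$ from its local shares and secret-shares exactly this value via the corresponding $\FVSS$ call.

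For (b), the argument is structural and does not require any adversarial reasoning. Disjointness follows by bookkeeping on $\hop$: if $P_j,P_k\in\Selected_\iter$ are selected in hops $\hop_j<\hop_k$ respectively, then at the end of hop $\hop_j$ the protocol updates $\Products^{(k)}_\iter\leftarrow\Products^{(k)}_\iter\setminus\Products^{(j)}_\iter$, so the copy of $\Products^{(k)}_\iter$ used during hop $\hop_k$ is disjoint from $\Products^{(j)}_\iter$. Completeness follows because the outer loop continues until $\Products_\iter=\emptyset$, and by Claim \ref{claim:BasicMultACS} (together with Claim \ref{claim:BasicMultAdvCondition}), the loop eventually terminates; at termination, $\Products_\iter=\{(p,q)\}_{p,q}\setminus\bigcup_{P_j\in\Selected_\iter}\Products^{(j)}_\iter=\emptyset$, i.e.\ the union is the full index set. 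This is the only step that I expect to require care: the main obstacle is making sure the dynamic updates to $\Products^{(k)}_\iter$ across hops do not allow any ordered pair $(p,q)$ to be silently dropped without being absorbed into some $\Products^{(j)}_\iter$. However, since the only update is $\Products_\iter\leftarrow\Products_\iter\setminus\Products^{(j)}_\iter$ in the same hop that $P_j$ is added to $\Selected_\iter$, every pair removed from $\Products_\iter$ is provably covered by some party in $\Selected_\iter$.

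Finally, for (c), the last step of $\BasicMult$ runs $\PiPerDefaultShare$ with public input $0$ for every $P_j\notin\Selected_\iter$, so $c^{(j)}_\iter=0$ for such $P_j$. Combining (a)--(c) and using linearity of the secret-sharing:
\[
c_\iter \;=\; \sum_{j=1}^{n} c^{(j)}_\iter \;=\; \sum_{P_j\in\Selected_\iter}\;\sum_{(p,q)\in\Products^{(j)}_\iter}[a]_p[b]_q \;=\; \sum_{p,q=1}^{|\ShareSpec|}[a]_p[b]_q \;=\; ab,
\]
which proves the claim.
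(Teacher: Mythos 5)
Your proposal is correct and follows essentially the same route as the paper: the paper simply points to the proof of Claim~\ref{claim:OptMultCorrectness} and says it is similar, and your argument is exactly that proof with $\PartySet\setminus Z$ replaced by $\PartySet\setminus\Discarded$ (using Claims~\ref{claim:BasicMultAdvCondition} and \ref{claim:BasicMultACS} in place of their $\OptMult$ analogues). One small refinement worth noting: you correctly point out that the partition property in step~(b) is a purely structural consequence of the hop bookkeeping and loop-termination condition, independent of any honesty hypothesis, whereas the paper's $\OptMult$ proof phrases this as a consequence of honest behavior — yours is the cleaner reading, but the substance is the same.
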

  \begin{proof}
  The proof is similar to that of Claim \ref{claim:OptMultCorrectness}.
  \end{proof}
  Lemma \ref{lemma:BasicMult} now follows from Claims \ref{claim:BasicMultAdvCondition}-\ref{claim:BasicMultCorrectness}.
\begin{lemma}
\label{lemma:BasicMult}
Let  $\AdvStructure$ satisfy the $\Q^{(3)}(\PartySet, \AdvStructure)$ condition and let 
 $\ShareSpec = (S_1, \ldots, S_h) = \{ \PartySet \setminus Z | Z \in \AdvStructure\}$. 
 Consider an arbitrary $\iter$, 
  such that all honest parties participate in the instance $\BasicMult(\PartySet,\AdvStruct,\SharingSpec, \allowbreak [a],[b], \Discarded, \iter)$.
  Then all
  honest parties eventually compute  $[c_{\iter}]$ and $[c^{(1)}_{\iter}], \ldots, [c^{(n)}_{\iter}]$
  where $c_{\iter} = c^{(1)}_{\iter} + \ldots + c^{(n)}_{\iter}$, 
  provided no honest party is ever included in the $\Discarded$. 
  If no party in $\PartySet \setminus \Discarded$ behaves maliciously, then 
  $c_{\iter} = ab$ holds. 
  In the protocol, $\Adv$ does not learn any additional information about $a$ and $b$.
  The protocol makes $\Order(n^2)$ calls to $\FVSS$ and $\FABA$.
\end{lemma}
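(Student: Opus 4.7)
The plan is to prove Lemma \ref{lemma:BasicMult} by establishing the five helper claims \ref{claim:BasicMultAdvCondition}--\ref{claim:BasicMultCorrectness}, each of which is a direct adaptation of the corresponding claim for $\OptMult$ from the previous subsection. Since $\BasicMult$ differs from $\OptMult$ only in that it omits the parameter $Z$ and instead restricts summand-sharing parties to $\PartySet \setminus \Discarded$, every argument transports with the substitution $Z \mapsto \Discarded$; the essential change is that the invocation of the $\Q^{(4)}$ condition is replaced by an invocation of the weaker $\Q^{(3)}$ condition, which suffices because $\Discarded$ contains only corrupt parties.

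First I would prove Claim \ref{claim:BasicMultAdvCondition}. Writing $S_p = \PartySet \setminus Z_p$ and $S_q = \PartySet \setminus Z_q$ with $Z_p, Z_q \in \AdvStructure$, and letting $Z^\star \in \AdvStructure$ denote the corrupt set, if $(S_p \cap S_q) \setminus \Discarded \subseteq Z^\star$ held then $\PartySet \subseteq Z_p \cup Z_q \cup Z^\star$ (since $\Discarded \subseteq Z^\star$ by the standing assumption that no honest party is ever included in $\Discarded$), contradicting $\Q^{(3)}(\PartySet,\AdvStructure)$. Next, Claim \ref{claim:BasicMultACS} is proved as in Claim \ref{claim:OptMultACS}: for any ordered pair $(p,q) \in \Products_\iter$ at the start of hop $\hop$, the honest witness $P_k \in (S_p \cap S_q) \setminus \Discarded$ guaranteed by Claim \ref{claim:BasicMultAdvCondition} holds both $[a]_p$ and $[b]_q$, was not selected in any earlier hop (else the pair would have been removed from $\Products_\iter$), and is not in $\Discarded$, so $P_k$ will be a valid candidate; then by the standard ABA-chain argument (all $n$ instances of $\FABA$ eventually terminate, and at least one returns $1$), the parties eventually agree on a common summand-sharing party distinct from those of prior hops.

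Claim \ref{claim:BasicMultComplexity} then follows immediately: since each hop adds a new party to $\Selected_\iter$ and $|\Selected_\iter| \le n$, the number of hops is $\Order(n)$, and each hop uses $\Order(n)$ calls each to $\FVSS$ and $\FABA$, giving $\Order(n^2)$ calls overall. Claim \ref{claim:BasicMultCorrectness} is proved exactly as in Claim \ref{claim:OptMultCorrectness}: the update rules enforce that $\Products^{(j)}_\iter \cap \Products^{(k)}_\iter = \emptyset$ for distinct $P_j, P_k \in \Selected_\iter$ (by processing in order of hop index and subtracting $\Products^{(j)}_\iter$ from the summand lists of all still-available parties), and that $\bigcup_{P_i \in \Selected_\iter} \Products^{(i)}_\iter = \{(p,q)\}_{p,q=1,\ldots,h}$ (because termination of the main loop requires $\Products_\iter = \emptyset$); combined with honest behavior and the default-zero shares for non-selected parties, we obtain $c_\iter = \sum_{P_i} c^{(i)}_\iter = \sum_{p,q} [a]_p [b]_q = ab$.

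Finally, Claim \ref{claim:BasicMultPrivacy} follows the template of Claim \ref{claim:OptMultPrivacy}: for any summand $[a]_p[b]_q$ with $S_p \cap Z^\star = \emptyset$, the summand is known only to honest parties, and the party $P_j \in \Selected_\iter$ that absorbs it into $c^{(j)}_\iter$ is honest (since $\Selected_\iter \cap \Discarded = \emptyset$); because $\ShareSpec$ is $\AdvStructure$-private and $P_j$ secret-shares $c^{(j)}_\iter$ via $\FVSS$ using a uniformly random share vector, the shares $\{[c^{(j)}_\iter]_r\}_{S_r \cap Z^\star \neq \emptyset}$ learnt by $\Adv$ are independent of the embedded summand and hence of $[a]_p$; the symmetric argument handles $[b]_p$. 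The only nontrivial adaptation is the first claim, where one must take care that $\Discarded \subseteq Z^\star$ so that the $\Q^{(3)}$ invocation goes through; the remaining claims are essentially verbatim transcriptions of their $\OptMult$ counterparts with the $Z$-dependent notation stripped out. Assembling the five claims yields the lemma.
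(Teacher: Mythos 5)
Your proposal follows essentially the same route as the paper's: establish the five helper claims \ref{claim:BasicMultAdvCondition}--\ref{claim:BasicMultCorrectness} by transporting the $\OptMult$ arguments with $Z$ replaced by $\Discarded$ and $\Q^{(4)}$ relaxed to $\Q^{(3)}$, then assemble them into Lemma~\ref{lemma:BasicMult}. One slip worth fixing is in the privacy claim: the party $P_j \in \Selected_\iter$ that absorbs a summand $[a]_p[b]_q$ with $S_p \cap Z^\star = \emptyset$ is honest, but \emph{not} because $\Selected_\iter \cap \Discarded = \emptyset$ --- being outside $\Discarded$ does not make a party honest, as $\Discarded$ need not contain all of $Z^\star$. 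The right reason (which is already implicit in your preceding phrase ``the summand is known only to honest parties'') is that the protocol initializes $\Products^{(j)}_\iter = \{(p,q) : P_j \in S_p \cap S_q\}$, so any $P_j$ that can absorb $(p,q)$ necessarily lies in $S_p$, and $S_p \cap Z^\star = \emptyset$ forces $P_j$ to be honest. With that parenthetical corrected, the argument is sound and matches the paper's.
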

  We claim another property of $\BasicMult$, which will be useful later while analyzing the properties of 
   $\RandMultLCE$, where $\BasicMult$ is used as a sub-protocol.
 \begin{claim}
 \label{claim:BasicMultFuture}
 For any $\iter$, if $P_j \in \Selected_{\iter}$ during the instance
  $\BasicMult(\PartySet,\AdvStruct,\SharingSpec,  [a],[b], \Discarded, \iter)$, 
  then $P_j \not \in \Discarded$.
 \end{claim}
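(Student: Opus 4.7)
The plan is to prove this by contradiction. Suppose for the sake of contradiction that $P_j \in \Selected_\iter$ but also $P_j \in \Discarded$. From the protocol steps of $\BasicMult$, a party $P_j$ is added to $\Selected_\iter$ only when, in some hop $\hop$, the honest parties receive the output $(\decide, \sid_{\hop, j}, 1)$ from $\FABA$ with $\sid_{\hop, j}$. I will derive a contradiction by showing that this $\FABA$ instance must have output $0$.

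First, I will examine the voting conditions in the \textbf{Selecting Summand-Sharing Party Through ACS} step. An honest party $P_i$ sends $(\vote, \sid_{\hop, j}, 1)$ only if, among other conditions, $P_j \notin \Discarded$. Therefore, under the assumption $P_j \in \Discarded$, no honest party ever votes $1$ in the $\FABA$ instance with $\sid_{\hop, j}$. Moreover, from the protocol rule that once any ABA instance in the current hop outputs $1$, every honest party who has not yet voted in the remaining ABA instances sends $0$, and since at least one ABA instance in hop $\hop$ outputs $1$ (namely the one for $P_j$ by assumption), every honest party eventually submits $(\vote, \sid_{\hop, j}, 0)$ to $\FABA$.

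Next, I will invoke the semantics of $\FABA$ from Figure \ref{fig:FABA}. The simulator supplies a set $\CoreSet$ with $\PartySet \setminus \CoreSet \in \AdvStructure$, and every $P_i \in \CoreSet$ has its input $x^{(i)}$ set to a value in $\{0,1\}$ before the output is computed. Since $\AdvStructure$ satisfies $\Q^{(3)}(\PartySet,\AdvStructure)$ and in particular $\Q^{(2)}(\PartySet,\AdvStructure)$, the two sets $\PartySet \setminus \CoreSet$ and $Z^{\star}$ from $\AdvStructure$ cannot cover $\PartySet$, so $\CoreSet \cap \Hon \neq \emptyset$. For every honest $P_i \in \CoreSet$ we have $x^{(i)} = 0$ by the previous paragraph. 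Hence the rule ``if $x^{(i)} = b$ holds for all $P_i \in \Hon \cap \CoreSet$, then $y = b$'' fires with $b = 0$, so the ABA instance outputs $0$, contradicting the selection of $P_j$.

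There is essentially no serious obstacle here; the only subtlety to get right is arguing that any honest party who is included in $\CoreSet$ must have already cast an input, and that input is necessarily $0$ (since the protocol categorically forbids an honest party from voting $1$ for a discarded party). The non-emptiness of $\CoreSet \cap \Hon$ is immediate from the $\Q^{(2)}$ consequence of $\Q^{(3)}(\PartySet, \AdvStructure)$ and requires no additional assumption.
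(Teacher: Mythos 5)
Your proof is correct and follows the same contradiction argument the paper uses (the paper's proof of this claim reduces to the proof of Claim~\ref{claim:OptMultFuture}, which observes that no honest party ever votes~$1$ for a party in $\Discarded$, and concludes that $\FABA$ therefore cannot return output~$1$). You flesh out the last inferential step by unpacking the $\FABA$ semantics---the existence of an honest party in $\CoreSet$ via $\Q^{(2)}(\PartySet,\AdvStructure)$, and the fact that its recorded input must be~$0$---which the paper leaves implicit. One small observation: the intermediate step in which you argue that each honest party ``eventually submits $(\vote, \sid_{\hop,j}, 0)$'' because some ABA in the hop has output~$1$ is unnecessary and a bit of a temporal detour. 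Under the contradiction hypothesis, the $\FABA$ instance for $P_j$ has already produced output~$1$, so its $\CoreSet$ has already been fixed with every member having cast an input; for the honest members this input can only be~$0$, which directly forces the output to be~$0$. You do not need to argue about votes cast after the output event.
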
  
 \begin{proof}
 The proof is similar to that of Claim \ref{claim:OptMultFuture}.
 \end{proof}
  
  We finally end this section by discussing the modifications to the protocol $\BasicMult$ for handling $M$ pairs of inputs.
  \paragraph{Protocol $\BasicMult$ for $M$ pairs of inputs:} Protocol $\BasicMult$ can be easily modified if
 executed with input $\{([a^{(\ell)}], [b^{(\ell)}] ) \}_{\ell = 1, \ldots, M}$. The modifications will be along similar lines to those done for
  $\OptMult$. 
  Consequently, there will be $\Order(n^2 M)$ calls to $\FVSS$, but {\it only} $\Order(n^2)$ calls to $\FABA$.
\subsection{Protocol $\RandMultLCE$ for Detectable Random-Triple Generation and Its Properties}
Protocol $\RandMultLCE$ for one triple is formally presented in Fig \ref{fig:StatMultCI}. 
\begin{protocolsplitbox}{$\RandMultLCE(\PartySet, \AdvStruct, \SharingSpec, \Discarded, \iter)$}{Detectable triple generation protocol in the $(\FVSS, \FABA)$-hybrid model}{fig:StatMultCI}
\justify
\begin{myitemize}
 \item[--] \textbf{Generating Secret-Sharing of Random Values}: 
   The parties jointly generate $[a_{\iter}], [b_{\iter}], [b'_{\iter}]$ and $[r_{\iter}]$, where
   $a_{\iter}, b_{\iter}, b'_{\iter}$ and $r_{\iter}$ are random from the view-point of $\Adv$,
   by using a similar procedure as in $\PiPerTriples$. For this, each $P_i \in \PartySet$ acts as a dealer, picks
   random $a^{(i)}_{\iter}, b^{(i)}_{\iter}, b'^{(i)}_{\iter}, r^{(i)}_{\iter}$ from $\F$ and generates
   random $[a^{(i)}_{\iter}], [b^{(i)}_{\iter}], [b'^{(i)}_{\iter}]$ and $[r^{(i)}_{\iter}]$, by making calls to $\FVSS$.
   The parties then agree on a common subset of parties $\CoreSet$ through ACS as in $\PiPerTriples$, such that 
   $\PartySet \setminus \CoreSet \in \AdvStructure$ and
      for each $P_j \in \CoreSet$, the honest parties eventually hold
   $[a^{(j)}_{\iter}], [b^{(j)}_{\iter}], [b'^{(j)}_{\iter}]$ and $[r^{(j)}_{\iter}]$. The parties then set
   \[ [a_\iter] \defined \displaystyle \sum_{P_j \in \CoreSet} [a^{(j)}_\iter], \quad  [b_\iter] \defined \displaystyle \sum_{P_j \in \CoreSet} [b^{(j)}_\iter], \quad 
    [b'_\iter] \defined \displaystyle \sum_{P_j \in \CoreSet} [b'^{(j)}_\iter] \quad \mbox{ and } 
   [r_\iter] \defined \displaystyle \sum_{P_j \in \CoreSet} [r^{(j)}_\iter].\] 
 \item[--] \textbf{Running Multiplication Protocol and Reconstructing the Random Challenge}:
       \begin{myitemize}
       \item The parties participate in instances 
        $\BasicMult(\PartySet, \AdvStruct, \SharingSpec, [a_\iter], [b_\iter], \Discarded, \iter)$
         and $\BasicMult(\PartySet, \AdvStruct, \SharingSpec,  [a_\iter], [b'_\iter], \allowbreak  \Discarded, \iter)$
         to get outputs $\{[c^{(1)}_\iter] , \ldots, [c^{(n)}_\iter], [c_\iter] \}$ and 
         $\{[c'^{(1)}_\iter], \ldots, [c'^{(n)}_\iter], [c'_\iter] \}$ respectively. 
         Let $\Selected_{\iter, c}$ and $\Selected_{\iter, c'}$ be the summand-sharing parties for the two instances respectively. 
         Moreover, for $P_j \in \Selected_{\iter, c}$, let $\Products^{(j)}_{\iter, c}$
          be the set of ordered pairs of indices corresponding to the summands whose sum has been shared by $P_j$ 
          during the instance $\BasicMult(\PartySet, \AdvStruct, \SharingSpec, [a_\iter], [b_\iter], \Discarded, \iter)$.
          Similarly, for $P_j \in \Selected_{\iter, c'}$, let $\Products^{(j)}_{\iter, c'}$
          be the set of ordered pairs of indices corresponding to the summands whose sum has been shared by $P_j$ 
          during the instance $\BasicMult(\PartySet, \AdvStruct, \SharingSpec, [a_\iter], [b'_\iter], \Discarded, \iter)$.         
       \item Once the parties obtain their respective outputs from the instances of $\BasicMult$, they 
       participate in an instance of $\PiPerRec$ with shares corresponding to $[r_\iter]$, to reconstruct $r_\iter$.
       \end{myitemize}
\item[--] {\bf Detecting Errors in Instances of $\BasicMult$:}
	\begin{myitemize}
	\item The parties locally compute $[e_{\iter}] \defined r_\iter[b_\iter] + [b'_\iter]$ and then 
	participate in an instance of $\PiPerRec$ with shares corresponding to $[e_\iter]$, to reconstruct $e_\iter$.		
	\item The parties locally compute $[d_{\iter}] \defined e_\iter [a_\iter] - r_\iter [c_\iter] - [c'_\iter]$ 
	and then 
	participate in an instance of $\PiPerRec$ with shares corresponding to $[d_\iter]$, to reconstruct $d_\iter$.	
	\item {\bf Output Computation in Case of Success:} 
	If $d_\iter = 0$, then every party $P_i \in \PartySet$ sets the Boolean variable 
	$\flag^{(i)}_\iter=0$ and outputs $\{([a_\iter]_q, [b_\iter]_q, [c_\iter]_q)\}_{P_i \in S_q}$.
	\item[--] {\bf Cheater Identification in Case of Failure:} If $d_\iter \neq 0$, then every party $P_i \in \PartySet$ sets the Boolean variable 
	$\flag^{(i)}_\iter = 1$ and proceeds as follows.
	\begin{myitemize}
	\item Participate in appropriate instances of $\PiPerRecShare$ to reconstruct the shares
	$\{ [a_\iter]_q, [b_\iter]_q, \allowbreak [b'_\iter]_q\}_{S_q \in \ShareSpec}$ and appropriate instances of $\PiPerRec$ 
		to reconstruct $c^{(1)}_\iter, \ldots, c^{(n)}_\iter, \allowbreak c'^{(1)}_\iter, \ldots, c'^{(n)}_\iter$.
	\item Set $\displaystyle \Discarded = \Discarded \cup \{P_i\}$, if $P_i \in \Selected_{\iter, c} \cup \Selected_{\iter, c'}$ 
	and the following holds for $P_i$:
	\[ \displaystyle  r_{\iter} \cdot c^{(i)}_\iter + c'^{(i)}_\iter \neq r_{\iter} \cdot \sum_{(p, q) \in \Products^{(i)}_{\iter, c}} [a_\iter]_p[b_\iter]_q 
	 + \sum_{(p, q) \in \Products^{(i)}_{\iter, c'}}[a_\iter]_p [b'_\iter]_q .\]
	\end{myitemize}  
	\end{myitemize}
\end{myitemize}
\end{protocolsplitbox}

We now formally prove the properties of the protocol $\RandMultLCE$. 
 While proving these properties, we will assume that  $\AdvStructure$ satisfies the $\Q^{(3)}(\PartySet, \AdvStructure)$ condition.
  This further implies that $\ShareSpec = (S_1, \ldots, S_h) \defined \{ \PartySet \setminus Z | Z \in \AdvStructure\}$
  satisfies the $\Q^{(2)}(\ShareSpec, \AdvStructure)$ condition.
  
   We first claim that the honest parties eventually compute
    $[a_\iter], [b_\iter], [b'_\iter]$ and $[r_\iter]$
\begin{claim}
\label{claim:RandMultLCERandomACS}
Consider an arbitrary $\iter$, 
  such that all honest parties participate in the instance $\RandMultLCE(\PartySet, \AdvStruct, \SharingSpec, \allowbreak \Discarded, \iter)$, where
   no honest party is present in $\Discarded$. Then the honest parties eventually compute 
   $[a_\iter], [b_\iter], [b'_\iter]$ and $[r_\iter]$.
\end{claim}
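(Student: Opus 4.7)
The plan is to mirror the termination argument for Stage I of $\PiPerTriples$ (Claim \ref{claim:PerTriplesTermination}), adapted to the statistical setting where $\AdvStructure$ only needs to satisfy the $\Q^{(3)}(\PartySet, \AdvStructure)$ condition. The key observation is that the generation of $[a_\iter], [b_\iter], [b'_\iter], [r_\iter]$ uses exactly the same template: each $P_i$ acts as a dealer for four random values by calling $\FVSS$, and then the parties agree on a common set $\CoreSet$ via ACS on the behalf of parties whose $\FVSS$ calls have successfully completed.

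First I would show that the ACS phase eventually terminates with a common $\CoreSet$ satisfying $\PartySet \setminus \CoreSet \in \AdvStructure$. Consider the $n$ instances of $\FABA$ invoked in ACS. I will split into two cases exactly as in the proof of Claim \ref{claim:PerTriplesTermination}. If some honest $P_i$ ever votes $0$ in some $\FABA$ instance, then from the protocol $P_i$ has already received $(\decide, \sid_j, 1)$ for a set $\GlobalProv_i$ with $\PartySet \setminus \GlobalProv_i \in \AdvStructure$, so every honest party eventually receives the same decisions for the $\FABA$ instances indexed by $\GlobalProv_i$. Otherwise, no honest party votes $0$; since every honest $P_j$ eventually has its four $\FVSS$ calls completed (the functionality delivers outputs to all honest parties), every honest $P_i$ eventually sends $(\vote, \sid_j, 1)$ for every honest $P_j$, so each such $\FABA$ decides $1$. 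In either case there is a set $Z \in \AdvStructure$ such that the $\FABA$ instances corresponding to $\PartySet \setminus Z$ all decide $1$, which then guarantees that the honest parties eventually participate with some input in the remaining $\FABA$ instances and obtain outputs from all $n$ instances. The resulting set $\CoreSet$ (determined entirely by $\FABA$ outputs) is therefore common, non-empty, and satisfies $\PartySet \setminus \CoreSet \in \AdvStructure$.

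Next I would show that for every $P_j \in \CoreSet$, the honest parties eventually hold $[a^{(j)}_\iter], [b^{(j)}_\iter], [b'^{(j)}_\iter], [r^{(j)}_\iter]$. Since $P_j \in \CoreSet$, the $\FABA$ instance indexed by $j$ returned $1$, which requires at least one honest $P_i$ to have voted $1$, which in turn requires $P_i$ to have already received outputs from all four $\FVSS$ instances of $P_j$. This means $P_j$ did supply vectors of shares summing to $a^{(j)}_\iter, b^{(j)}_\iter, b'^{(j)}_\iter, r^{(j)}_\iter$ to the respective $\FVSS$ functionalities, and by the request-based delayed-output property, every honest party eventually obtains its own shares. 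Finally, using the linearity of the secret-sharing scheme, each honest party locally computes $[a_\iter] = \sum_{P_j \in \CoreSet}[a^{(j)}_\iter]$ and similarly for $[b_\iter], [b'_\iter], [r_\iter]$.

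The only substantive difference from the perfect-security analog is that here we work under $\Q^{(3)}(\PartySet, \AdvStructure)$ rather than $\Q^{(4)}$, but all the arguments above only use that $Z^{\star} \in \AdvStructure$ and $\PartySet \setminus \CoreSet \in \AdvStructure$ are allowed sets, which holds under $\Q^{(3)}$. The main obstacle, if any, is simply bookkeeping to ensure we argue termination of all four $\FVSS$ calls per party simultaneously (by taking the point in time at which the last of the four completes), but this is routine given the request-based delayed-output semantics of $\FVSS$.
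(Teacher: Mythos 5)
Your proposal is correct and takes essentially the same approach as the paper, which proves this claim simply by pointing to the argument of Claim~\ref{claim:PerTriplesTermination} and noting it adapts directly; your write-up is a faithful unfolding of that same ACS-termination argument with the straightforward adaptation from two to four $\FVSS$ calls per party and from $\Q^{(4)}$ to $\Q^{(3)}$.
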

\begin{proof}
The proof is similar to the proof of Claim \ref{claim:PerTriplesTermination}. 
\end{proof}
We next claim that all honest parties will eventually agree on whether the instances of $\BasicMult$ in $\RandMultLCE$ has succeeded or failed.
\begin{claim}
\label{claim:RandMultLCETermination}
Consider an arbitrary $\iter$, 
  such that all honest parties participate in the instance $\RandMultLCE(\PartySet, \AdvStruct, \SharingSpec, \allowbreak \Discarded, \iter)$,
   where
   no honest party is present in $\Discarded$.
  Then all honest parties eventually reconstruct a (common) value $d_\iter$.
  Consequently, each honest $P_i$ eventually sets $\flag^{(i)}_\iter$ to either $0$ or $1$. 
\end{claim}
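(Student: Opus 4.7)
The plan is to trace through the protocol $\RandMultLCE$ step-by-step and argue that each subprotocol invocation eventually terminates with agreement among the honest parties, so that the flag value is well defined. Throughout, I will rely on the assumption that $\AdvStructure$ satisfies $\Q^{(3)}(\PartySet, \AdvStructure)$, which implies that $\ShareSpec$ satisfies $\Q^{(2)}(\ShareSpec, \AdvStructure)$, and on the hypothesis that no honest party lies in $\Discarded$.

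First I would invoke Claim \ref{claim:RandMultLCERandomACS} to conclude that every honest party eventually holds secret-sharings $[a_\iter], [b_\iter], [b'_\iter], [r_\iter]$ obtained through the ACS-based generation of random secret-shared values. Next, since no honest party is in $\Discarded$, Lemma \ref{lemma:BasicMult} applied to the two instances $\BasicMult(\PartySet, \AdvStruct, \SharingSpec, [a_\iter], [b_\iter], \Discarded, \iter)$ and $\BasicMult(\PartySet, \AdvStruct, \SharingSpec, [a_\iter], [b'_\iter], \Discarded, \iter)$ guarantees that every honest party eventually obtains outputs $\{[c^{(j)}_\iter]\}$ and $\{[c'^{(j)}_\iter]\}$, together with $[c_\iter]$ and $[c'_\iter]$.

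Once the parties hold these sharings, they participate in the instance of $\PiPerRec$ to reconstruct $r_\iter$. Since $\ShareSpec$ satisfies $\Q^{(2)}(\ShareSpec, \AdvStructure)$, Lemma \ref{lemma:PerRec} ensures all honest parties eventually reconstruct the same value $r_\iter$. Using this common $r_\iter$ and the linearity of the secret-sharing, every honest party can locally compute consistent shares of $[e_\iter] \defined r_\iter [b_\iter] + [b'_\iter]$, and by a second application of Lemma \ref{lemma:PerRec}, all honest parties eventually reconstruct the same $e_\iter$. Linearity then yields consistent shares of $[d_\iter] \defined e_\iter [a_\iter] - r_\iter[c_\iter] - [c'_\iter]$, and a third application of Lemma \ref{lemma:PerRec} delivers a common $d_\iter$ to every honest party.

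Since the decision on $\flag^{(i)}_\iter$ is made deterministically from the reconstructed value $d_\iter$ (setting it to $0$ if $d_\iter = 0$ and to $1$ otherwise), the common reconstruction implies that every honest $P_i$ eventually sets $\flag^{(i)}_\iter$ to the same bit. There is no real obstacle here: the proof is essentially a chaining together of the termination/correctness guarantees of $\RandMultLCE$'s sub-components. The only subtlety worth flagging is ensuring that the $\PiPerRec$ invocations for $e_\iter$ and $d_\iter$ can proceed even though $[e_\iter]$ and $[d_\iter]$ are not shared directly but obtained via local linear combinations; this is handled automatically since the underlying secret-sharing is linear over publicly known coefficients ($r_\iter$ becomes public after its reconstruction).
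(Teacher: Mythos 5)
Your proof is correct and follows essentially the same chain as the paper's own argument: Claim~\ref{claim:RandMultLCERandomACS} for the random sharings, Lemma~\ref{lemma:BasicMult} for the two $\BasicMult$ outputs, then three applications of Lemma~\ref{lemma:PerRec} interleaved with linearity to reconstruct $r_\iter$, $e_\iter$, and finally the common $d_\iter$ that determines the flag. The closing remark about linearity over publicly known coefficients is a fair observation, though the paper treats it as immediate.
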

\begin{proof}
From Claim \ref{claim:RandMultLCERandomACS}, the honest parties eventually hold $[a_\iter], [b_\iter], [b'_\iter]$ and $[r_\iter]$.
 From Lemma \ref{lemma:BasicMult}, it follows that the honest parties eventually hold the outputs
 $\{[c^{(1)}_\iter] , \ldots, [c^{(n)}_\iter], [c_\iter] \}$ and 
         $\{[c'^{(1)}_\iter], \ldots, [c'^{(n)}_\iter], [c'_\iter] \}$ from the 
          corresponding instances of $\BasicMult$.  From Lemma \ref{lemma:PerRec},
          the honest parties eventually reconstruct $r_\iter$ from the corresponding instance of $\PiPerRec$. 
           From the linearity property of secret-sharing, it then follows that the honest parties
          eventually hold $[e_\iter]$ and hence eventually reconstruct $e_{\iter}$ 
          from the corresponding instance of $\PiPerRec$. Again, from the linearity property of secret-sharing, it follows that
          the honest parties eventually hold $[d_\iter]$, followed by eventually reconstructing $d_\iter$ from the corresponding instance of $\PiPerRec$. 
          Now based on whether $d_\iter$ is $0$ or not, each {\it honest} $P_i$ eventually sets $\flag^{(i)}_\iter$ to either $0$ or $1$. 
\end{proof}

We next claim that if no party in $\PartySet \setminus \Discarded$ behaves maliciously, then the honest parties eventually hold a secret-shared
 multiplication-triple. 
 
\begin{claim}
\label{claim:RandMultLCEHonestBehaviour}
Consider an arbitrary $\iter$, 
  such that all honest parties participate in the instance $\RandMultLCE(\PartySet, \AdvStruct, \SharingSpec, \allowbreak \Discarded, \iter)$,
   where
   no honest party is present in $\Discarded$.
   If no party in $\PartySet \setminus \Discarded$ behaves maliciously, then $d_\iter = 0$
   and 
    the honest parties eventually hold
   $([a_\iter], [b_\iter], [c_\iter])$, where $c_\iter = a_\iter \cdot b_\iter$ holds.
\end{claim}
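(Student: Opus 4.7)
The plan is to chain together the guarantees provided by the component subprotocols and then verify the algebraic identity on which the $d_\iter$ check is built. First, I would invoke Claim \ref{claim:RandMultLCERandomACS} (under the standing hypothesis that no honest party is in $\Discarded$) to conclude that all honest parties eventually hold the secret-sharings $[a_\iter], [b_\iter], [b'_\iter], [r_\iter]$. This handles the ``termination of the input-generation phase'' and, together with the linearity of the sharing scheme, ensures that the honest parties will eventually hold all the derived sharings $[e_\iter]$ and $[d_\iter]$ they need in subsequent steps.

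Next, I would apply Lemma \ref{lemma:BasicMult} to each of the two instances of $\BasicMult$ invoked with inputs $([a_\iter], [b_\iter])$ and $([a_\iter], [b'_\iter])$. Since by hypothesis no party in $\PartySet \setminus \Discarded$ behaves maliciously, and since (by Claim \ref{claim:BasicMultFuture}) every summand-sharing party selected inside these $\BasicMult$ instances lies in $\PartySet \setminus \Discarded$, the ``honest-behaviour'' clause of Lemma \ref{lemma:BasicMult} applies. Hence the honest parties eventually obtain $[c_\iter]$ and $[c'_\iter]$ with $c_\iter = a_\iter \cdot b_\iter$ and $c'_\iter = a_\iter \cdot b'_\iter$.

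The algebraic heart of the claim is then a direct calculation. Substituting the relation $e_\iter = r_\iter b_\iter + b'_\iter$ and the above products into the definition of $d_\iter$ gives
\[
d_\iter \;=\; e_\iter \cdot a_\iter \;-\; r_\iter \cdot c_\iter \;-\; c'_\iter \;=\; (r_\iter b_\iter + b'_\iter)\, a_\iter \;-\; r_\iter(a_\iter b_\iter) \;-\; a_\iter b'_\iter \;=\; 0.
\]
Combined with Claim \ref{claim:RandMultLCETermination} (which tells us $d_\iter$ is eventually reconstructed to a common value by all honest parties) and Lemma \ref{lemma:PerRec} (which ensures eventual reconstruction of $r_\iter$, $e_\iter$ and $d_\iter$ via $\PiPerRec$), it follows that each honest $P_i$ eventually sets $\flag^{(i)}_\iter = 0$ and outputs the shares $\{([a_\iter]_q,[b_\iter]_q,[c_\iter]_q)\}_{P_i \in S_q}$, which is a secret-sharing of the multiplication-triple $(a_\iter, b_\iter, a_\iter b_\iter)$ as required.

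I do not expect any serious obstacle here: the claim is essentially a bookkeeping/composition statement, and the only subtlety is making sure the ``honest-behaviour'' preconditions of Lemma \ref{lemma:BasicMult} are in force for both invocations of $\BasicMult$, which is immediate from the hypothesis that no party in $\PartySet \setminus \Discarded$ acts maliciously together with Claim \ref{claim:BasicMultFuture}. The rest is the one-line algebraic identity above.
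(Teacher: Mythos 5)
Your proposal is correct and follows essentially the same structure as the paper's proof: both invoke Lemma~\ref{lemma:BasicMult} to get $c_\iter = a_\iter b_\iter$ and $c'_\iter = a_\iter b'_\iter$ under the hypothesis that no party in $\PartySet \setminus \Discarded$ misbehaves, use Claim~\ref{claim:RandMultLCETermination} for eventual reconstruction of $d_\iter$, and conclude $d_\iter = 0$. The only cosmetic difference is that you spell out the cancellation $d_\iter = (r_\iter b_\iter + b'_\iter)a_\iter - r_\iter(a_\iter b_\iter) - a_\iter b'_\iter = 0$ explicitly (the paper states it as an immediate consequence), and you add a citation to Claim~\ref{claim:BasicMultFuture} to justify that the honest-behaviour clause of Lemma~\ref{lemma:BasicMult} applies, which is accurate but subsumed by the lemma's own hypothesis.
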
   
\begin{proof}
If no party in $\PartySet \setminus \Discarded$ behaves maliciously, then from Lemma \ref{lemma:BasicMult}, the honest parties eventually compute
 $[c_\iter]$ and $[c'_\iter]$ from the respective instances of $\BasicMult$, such that $c_\iter = a_\iter \cdot b_\iter$ and 
 $c'_\iter = a_\iter \cdot b'_\iter$ holds. From Claim \ref{claim:RandMultLCETermination}, the honest parties will eventually reconstruct $d_\iter$.
  Moreover, since $c_\iter = a_\iter \cdot b_\iter$ and 
   $c'_\iter = a_\iter \cdot b'_\iter$ holds, the value $d_\iter$ will be $0$ and consequently, the honest parties will output 
  $([a_\iter], [b_\iter], [c_\iter])$.
\end{proof} 
 We next show that if $d_\iter \neq 0$, then the honest parties eventually include at least one new maliciously-corrupt party in the set $\Discarded$.
\begin{claim}
\label{claim:RandMultLCECorruptBehaviour}
Consider an arbitrary $\iter$, 
  such that all honest parties participate in the instance $\RandMultLCE(\PartySet, \AdvStruct, \SharingSpec, \allowbreak \Discarded, \iter)$,
   where
   no honest party is present in $\Discarded$. 
   If $d_\iter \neq 0$, then the honest parties eventually update $\Discarded$ by adding a new maliciously-corrupt
   party in $\Discarded$.
   \end{claim}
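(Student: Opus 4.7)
The plan is to establish three facts in sequence: (a) every honest party eventually reconstructs the data required to evaluate the discarding predicate; (b) the predicate must fire for at least one $P_i \in \Selected_{\iter, c} \cup \Selected_{\iter, c'}$ whenever $d_\iter \neq 0$; (c) no honest party ever satisfies the predicate, and any identified cheater lies outside $\Discarded$. Together with Claim \ref{claim:BasicMultFuture} (which guarantees $\Selected_{\iter, c}, \Selected_{\iter, c'} \subseteq \PartySet \setminus \Discarded$), these three facts yield the claim.

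First, I would observe that once $d_\iter \neq 0$ has been reconstructed (this happens by Claim \ref{claim:RandMultLCETermination}), every honest party invokes the appropriate instances of $\PiPerRecShare$ and $\PiPerRec$ to reconstruct the shares $\{[a_\iter]_q,[b_\iter]_q,[b'_\iter]_q\}_{S_q \in \ShareSpec}$ and all summand-sums $c^{(1)}_\iter,\ldots,c^{(n)}_\iter,c'^{(1)}_\iter,\ldots,c'^{(n)}_\iter$. Since $\ShareSpec$ satisfies $\Q^{(2)}(\ShareSpec,\AdvStructure)$, Lemmas \ref{lemma:PerRec} and \ref{lemma:PerRecShare} guarantee that these reconstructions eventually complete, yielding values that are common to all honest parties. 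Each honest party can therefore evaluate the discarding predicate for every $P_i$ uniformly.

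The heart of the argument is the existential step. Using the algebraic identity
\[ d_\iter \;=\; r_\iter(a_\iter b_\iter - c_\iter) + (a_\iter b'_\iter - c'_\iter), \]
together with the facts that $a_\iter b_\iter = \sum_{(p,q)} [a_\iter]_p[b_\iter]_q$, that $\{\Products^{(i)}_{\iter,c}\}_{P_i \in \Selected_{\iter,c}}$ partitions $\{(p,q)\}_{p,q=1}^{|\SharingSpec|}$ (and analogously for $c'$, as established in the proof of Claim \ref{claim:BasicMultCorrectness}), and that parties outside $\Selected_{\iter,c}$ resp.\ $\Selected_{\iter,c'}$ contribute $c^{(i)}_\iter = 0$ resp.\ $c'^{(i)}_\iter = 0$ via $\PiPerDefaultShare$, I would rewrite
\[ d_\iter \;=\; \sum_{P_i \in \Selected_{\iter,c} \cup \Selected_{\iter,c'}} \Bigl( r_\iter \hat c^{(i)} + \hat{c'}^{(i)} - r_\iter c^{(i)}_\iter - c'^{(i)}_\iter \Bigr), \]
where $\hat c^{(i)} \defined \sum_{(p,q)\in \Products^{(i)}_{\iter,c}} [a_\iter]_p[b_\iter]_q$ and $\hat{c'}^{(i)} \defined \sum_{(p,q)\in \Products^{(i)}_{\iter,c'}} [a_\iter]_p[b'_\iter]_q$ are the ``honest'' partition-sums. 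Because $d_\iter \neq 0$, at least one summand on the right is non-zero, so the discarding predicate fires for some $P_i \in \Selected_{\iter,c} \cup \Selected_{\iter,c'}$; by Claim \ref{claim:BasicMultFuture} this $P_i$ is not already in $\Discarded$, so it is a genuinely new addition.

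Finally, I would argue that the added $P_i$ is necessarily maliciously-corrupt: an honest summand-sharing party shares exactly $c^{(i)}_\iter = \hat c^{(i)}$ (and $c'^{(i)}_\iter = \hat{c'}^{(i)}$), so the predicate never fires for an honest party, and a fortiori no honest party is ever added to $\Discarded$. The main obstacle I expect is the bookkeeping for the partition rewrite, namely verifying carefully that every ordered pair $(p,q)$ appears in exactly one $\Products^{(j)}_{\iter,c}$ (resp.~$\Products^{(j)}_{\iter,c'}$), that parties outside $\Selected$ contribute $0$ cleanly on both sides, and that the two $\BasicMult$ instances use disjoint session-ids so that their $\Selected$ sets and partitions are independent. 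Once this rewrite is justified, the existential cheater follows immediately from $d_\iter \neq 0$.
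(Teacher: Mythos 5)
Your proposal is correct and follows essentially the same approach as the paper: the paper establishes the same existential fact by contradiction (if every $P_j \in \Selected_{\iter,c} \cup \Selected_{\iter,c'}$ satisfied the equality then summing would force $d_\iter = 0$), while you unfold $d_\iter$ directly as the sum $\sum_{P_i}\bigl(r_\iter\hat c^{(i)} + \hat{c'}^{(i)} - r_\iter c^{(i)}_\iter - c'^{(i)}_\iter\bigr)$ and read the existential off $d_\iter \neq 0$ — the contrapositive of the same decomposition. The remaining ingredients (reconstructability via $\Q^{(2)}$, Claim \ref{claim:BasicMultFuture} for freshness, honesty of the predicate) match the paper's proof as well.
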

   \begin{proof}
   Let $d_\iter \neq 0$ and let $\Selected_\iter$ be the set of summand-sharing parties across the two instances of $\BasicMult$ executed in $\RandMultLCE$; 
   i.e.~ $\Selected_\iter \defined \Selected_{\iter, c} \cup \Selected_{\iter, c'}$. Note that
   there exists no $P_j \in \Selected_\iter$ such that $P_j \in \Discarded$, which follows from Claim \ref{claim:BasicMultFuture}.
   We claim that there exists at least one party $P_j \in \Selected_\iter$, such that corresponding to
    $c^{(j)}_\iter$ and $c'^{(j)}_\iter$, the following holds:
   \[r_\iter \cdot c^{(j)}_\iter + c'^{(j)}_\iter \neq r_\iter \cdot \sum_{(p,q) \in \Products^{(j)}_{\iter, c}} [a_\iter]_p [b_\iter]_q + 
    \sum_{(p, q) \in \Products^{(j)}_{\iter, c'}} [a_\iter]_p [b'_\iter]_q. \]
   Assuming the above holds, the proof now follows from the fact that once the parties reconstruct 
   $d_\iter \neq 0$, they proceed to reconstruct the shares
   $\{ [a_\iter]_q, [b_\iter]_q,  [b'_\iter]_q\}_{S_q \in \ShareSpec}$ through appropriate instances of $\PiPerRecShare$ 
   and the values $c^{(1)}_\iter, \ldots, c^{(n)}_\iter,  c'^{(1)}_\iter, \ldots, c'^{(n)}_\iter$ through 
   appropriate instances of $\PiPerRec$. Upon reconstructing these values, party $P_j$ will be eventually included in the set
   $\Discarded$. Moreover, it is easy to see that $P_j$ is a maliciously-corrupt party, since for every {\it honest}
   $P_j \in \Selected_\iter$, the condition 
   $c^{(j)}_\iter = \displaystyle \sum_{(p,q) \in \Products^{(j)}_{\iter, c}}[a_\iter]_p [b_\iter]_q$
    and $c'^{(j)}_\iter = \sum_{(p,q) \in \Products^{(j)}_{\iter,c'}}[a_\iter]_p [b'_\iter]_q$ holds. 
    
    We prove the above claim through a contradiction. So let the following condition hold for {\it each} $P_j \in \Selected_\iter$:
    \[r_\iter \cdot c^{(j)}_\iter + c'^{(j)}_\iter = r_\iter \cdot \sum_{(p,q) \in \Products^{(j)}_{\iter, c}} [a_\iter]_p [b_\iter]_q + 
    \sum_{(p, q) \in \Products^{(j)}_{\iter, c'}} [a_\iter]_p [b'_\iter]_q. \]
   Next, summing the above equation over all $P_j \in \Selected_\iter$, we get that the following holds:
     \[ \sum_{P_j \in  \Selected_\iter} r_\iter \cdot c^{(j)}_\iter + c'^{(j)}_\iter = \sum_{P_j \in  \Selected_\iter} r_\iter \cdot \sum_{(p,q) \in \Products^{(j)}_{\iter, c}} [a_\iter]_p [b_\iter]_q + 
    \sum_{(p, q) \in \Products^{(j)}_{\iter, c'}} [a_\iter]_p [b'_\iter]_q. \]
   This implies that the following holds:
     \[ r_\iter \cdot  \sum_{P_j \in  \Selected_\iter} c^{(j)}_\iter + c'^{(j)}_\iter =  r_\iter \cdot \sum_{P_j \in  \Selected_\iter}  \sum_{(p,q) \in \Products^{(j)}_{\iter, c}} [a_\iter]_p [b_\iter]_q + 
    \sum_{(p, q) \in \Products^{(j)}_{\iter, c'}} [a_\iter]_p [b'_\iter]_q. \]
    Now based on the way $a_{\iter}, b_{\iter}, b'_{\iter}, c_{\iter}$ and $c'_{\iter}$  are defined, the above implies that the following holds:
    \[r_\iter \cdot c_\iter + c'_\iter = r \cdot a_\iter \cdot b_\iter + a_\iter \cdot b'_\iter \]
   This further implies that 
   \[ r_\iter \cdot c_\iter + c'_\iter = (r_\iter \cdot b_\iter + b'_\iter) \cdot a_\iter \]
   Since in the protocol $e_\iter \defined r_\iter \cdot b_\iter + b'_\iter$, the above implies that
   \[r_\iter \cdot c_\iter + c'_\iter  = e_\iter \cdot a_\iter  \quad \Rightarrow  \; e_\iter \cdot a_\iter - r_\iter \cdot c_\iter - c'_\iter = 0 \quad \Rightarrow \; d_\iter = 0,   \]
   where the last equality follows from the fact that in the protocol, $d_\iter \defined e_\iter \cdot a_\iter - r_\iter \cdot c_\iter - c'_\iter$.
   However $d_\iter = 0$ is a contradiction, since according to the hypothesis of the claim, we are given that
    $d_\iter \neq 0$.
   \end{proof}

We next show that if the honest parties output a secret-shared triple in the protocol, then except with probability $\frac{1}{|\F|}$, the triple is a multiplication-triple.
 Moreover, the triple will be random for the adversary.
 
 \begin{claim}
\label{claim:RandMultLCECorrectness}
Consider an arbitrary $\iter$, 
  such that all honest parties participate in the instance $\RandMultLCE(\PartySet, \AdvStruct, \SharingSpec, \allowbreak \Discarded, \iter)$,
   where
   no honest party is present in $\Discarded$. 
   If $d_\iter = 0$, then the honest parties eventually output 
  $([a_\iter], [b_\iter], [c_\iter])$, where 
   except with probability  $\frac{1}{|\F|}$, the condition $c_\iter = a_\iter \cdot b_\iter$ holds. 
   Moreover, the view of $\Adv$ will be independent of $(a_\iter, b_\iter, c_\iter)$.
\end{claim}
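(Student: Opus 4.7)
The plan is to decompose the claim into three parts: termination, correctness with probability at least $1 - \frac{1}{|\F|}$, and privacy of the output triple $(a_\iter, b_\iter, c_\iter)$ in the view of $\Adv$. Termination is immediate from Claim~\ref{claim:RandMultLCETermination}: all honest parties eventually reconstruct the common value $d_\iter$; since $d_\iter = 0$ by hypothesis, each honest $P_i$ sets $\flag^{(i)}_\iter = 0$ and outputs its shares of $([a_\iter], [b_\iter], [c_\iter])$.

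For correctness, the key step is the algebraic identity
\[ d_\iter \;=\; e_\iter a_\iter - r_\iter c_\iter - c'_\iter \;=\; r_\iter(a_\iter b_\iter - c_\iter) + (a_\iter b'_\iter - c'_\iter), \]
obtained by substituting $e_\iter = r_\iter b_\iter + b'_\iter$. If $c_\iter \neq a_\iter b_\iter$, then $d_\iter = 0$ forces $r_\iter$ to take the single specific value $(c'_\iter - a_\iter b'_\iter)(a_\iter b_\iter - c_\iter)^{-1}$. At the moment $\Adv$'s behaviour in the two $\BasicMult$ instances is locked in (so that $c_\iter$ and $c'_\iter$ are determined), the secret $r_\iter$ is uniformly random from $\Adv$'s viewpoint: $\ShareSpec$ is $\AdvStructure$-private, and an ACS-style argument analogous to Claim~\ref{claim:PerTriplesTermination} shows that $\CoreSet$ contains at least one honest contributor whose random $r^{(j)}_\iter$ masks the sum. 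Reconstruction of $r_\iter$ occurs only after the $\BasicMult$ instances have completed, so the probability that $r_\iter$ happens to satisfy the forcing equation is exactly $\frac{1}{|\F|}$, yielding the required error bound.

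For privacy, I condition on the high-probability event $c_\iter = a_\iter b_\iter$ and argue that $\Adv$'s view is independent of $(a_\iter, b_\iter, c_\iter)$. The relevant components of the view are the corrupt-party shares $\{[a_\iter]_q, [b_\iter]_q, [b'_\iter]_q, [r_\iter]_q, [c_\iter]_q, [c'_\iter]_q\}_{S_q \cap Z^{\star} \neq \emptyset}$, the transcripts of the two $\BasicMult$ instances (which reveal nothing additional about $a_\iter, b_\iter, b'_\iter$ by Claim~\ref{claim:BasicMultPrivacy}), and the publicly reconstructed quantities $r_\iter$, $e_\iter$, and $d_\iter = 0$. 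Because $\ShareSpec$ is $\AdvStructure$-private, the corrupt shares leak nothing beyond themselves. Because $b'_\iter$ is uniformly random and independent of $b_\iter$ in $\Adv$'s view, for every fixed $r_\iter$ the map $b'_\iter \mapsto e_\iter = r_\iter b_\iter + b'_\iter$ is a bijection, so $e_\iter$ is uniformly distributed and independent of $b_\iter$; the quantity $d_\iter = 0$ is a publicly known constant. Hence $(a_\iter, b_\iter, c_\iter)$ remains uniformly distributed on the variety $\{c_\iter = a_\iter b_\iter\}$ consistent with $\Adv$'s shares.

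The main obstacle I expect is the privacy bookkeeping in the last step: one must verify that the shares of $c_\iter$ and $c'_\iter$ delivered to corrupt parties via the $\BasicMult$ outputs, when combined with the public $r_\iter$ and $e_\iter$, do not indirectly leak information about $(a_\iter, b_\iter)$. The clean way to handle this is a simulation-style argument exhibiting, for every admissible alternative triple $(\sima, \simb, \simc = \sima \simb)$ consistent with the corrupt-party shares, a matching choice of honest-party randomness (and a corresponding alternative $\widetilde{b'_\iter}$ determined by the constraint $e_\iter = r_\iter \simb + \widetilde{b'_\iter}$) that reproduces the identical adversarial view; because $c'_\iter = \sima \widetilde{b'_\iter}$ under this substitution, the linear relation $r_\iter c_\iter + c'_\iter = e_\iter a_\iter$ remains satisfied, confirming that $\Adv$'s view is indeed independent of which admissible triple was actually shared.
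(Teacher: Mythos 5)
Your proposal follows essentially the same route as the paper's proof: termination via Claim~\ref{claim:RandMultLCETermination}, the algebraic rewrite $d_\iter = r_\iter(a_\iter b_\iter - c_\iter) + (a_\iter b'_\iter - c'_\iter)$, the observation that $r_\iter$ is uniform and unrevealed at the time the $\BasicMult$ outputs are fixed, and a privacy argument built on the $\AdvStructure$-privacy of $\ShareSpec$, the privacy of $\BasicMult$ (Claim~\ref{claim:BasicMultPrivacy}), and the uniformity of $b'_\iter$ masking $b_\iter$ in $e_\iter$. Two minor stylistic differences: where the paper enumerates four cases on whether each of $c_\iter, c'_\iter$ equals the desired product (concluding Case~II is impossible and Cases~III, IV each have probability $\le \tfrac{1}{|\F|}$), you collapse these into the single observation that $c_\iter \neq a_\iter b_\iter$ forces a unique forbidden value of $r_\iter$, which cleanly subsumes the paper's Cases~III and IV and renders Case~II moot; and your privacy paragraph makes the simulation/bijection argument explicit where the paper states it more informally. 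Both are correct and interchangeable.
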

\begin{proof}
Let $d_\iter = 0$. Then from the protocol steps, the honest parties eventually output $([a_\iter], [b_\iter], \allowbreak [c_\iter])$.
  In the protocol $d_{\iter} \defined e_\iter \cdot a_\iter - r_\iter \cdot c_\iter - c'_\iter$, where
  $e_\iter \defined r_\iter \cdot b_\iter + b'_\iter$. Since $d_\iter = 0$ holds, it implies that the honest parties have verified that the following holds:
   \[r_\iter (a_\iter \cdot b_\iter - c_\iter) = (c'_\iter - a_\iter \cdot b'_\iter). \]
  We also note that $r_\iter$ will be a random element from $\F$ and will be unknown to $\Adv$ till it is publicly reconstructed. This simply follows from the fact
  there will be at least one {\it honest} party $P_j$ in the set $\CoreSet$, such that the corresponding value $r^{(j)}_\iter$ shared by $P_j$ will be random from the
  view-point of $\Adv$. We also note that $r_\iter$ will be unknown to $\Adv$, till the outputs for the underlying instances
  of $\BasicMult$ are computed, and the honest parties hold $[c_\iter]$ and $[c'_\iter]$. This is because in the protocol, 
   the honest parties start participating in the instance of $\PiPerRec$ to reconstruct $r_\iter$, only after they obtain their 
   respective shares corresponding to $[c_\iter]$ and $[c'_\iter]$. 
  Now we have the following cases with respect to
 whether any party from $\PartySet \setminus \Discarded$ behaved maliciously during the underlying instances of $\BasicMult$.
 \begin{myitemize}
 \item[--] {\bf Case I: $c_\iter = a_\iter \cdot b_\iter$ and $c'_\iter = a_\iter \cdot b'_\iter$} --- In this case, $(a_\iter, b_\iter, c_\iter)$ is a multiplication-triple.
 \item[--] {\bf Case II: $c_\iter = a_\iter \cdot b_\iter$, but $c'_\iter \neq a_\iter \cdot b'_\iter$} --- This case is never possible, as this will lead to the contradiction
  that $r_\iter (a_\iter \cdot b_\iter - c_\iter) \neq (c'_\iter - a_\iter \cdot b'_\iter)$ holds.
 \item[--] {\bf Case III: $c_\iter \neq a_\iter \cdot b_\iter$, but $c'_\iter = a_\iter \cdot b'_\iter$} --- This case is possible only if $r_\iter = 0$, as otherwise 
   this will lead to the contradiction
  that $r_\iter (a_\iter \cdot b_\iter - c_\iter) \neq (c'_\iter - a_\iter \cdot b'_\iter)$ holds. However, since $r_\iter$ is a random element from $\F$, it implies that this case
  can occur only with probability at most $\frac{1}{|\F|}$.
  \item[--] {\bf Case IV: $c_\iter \neq a_\iter \cdot b_\iter$ as well as $c'_\iter \neq a_\iter \cdot b'_\iter$} --- This case is possible only if 
   $r_\iter = (c'_\iter - a_\iter \cdot b'_\iter) \cdot (a_\iter \cdot b_\iter - c_\iter)^{-1}$, as otherwise 
   this will lead to the contradiction
  that $r_\iter (a_\iter \cdot b_\iter - c_\iter) \neq (c'_\iter - a_\iter \cdot b'_\iter)$ holds. However, since $r_\iter$ is a random element from $\F$, it implies that this case
  can occur only with probability at most $\frac{1}{|\F|}$.
 \end{myitemize}
Hence, we have shown that except with probability at most $\frac{1}{|\F|}$, the triple $(a_\iter, b_\iter, c_\iter)$ is a multiplication-triple. 
 To complete the proof, we need to argue that the view of $\Adv$ in the protocol, will be independent of the triple 
 $(a_\iter, b_\iter, c_\iter)$. For this, we first note that $a_\iter, b_\iter$ and $b'_\iter$ will be random for the adversary. 
 The proof for this is similar to that of Claim \ref{claim:PerTriplesOutputDistribution} and follows from the fact that there will be at least
 one {\it honest} party $P_j$ in $\CoreSet$, such that the corresponding values $a^{(j)}_\iter, b^{(j)}_\iter$ and $b'^{(j)}_\iter$ shared by $P_j$ will be randomly distributed
 for $\Adv$. From Lemma \ref{lemma:BasicMult}, $\Adv$ learns nothing additional 
  about $a_\iter$, $b_\iter$ and $b'_\iter$ during the two instances of $\BasicMult$. 
   While $\Adv$ learns the value of $e_\iter$, since $b'_\iter$ is a uniformly distributed for $\Adv$, 
    for every candidate value of $b'_\iter$ from the view-point of $\Adv$,
    there is a corresponding value of $b_\iter$ consistent with the $e_\iter$ learnt by $\Adv$.
    Hence, learning $e_\iter$ does not add any new information about $(a_\iter, b_\iter, c_\iter)$ to the view of $\Adv$. 
    Moreover, $\Adv$ will be knowing beforehand that $d_\iter$ will be $0$ and hence, learning this value does not change
    the view of $\Adv$ regarding $(a_\iter, b_\iter, c_\iter)$.     
\end{proof}

We next derive the communication complexity of the protocol $\RandMultLCE$.
\begin{claim}
\label{claim:RandMultLCECommunication}
Protocol $\RandMultLCE$ requires $\Order(n^2)$ calls to $\FVSS$ and $\FABA$, and incurs a communication of $\Order(|\AdvStruct| \cdot n^3 \log{|\F|})$ bits.
\end{claim}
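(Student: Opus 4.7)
The plan is a straightforward accounting argument: I will tally separately the hybrid-functionality calls ($\FVSS,\FABA$) and the bits exchanged by the reconstruction sub-protocols, and verify that all other operations are local and hence contribute nothing at higher order. Inspecting Fig \ref{fig:StatMultCI}, the protocol decomposes into three blocks: (i) the joint generation of $[a_\iter],[b_\iter],[b'_\iter],[r_\iter]$ via a VSS-plus-ACS stage borrowed from $\PiPerTriples$; (ii) two invocations of $\BasicMult$; and (iii) several public reconstructions via $\PiPerRec/\PiPerRecShare$, with an additional reconstruction branch executed only when $d_\iter \neq 0$.

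For the functionality calls, I will first handle block (i). Each $P_i$ acts as dealer for the four values $a^{(i)}_\iter,b^{(i)}_\iter,b'^{(i)}_\iter,r^{(i)}_\iter$, incurring $4n = \Order(n)$ calls to $\FVSS$, followed by the $\Order(n)$ $\FABA$ instances used in ACS to agree on $\CoreSet$. Block (ii) consists of two $\BasicMult$ invocations; by Lemma \ref{lemma:BasicMult}, each makes $\Order(n^2)$ calls to $\FVSS$ and $\FABA$. Block (iii) uses only $\PiPerRec/\PiPerRecShare$, which are purely point-to-point and make no calls to $\FVSS$ or $\FABA$. Summing, the overall cost is $\Order(n^2)$ calls to each of $\FVSS$ and $\FABA$, as claimed.

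For the additional direct communication, I will invoke Lemma \ref{lemma:PerRec} (which gives $\Order(|\AdvStructure|\, n^2 \log|\F|)$ bits per $\PiPerRec$) and Lemma \ref{lemma:PerRecShare} (which gives $\Order(n^2 \log|\F|)$ bits per $\PiPerRecShare$). The three reconstructions of $r_\iter, e_\iter, d_\iter$ on the success path contribute $\Order(|\AdvStructure|\, n^2 \log|\F|)$ bits. On the failure branch, reconstructing every share $\{[a_\iter]_q,[b_\iter]_q,[b'_\iter]_q\}_{S_q \in \ShareSpec}$ uses $3|\ShareSpec| = \Order(|\AdvStructure|)$ instances of $\PiPerRecShare$, totaling $\Order(|\AdvStructure|\, n^2 \log|\F|)$ bits; and the $2n$ public reconstructions of the summand-sums $c^{(j)}_\iter, c'^{(j)}_\iter$ cost $2n \cdot \Order(|\AdvStructure|\, n^2 \log|\F|) = \Order(|\AdvStructure|\, n^3 \log|\F|)$ bits, which is the dominant term.

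There is no real obstacle here; the main thing to check is that every message exchanged outside of the $\FVSS/\FABA$ interface is captured by one of the listed reconstruction sub-protocols. This holds by inspection, since the remaining operations (forming $[e_\iter] = r_\iter[b_\iter]+[b'_\iter]$, $[d_\iter] = e_\iter[a_\iter] - r_\iter[c_\iter] - [c'_\iter]$, the local sums defining $a_\iter,b_\iter,b'_\iter,r_\iter$, and the local equality tests used to update $\Discarded$) are all non-interactive by the linearity of the secret-sharing. Combining the two bounds then yields the claimed $\Order(|\AdvStructure|\, n^3 \log|\F|)$ bits of additional communication.
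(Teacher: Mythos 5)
Your accounting is correct and follows essentially the same route as the paper: the $\Order(n^2)$ functionality-call bound comes from the two $\BasicMult$ invocations (Claim~\ref{claim:BasicMultComplexity}), and the dominant $\Order(|\AdvStructure|\,n^3\log|\F|)$-bit term is the $\Order(n)$ public reconstructions of $c^{(j)}_\iter, c'^{(j)}_\iter$ on the failure branch, with the $\Order(|\ShareSpec|)$ share-reconstructions contributing only lower order. Your version is simply a more explicit unrolling of the paper's one-line proof.
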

\begin{proof}
Follows from the communication complexity of the protocol $\BasicMult$ (Claim \ref{claim:BasicMultComplexity}) and the fact that if $d_\iter \neq 0$, then 
  the parties proceed to publicly reconstruct $\Order(n)$ values through instances of $\PiPerRec$ and publicly reconstruct
  $\Order(|\ShareSpec|)$ number of shares through instances of $\PiPerRecShare$, where
  $|\ShareSpec| = |\AdvStructure|$ for our sharing specification $\ShareSpec$.
\end{proof}

The proof of Lemma \ref{lemma:RandMultLCE} now follows from Claims \ref{claim:RandMultLCERandomACS}-\ref{claim:RandMultLCECommunication}. 
\begin{lemma}
\label{lemma:RandMultLCE}
Let  $\AdvStructure$ satisfy the $\Q^{(3)}(\PartySet, \AdvStructure)$ condition and let 
 $\ShareSpec = (S_1, \ldots, S_h) = \{ \PartySet \setminus Z | Z \in \AdvStructure\}$. 
  Consider an arbitrary $\iter$, 
  such that all honest parties participate in the instance $\RandMultLCE(\PartySet, \AdvStruct, \SharingSpec, \allowbreak \Discarded, \iter)$, where
   no honest party is present in $\Discarded$.  Then
   each honest $P_i$ eventually sets $\flag^{(i)}_\iter$ to either $0$ or $1$. 
  In the former case, the honest parties output 
  $([a_\iter], [b_\iter], [c_\iter])$, such that with probability at least
   of $1 - \frac{1}{|\F|}$, the condition $c_\iter = a_\iter \cdot b_\iter$ holds. Moreover, the view of 
   $\Adv$ will be independent of the triple $(a_\iter, b_\iter, c_\iter)$. 
   In the latter case, the honest parties will eventually include at least one new maliciously-corrupt party 
    $P_j$ to $\Discarded$.
    The protocol makes $\Order(n^2)$ calls to $\FVSS$ and $\FABA$, and incurs a communication of $\Order(|\AdvStruct| \cdot n^3 \log{|\F|})$ bits.
\end{lemma}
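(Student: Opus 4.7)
The plan is to decompose Lemma~\ref{lemma:RandMultLCE} into four sub-statements matching the phases of $\RandMultLCE$: \emph{termination} (every honest $P_i$ eventually fixes $\flag^{(i)}_\iter$), \emph{honest-case soundness} (if no party in $\PartySet\setminus\Discarded$ cheats then $d_\iter=0$ and the output is a correct triple), \emph{cheater-identification} (if $d_\iter\neq 0$ a new maliciously-corrupt party enters $\Discarded$), and \emph{probabilistic soundness} (if $d_\iter=0$ the output triple is multiplicative except with probability $1/|\F|$). Privacy and complexity are then routine.

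First, I would argue termination by chaining already-proven eventual-delivery facts. The random-sharing stage mimics the first stage of $\PiPerTriples$, so by the same ACS-based argument used for Claim~\ref{claim:PerTriplesTermination} the honest parties eventually hold $[a_\iter],[b_\iter],[b'_\iter],[r_\iter]$. Lemma~\ref{lemma:BasicMult} then delivers $[c_\iter],[c'_\iter]$, and since $\ShareSpec$ satisfies $\Q^{(2)}(\ShareSpec,\AdvStructure)$ all three public reconstructions $r_\iter,e_\iter,d_\iter$ via $\PiPerRec$ terminate by Lemma~\ref{lemma:PerRec}. Thus $\flag^{(i)}_\iter\in\{0,1\}$ is eventually set. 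Honest-case soundness is the easy algebraic check: Lemma~\ref{lemma:BasicMult} gives $c_\iter=a_\iter b_\iter$ and $c'_\iter=a_\iter b'_\iter$, so $d_\iter = (r_\iter b_\iter+b'_\iter)a_\iter - r_\iter a_\iter b_\iter - a_\iter b'_\iter = 0$.

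Next I would address the cheater-identification case ($d_\iter\neq 0$). The key lemma is that \emph{some} $P_j\in\Selected_{\iter,c}\cup\Selected_{\iter,c'}$ must fail the partition-sum check
\[
r_\iter c^{(j)}_\iter + c'^{(j)}_\iter \;=\; r_\iter\!\!\!\!\sum_{(p,q)\in\Products^{(j)}_{\iter,c}}\!\!\!\![a_\iter]_p[b_\iter]_q \;+\!\!\!\!\sum_{(p,q)\in\Products^{(j)}_{\iter,c'}}\!\!\!\![a_\iter]_p[b'_\iter]_q.
\]
This I would prove by contradiction: summing a valid check over all selected $P_j$ telescopes (by the disjoint-partition property of $\BasicMult$) to $r_\iter c_\iter + c'_\iter = r_\iter a_\iter b_\iter + a_\iter b'_\iter$, which forces $d_\iter=0$. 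Hence some $P_j$ violates the check; since every honest summand-sharing party computes its sum correctly, $P_j$ is maliciously-corrupt. Public reconstruction through $\PiPerRec$ and $\PiPerRecShare$ exposes this, and Claim~\ref{claim:BasicMultFuture} (no party from $\Discarded$ is ever selected) guarantees $P_j$ is new.

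The hard part will be the probabilistic-soundness step. From $d_\iter=0$ one gets the identity $r_\iter(a_\iter b_\iter - c_\iter) = c'_\iter - a_\iter b'_\iter$, and a four-case split on whether $c_\iter\stackrel{?}{=}a_\iter b_\iter$ and $c'_\iter\stackrel{?}{=}a_\iter b'_\iter$ reduces the ``bad'' event to requiring $r_\iter$ to take one specific value. The delicate point is justifying that $r_\iter$ is uniform over $\F$ and information-theoretically independent of the adversary's view \emph{at the moment the cheating is committed}: one honest contributor in $\CoreSet$ masks $r_\iter$, and the protocol carefully opens $r_\iter$ only \emph{after} honest parties hold their shares of $[c_\iter],[c'_\iter]$, so the adversary's summand choices cannot adapt to $r_\iter$. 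This then bounds the bad event by $1/|\F|$.

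Finally, privacy follows by combining: the masking honest contribution in $\CoreSet$ makes $a_\iter,b_\iter,b'_\iter$ uniform from $\Adv$'s view; Lemma~\ref{lemma:BasicMult} adds no leakage during the two $\BasicMult$ instances; and the opened values $r_\iter,e_\iter,d_\iter$ carry no residual information, because $b'_\iter$ is uniform and independent (so $e_\iter=r_\iter b_\iter+b'_\iter$ is a one-time-pad masking of $r_\iter b_\iter$), and $d_\iter=0$ is known a priori. Complexity is a direct tally: the random-sharing stage and two $\BasicMult$ invocations contribute $O(n^2)$ calls to $\FVSS,\FABA$ by Claim~\ref{claim:BasicMultComplexity}, while the failure branch adds $O(n)$ runs of $\PiPerRec$ and $O(|\AdvStructure|)$ runs of $\PiPerRecShare$, yielding the stated $O(|\AdvStructure|\cdot n^3\log|\F|)$ bits.
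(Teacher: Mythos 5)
Your decomposition---termination via ACS plus two $\BasicMult$ invocations plus $\PiPerRec$, honest-case algebra giving $d_\iter=0$, a telescoping-sum contradiction to locate a cheating $P_j$ when $d_\iter\neq 0$ (with Claim~\ref{claim:BasicMultFuture} ensuring freshness), the four-way case split over $c_\iter \stackrel{?}{=} a_\iter b_\iter$ and $c'_\iter \stackrel{?}{=} a_\iter b'_\iter$ with the $r_\iter$-uniformity argument tied to the ordering of openings, the one-time-pad privacy argument for $e_\iter$, and the direct complexity tally---is correct and matches the paper's own proof claim for claim. No substantive difference to flag.
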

\paragraph{Protocol $\RandMultLCE$ for $M$ Triples:}
The extension of the protocol $\RandMultLCE$  for generating $M$ triples is straight forward. 
 The parties first generate $M$ random shared tuples
 $\{([a^{(\ell)}_{\iter}], [b^{(\ell)}_{\iter}], [b'^{(\ell)}_{\iter}])\}_{\ell = 1, \ldots, M}$ and a {\it single}
  random challenge $[r_{\iter}]$.
  The parties then run $2M$ instances of $\BasicMult$ to compute
  $\{([c^{(\ell)}_{\iter}], [c'^{(\ell)}_{\iter}])\}_{\ell = 1, \ldots, M}$, followed
   by probabilistically checking if all the instances of $\BasicMult$
  are executed correctly, by using the {\it same} $r_{\iter}$ for all the instances. 
  If cheating is detected in any of the instances, then the parties proceed further to identify at least one new maliciously-corrupt party
  and update $\Discarded$, as done in $\RandMultLCE$. 
  The protocol makes $\Order(n^2 \cdot M)$ calls to $\FVSS$ and $\Order(n^2)$ calls to $\FABA$, and incurs a communication of
   $\Order((M \cdot |\AdvStruct| \cdot n^2 + |\AdvStructure| \cdot n^3) \log{|\F|})$ bits.
\subsection{Statistically-Secure Protocol $\PiStatTriples$ and Its Properties}
Protocol $\PiStatTriples$ for generating $M = 1$ multiplication-triple is presented in Fig \ref{fig:StatTriples}.
\begin{protocolsplitbox}{$\PiStatTriples(\PartySet, \AdvStruct, \SharingSpec$)}{A statistically-secure protocol 
  for $\FTriples$ with $M = 1$ in $(\FVSS, \FABA)$-hybrid for session id $\sid$}{fig:StatTriples}
 \justify
 \begin{myitemize}
\item[--] \textbf{Initialization}: Parties initialize $\Discarded = \emptyset$
and $\iter = 1$.
	\item[--] {\bf Detectable Triple Generation}:
	 Parties participate in an instance $\RandMultLCE(\PartySet,\AdvStruct,\SharingSpec, \Discarded, \iter)$ with session id 
	 $\sid_\iter \defined \sid || \iter$. Each $P_i \in \PartySet$ then proceeds as follows.
	\begin{myitemize}
	\item {\bf Positive Output}: If $\flag^{(i)}_\iter$ is set to $0$ during the instance 
	$\RandMultLCE(\PartySet,\AdvStruct,\SharingSpec, \Discarded, \iter)$, then 
	 output the shares $\{([a_\iter]_q,[b_\iter]_q,[c_\iter]_q)\}_{P_i \in S_q}$ obtained during the instance of $\RandMultLCE$.
	\item {\bf Negative Output}: If $\flag^{(i)}_\iter$ is set to $1$ during the instance 
	$\RandMultLCE(\PartySet,\AdvStruct,\SharingSpec, \Discarded, \iter)$, 
	then set $\iter = \iter+1$ and go to the step {\bf Detectable Triple Generation}. 
	\end{myitemize}
 \end{myitemize}	
\end{protocolsplitbox}
\paragraph{Protocol $\PiStatTriples$ for Generating $M$ Multiplication-Triples:} The only modification will be
 to call $\RandMultLCE$ for generating $M$ random triples. 

We next prove the security of the protocol $\PiStatTriples$ in the $(\FVSS, \FABA)$-hybrid model.
  While proving these properties, we will assume that  $\AdvStructure$ satisfies the $\Q^{(3)}(\PartySet, \AdvStructure)$ condition.
  This further implies that the sharing specification $\ShareSpec = (S_1, \ldots, S_h) \defined \{ \PartySet \setminus Z | Z \in \AdvStructure\}$
  satisfies the $\Q^{(2)}(\ShareSpec, \AdvStructure)$ condition. 
  \\~\\
   \noindent {\bf Theorem \ref{thm:PiStatTriples}.}
  {\it Let  $\AdvStructure$ satisfy the $\Q^{(3)}(\PartySet, \AdvStructure)$ condition and let 
 $\ShareSpec = (S_1, \ldots, S_h) = \{ \PartySet \setminus Z | Z \in \AdvStructure\}$. 
 Then  $\PiStatTriples$ securely realizes $\FTriples$ with UC-security in the $(\FVSS, \FABA)$-hybrid model, except with
  error probability of at most $\frac{n}{|\F|}$.
   The protocol makes $\Order(M \cdot n^3)$ calls to $\FVSS$ and $\Order(n^3)$ calls to $\FABA$,
   and additionally incurs a communication of $\Order((M \cdot |\AdvStruct| \cdot n^3 + |\AdvStructure| \cdot n^4) \log{|\F|})$ bits. 
  }
  \begin{proof}
  The communication complexity and the number of calls to $\FVSS$ and $\FABA$ simply follows from the communication complexity of $\RandMultLCE$
   and the fact that there might be $\Order(n)$ instances of $\RandMultLCE$ in the protocol.
    This is because from Lemma \ref{lemma:RandMultLCE}, if any instance of $\RandMultLCE$ fails,
    then at least one new {\it corrupt} party is globally discarded and included in $\Discarded$. Once all the corrupt parties are included in $\Discarded$, then 
    from Claim \ref{claim:RandMultLCEHonestBehaviour}, the next instance of $\RandMultLCE$ is bound to give the correct output.
    
         We next prove the security. For the ease of explanation, we consider the case where
  only one multiplication-triple is generated in $\PiStatTriples$; i.e.~$M = 1$. The proof can easily be  modified for any general $M$.
   
   Let $\Adv$ be an arbitrary adversary, attacking the protocol $\PiStatTriples$ by corrupting a set of parties
  $Z^{\star} \in \AdvStructure$, and let $\Env$ be an arbitrary environment. We show the existence of a simulator $\SimStatTriples$ (Fig \ref{fig:SimStatTriples}),
   such that for any
 $Z^{\star} \in \AdvStructure$,
  the outputs of the honest parties and the view of the adversary in the 
   protocol $\PiStatTriples$ is indistinguishable from the outputs of the honest parties and the view of the adversary in an execution in the ideal world involving 
  $\SimStatTriples$ and $\FTriples$, except with probability at most $\frac{n}{|\F|}$.
  
  The high level idea of the simulator is very similar to that of the simulator for the protocol $\PiPerTriples$ (see the proof of Theorem \ref{thm:PiPerTriples}).
  Throughout the simulation,
  the simulator itself performs the role of the ideal functionalities $\FVSS$ and
   $\FABA$ whenever required and performs the role of the honest parties, exactly as per the steps of the protocol. 
   In each iteration, the simulator 
   simulates the actions of honest parties during the underlying instance of $\RandMultLCE$ by playing the role of the honest parties with random inputs.
   Once the simulator finds any iteration of $\RandMultLCE$ to be successful, the simulator learns the secret-sharing of the output triple of that iteration
   and sends the shares of this triple, corresponding to the corrupt parties to $\FTriples$, on the behalf of $\Adv$.

\begin{simulatorsplitbox}{$\SimStatTriples$}{Simulator for the protocol $\PiStatTriples$
  where $\Adv$ corrupts the parties in set $Z^{\star} \in \AdvStructure$}{fig:SimStatTriples}
	\justify
$\SimStatTriples$ constructs virtual real-world honest parties and invokes the real-world adversary $\Adv$. The simulator simulates the view of
 $\Adv$, namely its communication with $\Env$, the messages sent by the honest parties, and the interaction with $\FVSS$ and $\FABA$. 
  In order to simulate $\Env$, the simulator $\SimStatTriples$ forwards every message it receives from 
   $\Env$ to $\Adv$ and vice-versa.  The simulator then simulates the various stages of the protocol as follows. 
\begin{myitemize}
\item[--] \textbf{Initialization}: On behalf of the honest parties, the simulator initializes $\Discarded$ to $\emptyset$ and $\iter$ to $1$.
\item[--] \textbf{Detectable Triple Generation}: The simulator plays the role of the honest parties as per the protocol and interacts with $\Adv$ for an instance
 $\RandMultLCE(\PartySet,\AdvStruct,\SharingSpec, \Discarded, \iter)$. 
  During this instance, the simulator simulates the interface for $\FABA$ and $\FVSS$ for $\Adv$ during the underlying instances of 
  $\BasicMult$, by itself performing the role of $\FABA$ and $\FVSS$. Next, based on whether the instance is successful or not, 
   simulator does the following.
	\begin{myitemize}
	\item {\bf If during the instance $\RandMultLCE(\PartySet,\AdvStruct,\SharingSpec, \Discarded, \iter)$, 
           simulator has set $\flag^{(i)}_\iter=0$, corresponding to any $P_i \not \in Z^{\star}$}: 
           In this case, let $([\sima_\iter], [\simb_\iter], [\simc_\iter])$ be the output of the honest parties from the instance of $\RandMultLCE$. 
           The simulator then sets $\{ [\sima_\iter]_q, [\simb_\iter]_q, [\simc_\iter]_q \}_{S_q \cap Z^{\star} \neq \emptyset}$
           to be the shares corresponding to the parties in $Z^{\star}$ and goes to the step labelled {\bf Interaction with $\FTriples$}.            
	\item {\bf If during the instance $\RandMultLCE(\PartySet,\AdvStruct,\SharingSpec, \Discarded, \iter)$, 
           simulator has set $\flag^{(i)}_\iter=1$, corresponding to any $P_i \not \in Z^{\star}$}: 
           In this case, the simulator sets $\iter = \iter+1$ and goes to step labelled {\bf Detectable Triple Generation}.
	\end{myitemize}
\item[--] {\bf Interaction with $\FTriples$}:
  Let $\{[\widetilde{a}]_q,[\widetilde{b}]_q,[\widetilde{c}]_q\}_{S_q \cap Z^{\star} \neq \emptyset}$ be the shares set by the simulator
   corresponding to the parties in $Z^{\star}$. The simulator sends
    $(\shares,\sid, \{[\widetilde{a}]_q, [\widetilde{b}]_q, [\widetilde{c}]_q\}_{S_q \cap Z^{\star} \neq \emptyset})$ to $\FTriples$,  on the behalf of $\Adv$.
\end{myitemize}    
\end{simulatorsplitbox}

We now prove a series of claims which will help us to finally prove the theorem. We first show that the view generated by $\SimStatTriples$ for $\Adv$ is identically distributed to $\Adv$'s view during the real execution of $\PiStatTriples$.

\begin{claim}
\label{claim:StatTriplesPrivacy}
The view of $\Adv$ in the simulated execution with $\SimPerTriples$ is identically distributed as the view of $\Adv$ in the real execution of
 $\PiStatTriples$.
\end{claim}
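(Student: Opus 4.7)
The plan is to prove the claim by an iteration-by-iteration induction, exploiting the fact that in $\PiStatTriples$ the honest parties carry no external input---they merely sample random field elements inside each invocation of $\RandMultLCE$. Since $\SimStatTriples$ plays the role of the honest parties exactly as prescribed by the protocol (with freshly sampled random values) and internally performs the role of $\FVSS$ and $\FABA$ exactly as these ideal functionalities would, there is a natural one-to-one correspondence between real executions of $\PiStatTriples$ and simulated executions with $\SimStatTriples$. First I would fix notation by decomposing the view of $\Adv$ inside a single instance $\RandMultLCE(\PartySet,\AdvStruct,\SharingSpec,\Discarded,\iter)$ into: (i) the shares of the random sharings $[\sima^{(j)}_\iter], [\simb^{(j)}_\iter], [\simb'^{(j)}_\iter], [\widetilde{r}^{(j)}_\iter]$ delivered by $\FVSS$ on behalf of each honest dealer $P_j$; (ii) the inputs/outputs of the ACS instances of $\FABA$ used to construct $\CoreSet$; (iii) the view of $\Adv$ inside the two underlying instances of $\BasicMult$ for $([a_\iter],[b_\iter])$ and $([a_\iter],[b'_\iter])$; (iv) the publicly reconstructed values $r_\iter$, $e_\iter$, and $d_\iter$; and (v) in the failure case, the publicly reconstructed shares $\{[a_\iter]_q,[b_\iter]_q,[b'_\iter]_q\}_{S_q \in \ShareSpec}$ together with the summand-sums $c^{(1)}_\iter,\ldots,c^{(n)}_\iter,c'^{(1)}_\iter,\ldots,c'^{(n)}_\iter$ used for cheater identification.

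Next, I would argue that each of (i)--(v) is identically distributed in the two worlds. Component (i) is identical because $\ShareSpec$ is $\AdvStructure$-private, so conditioned on any values of the underlying random inputs, the shares on groups $S_q$ with $S_q \cap Z^\star \neq \emptyset$ seen by $\Adv$ are uniform in both worlds---this mirrors the argument used in Claim~\ref{claim:PerTriplesPrivacy}. Component (ii) is a deterministic function of (i) together with $\Adv$'s scheduling choices, both of which are identically distributed, so the common $\CoreSet$ is produced with the same distribution. Component (iii) follows from Claim~\ref{claim:BasicMultPrivacy}: inside each $\BasicMult$ instance, the shares of the summand-sums contributed by the honest summand-sharing parties are blinded into uniform shares, so replacing the ``real'' (but unknown) operands with the independent random ones chosen by $\SimStatTriples$ does not change $\Adv$'s view---the argument is essentially the one used for $\OptMult$ in Claim~\ref{claim:OptMultPrivacy}. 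Component (iv) is handled by the analysis inside Claim~\ref{claim:RandMultLCECorrectness}: the challenge $r_\iter$ is uniform to $\Adv$ because $\CoreSet$ contains at least one honest contributor whose share is independent of the rest of the view; given $r_\iter$ and the uniformly random $b'_\iter$, the reconstructed $e_\iter$ is uniform and independent of $b_\iter$; and $d_\iter$ is deterministically determined from quantities already in the view. Component (v) is immediate, since the values reconstructed after cheating are deterministic functions of $\Adv$'s inputs and of quantities already appearing in (i)--(iv).

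Finally, I would close the argument by induction on $\iter$. The initial state $\Discarded = \emptyset$ is identical in both worlds. By the componentwise argument above, the view produced during iteration $\iter$ has the same distribution in both worlds; moreover, the decision to terminate (when $\flag^{(i)}_\iter = 0$) and the update $\Discarded \leftarrow \Discarded \cup \{P_j\}$ (when $\flag^{(i)}_\iter = 1$) are deterministic functions of this view, so the state entering iteration $\iter + 1$ is identically distributed. Since by Lemma~\ref{lemma:RandMultLCE} a new maliciously-corrupt party is discarded after every failed iteration, the total number of iterations is bounded by $|Z^\star| + 1 \leq n$, and iterating the argument finitely many times yields identical distributions over the entire execution. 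The main technical obstacle I anticipate is component (iii): one must check that the privacy argument of Claim~\ref{claim:BasicMultPrivacy} composes cleanly with the random-sharing phase, specifically that the identity of each summand-sharing party---which is determined by the ACS inside $\BasicMult$ and can therefore depend on $\Adv$'s scheduling---is a deterministic function of component (ii) alone, so no secret-dependent leakage is introduced when passing from the ``real'' $(a_\iter,b_\iter,b'_\iter)$ to the simulator's independently sampled $(\sima_\iter,\simb_\iter,\simb'_\iter)$.
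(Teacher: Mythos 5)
Your proof is correct and rests on the same core observation the paper uses: since honest parties in $\PiStatTriples$ have no external inputs and $\SimStatTriples$ faithfully executes the protocol (and internally plays $\FVSS$, $\FABA$) with its own fresh randomness, real and simulated executions are in direct correspondence, with the privacy of $\RandMultLCE$ (Lemma~\ref{lemma:RandMultLCE}) handling the successful iteration. The paper's own proof states this in two or three sentences; your componentwise decomposition and iteration-by-iteration induction unpack the same argument in considerably more detail but do not change the route.
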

\begin{proof}
In both the real as well as simulated execution, the parties run an instance of $\RandMultLCE$ for each iteration $\iter$, where in the simulated execution, the role
 of the honest parties is played by the simulator, including the role of $\FVSS$ and $\FABA$. 
  Now, in either execution, if $\flag^{(i)}_\iter$ is set to $0$ during some iteration $\iter$  corresponding to any {\it honest} $P_i$,
  then from Lemma \ref{lemma:RandMultLCE}, 
  the view of $\Adv$ will be independent of the underlying triple and hence, will be 
   identically distributed in both the executions. Else, in both executions, 
   at least one new corrupt party gets discarded and the parties
    proceed to the next iteration. Hence, the view of $\Adv$ in both executions is identically distributed.
\end{proof}

We now show that conditioned on the view of $\Adv$, the output of honest parties is identically distributed in the real execution of $\PiStatTriples$ involving $\Adv$, as well as in the ideal execution involving $\SimStatTriples$ and $\FTriples$.

\begin{claim}
\label{claim:StatTriplesOutputDistribution}
Conditioned on the view of $\Adv$, the output of the honest parties is identically distributed in the real execution of $\PiStatTriples$ involving $\Adv$ and in the ideal execution involving $\SimStatTriples$ and $\FTriples$, except with probability at most $\frac{n}{|\F|}$.
\end{claim}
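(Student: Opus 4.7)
The plan is to fix an arbitrary view $\View$ of $\Adv$ --- which by Claim~\ref{claim:StatTriplesPrivacy} has the same distribution in both worlds --- and argue that, conditioned on $\View$, the real-world and ideal-world outputs are identically distributed as a uniformly random secret-shared multiplication triple consistent with the corrupt-party shares determined by $\View$, up to a total statistical error of $n/|\F|$.

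First I would read off from $\View$ the iteration $\iter^\star$ at which some honest $P_i$ first sets $\flag^{(i)}_{\iter^\star}=0$, together with the accompanying corrupt-party shares $\{[a_{\iter^\star}]_q, [b_{\iter^\star}]_q, [c_{\iter^\star}]_q\}_{S_q \cap Z^\star \neq \emptyset}$. By inspection of $\SimStatTriples$, these are exactly the shares that the simulator forwards to $\FTriples$, so the corrupt marginals agree in the two worlds tautologically; all that remains is to analyze the joint distribution of the honest parties' shares.

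Next, on the real side I would chain the guarantees of Lemma~\ref{lemma:RandMultLCE} across iterations. Every failed iteration deterministically adds a fresh maliciously corrupt party to $\Discarded$ (zero error), and since $|Z^\star| \le t < n$, the protocol attempts at most $n$ iterations in total. In each such iteration, if $\flag$ is set to $0$ then $c_\iter = a_\iter\cdot b_\iter$ holds except with probability $1/|\F|$; a union bound over the $\le n$ iterations then yields cumulative error at most $n/|\F|$. Conditioned on this good event, $(a_{\iter^\star}, b_{\iter^\star}, c_{\iter^\star})$ is a genuine multiplication triple, and by re-using the argument from Claim~\ref{claim:PerTriplesOutputDistribution} --- the ACS phase produces a $\CoreSet$ with $\PartySet \setminus \CoreSet \in \AdvStructure$, which by the $\Q^{(3)}$ condition is guaranteed to contain at least one honest $P_j$ whose random contributions $a^{(j)}_{\iter^\star}, b^{(j)}_{\iter^\star}$ have shares on the corrupt side that, by the $\AdvStructure$-privacy of $\ShareSpec$, are independent of the underlying values --- the pair $(a_{\iter^\star}, b_{\iter^\star})$ is uniform in $\F^2$ subject only to the fixed corrupt-party shares.

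On the ideal side, $\FTriples$ by construction samples a uniform multiplication triple consistent with the corrupt-party shares it receives from $\SimStatTriples$, yielding the identical conditional distribution. The main obstacle will be the uniformity step in the real world: one must verify that the additional reconstructions performed during $\RandMultLCE$ --- the random challenge $r_{\iter^\star}$, the masked combiner $e_{\iter^\star} = r_{\iter^\star} b_{\iter^\star} + b'_{\iter^\star}$, the test value $d_{\iter^\star}$, and the transcripts of the two $\BasicMult$ instances --- do not leak information about the honest portion of $(a_{\iter^\star}, b_{\iter^\star})$ beyond the corrupt shares already pinned by $\View$. For $e_{\iter^\star}$ and $d_{\iter^\star}$ this follows because $b'_{\iter^\star}$ is independently uniform and masks $b_{\iter^\star}$ inside $e_{\iter^\star}$, while $d_{\iter^\star}=0$ is a value the adversary could predict in advance; for the $\BasicMult$ transcripts this is exactly the privacy clause of Lemma~\ref{lemma:BasicMult}.
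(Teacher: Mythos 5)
Your proposal is correct and follows essentially the same route as the paper: fix a view $\View$, bound the iteration count by $n$ via the cheater-identification guarantee, union-bound the per-iteration $\frac{1}{|\F|}$ error of a false success, and observe that $\SimStatTriples$ hands $\FTriples$ exactly the corrupt shares determined by $\View$, so $\FTriples$ completes them to a uniform multiplication triple matching the real-world conditional distribution. The one place you go further than the paper is in re-deriving the uniformity of $(a_{\iter^\star},b_{\iter^\star},c_{\iter^\star})$ (via $\CoreSet$ containing an honest party, $\AdvStructure$-privacy of $\ShareSpec$, the privacy clause of Lemma~\ref{lemma:BasicMult}, and the masking by $b'_{\iter^\star}$); the paper simply cites this as part of Lemma~\ref{lemma:RandMultLCE} (in turn Claim~\ref{claim:RandMultLCECorrectness}), so your unpacking is sound but not a new argument.
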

\begin{proof}
Consider an arbitrary view $\View$ of $\Adv$, generated as per some execution of $\PiStatTriples$. From Lemma \ref{lemma:RandMultLCE}, 
 in the real execution of $\PiStatTriples$, during each iteration, 
  all honest parties either obtain shares of a random multiplication triple, 
  or discard a {\it new} maliciously-corrupt party. 
  Since $|Z^{\star}| < n$, it will take less than $n$ iterations to discard all the maliciously-corrupt parties. 
  Furthermore, once all parties in $Z^{\star}$ are discarded, from Claim \ref{claim:RandMultLCEHonestBehaviour}, the 
  next instance of $\RandMultLCE$ will output a secret-shared multiplication-triple for the honest parties. 
  Consequently, within $n$ iterations, there will be some iteration $\iter$, such that all honest parties $P_i$ eventually set
  $\flag_\iter^{(i)}$ to $0$ and output a secret-shared triple $([a_{\iter}], [b_{\iter}], [c_{\iter}])$. Moreover, from 
   the union bound, it follows that except with probability at most $\frac{n}{|\F|}$, the triple $(a_\iter, b_\iter, c_\iter)$ will be a multiplication-triple.
   Furthermore, from Lemma \ref{lemma:RandMultLCE}, the triple will be randomly distributed over $\F$. 
   
 To complete the proof, we show that conditioned on the shares $\{([a_{\iter}]_q, [b_{\iter}]_q, [c_{\iter}]_q) \}_{S_q \cap Z^{\star} \neq \emptyset}$ (which are determined by
    $\View$), the honest
    parties output a secret-sharing
     of some random multiplication-triple in the simulated execution, which is consistent with
   the shares  $\{([a_{\iter}]_q, [b_{\iter}]_q, [c_{\iter}]_q) \}_{S_q \cap Z^{\star} \neq \emptyset}$.
    However, this simply follows from the fact that in the simulated execution, 
    $\SimStatTriples$ sends the shares $\{([a_{\iter}]_q, [b_{\iter}]_q, [c_{\iter}]_q) \}_{S_q \cap Z^{\star} \neq \emptyset}$
     to $\FTriples$ on the behalf of the parties in $Z^{\star}$, and as an output, $\FTriples$ generates a random secret-sharing of some random multiplication-triple consistent with these shares. 
\end{proof}

\end{proof}

 \section{MPC Protocol in the Pre-Processing Model}
\label{app:MPC}
  The {\it perfectly-secure} AMPC protocol $\PiMPC$ in the $(\FTriples, \FVSS, \FABA)$-hybrid model is presented in Fig \ref{fig:PerAMPC}. The high level idea behind the protocol
   is already discussed in Section \ref{sec:mpc}. 
  The protocol has a {\it pre-processing} phase where secret-shared random multiplication triples are generated, an
  {\it input} phase where each party verifiably generates a secret-sharing of its input for the function $f$ and a common subset of input-providers
  is selected, and a {\it circuit-evaluation} phase where the circuit is securely evaluated and the function output is publicly reconstructed.
  
   In the protocol, all honest parties may not be reconstructing the function-output at the same ``time" and
   different parties may be at different phases of the protocol, as the protocol is executed asynchronously. Consequently, a party upon reconstructing the function-output,
   {\it cannot} afford to terminate immediately, as its presence and participation might be needed for the completion of various phases of the protocol by other honest parties.
   A standard trick to get around this problem in the AMPC protocols \cite{HNP05,HNP08,Coh16}
    is to have an additional {\it termination phase}, whose code is executed concurrently throughout the protocol to check if a party can ``safely" terminate the protocol with 
   the function output.

\begin{protocolsplitbox}{$\PiMPC$}{The perfectly-secure AMPC protocol in the $(\FTriples, \FVSS, \FABA)$-hybrid model. The public inputs of
 the protocol are $\PartySet, \ckt$ and $\AdvStructure$. The above steps are executed by every 
 $P_i \in \PartySet$}{fig:PerAMPC}
	\justify
Set the sharing specification as $\ShareSpec = (S_1, \ldots, S_h)
 \defined \{ \PartySet \setminus Z | Z \in \AdvStructure\}$, where $\AdvStructure$ is the adversary structure.\footnote{Thus $\ShareSpec$ is
  {\it $\AdvStructure$-private}.}
\begin{center}\underline{\bf Pre-Processing Phase}\end{center}
     \begin{mydescription}
     \item [1.] Send $(\triples, \sid, P_i)$ to the functionality $\FTriples$.
     \item [2.] Request output from $\FTriples$ until an output
       $(\tripleshares, \sid, \{[a^{(\ell)}]_q, [b^{(\ell)}]_q, [c^{(\ell)}]_q) \}_{\ell \in \{1, \ldots, M\}, P_i \in S_q})$ is received from $\FTriples$. 
     \end{mydescription}  
\begin{center}\underline{\bf Input Phase}\end{center}
 Once the output from $\FTriples$ is received, then proceed as follows.  
    \begin{myitemize}
    \item {\bf Secret-sharing of the Inputs and Collecting Shares of Other Inputs}:
      \begin{mydescription}
         \item Upon having the input $x^{(i)}$ for the function $f$, randomly select the shares $x^{(i)}_1, \ldots, x^{(i)}_h \in \F$, subject to the condition that
         $x^{(i)} = x^{(i)}_1 + \ldots + x^{(i)}_h$.
         Send $(\Dealer, \sid_i, P_i, (x^{(i)}_1, \ldots, x^{(i)}_h))$ to $\FVSS$, where 
         $\sid_i \defined \sid || i$.\footnote{The notation $\sid_i$ is used here to distinguish among the $n$ different calls to
         $\FVSS$.}
         \item For $j = 1, \ldots, n$, 
          request for output from $\FVSS$ with $\sid_j$ corresponding to the dealer $P_j$,  until 
         an output is received.          
      \end{mydescription}
    \item {\bf Selecting Common Input-Providers}: 
       \begin{mydescription}
             \item[1.]  If $(\Share, \sid_j, P_j, \{[x^{(j)}]_q \}_{P_i \in S_q})$ is received from $\FVSS$ with $\sid_j$, then  send              
              $(\vote, \sid_j, 1)$ to $\FABA$ with $\sid_j$, 
              where $\sid_j \defined \sid || j$.\footnote{The notation $\sid_j$ is used here to distinguish among the $n$ different calls to
         $\FABA$.}
             \item [2.] For $j = 1, \ldots, n$, keep requesting for output from $\FABA$ with $\sid_j$ until an output is received.
             \item [3.] If there exists a set of parties $\GlobalProv_i$, such that $\PartySet \setminus \GlobalProv_i \in \AdvStructure$
             and $(\decide, \sid_j, 1)$ is received from $\FABA$ with $\sid_j$ corresponding to each $P_j \in \GlobalProv_i$, 
             then send
              $(\vote, \sid_j, 0)$ to every $\FABA$ with $\sid_j$ for which no input has been provided yet.
             \item [4.] Once $(\decide, \sid_j, v_j)$ is received from $\FABA$ with $\sid_j$ for every $j \in \{1, \ldots, n \}$, set
                  $\CoreSet = \{P_j: v_j = 1\}$.
            \item [5.] Wait until $(\Share, \sid_j, P_j, \{[x^{(j)}]_q \}_{P_i \in S_q})$ is received from $\FVSS$ for every $P_j \in \CoreSet$.              
               For every $P_j \not \in \CoreSet$, participate in an instance of the protocol $\PiPerDefaultShare$ with public input $0$ to generate a default secret-sharing of $0$.               
       \end{mydescription}
    \end{myitemize}
\begin{center}\underline{\bf Circuit-Evaluation Phase}\end{center}
  Evaluate each gate $g$ in the circuit
 according to the topological ordering as follows, depending upon the type of $g$.
        \begin{myitemize}
           \item {\bf Addition Gate}: If $g$ is an addition gate with inputs $x, y$ and output $z$, then corresponding to every $S_q$ such that
           $P_i \in S_q$, set $[z]_q = [x]_q + [y]_q$ as the share
           corresponding to $z$. Here $\{[x]_q\}_{P_i \in S_q}$ and $\{[y]_q\}_{P_i \in S_q}$ are $P_i$'s shares 
           corresponding to gate-inputs $x$ and $y$ respectively.
             \item {\bf Multiplication Gate}: If $g$ is the $\ell^{th}$ multiplication gate with inputs $x, y$ and output $z$, where $\ell \in \{1, \ldots, M \}$, then do the following:
               \begin{mydescription}
               \item [1.] Corresponding to every $S_q$ such that $P_i \in S_q$, set 
                  $[d^{(\ell)}]_q \defined [x]_q - [a^{(\ell)}]_q$ and $[e^{(\ell)}]_q \defined [y]_q - [b^{(\ell)}]_q$, where $\{[x]_q\}_{P_i \in S_q}$ and $\{[y]_q\}_{P_i \in S_q}$ are 
                  $P_i$'s shares corresponding to gate-inputs $x$ and $y$ respectively and $\{([a^{(\ell)}]_q, [b^{(\ell)}]_q, [c^{(\ell)}]_q)\}_{P_i \in S_q}$
                  are $P_i$'s shares corresponding to the $\ell^{th}$ multiplication-triple.
               \item [2.] Participate in instances of $\PiPerRec$ with shares $\{[d^{(\ell)}]_q \}_{P_i \in S_q}$ and $\{[e^{(\ell)}]_q \}_{P_i \in S_q}$ to publicly reconstruct 
                 $d^{(\ell)}$ and $e^{(\ell)}$, where $d^{(\ell)} \defined x - a^{(\ell)}$ and $e^{(\ell)} \defined y - b^{(\ell)}$.
               \item [4.] Upon reconstructing $d^{(\ell)}$ and $e^{(\ell)}$, corresponding to every $S_q$ such that $P_i \in S_q$, set
                 $[z]_q \defined d^{(\ell)} \cdot e^{(\ell)} + d^{(\ell)} \cdot [b^{(\ell)}]_q + e^{(\ell)} \cdot [a^{(\ell)}]_q + [c^{(\ell)}]_q$.
                 Set $\{[z]_q \}_{P_i \in S_q}$ as the shares corresponding to $z$.
               \end{mydescription}
          \item {\bf Output Gate}: If $g$ is the output gate with output $y$, then participate in an instance of $\PiPerRec$ with shares $\{[y]_q \}_{P_i \in S_q}$ to publicly
          reconstruct $y$.
       \end{myitemize}
\begin{center}\underline{\bf Termination Phase}\end{center}
Concurrently execute the following steps during the protocol:
   \begin{mydescription}
   \item[1.] If the circuit-output $y$ is computed, then send $(\Ready, \sid, P_i, y)$ to every party in $\PartySet$.
      \item[2.] If the message $(\Ready, \sid, P_j, y)$ is received from a set of parties ${\cal A}$ such that $\AdvStructure$ satisfies  $\Q^{(1)}({\cal A}, \AdvStructure)$ condition, then 
       send $(\Ready, \sid, P_i, y)$  to every party in $\PartySet$.
       \item[3.] If the message $(\Ready, \sid, P_j, y)$ is received from a set of parties $\W$ such that $\PartySet \setminus \W \in \AdvStructure$, 
       then output $y$ and terminate.
   \end{mydescription}
\end{protocolsplitbox}

Intuitively, protocol $\PiMPC$ eventually terminates as the set $\CoreSet$ is eventually decided. This is because even if the corrupt parties do not secret-share their inputs,
 the inputs of all honest parties are eventually secret-shared. Once $\CoreSet$ is decided, the evaluation of each gate will be eventually completed:
  while the addition gates are evaluated non-interactively, the evaluation of multiplication gates requires reconstructing the corresponding masked gate-inputs
  which is eventually completed due to the reconstruction protocols.
  The privacy of the inputs of the honest parties in $\CoreSet$ will be maintained as the sharing specification $\ShareSpec$ is $\AdvStructure$-private.
  Moreover, the inputs of the corrupt parties in $\CoreSet$ will be independent of the inputs of the honest parties in $\CoreSet$, as inputs are secret-shared
  via calls to $\FVSS$. Finally, correctness holds since each gate is evaluated correctly. 
   We next rigorously formalize this intuition by giving a formal security proof
   and show that the protocol $\PiMPC$ is perfectly-secure, if the parties have access to ideal functionalities
   $\FTriples, \FVSS$ and $\FABA$.   \\~\\
\noindent {\bf Theorem \ref{thm:AMPC}.}
  {\it  Protocol $\PiMPC$ UC-securely realizes the functionality $\Functionality$ for securely computing  $f$ (see Fig \ref{fig:FAMPC} in Appendix \ref{app:UC}) 
   with perfect security in the $(\FTriples, \FVSS, \FABA)$-hybrid model, in the 
   presence of a static malicious adversary characterized by an adversary-structure $\AdvStructure$, satisfying the $\Q^{(3)}(\PartySet, \AdvStructure)$ condition.
   The protocol makes one call to $\FTriples$ and $\Order(n)$ calls to $\FVSS$ and $\FABA$ and additionally incurs a communication of 
   $\Order(M \cdot |\AdvStructure| \cdot n^2 \log{|\F|})$ bits, where $M$ is the number of multiplication gates in the circuit $\ckt$ representing $f$.}
\begin{proof}
The communication complexity in the $(\FTriples, \FVSS, \FABA)$-hybrid model follows from the fact that for evaluating each multiplication gate, the 
 parties need to run $2$ instances of the reconstruction protocol $\PiPerRec$. 
 
 For security, let $\Adv$ be an arbitrary real-world adversary  corrupting the set of parties $Z^{\star} \in \AdvStructure$
  and let $\Env$ be an arbitrary environment. We  show  the  existence  of  a  simulator $\SimPerAMPC$, such  that 
   the output of honest parties and the view of the adversary in an execution of the real protocol with $\Adv$
    is identical to the output in an execution with $\SimPerAMPC$ involving $\Functionality$ in the ideal model. 
    This further implies that $\Env$ cannot distinguish between the two executions.
      The steps of the simulator are  given in 
   Fig \ref{fig:SimPerAMPC}. 
   
   The high level idea of the simulator is as follows. During the simulated execution, the simulator itself performs the role of the ideal functionalities $\FTriples, \FVSS$
    and $\FABA$ whenever required. Performing the role of $\FTriples$ allows the simulator to learn the secret-sharing
     of all the multiplication-triples.
   During the input phase, whenever $\Adv$ secret-shares any value through $\FVSS$ 
    on the behalf of a corrupt party, the simulator 
    records this on the behalf of the corrupt party. This allows the simulator to learn the function-input of the corresponding
   corrupt party. On the other hand, for the {\it honest} parties, the simulator picks arbitrary values as their 
   function-inputs and simulates the secret-sharing of those input values using random shares, as per $\FVSS$. 
    To select the common input-providers during the simulated execution, the simulator
   itself performs the role of $\FABA$ and simulates the honest parties as per the steps of the protocol and $\FABA$. This allows the simulator
   to learn the 
   common subset of input-providers $\CoreSet$, which the simulator passes to the functionality $\Functionality$. Notice that 
   the function-inputs for each corrupt party in $\CoreSet$ will be available with the simulator. 
   This is because for every corrupt party $P_j$ which is added to $\CoreSet$, at least one honest party $P_i$ should participate with input $1$ in the corresponding call to
   $\FABA$. This implies that the honest party $P_i$ must have received the shares $P_j$ sent to $\FVSS$ from $\FVSS$. Since in the simulation, the role of $\FVSS$ is played by the simulator, it implies that the full vector of shares provided by 
   $P_j$ to $\FVSS$ will be known to the simulator.
      Hence, along with $\CoreSet$, the simulator can send the corresponding function-inputs of the corrupt parties in $\CoreSet$ to $\Functionality$.
      Upon receiving the function-output, the simulator simulates the steps of the honest parties for the gate evaluations as per the
    protocol. Finally, for the output gate, the simulator arbitrarily computes a secret-sharing of the function-output $y$ received from $\Functionality$, which is consistent
    with the shares which corrupt parties hold for the output-gate sharing. Then, on the behalf of the honest parties, the simulator sends the shares corresponding to
    the above sharing of $y$ during the public reconstruction of $y$. This ensures that in the simulated execution, $\Adv$ learns the function-output $y$. For the termination phase, 
    the simulator sends $y$ on the behalf of honest parties.

\begin{simulatorsplitbox}{$\SimPerAMPC$}{Simulator for the protocol $\PiMPC$ where $\Adv$ corrupts the parties in set
  $Z^{\star} \in \AdvStructure$}{fig:SimPerAMPC}
	\justify
$\SimPerAMPC$ constructs virtual real-world honest parties and invokes the real-world adversary $\Adv$. The simulator simulates the view of
 $\Adv$, namely its communication with $\Env$, the messages sent by the honest parties, and the interaction with various functionalities. 
  In order to simulate $\Env$, the simulator $\SimPerAMPC$ forwards every message it receives from 
   $\Env$ to $\Adv$ and vice-versa.  The simulator then simulates the various phases of the protocol as follows.
\begin{center}\underline{\bf Pre-Processing Phase}\end{center}
{\bf Simulating the call to $\FTriples$}: The simulator simulates the steps of $\FTriples$ by itself playing the role of $\FTriples$.
 Namely, it receives the shares corresponding to the parties in $Z^{\star}$ for each multiplication-triple from $\Adv$ and then
 randomly generates secret-sharing of $M$ random multiplication-triples $\{(\sima^{(\ell)}, \simb^{(\ell)}, \simc^{(\ell)})\}_{\ell = 1, \ldots, M}$
  consistent with the provided shares.
  At the end of simulation of this phase, the simulator will know the entire vector of shares corresponding to the secret-sharing of all
  multiplication-triples.
  \begin{center}\underline{\bf Input Phase}\end{center}
  \begin{myitemize}
   \item The simulator simulates the operations of the honest parties during the input phase, by randomly picking $\simx^{(j)}$ as the input, for
 every $P_j \not \in Z^{\star}$, selecting random shares $\simx^{(j)}_1, \ldots, \simx^{(j)}_h$ such that
  $\simx^{(j)} = \simx^{(j)}_1 + \ldots + \simx^{(j)}_h$, and setting $[\simx^{(j)}]_q = \simx^{(j)}_q$, for $q = 1, \ldots, h$. 
  When $\Adv$ requests output from $\FVSS$ with $\sid_j$ on the behalf of any party $P_i \in Z^{\star}$,
  then the simulator  responds with an output 
  $(\Share, \sid_j, P_j, \{[\simx^{(j)}]_q \}_{P_i \in S_q})$ 
   on the behalf of $\FVSS$.
   \item Whenever $\Adv$ sends $(\Dealer, \sid_i, P_i, (x^{(i)}_1, \ldots, x^{(i)}_h))$ 
      to $\FVSS$ on the behalf of any $P_i \in Z^{\star}$, the simulator
      records the input $x^{(i)} \defined x^{(i)}_1 + \ldots + x^{(i)}_h$ on the behalf of $P_i$ and sets
      $[x^{(i}] = (x^{(i)}_1, \ldots, x^{(i)}_h)$.
   \item When the simulation reaches the ``Selecting Common Input-Providers" stage, the simulator simulates the interface of $\FABA$ to $\Adv$ by itself
   performing the role of $\FABA$. 
   When the first honest party completes the simulated input phase, $\SimPerAMPC$ learns the set $\CoreSet$. 
    \end{myitemize}
 \noindent {\bf Interaction with $\Functionality$}:  Once the simulator learns $\CoreSet$, it sends 
  the input values $x^{(i)}$ that it has recorded on the behalf of each
  $P_i \in (Z^{\star} \cap \CoreSet)$, and the set of input-providers $\CoreSet$ to $\Functionality$. Upon receiving the 
   output $y$ from $\Functionality$, the simulator starts the simulation of circuit-evaluation phase.
\begin{center}\underline{\bf Circuit-Evaluation Phase}\end{center}
The simulator simulates the evaluation of each gate $g$ in the circuit in topological order as follows:
\begin{myitemize}
\item {\bf Addition Gate}: Since this step involves local computation, the simulator does not have to simulate any messages on the behalf of the
honest parties. The simulator locally adds the secret-sharings corresponding to the gate-inputs and obtains the 
  secret-sharing corresponding to the gate-output.
\item {\bf Multiplication Gate}: If $g$ is the $\ell^{th}$ multiplication gate in the circuit, then the simulator takes the complete secret-sharing of the 
 $\ell^{th}$ multiplication triple $(\sima^{(\ell)}, \simb^{(\ell)}, \simc^{(\ell)})$ and computes the messages of the honest parties as per the
  steps of the protocol (by considering the secret-sharing of the above multiplication-triple and the secret-sharing of the gate-inputs),
   and sends them to $\Adv$ on the behalf of the honest parties as part of the instances of $\PiPerRec$ protocol.
     Once the simulation of the circuit-evaluation phase is done, the simulator will know the secret-sharing corresponding to the gate-output. 
\item {\bf Output Gate}: Let $[\simy] = (\simy_1, \ldots, \simy_h)$ be the secret-sharing corresponding to the output gate, available with $\SimPerAMPC$
 during the simulated circuit-evaluation. The simulator then randomly selects shares $\simby_1, \ldots, \simby_h$ such that
 $\simby_1+  \ldots +  \simby_h = y$ and $\simby_q = \simy_q$ corresponding to every $S_q \in \ShareSpec$ where
 $S_q \cap Z^{\star} \neq \emptyset$. Then, as part of the instance of $\PiPerRec$ protocol to reconstruct the function output,
  the simulator sends the shares $\{\simby_q \}_{S_q \in \ShareSpec}$ to $\Adv$ on the behalf of the honest parties.
\end{myitemize}
\begin{center}\underline{\bf Termination Phase}\end{center}
 The simulator sends a $\Ready$ message for $y$ to $\Adv$ on the behalf of $P_i \not \in Z^{\star}$,
  if in the simulated execution, $P_i$ has computed $y$.   
\end{simulatorsplitbox}
    
 We next prove a sequence of claims, which helps us to show that the joint distribution of the honest parties
  and the view of $\Adv$ is identical in both the real, as well as the ideal-world.
    We first claim that in  any execution of $\PiMPC$, a set $\CoreSet$ is eventually generated. This automatically implies that
    the honest parties eventually possess a secret-sharing of $M$ random multiplication-triples generated by $\FTriples$, as well as 
    a secret-sharing of the inputs of the parties in $\CoreSet$.
  \begin{claim}
  \label{claim:PerAMPCTermination}
  In any execution of $\PiMPC$, a set $\CoreSet$ is eventually generated, such that for every $P_j \in \CoreSet$, there exists some $x^{(j)}$ held by $P_j$
   which is eventually secret-shared.
  \end{claim}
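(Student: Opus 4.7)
The plan is to follow the standard asynchronous ACS termination argument, which is structurally identical to the one used for Claim \ref{claim:PerTriplesTermination}. First I would observe that from the ideal functionality $\FTriples$, every honest $P_i$ eventually receives its shares of the $M$ preprocessed multiplication-triples, so every honest party eventually reaches the input phase. In the input phase, each honest $P_j$ invokes $\FVSS$ with its input and a random vector of shares, and since the dealer $P_j$ is honest, every honest party eventually receives the corresponding output from $\FVSS$ with $\sid_j$. Consequently, corresponding to every $P_j \in \PartySet \setminus Z^{\star}$, every honest party eventually sends $(\vote, \sid_j, 1)$ to $\FABA$ with $\sid_j$, and hence $\FABA$ with $\sid_j$ eventually returns $(\decide, \sid_j, 1)$ to every honest party.

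Next I would show that a common $\CoreSet$ is eventually determined. Consider any honest $P_i$. Two cases arise: either some honest party sends $(\vote, \sid_j, 0)$ to some $\FABA$ with $\sid_j$ during Step 3, in which case (by inspecting Step 3) this honest party already received $(\decide, \sid_{j'}, 1)$ from a subset $\GlobalProv$ of $\FABA$ instances with $\PartySet \setminus \GlobalProv \in \AdvStructure$, and these decisions are eventually delivered to every honest party; or no honest party ever casts a $0$ vote, in which case the argument of the previous paragraph already supplies a set $\PartySet \setminus Z^{\star}$ of $\FABA$ instances all outputting $1$, and since $Z^{\star} \in \AdvStructure$, this set itself serves as the required $\GlobalProv$. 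In either case, every honest party eventually identifies a $\GlobalProv_i$ with $\PartySet \setminus \GlobalProv_i \in \AdvStructure$ and then casts $(\vote, \sid_j, 0)$ in all remaining $\FABA$ instances, so that all $n$ instances of $\FABA$ eventually produce an output. Since $\CoreSet$ is defined deterministically as $\{P_j: v_j = 1\}$ from the $\FABA$ outputs, which are common to all honest parties, the set $\CoreSet$ is common and eventually generated.

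Finally, to show that for every $P_j \in \CoreSet$ there exists $x^{(j)}$ held by $P_j$ which is eventually secret-shared, I would argue as follows. If $P_j \in \CoreSet$, then $\FABA$ with $\sid_j$ has output $(\decide, \sid_j, 1)$. By the validity property of $\FABA$, at least one honest party $P_i$ must have sent $(\vote, \sid_j, 1)$, which by Step 1 means $P_i$ previously received an output $(\Share, \sid_j, P_j, \{[x^{(j)}]_q\}_{P_i \in S_q})$ from $\FVSS$ with $\sid_j$. From the specification of $\FVSS$, this means that $P_j$ (or $\Sim$ if $P_j$ is corrupt) has sent a vector of shares $(x^{(j)}_1, \ldots, x^{(j)}_h)$ to $\FVSS$, implicitly defining $x^{(j)} = x^{(j)}_1 + \ldots + x^{(j)}_h$, and every honest party in each $S_q$ eventually receives its shares from $\FVSS$ with $\sid_j$.

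The main obstacle is purely notational rather than conceptual: the asynchronous validity argument for ACS is by now standard, and the only subtlety is to treat uniformly the two cases of whether $\GlobalProv_i$ is triggered through the adversary's schedule or through the $\PartySet \setminus Z^{\star}$ set of honest-dealer $\FABA$ instances; this symmetric case analysis, together with careful use of the $\Q^{(3)}(\PartySet, \AdvStructure)$ condition (to ensure that any set $\CoreSet$ with $\PartySet \setminus \CoreSet \in \AdvStructure$ contains at least one honest party whose input is available) handles both directions cleanly.
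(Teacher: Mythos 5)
Your proposal is correct and follows essentially the same route as the paper: the paper's own proof of Claim~\ref{claim:PerAMPCTermination} is a one-line deferral to the proof of Claim~\ref{claim:PerTriplesTermination}, and your argument is precisely that proof transplanted to the input phase of $\PiMPC$ (the two-case ACS analysis, agreement on $\CoreSet$ from $\FABA$, and the ``at least one honest voter implies shares recorded at $\FVSS$'' step). The only quibble is that your first paragraph asserts unconditionally that every honest party votes~$1$ in every $\FABA$ instance belonging to an honest dealer, which is not true in general (an honest party may have already cast a $0$-vote under Step~3 before receiving its $\FVSS$ output); your second paragraph recovers this correctly by splitting into the two cases exactly as the paper does, so the overall argument is sound.
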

  \begin{proof}
  As the proof of this claim is similar to the proof of Claim \ref{claim:PerTriplesTermination}, we skip the formal proof.
  \end{proof}
   
We next show that the view generated by $\SimPerAMPC$ for $\Adv$ is identically distributed to $\Adv$'s view during the real execution of $\PiMPC$.
\begin{claim}
\label{claim:PerAMPCPrivacy}
The view of $\Adv$ in the simulated execution with $\SimPerAMPC$ is identically distributed to the view of $\Adv$ in the real execution of
 $\PiMPC$.
\end{claim}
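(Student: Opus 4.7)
The plan is to decompose $\Adv$'s view in either execution into six components and argue that each is identically distributed in the real and simulated executions. The six components are: (i) the shares of the $M$ multiplication-triples delivered on the behalf of corrupt parties by $\FTriples$; (ii) the shares of the various input-sharings delivered by $\FVSS$; (iii) the vote messages and outputs of the $n$ instances of $\FABA$ used for ACS; (iv) the masked values $d^{(\ell)}$ and $e^{(\ell)}$ reconstructed through $\PiPerRec$ while evaluating each multiplication gate; (v) the shares of the output-wire sharing revealed during the final call to $\PiPerRec$; and (vi) the $\Ready$ messages in the termination phase. Since $\Adv$ controls the message schedule identically in both worlds, it suffices to show identical distribution of the message contents.

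For (i), I will use the fact that $\SimPerAMPC$ plays $\FTriples$ faithfully: it accepts the corrupt shares chosen by $\Adv$ and completes them into a random $\AdvStructure$-private sharing of random multiplication-triples, yielding the same distribution as the ideal functionality. For (ii), in both executions every honest dealer $P_j$ picks a uniformly random vector $(x^{(j)}_1, \ldots, x^{(j)}_h)$ summing to its input (real input in the real execution, fake $\simx^{(j)}$ in the simulation), and $\Adv$ only sees $\{x^{(j)}_q\}_{S_q \cap Z^{\star} \neq \emptyset}$; because $\ShareSpec$ is $\AdvStructure$-private, at least one index $q$ with $S_q \cap Z^{\star} = \emptyset$ always exists, so the delivered shares are uniform and independent of the underlying secret, giving identical distributions. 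For (iii), the votes sent to each $\FABA$ are a deterministic function of the outputs already delivered by $\FVSS$ together with $\Adv$'s own votes, so the simulator reproduces the exact transcript, and the common set $\CoreSet$ is identically distributed.

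For (iv), I will argue that the reconstructed $d^{(\ell)} = x - a^{(\ell)}$ and $e^{(\ell)} = y - b^{(\ell)}$ are uniformly distributed in $\F$ in both worlds, because $a^{(\ell)}$ and $b^{(\ell)}$ remain uniform from $\Adv$'s viewpoint owing to the $\AdvStructure$-privacy of their sharings; given these public values, $\Adv$'s shares of the product wire are computed by the \emph{same} publicly-known linear formula in both worlds, so all corrupt shares propagate identically. For (v), $\SimPerAMPC$ constructs $\simby_1, \ldots, \simby_h$ by fixing $\simby_q = \simy_q$ whenever $S_q \cap Z^{\star} \neq \emptyset$ (matching the corrupt shares already implicit in $\Adv$'s view) and then sampling the remaining shares uniformly subject to $\simby_1 + \ldots + \simby_h = y$; at least one ``free'' index $q$ with $S_q \cap Z^{\star} = \emptyset$ exists by $\AdvStructure$-privacy, so this sampling is always feasible. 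For (vi), after reconstruction each honest party holds $y$ in both worlds, so the subsequent $\Ready$ messages are identical.

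The hard part will be justifying (v) rigorously, namely showing that the honest parties' shares $\{[y]_q\}_{S_q \cap Z^{\star} = \emptyset}$ revealed during real reconstruction have the same conditional distribution as the $\{\simby_q\}_{S_q \cap Z^{\star} = \emptyset}$ sampled by $\SimPerAMPC$. My plan is to trace the two sources of randomness in the real execution, namely the random partitions chosen by honest dealers in the input phase and the random triple-shares supplied by $\FTriples$, and to observe that these propagate linearly through the entire circuit so that the joint distribution of honest-only shares of $[y]$, conditioned on $\Adv$'s preceding view, is uniform over the affine subspace of $\F^{|Q|}$ (with $Q = \{q : S_q \cap Z^{\star} = \emptyset\}$) determined by the constraint that the full sum equals $y$. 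This matches the simulator's sampling distribution, and combining the identical-distribution arguments across (i)--(vi) yields the claim.
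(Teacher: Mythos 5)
Your decomposition of $\Adv$'s view into per-phase components and the arguments for each mirror the paper's proof step for step: the simulator playing $\FTriples$, $\FVSS$, and $\FABA$ faithfully covers items (i) and (iii), $\AdvStructure$-privacy of the sharing covers (ii), the random masking by $a^{(\ell)},b^{(\ell)}$ conditioned on corrupt shares covers (iv), and re-sampling the honest-only output shares to be consistent with the corrupt shares and to sum to $y$ covers (v)--(vi). The only substantive difference is expository --- you make explicit the conditional-uniformity claim behind (v) that the paper's proof invokes only tersely (``shares corresponding to a secret-sharing of $y$, which are consistent with the shares held by $\Adv$'') --- but the underlying reasoning coincides.
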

\begin{proof}
It is easy to see that the view of $\Adv$ during the pre-processing phase is identically distributed in both the executions. This is because in both the executions,
 $\Adv$ receives no messages from the honest parties and the steps of $\FTriples$ are executed by the simulator itself in the simulated execution.
 Namely, in both the executions, $\Adv$'s view consists of the shares of $M$ random multiplication-triples corresponding to the parties in $Z^{\star}$. 
 So, let us fix these shares. Conditioned on these shares,  
  during the input phase, $\Adv$ learns the shares $\{[x^{(j)}]_q \}_{P_j \not \in Z^{\star}, (S_q \cap Z^{\star}) \neq \emptyset}$ during the real execution
  corresponding to the
 parties $P_j \not \in Z^{\star}$. In the simulated execution, it learns the shares 
  $\{[\simx^{(j)}]_q \}_{P_j \not \in Z^{\star}, (S_q \cap Z^{\star}) \neq \emptyset}$. Since the sharing specification $\ShareSpec$
   is $\AdvStructure$-private and the vector of shares $(x^{(j)}_1, \ldots, x^{(j)}_h)$ as well as 
   $(\simx^{(j)}_1, \ldots, \simx^{(j)}_h)$ are randomly chosen, it follows that 
  the distribution of  the shares $\{[x^{(j)}]_q \}_{P_j \not \in Z^{\star}, (S_q \cap Z^{\star}) \neq \emptyset}$
  as well as $\{[\simx^{(j)}]_q \}_{P_j \not \in Z^{\star}, (S_q \cap Z^{\star}) \neq \emptyset}$ is identical and
  independent of both $x^{(j)}$ as well as $\simx^{(j)}$,
    so let us fix these shares. Since the role of $\FABA$ is played by the simulator itself, it follows easily that the view of $\Adv$ during the selection of the set $\CoreSet$
  is identically distributed in both the real as well as the simulated execution.
 
 During the evaluation of linear gates, no communication is involved. During the evaluation of multiplication gates, in the simulated execution,
 the simulator will know the secret-sharing associated with gate-inputs and also the secret-sharing of the associated
 multiplication-triple. Hence, the simulator correctly sends the shares corresponding to the values $d^{(\ell)}$ and $e^{(\ell)}$ as per the protocol
 on the behalf of the honest parties. Moreover, the values $d^{(\ell)}$ and $e^{(\ell)}$ will be randomly distributed for $\Adv$ in both the executions, since
  the underlying multiplication-triple is randomly distributed, conditioned on the shares of the corrupt parties.
  Thus, $\Adv$'s view during the evaluation of multiplications gates is identically distributed
 in both the executions.
  
 For the output gate, the shares received by $\Adv$ in the real execution from the honest parties correspond to a secret-sharing
 of the function-output $y$. From the steps of $\SimPerAMPC$, it is easy to see that the same holds even in the simulated execution,
 as $\SimPerAMPC$ sends to $\Adv$ shares corresponding to a secret-sharing of $y$, which are consistent with the shares
 held by $\Adv$. Hence, $\Adv$'s view is identically distributed in both the executions during the evaluation of output gate.
 Finally, it is easy to see that $\Adv$'s view is identically distributed in both the executions during the termination phase. This is because 
 in both the executions, every honest party who has obtained the function output $y$, sends a $\Ready$ message for $y$.
\end{proof}
 
  We next claim that conditioned on the view of $\Adv$ (which is identically distributed in both the executions from the last claim),
    the output of the honest parties is identically distributed in both the worlds.
 \begin{claim}
\label{claim:PerAMPCOutputDistribution}
Conditioned on the view of $\Adv$, the output of the honest parties is
  identically distributed in the real execution of $\PiMPC$ involving $\Adv$, as well as in the ideal execution
 involving $\SimPerAMPC$ and $\Functionality$.
\end{claim}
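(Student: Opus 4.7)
The plan is to fix an arbitrary view $\View$ of $\Adv$ and show that, in both the real and ideal executions, the honest parties eventually output the same value $y$, namely $f(x^{(1)}, \ldots, x^{(n)})$, where the $x^{(j)}$'s are the inputs determined by $\View$. The view $\View$ determines the set $\CoreSet$ (from the transcript of the $\FABA$ instances during input selection), the vector of shares supplied to $\FVSS$ by each corrupt party $P_j \in Z^{\star} \cap \CoreSet$ (and hence the input $x^{(j)}$ that $P_j$ shared), and the shares of the corrupt parties in every secret-sharing produced by $\FTriples$ and $\FVSS$. By Claim~\ref{claim:PerAMPCTermination}, $\CoreSet$ is eventually determined in the real execution.

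In the ideal execution, $\SimPerAMPC$ sends $\CoreSet$ together with the recorded inputs $\{x^{(j)}\}_{P_j \in Z^{\star} \cap \CoreSet}$ to $\Functionality$. The functionality combines these with the real inputs of the honest parties (routed through the dummy parties from $\Env$), sets $x^{(j)} = 0$ for $P_j \notin \CoreSet$, and produces $y = f(x^{(1)}, \ldots, x^{(n)})$ as a request-based delayed output to every honest party. Thus conditioned on $\View$, every honest party eventually outputs this uniquely determined $y$.

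For the real execution, I would argue by induction on the topological order of the gates of $\ckt$ that after evaluating any gate with output wire $w$, the honest parties hold $[w]$ consistent with $w$ being the correct value of that wire on the inputs induced by $\CoreSet$ (honest parties' real inputs, corrupt parties' shared inputs, and $0$ for the rest). Linear gates preserve this invariant by linearity of the secret-sharing. For a multiplication gate, Beaver's identity $xy = de + d\cdot b^{(\ell)} + e \cdot a^{(\ell)} + c^{(\ell)}$, with $d = x - a^{(\ell)}$ and $e = y - b^{(\ell)}$, gives the correct product because $c^{(\ell)} = a^{(\ell)} b^{(\ell)}$ holds by the guarantees of $\FTriples$, and because $d$, $e$ are publicly reconstructed via $\PiPerRec$, which is guaranteed to output the correct values since $\ShareSpec$ satisfies the $\Q^{(2)}(\ShareSpec,\AdvStructure)$ condition (Lemma~\ref{lemma:PerRec}). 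Applying this to the output wire shows that the honest parties hold $[y]$ at the end of circuit evaluation, and another invocation of $\PiPerRec$ on this sharing delivers $y$ to every honest party. The termination phase then guarantees that every honest party eventually outputs $y$ and terminates: once one honest party announces $\Ready$ for $y$, every honest party eventually amplifies via the standard Bracha-style step (step~2), and since the set of honest parties is $\PartySet \setminus Z^{\star}$ with $Z^{\star} \in \AdvStructure$, step~3 fires for every honest party.

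The main obstacle is showing that the $y$ reconstructed at the output gate in the real execution is consistent with the $y$ computed by $\Functionality$ in the ideal execution; this reduces precisely to verifying the inductive invariant above. Here one must be careful that the inputs used in the real-world circuit evaluation coincide with those used by $\Functionality$: for $P_j \in \CoreSet \setminus Z^{\star}$ this follows because the dealer $P_j$ honestly shares its real input $x^{(j)}$ (the same value routed by the dummy party to $\Functionality$ in the ideal world), while for $P_j \in \CoreSet \cap Z^{\star}$ the value is the sum of the share-vector $P_j$ supplied to $\FVSS$, which is exactly what $\SimPerAMPC$ records and forwards to $\Functionality$. Once this matching of inputs and the inductive correctness of gate evaluation are in place, the equality of outputs is immediate and the claim follows.
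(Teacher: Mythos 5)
Your proof follows the paper's argument closely: fix a view $\View$, observe that $\CoreSet$ and the corrupt parties' inputs are determined by $\View$, note that $\Functionality$ computes $y = f(\vec x)$ on the same inputs $\SimPerAMPC$ forwards, and then argue correctness of circuit evaluation in the real execution via the $\FTriples$ guarantees, linearity, Beaver's identity, and correctness of $\PiPerRec$ under the $\Q^{(2)}(\ShareSpec,\AdvStructure)$ condition. Making the topological induction invariant explicit is a reasonable presentational choice; the paper reasons gate-by-gate without calling it an induction, but the content is the same.

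One point is missing, though. You argue that every honest party eventually terminates with output $y$, but you do not rule out that some honest party could terminate with a \emph{different} value $y' \neq y$ injected by the adversary through spurious $\Ready$ messages. This is the other half of the claim: you must show that no honest party ever outputs $y'$. The argument is that only corrupt parties ever send $\Ready$ for $y'$, since step~1 of the termination phase is reached only with the correctly reconstructed $y$, and step~2 fires for $y'$ only upon receiving $\Ready(y')$ from a set ${\cal A}$ satisfying $\Q^{(1)}({\cal A},\AdvStructure)$, which must contain at least one honest party. Consequently the set of parties sending $\Ready(y')$ is a subset of $Z^{\star} \in \AdvStructure$, which can never satisfy the step~3 condition $\PartySet \setminus \W \in \AdvStructure$ under the $\Q^{(2)}(\PartySet,\AdvStructure)$ condition. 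The paper's proof includes this half of the argument; you should add it to make the termination claim complete.
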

 \begin{proof}
 Let $\View$ be an arbitrary view of $\Adv$, and let $\CoreSet$ be the set of input-providers determined by $\View$ 
  (from Claim \ref{claim:PerAMPCTermination}, such a set $\CoreSet$ is bound to exist).
  Moreover, according to $\View$,  for every $P_i \in \CoreSet$, there exists some input $x^{(i)}$ such that the parties hold a 
  secret-sharing of $x^{(i)}$. Furthermore, from Claim \ref{claim:PerAMPCPrivacy},  
   if $P_i \in Z^{\star}$ then the corresponding secret-sharing is included in $\View$. For $P_i \not \in Z^{\star}$, 
   the corresponding $x^{(i)}$ is uniformly distributed conditioned on the shares of $x^{(i)}$ available with $\Adv$ as determined by
   $\View$. Let us fix the $x^{(i)}$ values corresponding to the parties in $\CoreSet$
   and denote the vector of values $x^{(i)}$, where $x^{(i)} = 0$ if $P_i \not \in \CoreSet$, by $\vec{x}$.
   
   It is easy to see that in the ideal-world, the output of the honest parties is $y$, where $y \defined f(\vec{x})$.
   This is because $\SimPerAMPC$ provides the identity of $\CoreSet$ along with the inputs $x^{(i)}$ corresponding to $P_i \in (\CoreSet \cap Z^{\star})$ to
   $\Functionality$. We now show that the honest parties eventually output $y$ even in the real-world. For this, we argue
    that all the values during the circuit-evaluation phase of the protocol are correctly secret-shared. 
    Since the evaluation of linear gates needs only local computation, it follows
 that the output of the linear gates will be correctly secret-shared. During the evaluation of a multiplication gate, the honest parties will hold a 
  secret-sharing of the corresponding $d^{(\ell)}$ and $e^{(\ell)}$ values, as during the pre-processing phase,
  all the multiplication-triples are generated in a secret-shared fashion, since they are computed
 and distributed by $\FTriples$. Since $\ShareSpec$ satisfies the $\Q^{(2)}(\ShareSpec, \AdvStructure)$ condition,
  the honest parties eventually get $d^{(\ell)}$ and $e^{(\ell)}$ through the instances of $\PiPerRec$.
     This automatically implies that the honest parties eventually hold a secret-sharing of $y$ and reconstruct it correctly, as
     $y$ is reconstructed through an instance of $\PiPerRec$. 
     Hence, during the termination phase, every honest party will eventually send a $\Ready$ message for $y$, while the parties in $Z^{\star}$
     may send a $\Ready$ message for $y' \neq y$. Since $Z^{\star} \in \AdvStructure$, it follows that
     no honest party ever sends a $\Ready$ message for $y'$. Hence no honest party ever outputs $y'$, as it will never receive
     the required number of $\Ready$ messages for $y'$.
           Since the $\Ready$ messages of the {\it honest}
     parties for $y$ are eventually delivered to every honest party, it follows that eventually, all honest parties 
     receive sufficiently many $\Ready$ messages to obtain some output, even if the corrupt parties does not send the required messages.
     
     Now let $P_i$ be the {\it first honest} party to terminate the protocol with some output. From the above arguments, the output has to be $y$.
     This implies that $P_i$ receives $\Ready$ messages for $y$ 
    from a set of parties $\PartySet \setminus Z$, for some $Z \in \AdvStructure$. 
       Let $\Hon$ be the set of {\it honest} parties whose $\Ready$ messages are received by $P_i$. It is easy to see that
   $\Hon \not \in \AdvStructure$, as otherwise, $\AdvStructure$ does not satisfy the $\Q^{(3)}(\PartySet, \AdvStructure)$ condition.
    The $\Ready$ messages of the parties in $\Hon$ are eventually delivered to every honest party
   and hence, {\it each} honest party (including $P_i$) eventually executes step $2$ of the termination phase and sends a
   $\Ready$ message for $y$.
   It follows that the $\Ready$ messages of {\it all} honest parties $\PartySet \setminus Z^{\star}$
    are eventually delivered
   to every honest party (irrespective of whether $\Adv$ sends all the required
   messages), guaranteeing that all honest parties eventually obtain the output $y$.
 \end{proof}
The theorem now follows from Claims \ref{claim:PerAMPCTermination}-\ref{claim:PerAMPCOutputDistribution}.
\end{proof}

\end{document}